\def\llncs{0}
\def\fullpage{1}
\def\anonymous{0}
\def\authnote{1}
\def\notxfont{0}
\def\submission{0}

\def\extendedabstract{0}

\ifnum\submission=1
\def\llncs{1}
\fi

\ifnum\llncs=1
\documentclass[envcountsect,a4paper,runningheads,10pt]{llncs}
\else
	\documentclass[letterpaper,hmargin=1.05in,vmargin=1.05in,11pt]{article}
			\ifnum\fullpage=1
		\usepackage{fullpage}
		\fi
\fi

\usepackage{CJKutf8}

\usepackage[%
  colorlinks=true,
  citecolor=blue,
  pagebackref=true
]{hyperref}

\usepackage{amsmath, amsfonts, amssymb, mathtools,amscd}

\usepackage{amsthm}

\usepackage{lmodern}
\usepackage[T1]{fontenc}
\usepackage[utf8]{inputenc}

\usepackage{etex}

\usepackage{arydshln} 
\usepackage{url}
\usepackage{ifthen}
\usepackage{bm}
\usepackage{multirow}
\usepackage[dvips]{graphicx}

\usepackage[usenames]{color}
\usepackage{xcolor,colortbl} 
\usepackage{threeparttable}
\usepackage{comment}
\usepackage{paralist,verbatim}
\usepackage{cases}
\usepackage{booktabs}
\usepackage{braket}
\usepackage{cancel} 
\usepackage{ascmac} 
\usepackage{framed}
\usepackage{authblk}
\usepackage{pifont}
\usepackage{qcircuit}
\usepackage{tikz}
\usetikzlibrary{cd}
\definecolor{darkblue}{rgb}{0,0,0.6}
\definecolor{darkgreen}{rgb}{0,0.5,0}
\definecolor{maroon}{rgb}{0.5,0.1,0.1}
\definecolor{dpurple}{rgb}{0.2,0,0.65}

\usepackage[capitalise,noabbrev]{cleveref}
\usepackage[absolute]{textpos}
\usepackage[final]{microtype}
\usepackage[absolute]{textpos}
\usepackage{everypage}
\DeclareMathAlphabet{\mathpzc}{OT1}{pzc}{m}{it}

\usepackage{algorithmic}
\usepackage{algorithm}
\usepackage{here}

\usepackage{thm-restate}

\usepackage[normalem]{ulem}

\newtheoremstyle{thicktheorem}%
{\topsep}
{\topsep}
{\itshape}{}%
{\bfseries}%
{.}
{ }%
{\thmname{#1}\thmnumber{ #2}%
		\thmnote{ (#3)}%
}

\newtheoremstyle{remark}
{\topsep}
{\topsep}
	{}
	{}
	{}
	{.}
	{ }
	{\textit{\thmname{#1}}\thmnumber{ #2}
			\thmnote{ (#3)}%
	}

\ifnum\llncs=0
	\theoremstyle{thicktheorem}
	\newtheorem{theorem}{Theorem}[section]
	\newtheorem{lemma}[theorem]{Lemma}
	\newtheorem{corollary}[theorem]{Corollary}
	
	\newtheorem{definition}[theorem]{Definition}

	\theoremstyle{remark}
	\newtheorem{claim}[theorem]{Claim}
	\newtheorem{remark}[theorem]{Remark}

\else
\fi
	\crefname{theorem}{Theorem}{Theorems}
	\crefname{assumption}{Assumption}{Assumptions}
	\crefname{construction}{Construction}{Constructions}
	\crefname{corollary}{Corollary}{Corollaries}
	\crefname{conjecture}{Conjecture}{Conjectures}
	\crefname{definition}{Definition}{Definitions}
	\crefname{exmaple}{Example}{Examples}
	\crefname{experiment}{Experiment}{Experiments}
	\crefname{counterexample}{Counterexample}{Counterexamples}
	\crefname{lemma}{Lemma}{Lemmata}
	\crefname{observation}{Observation}{Observations}
	\crefname{proposition}{Proposition}{Propositions}
	\crefname{remark}{Remark}{Remarks}
	\crefname{claim}{Claim}{Claims}
	\crefname{fact}{Fact}{Facts}
	\crefname{note}{Note}{Notes}

\ifnum\llncs=1
 \crefname{appendix}{App.}{Appendices}
 \crefname{section}{Sec.}{Sections}
\else
\fi

\ifnum\llncs=1
\pagestyle{plain}
\renewcommand*{\backref}[1]{}
\else
	\renewcommand*{\backref}[1]{(Cited on page~#1.)}
	\ifnum\notxfont=1
	\else
		\usepackage{newtxtext}
	\fi
\fi

\usepackage{fancyhdr}

\ifnum\authnote=0  
\newcommand{\mor}[1]{}
\newcommand{\shogo}[1]{}
\newcommand{\takashi}[1]{}
\newcommand{\fuyuki}[1]{}
\newcommand{\minki}[1]{}
\newcommand{\yaoting}[1]{}

\else
\newcommand{\mor}[1]{$\ll$\textsf{\color{red} Tomoyuki: { #1}}$\gg$}
\newcommand{\takashi}[1]{$\ll$\textsf{\color{orange} Takashi: { #1}}$\gg$}
\newcommand{\shogo}[1]{$\ll$\textsf{\color{darkgreen} Shogo: { #1}}$\gg$}
\newcommand{\fuyuki}[1]{$\ll$\textsf{\color{darkblue} Fuyuki: { #1}}$\gg$}
\newcommand{\minki}[1]{$\ll$\textsf{\color{darkblue} Minki: { #1}}$\gg$}

\newcommand{\yaoting}[1]{$\ll$\textsf{\color{magenta} Yao-Ting: { #1}}$\gg$}

\DeclareRobustCommand{\Erase}{\bgroup\markoverwith{\textcolor{red}{\rule[.5ex]{2pt}{0.4pt}}}\ULon}

\fi

\newcommand{\Choi}{\text{Choi-Jamiołkowski} }


\newcommand{\CNOT}[1]{\text{C}_{#1}\text{NOT}}


\newcommand{\unitaryPSPACE}{\mathbf{UnitaryPSPACE}}
\newcommand{\pureunitaryPSPACE}{\mathbf{pureUnitaryPSPACE}}

\newcommand{\statePSPACE}{\mathbf{StatePSPACE}}


\newcommand{\HRI}{\mathsf{HRI}}

\newcommand{\SpanSpace}{\text{span}}

\newcommand{\Tr}{\mathrm{Tr}}



















\newcommand{\cA}{\mathcal{A}}
\newcommand{\cB}{\mathcal{B}}

\newcommand{\cD}{\mathcal{D}}
\newcommand{\cE}{\mathcal{E}}
\newcommand{\cF}{\mathcal{F}}

\newcommand{\cH}{\mathcal{H}}
\newcommand{\cI}{\mathcal{I}}
\newcommand{\cK}{\mathcal{K}}

\newcommand{\cM}{\mathcal{M}}
\newcommand{\cN}{\mathcal{N}}
\newcommand{\cO}{\mathcal{O}}
\newcommand{\cP}{\mathcal{P}}
\newcommand{\cQ}{\mathcal{Q}}

\newcommand{\cS}{\mathcal{S}}
\newcommand{\cT}{\mathcal{T}}
\newcommand{\cU}{\mathcal{U}}
\newcommand{\cV}{\mathcal{V}}

\newcommand{\identitymap}{\mathrm{id}}



\def\makeuppercase#1{
\expandafter\newcommand\csname tl#1\endcsname{\widetilde{#1}}
}

\def\makelowercase#1{
\expandafter\newcommand\csname tl#1\endcsname{\widetilde{#1}}
}

\newcommand{\N}{\mathbb{N}}

\newcommand{\R}{\mathbb{R}}

\newcommand{\Unitaries}{\mathbb{U}}

\newcommand{\States}{\mathbb{S}}

\newcommand{\Linear}{\text{L}}


\newcommand{\symetric}{\text{sym}}

\newcommand{\regF}{\mathbf{F}}

\newcommand{\regC}{\mathbf{C}}
\newcommand{\regE}{\mathbf{E}}
\newcommand{\regR}{\mathbf{R}}
\newcommand{\regZ}{\mathbf{Z}}

\newcommand{\regB}{\mathbf{B}}
\newcommand{\regA}{\mathbf{A}}
\newcommand{\regX}{\mathbf{X}}
\newcommand{\regY}{\mathbf{Y}}


\newcommand{\secp}{\lambda}





\newcommand{\Adv}{\mathsf{Adv}}








\newenvironment{boxfig}[2]{\begin{figure}[#1]\fbox{\begin{minipage}{0.97\linewidth}
                        \vspace{0.2em}
                        \makebox[0.025\linewidth]{}
                        \begin{minipage}{0.95\linewidth}
            {{
                        #2 }}
                        \end{minipage}
                        \vspace{0.2em}
                        \end{minipage}}}{\end{figure}}



\newcommand{\bit}{\{0,1\}}









\newcommand{\negl}{{\mathsf{negl}}}






\newcommand{\poly}{{\mathrm{poly}}}

\DeclareMathOperator*{\Exp}{\mathbb{E}}


\newcommand{\AlgInput}{\textbf{Input: }}
\newcommand{\AlgOutput}{\textbf{Output: }}
\newcommand{\AlgOracle}{\textbf{Oracle access: }}



\usetikzlibrary{decorations.markings}
\tikzset{
  cross/.style={
    postaction={decorate,decoration={markings,
    mark=at position 0.45 with {\draw[-,line width=1pt] (-10pt,-10pt) -- (10pt,10pt);\draw[-,line width=1pt] (-10pt,10pt) -- (10pt,-10pt);}}}
  }
}

\makeatletter
\DeclareRobustCommand
  \myvdots{\vbox{\baselineskip4\p@ \lineskiplimit\z@
    \hbox{.}\hbox{.}\hbox{.}}}
\makeatother

\newcommand{\ketbra}[2]{\lvert #1 \rangle\mkern-3mu \langle #2 \rvert}

%
\makeatletter
\makeatother

\title{Black-Box Separation Between Pseudorandom Unitaries, Pseudorandom Isometries, and Pseudorandom Function-Like States}

\ifnum\anonymous=1
\ifnum\llncs=1
\author{\empty}\institute{\empty}
\else
\author{}
\fi
\else
%
%
\ifnum\llncs=1
\author{
}
\institute{
	Yukawa Institute for Theoretical Physics, Kyoto University, Kyoto, Japan 
}
\else
%
%
\author[1]{Aditya Gulati}
\author[1]{Yao-Ting Lin}
\author[2]{Tomoyuki Morimae}
\author[2]{Shogo Yamada \thanks{A part of the work was done when the last author visited UCSB.}}
\affil[1]{{\small University of California, Santa Barbara, CA, USA}
\authorcr{\small \{adityagulati,yao-ting\_lin\}@ucsb.edu}
}
\affil[2]{{\small Yukawa Institute for Theoretical Physics, Kyoto University, Kyoto, Japan}
\authorcr{\small \{tomoyuki.morimae,shogo.yamada\}@yukawa.kyoto-u.ac.jp}
}

\fi 
\fi

\date{\today}

\begin{document}
\begin{CJK}{UTF8}{ipxm}

\maketitle
\begin{abstract}
Pseudorandom functions (PRFs) are one of the most fundamental primitives in classical cryptography.
On the other hand, in quantum cryptography, it is possible that
PRFs do not exist but their quantum analogues could exist, and still enabling many applications including
SKE, MACs, commitments, multiparty computations, and more.
Pseudorandom unitaries (PRUs) [Ji, Liu, Song, Crypto 2018], pseudorandom isometries (PRIs) [Ananth, Gulati, Kaleoglu, Lin, Eurocrypt 2024], and pseudorandom function-like state generators (PRFSGs) [Ananth, Qian, Yuen,
Crypto 2022] are major quantum analogs of PRFs. 
PRUs imply PRIs, and PRIs imply PRFSGs, but the converse implications remain unknown.
An important open question is whether these natural quantum analogues of PRFs are equivalent.
In this paper, we partially resolve this question by ruling out black-box constructions of them:
\begin{enumerate}
    \item There are no black-box constructions of $O(\log\secp)$-ancilla PRUs from PRFSGs.
    \item There are no black-box constructions of $O(\log\secp)$-ancilla PRIs with $O(\log\secp)$ stretch from PRFSGs.
    \item There are no black-box constructions of $O(\log\secp)$-ancilla PRIs with $O(\log\secp)$ stretch from PRIs with $\Omega(\secp)$ stretch.
\end{enumerate}
Here, $O(\log\secp)$-ancilla means that the generation algorithm uses at most $O(\log\secp)$ ancilla qubits.
PRIs with $s(\secp)$ stretch is PRIs mapping $\secp$ qubits to $\secp+s(\secp)$ qubits.
To rule out the above black-box constructions, we construct a unitary oracle that separates them.
For the separations, we construct an adversary based on the quantum singular value transformation, 
which would be independent of interest and should be useful for other oracle separations in quantum cryptography.  
\end{abstract}

\ifnum\submission=0
\clearpage
\newpage
\setcounter{tocdepth}{2}
\tableofcontents
\newpage
\fi

\section{Introduction}
Pseudorandom functions (PRFs)~\cite{JACM:GolGolMic86} are among the most fundamental primitives
in classical cryptography. 
PRFs formalize the hardness of distinguishing certain functions from truly random functions,
and have numerous important applications including IND-CPA secret-key encryption (SKE)~\cite{JACM:GolGolMic86}
and EUF-CMA message authentication codes (MAC) \cite{C:GolGolMic84}. 
Moreover, PRFs are existentially equivalent to one-way functions (OWFs)~\cite{JACM:GolGolMic86,SIAMCOMP:HILL99,SIAMCOMP:GolKraLub93,STOC:Levin85}, which indicates that
PRFs are existentially equivalent to all Minicrypt primitives and are implied by almost all computationally-secure cryptographic primitives.

In quantum cryptography, 
on the other hand, it is possible that
PRFs do not exist but quantum analogs of PRFs could exist \cite{C:JiLiuSon18,C:AnaQiaYue22,TCC:BBSS23,EC:AGKL24,TCC:LQSYZ24,TCC:BraMag24,TQC:Kre21,KreQiaTal24}, and
many applications are still possible from them~\cite{C:JiLiuSon18,C:AnaQiaYue22,C:MorYam22}.
Pseudorandom unitaries (PRUs)~\cite{C:JiLiuSon18}, pseudorandom isometries (PRIs)~\cite{EC:AGKL24}, and pseudorandom function-like state generators (PRFSGs)~\cite{C:AnaQiaYue22} are major quantum analogs of PRFs.
A PRU is a family $\{U_k\}_k$ of unitaries implementable in quantum polynomial-time (QPT) 
that are computationally indistinguishable from Haar random unitaries.
A PRI is a family $\{\cI_k\}_k$ of QPT implementable isometries
that are computationally indistinguishable from Haar random isometries.\footnote{Here, Haar random isometry acts as $\ket{\psi}\mapsto U(\ket{\psi}\ket{0...0})$, where $U$ is Haar random unitary. }
A PRFSG is a QPT
algorithm that, on input a classical key $k$ and a classical bit string $x$, outputs a quantum state $|\phi_k(x)\rangle$ that is computationally indistinguishable from Haar random states.
PRUs, PRIs, and PRFSGs could exist even if PRFs do not exist \cite{TQC:Kre21,KreQiaTal24}.
Moreover, PRFSGs imply various primitives and applications~\cite{C:JiLiuSon18,C:MorYam22,AC:Yan22,C:AnaQiaYue22,Ac:MorYamYam24,STOC:KhuTom24}.
\if0
Moreover, they imply various primitives and applications,
such as pseudorandom state generators (PRSGs)~\cite{C:JiLiuSon18}, one-way state generators (OWSGs)~\cite{C:MorYam22}, 
one-way puzzles (OWPuzzs)~\cite{STOC:KhuTom24},
private-key quantum money schemes~\cite{C:JiLiuSon18},
non-interactive commitments~\cite{C:MorYam22,AC:Yan22}, EFI pairs~\cite{ITCS:BraCanQia23}, multiparty computations~\cite{C:MorYam22,C:AnaQiaYue22},
IND-CPA SKE~\cite{C:AnaQiaYue22}, EUF-CMA MAC with unclonable tags \cite{Ac:MorYamYam24,C:AnaQiaYue22}, 
unpredictable states generators (UPSGs)~\cite{Ac:MorYamYam24}, and more.
\fi
\ifnum\extendedabstract=1
Moreover, they imply various primitives and applications~\cite{C:JiLiuSon18,C:MorYam22,AC:Yan22,C:AnaQiaYue22,Ac:MorYamYam24,STOC:KhuTom24}.
\fi

PRUs imply PRIs, and PRIs imply PRFSGs~\cite{TCC:AGQY22}. 
However, although they are natural quantum analogs of PRFs, it remains an open question whether the reverse implication holds.
This naturally raises the following question:
\begin{center}
    {\it{Are PRUs, PRIs, and PRFSGs equivalent?}}
\end{center}
Given their crucial roles in quantum cryptography, an important open problem is to determine whether these natural quantum analogues of PRFs are equivalent.

\subsection{Our Results}
\label{subsec:our_result}

In this paper, we partially resolve the above open problem by ruling out
black-box constructions for restricted cases.
The first result is the following:
\begin{restatable}{theorem}{BlackBox}\label{Intro_thm:main1}
     There is no black-box construction of non-adaptive and $O(\log\secp)$-ancilla PRUs from PRFSGs.
\end{restatable}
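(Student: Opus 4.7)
The plan is to exhibit a unitary oracle $\mathcal{O}$ relative to which PRFSGs exist unconditionally, yet no non-adaptive $O(\log\secp)$-ancilla PRU construction using $\mathcal{O}$ is secure. I would take $\mathcal{O}$ to consist of a family of Haar random isometries $\{V_k\}_{k \in \{0,1\}^\secp}$, where $V_k : |x\rangle|0^m\rangle \mapsto |x\rangle|\phi_k(x)\rangle$ and each $|\phi_k(x)\rangle$ is independently Haar random on $m$ qubits. A PRFSG relative to $\mathcal{O}$ is then immediate: on input $(k,x)$, query $V_k$ and keep the last $m$ qubits. Security follows from the standard Haar-random state indistinguishability argument: for a secret random key $k$, the challenge states $\{|\phi_k(x)\rangle\}$ look genuinely Haar random unless the adversary queries $\mathcal{O}$ at the hidden key, which happens with negligible probability.

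Next I would exploit the restrictions to write any candidate non-adaptive $O(\log\secp)$-ancilla PRU $G^{\mathcal{O}}$ in a canonical ``sandwich'' form. Because the oracle queries are non-adaptive---their inputs depend only on $K$, not on the PRU input register or on earlier query outputs---all queries can be collapsed into a single parallel layer that produces a fixed key-dependent tensor product of PRFSG states $|\Phi_K\rangle = |\phi_{k_1}(x_1)\rangle \otimes \cdots \otimes |\phi_{k_t}(x_t)\rangle$ with $t = \poly(\secp)$. The construction thus reduces to $G^{\mathcal{O}}_K|\psi\rangle = A_K \bigl(|\psi\rangle \otimes |\Phi_K\rangle\bigr)$ for an oracle-independent key-dependent unitary $A_K$ acting on an enlarged space with at most $O(\log\secp)$ extra ancilla qubits beyond the input and PRFSG registers.

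Given this structure, the distinguishing adversary exploits the fact that $|\Phi_K\rangle$ lives in a polynomially-dimensional, Haar-random subspace determined by $K$. Using oracle access to both the challenge unitary $C \in \{G^{\mathcal{O}}_K, U\}$ and to $\mathcal{O}$, the adversary builds a block encoding of a projector $\Pi$ onto this signature subspace and applies the quantum singular value transformation to sharpen it. Measuring $\Pi$ on an appropriately prepared output state accepts with probability $\Omega(1)$ in the real world, since $G^{\mathcal{O}}_K|\psi\rangle$ inherits the $|\Phi_K\rangle$ signature, and accepts with negligible probability in the ideal world by Levy's lemma: the overlap of a Haar random vector in an exponential-dimensional Hilbert space with any polynomially-dimensional subspace concentrates around a negligible value.

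The main obstacle lies in implementing this QSVT-based adversary: since $K$ is hidden, the adversary cannot directly prepare $\Pi_K$, so it must construct a block encoding of a suitably averaged or conditioned signature projector purely from black-box queries to $C$ and $\mathcal{O}$. Bounding the approximation error of QSVT while keeping the query complexity polynomial, and simultaneously establishing a nonzero gap between the real and ideal accept probabilities, is the central technical challenge where the $O(\log\secp)$-ancilla hypothesis is crucial---it keeps the signature subspace polynomially dimensional and therefore accessible to polynomial-query QSVT. Once this gap is established, a standard measure-theoretic oracle-fixing argument converts the random Haar oracle into a single concrete unitary oracle witnessing the separation.
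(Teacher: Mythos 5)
Your high-level intuition --- low-rank averaged Choi states versus Haar-random Choi states, detected via QSVT --- matches the paper's strategy. But there are two gaps in the execution, both of which would cause the proof to fail.

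First, your oracle $\mathcal{O}$ lacks a $\unitaryPSPACE$-complete component, and this is fatal. The paper's separation oracle is $\cO=(\cS,\cU)$ where $\cU$ is a $\unitaryPSPACE$-complete problem. The adversary needs $\cU$ to implement the block encoding of the averaged Choi state (via the controlled-$V_k$ sum over all keys) and, crucially, to run the singular value discrimination with thresholds $a=2^{-3n}, b=2^{-2n}$, which requires roughly $1/(b-a)$ applications of the block encoding --- exponentially many. Those exponentially many applications can be performed space-efficiently and then simulated in QPT only by appealing to $\cU$. Without $\cU$, your adversary is not QPT, your third paragraph's claim that the adversary ``applies the quantum singular value transformation to sharpen it'' has no efficient instantiation, and the separation collapses. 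The same missing oracle also means your construction of the block encoding of the ``suitably averaged'' projector, which is exactly the right object (the paper projects onto the support of $\Exp_k V_k^{\otimes\ell}(\cdot)V_k^{\dag\otimes\ell}$ applied to the maximally entangled state), has no QPT realization.

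Second, the ``sandwich'' form $G^{\mathcal{O}}_K|\psi\rangle = A_K(|\psi\rangle\otimes|\Phi_K\rangle)$ is based on a misreading of ``non-adaptive.'' In the statement, ``non-adaptive PRU'' means the PRU need only be secure against adversaries that query $U^{\otimes\poly}$ once; it does not constrain how the PRU generator queries the underlying PRFSG. The generator may query the oracle adaptively, on inputs depending on the PRU input register and on earlier query results, so the queries cannot be collapsed into a single layer producing a key-dependent tensor product of PRFSG states. The paper instead reasons directly about the Choi state of the possibly-adaptive circuit: queries to the swap oracle $\cS_n$ at large $n$ act nearly as identity on the maximally entangled state (this is where the $O(\log\secp)$-ancilla hypothesis enters, giving an error $O(2^{c/2}/2^{n/2})$), and queries at small $n$ are removed via process tomography. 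You also need the oracle to be self-inverse or otherwise handle $\tilde{G}^\dagger$ queries as required by the black-box definition; the paper's swapping oracle satisfies $\cS_n=\cS_n^\dag$, whereas your isometry family does not, so PRFSG security against inverse queries would need a separate argument.
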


Here, a black-box construction is defined as follows~\cite{TCC:ColMut24,ChenColSat24}.
\begin{definition} [Black-Box Construction of Non-Adaptive PRUs from PRFSGs]
\label{def:BB}
    We say that non-adaptive PRUs can be constructed from PRFSGs in a black-box way
    if there exist QPT algorithms $C^{(\cdot,\cdot)}$ and $R^{(\cdot,\cdot)}$ such that
    both of the following two conditions are satisfied:
    \begin{enumerate}
        \item
        Black-box construction: For any QPT algorithm $G$ satisfying the correctness of PRFSGs\footnote{We say that a QPT algorithm satisfies the correctness of PRFSGs if it takes bit strings $k$ and $x$ as input and outputs a pure state.} and for any its unitary implementation $\tilde{G}$,\footnote{In general $G$ is a CPTP map. The CPTP map $G$ can be
        implemented by applying a unitary $\tilde{G}$ on a state and tracing out some qubits. A unitary implementation of $G$ is such a unitary $\tilde{G}$.} $C^{\tilde{G},\tilde{G}^\dag}$ satisfies the correctness of non-adaptive PRUs.\footnote{We say that a QPT algorithm satisfies the correctness of non-adaptive PRUs if it takes a classical bit string $k$ and a quantum state as input and applies a unitary on the input state.}
        \item Black-box security reduction: For any QPT algorithm $G$ satisfying the correctness of PRFSGs, any its unitary implementation $\tilde{G}$,
        any adversary $\cA$ that breaks the security of $C^{\tilde{G},\tilde{G}^\dag}$, 
        and any unitary implementation $\tilde{\cA}$ of $\cA$, 
        it holds that $R^{\tilde{\cA},\tilde{\cA}^\dag}$ breaks the security of $G$.
    \end{enumerate}
\end{definition}
In this definition, unitary implementations and their inverses are queried\footnote{In this paper, we do not consider a query to controlled-operation, transpose, and complex conjugate.}.
There are other variants of black-box constructions. For example, only unitary implementations are queried, and their inverses are not queried.
Alternatively, instead of unitary implementations, isometry implementations are queried.
\cref{def:BB} contains these variants~\cite{TCC:ColMut24,ChenColSat24}, and therefore
it captures general black-box constructions.

Non-adaptive PRUs are a weaker variant of PRUs where the adversary can query the oracle only non-adaptively.
$O(\log \secp)$-ancilla PRUs are PRUs that can be implemented in QPT using at most $O(\log \secp)$ ancilla qubits.
Because ($O(\log \secp)$-ancilla) PRUs imply non-adaptive (and $O(\log \secp)$-ancilla) PRUs, we have the following as a corollary:
\begin{corollary}
     There is no black-box construction of $O(\log \secp)$-ancilla PRUs from PRFSGs.
\end{corollary}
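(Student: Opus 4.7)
The plan is to establish the theorem via an oracle separation: construct a unitary oracle $\cO$ (supplied with its inverse) such that, relative to $\cO$, PRFSGs exist but no non-adaptive $O(\log\secp)$-ancilla PRU is secure. By the standard framework of~\cite{TCC:ColMut24,ChenColSat24}, this suffices to rule out black-box constructions in the sense of \cref{def:BB}.

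I would take $\cO = (\cI,\cP)$. The first component $\cI = \{\cI_\secp\}_\secp$ is a family of Haar random isometries that plays the role of a candidate PRFSG via $\ket{\phi_k(x)} \defeq \cI_\secp\ket{k,x}$. The second component $\cP$ is a carefully chosen auxiliary ``helper'' oracle, morally a $\mathbf{PSPACE}$-style oracle relative to $\cI$, providing the linear-algebraic subroutines the adversary will use to attack candidate PRUs. The design goal is that $\cP$ is powerful enough to break any non-adaptive $O(\log\secp)$-ancilla PRU construction, yet restricted enough to leave the PRFSG security of $\cI$ intact. The security of $\cI$ relative to $\cO$ would then follow from a hybrid argument combined with Haar measure concentration: the joint distribution of the states $\ket{\phi_k(x)}$ and the answers returned by $\cP$ depends on $\cI$ only through symmetric, measure-invariant functionals, so no QPT adversary with access to $\cO$ can distinguish $\ket{\phi_k(x)}$ from Haar random states.

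The technical heart of the proof is the distinguishing adversary $\cA$, where the quantum singular value transformation enters. Given a challenge oracle $V$ (either $U_k \defeq C^{\cI,\cI^\dagger}_k$ or Haar random), $\cA$ first uses its entire non-adaptive query budget to apply $V$ to halves of maximally entangled pairs, producing polynomially many copies of the Choi state $\ket{V}$. With help from $\cP$, $\cA$ then block-encodes an operator $M$ whose singular values are sensitive to the structural restriction that the $O(\log\secp)$-ancilla condition imposes on the circuit $C$, and applies QSVT to polynomially amplify the gap in $\|M\ket{V}\|$ between $V = U_k$ and Haar random $V$ into a constant, yielding a non-negligible distinguishing advantage.

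The hardest step will be respecting the non-adaptivity constraint throughout the QSVT phase, since a direct QSVT iteration would interleave calls to $V$ and $V^\dagger$. I would circumvent this by preparing all copies of $\ket{V}$ in the initial non-adaptive batch and then running QSVT purely on these copies, using $\cP$ for internal subroutines; $\cP$ itself may be queried adaptively, since only queries to the challenge $V$ are restricted. A second delicate point is identifying a good operator $M$ whose spectral signature genuinely separates $O(\log\secp)$-ancilla constructions from Haar random unitaries, and calibrating $\cP$ to realize its block encoding without compromising the PRFSG security of $\cI$ needed in the first step.
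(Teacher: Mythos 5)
The statement you are asked to prove is, in the paper, a one-sentence corollary of \cref{Intro_thm:main1}: since an adaptively-secure $O(\log\secp)$-ancilla PRU is \emph{a fortiori} a non-adaptively-secure $O(\log\secp)$-ancilla PRU, any black-box construction of the former from PRFSGs would yield a black-box construction of the latter, contradicting \cref{Intro_thm:main1}. You instead set out to re-derive the whole oracle separation underlying that theorem, which is a much larger task, and the derivation you sketch has genuine gaps.

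First, the choice of oracle matters more than your sketch acknowledges. You propose $\cI_\secp$ to be a Haar random isometry that maps $\ket{k,x}$ directly to the PRFSG output. The paper does \emph{not} use a Haar random isometry. It uses a swap-style oracle $\cS_{n,m}$ that exchanges the two special states $\ket{0}\ket{0^n}$ and $\ket{1}\ket{\psi_{n,m}}$ and acts as identity on the remaining $(2^{n+1}-2)$-dimensional subspace. This structure is what makes $\Tr[\cS_{n,m}] = 2^{n+1}-2 \approx 2^{n+1}$, so that $\cS_{n,m}$ applied to one side of a maximally entangled state has $O(2^{-n/2})$ effect in trace distance (\cref{lem:swap_unitary_is_almost_identity_for_Choi_state;ancilla}). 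A generic Haar random isometry/unitary on $\Theta(\secp)$ qubits has trace $O(1)$, is nowhere near the identity, and the adversary cannot tomograph it (exponential dimension) nor remove it. With your oracle as stated, the adversary has no way to simulate the PRU circuit without oracle access, which breaks the argument.

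Second, you leave unspecified the operator $M$ whose block encoding feeds into QSVT, and this is precisely where the $O(\log\secp)$-ancilla restriction enters. The paper takes $M$ to be the mixed Choi state $\rho_2 = (\cE_{\{V_k\},\ell}\otimes\identitymap)(\ketbra{\Omega_{2^{\secp\ell}}}{\Omega_{2^{\secp\ell}}})$ over the (oracle-free simulated) PRU family, and the crucial fact is that its rank is at most $|\cK_\secp|\cdot 2^{c\ell} \le 2^{(1+c)\ell}$; since $c=O(\log\secp)$ this is $2^{O(\ell\log\secp)}$, still vastly smaller than the ambient dimension $2^{2\secp\ell}$, so $\Tr[Q\rho_{\mathrm{Haar}}]$ is negligible (\cref{lem:Haar_Choi_has_negligible_overlap}). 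Saying $M$ is ``sensitive to the structural restriction'' names the need but does not supply the actual property.

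Third, the block encoding of $\rho_2$ must be realized using only $\cU$ plus \emph{classical} information. The paper achieves this by performing process tomography on all $\cS_n$ with $n\in[d]$ for $d=O(\log\secp)$ (polynomial-size dimension, so tomography is QPT) and deleting all $\cS_n$ with $n>d$ (these are almost-identity on the Choi state, incurring error $O(2^{c/2}\ell T/2^{d/2})$, again only small because $c=O(\log\secp)$). You acknowledge this as a delicate point but do not resolve it; it is not a calibration detail but the core technical content of the proof, and it is where the ancilla bound is used twice.
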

We also point out that
PRFSGs in \cref{Intro_thm:main1} are quantumly-accessible and adaptively-secure ones,
which is the strongest version of PRFSGs in the following sense:
Recall that a PRFSG is a QPT algorithm $G$ that, on input a classical key $k$ and a classical bit string $x$,
outputs a quantum state $|\phi_k(x)\rangle$ that is computationally indistinguishable from Haar random states.
More precisely, the computational indistinguishability means that
for any QPT adversary $\cA$,
\begin{align}
\left|\Pr_{k\gets\bit^\secp}[1\gets\cA^{G(k,\cdot)}(1^\secp)]-\Pr_{\cH}[1\gets\cA^{\cH}(1^\secp)]\right|\le\negl(\secp),    
\end{align}
where $\cH$ is the following oracle: sample a Haar random state $|\psi_x\rangle$ for each $x$ independently
in advance,
and when $x$ is queried, return $|\psi_x\rangle$.
Quantumly-accessible means that
$\cA$ can query superpositions of $x$.
Adaptively-secure means that
$\cA$ can query the oracle adaptively. 
We can define variants of PRFSGs where the queries are only non-adaptive or classical ones.
Clearly,
quantumly-accessible and adaptively-secure PRFSGs are stronger than them.

Next, we show separation for PRIs.

\begin{theorem}\label{Intro_thm:main2}
    There are no black-box constructions of non-adaptive and $O(\log \secp)$-ancilla PRUs from PRIs with $\Omega(\secp)$ stretch.
\end{theorem}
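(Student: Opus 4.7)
The plan is to establish the separation via an oracle argument in the same spirit as the proof of \cref{Intro_thm:main1}. I would construct a unitary oracle $\cO$ such that relative to $\cO$: (i) there exists a PRI with $\Omega(\secp)$ stretch unconditionally, and (ii) for every non-adaptive $O(\log\secp)$-ancilla candidate construction $C^{\cO,\cO^\dag}$, the resulting unitary can be efficiently distinguished from a Haar random unitary by a QPT adversary with oracle access to both $\cO$ and the construction. Combined with the standard compiling/meta-reduction argument, this rules out any black-box construction in the sense of \cref{def:BB}.

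For the oracle, I take $\cO$ to be a family of Haar random isometries $\{I_k : (\C^2)^{\otimes\secp} \to (\C^2)^{\otimes(1+c)\secp}\}_{k\in\bit^\secp}$ for some constant $c>0$, accessed through a fixed unitary dilation together with its inverse. Setting the candidate PRI to be $I_k$ itself immediately yields a PRI with $\Omega(\secp)$ stretch relative to $\cO$: by standard compression-based arguments, no QPT adversary making polynomially many queries can distinguish $I_k$ on a random key $k$ from a freshly sampled Haar random isometry of the same dimensions.

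The heart of the proof is the PRU distinguisher. I would adapt the QSVT-based adversary promised in the paper: on input a unitary $U$ that is either $C^{\cO,\cO^\dag}$ or a Haar random unitary on $\secp$ qubits, the adversary uses black-box access to $U$ (combined with $\cO$ and $\cO^\dag$) to assemble a block encoding whose QSVT-processed output approximately projects onto a ``structured'' low-dimensional subspace spanned by images of the few $\cO$-queries that $C$ is allowed to make. Because $C$ uses only $O(\log\secp)$ ancilla qubits and only non-adaptive queries, the effective action of $C^{\cO}$ is essentially trapped in a $\poly(\secp)$-dimensional subspace determined by the fixed, non-adaptive query inputs; on the other hand, a Haar random unitary on $\secp$ qubits spreads an arbitrary input essentially uniformly over a $2^\secp$-dimensional space and passes such a structural test only with negligible probability.

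The main obstacle will be the quantitative analysis showing that the QSVT-based procedure yields a non-negligible distinguishing advantage. Two coupled difficulties stand out. First, $C$ may call $\cO$ on inputs entangled with the $O(\log\secp)$ ancilla register, so one must carefully track how the image structure of $I_k$ survives after the ancillas are traced out, and argue that the logarithmic ancilla bound forces the reduced action to stay confined to a low-dimensional subspace derivable from the non-adaptive queries. Second, realizing the QSVT-processed operator as an honest QPT distinguisher, and then composing it with the black-box security reduction $R^{\tilde{\cA},\tilde{\cA}^\dag}$ from \cref{def:BB}, requires propagating polynomial-approximation and block-encoding errors through several layers of unitary simulation. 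Once the QSVT-based adversary is in hand, the separation itself follows by essentially inheriting the framework of \cref{Intro_thm:main1}, with isometry-oracle-specific structural lemmas replacing the PRFSG-specific ones.
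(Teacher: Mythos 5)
Your high-level template matches the paper's: embed Haar random isometries in a unitary oracle, take the PRI to be the oracle itself, and distinguish any $O(\log\secp)$-ancilla PRU candidate via a QSVT-based singular-value discriminator. The paper proves \cref{Intro_thm:main2} by specializing \cref{thm:PRI_vs_PRI_wrt_HRI_oracle} to $s=0$, using the oracle $\cO'=(\HRI_t,\cU)$. However, there are genuine gaps in your proposal.

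The most serious is that your oracle omits the $\unitaryPSPACE$-complete component $\cU$. The singular-value discriminator of \cref{thm:singular_value_discrimination} needs a number of block-encoding applications inversely proportional to the threshold gap, and here both thresholds are \emph{exponentially small} (the target projection onto the support of the key-averaged Choi state sits inside a $2^{\Theta(\secp\ell)}$-dimensional ambient space). A QPT adversary cannot run this directly; the paper's adversary calls $\cU$ precisely to implement this step in polynomial time, and without $\cU$ in the separation oracle your distinguisher is not QPT, so the impossibility of a black-box construction does not follow. A secondary issue: you write that ``$C$ uses only $O(\log\secp)$ ancilla qubits and only non-adaptive queries,'' but the PRU candidate $C^{\cO,\cO^\dag}$ may query $\cO$ adaptively — the non-adaptivity restriction constrains the \emph{adversary's} queries to the PRU in the security game, which the attack exploits via a single batched Choi-state query $(W^{\otimes\ell}\otimes I)\ket{\Omega}$. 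The $\poly(\secp)$-rank structure you are hunting comes entirely from the small key space and $O(\log\secp)$ ancillas (the key-averaged Choi state has rank at most $|\cK_\secp|\cdot 2^{c\ell}$), not from any non-adaptivity on $C$'s side. You also omit the process-tomography step: the adversary must first learn classical descriptions of the $\HRI$ components acting on logarithmically many qubits (where the almost-identity-on-Choi-states approximation fails), and the $\Omega(\secp)$ stretch hypothesis is exactly what keeps this tomography polynomial-time; this step is essential to make the block encoding implementable.
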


\begin{theorem}\label{Intro_thm:main3}
    There is no black-box construction of non-adaptive and $O(\log \secp)$-ancilla PRIs with $O(\log\secp)$ stretch from PRIs with $\Omega(\secp)$ stretch.
\end{theorem}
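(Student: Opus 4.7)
I would prove this by exhibiting a unitary oracle $\cO$ relative to which PRIs with $\Omega(\secp)$ stretch exist but no non-adaptive $O(\log\secp)$-ancilla PRI with $O(\log\secp)$ stretch does. Take $\cO$ to be a family of Haar random unitaries $\tilde V$ implementing isometries $V:\mathbb{C}^{2^\secp}\to\mathbb{C}^{2^{\secp+t(\secp)}}$ with $t(\secp)=\Omega(\secp)$, with both $\tilde V$ and $\tilde V^\dagger$ queryable. Relative to $\cO$, the oracle itself gives a secure PRI with $\Omega(\secp)$ stretch: by the standard Haar indistinguishability argument, any polynomial-query algorithm cannot distinguish $V$ from a fresh Haar random isometry of the same type, so this direction is routine.

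The core of the proof is to show that no candidate $C^{\tilde V,\tilde V^\dagger}$ can realize a non-adaptive $O(\log\secp)$-ancilla PRI with $O(\log\secp)$ stretch. I would build a $\poly(\secp)$-query adversary $\cA$ that distinguishes the real challenge $C^{\tilde V,\tilde V^\dagger}(k,\cdot)$ (with random $k$) from a Haar random isometry $H:\mathbb{C}^{2^\secp}\to\mathbb{C}^{2^{\secp+O(\log\secp)}}$. The adversary non-adaptively queries the challenge on a small family of inputs to obtain outputs $\{|\phi_i\rangle\}$, then uses QSVT with polynomially many queries to $\tilde V,\tilde V^\dagger$ to build a block-encoded approximate projector $\Pi$ onto a subspace of the output space that contains the image of every valid $C^{\tilde V,\tilde V^\dagger}(k,\cdot)$, and finally measures $\Pi$ on each $|\phi_i\rangle$, outputting $1$ iff most measurements accept.

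The separation would then follow by comparing the two worlds. In the real world the $|\phi_i\rangle$ all lie in $\mathrm{image}(C^{\tilde V,\tilde V^\dagger}(k,\cdot))\subseteq\mathrm{range}(\Pi)$, yielding high acceptance. In the ideal world the $|\phi_i\rangle$ live in a random $2^\secp$-dimensional subspace of the $\poly(\secp)\cdot 2^\secp$-dimensional output space that is independent of $V$, so a dimension count plus Haar concentration forces the expected overlap with $\Pi$ to be only $1/\poly(\secp)$, and a Chernoff-style bound drives the empirical acceptance frequency away from the real case. Combining these two estimates yields a non-negligible distinguishing advantage, contradicting the assumed black-box security reduction in the sense of \cref{def:BB}.

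The main obstacle is the design and implementability of $\Pi$. Because the key $k$ is hidden and the construction can mix $V$ with arbitrary $k$-dependent unitaries, $\Pi$ must be a ``universal'' subspace: large enough to cover $\mathrm{image}(C^{\tilde V,\tilde V^\dagger}(k,\cdot))$ for every $k$, yet small enough relative to the $\poly(\secp)\cdot 2^\secp$-dimensional output space for a Haar random image to miss it. Following the QSVT-based template driving \cref{Intro_thm:main1,Intro_thm:main2}, I would construct $\Pi$ by block-encoding $V$ and applying an appropriate polynomial that amplifies the relevant image structure; the bulk of the technical work is the spectral analysis of $\Pi$ in the real versus ideal worlds and the Haar concentration estimate controlling the ideal-world overlap. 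I expect this step, rather than the high-level adversary design, to be the hardest part of the proof.
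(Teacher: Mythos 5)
Your high-level instinct is right, and the paper does use essentially the oracle you propose --- a keyed family of Haar random isometries, formalized in the paper as the swap-style unitary $\HRI_{t,n,m}$ of \cref{def:HRI_orcle}, together with the $\unitaryPSPACE$-complete oracle $\cU$. However, the adversary you sketch is genuinely different from the paper's, and I believe it has a gap that cannot be patched without essentially rediscovering the paper's $\ell$-copy Choi-state idea.

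The central problem is your ``universal projector'' $\Pi$. You want $\Pi$ to contain $\mathrm{image}(C^{\tilde V,\tilde V^\dagger}(k,\cdot))$ for \emph{every} key $k$, and then test each output $|\phi_i\rangle$ against $\Pi$ individually. Each such image is a $2^\secp$-dimensional subspace of the $2^{\secp+s}$-dimensional output space, and with $s=O(\log\secp)$ the ambient space has dimension only $\poly(\secp)\cdot 2^\secp$. Since a PRI needs a super-polynomial key space $|\cK_\secp|$, the union $\bigcup_k \mathrm{image}(C(k,\cdot))$ will generically span the entire output space (and it provably can, since we are quantifying over all candidate constructions), which forces $\Pi = I$ and destroys the distinguishing advantage. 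The paper circumvents this by working with a \emph{single} projector $Q$ on the $\ell$-fold Choi state, with $\ell := \lceil \log|\cK_\secp| \rceil$: the averaged Choi state $\Exp_k\,(\cI_k^{\otimes\ell}\otimes I)|\Omega\rangle\langle\Omega|(\cI_k^{\otimes\ell}\otimes I)^\dagger$ is a mixture of only $|\cK_\secp|\le 2^\ell$ pure states (with an extra $2^{c\ell}$ factor from the ancillas), so $\mathrm{rank}(Q)\le 2^{(1+c)\ell}$; the ideal-world $\ell$-fold Choi state lives essentially in the symmetric subspace of dimension $\approx 2^{(2\secp+s)\ell}/\ell!$, so $\Tr[Q\,\xi_{\mathrm{ideal}}]$ is $\negl(\secp)$ (\cref{lem:Haar_isometry_Choi_has_negligible_overlap}). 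This single global measurement, not a majority vote over independent probes, is what gives the advantage $1-\negl$.

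Two further omissions. First, your separation oracle should include the $\unitaryPSPACE$-complete oracle $\cU$: the singular value discrimination step needs thresholds $a,b$ that are exponentially small in $\secp$, hence exponentially many QSVT iterations, which a plain $\poly(\secp)$-query algorithm cannot afford --- in the paper this work is offloaded to $\cU$ (\cref{thm:singular_value_discrimination} combined with \cref{lem:unitaryPSPACE_has_a_complete_problem}). Second, to build a block-encoding of the candidate's Choi state one must implement a \emph{purification unitary} for it, which requires \emph{controlled} access to the random oracle, something the adversary does not have. The paper resolves this via process tomography on the $\HRI_{t,n,m}$ with small $n$ (\cref{thm:process_tomography_HKOT23}), so they can be applied from their classical descriptions, plus a lemma that large-$n$ oracles act negligibly on the Choi state (\cref{lem:HRI_is_almost_identity_for_Choi_state}); this step is what makes the block-encoded projector realizable at all and is absent from your proposal.
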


Here, $\{\cI_k\}_k$ is called PRI with $s(\secp)$ stretch if it is a family of QPT implementable isometries from $\secp$ qubits to $\secp+s(\secp)$ qubits and it is computationally indistinguishable from Haar random isometries.
The notion of black-box construction used in \cref{Intro_thm:main2,Intro_thm:main3} is defined similarly to \cref{def:BB}.  
Non-adaptive PRIs are a weaker variant of PRIs in which the adversary is restricted to making only non-adaptive oracle queries.  
$O(\log \secp)$-ancilla PRIs with $s$ stretch are PRIs with $s$ stretch that can be implemented in QPT using at most $s(\secp)+O(\log \secp)$ ancilla 
qubits.\footnote{The use of $s$ ancilla qubits is nessesarry for PRIs with $s$ stretch because it maps $\secp$ qubits to $\secp+s(\secp)$ qubits.}
Because ($O(\log \secp)$-ancilla) PRIs imply non-adaptive (and $O(\log \secp)$-ancilla) PRIs, we have the following as a corollary:
\begin{corollary}
     There is no black-box construction of ancilla-free PRUs from PRIs with $\Omega(\secp)$ stretch.
     In addition, there is no black-box construction of $O(\log \secp)$-ancilla PRIs with $O(\log \secp)$ stretch from PRIs with $\Omega(\secp)$ stretch.
\end{corollary}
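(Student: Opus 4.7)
The plan is to deduce both bullets directly from Theorems~\ref{Intro_thm:main2} and~\ref{Intro_thm:main3} by invoking the elementary implications ``$0$-ancilla is a special case of $O(\log\secp)$-ancilla'' and ``adaptive security implies non-adaptive security.'' In other words, both parts are a downward-closure consequence of the preceding separations: if the stronger primitive admitted a black-box construction, then the weaker non-adaptive primitive with at most $O(\log\secp)$ ancilla would too, which the theorems forbid.

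For the first bullet, I would argue by contradiction. Suppose there is a black-box construction $(C,R)$ of ancilla-free PRUs from PRIs with $\Omega(\secp)$ stretch in the sense of Definition~\ref{def:BB}. Since an ancilla-free implementation uses $0 \le O(\log\secp)$ ancilla qubits, the construction $C^{(\cdot,\cdot)}$ trivially meets the correctness condition of $O(\log\secp)$-ancilla PRUs. Moreover, the class of QPT adversaries breaking non-adaptive PRU security of $C^{\tilde{G},\tilde{G}^\dag}$ is contained in the class of QPT adversaries breaking its (adaptive) PRU security, so the universally-quantified security reduction $R^{\tilde{\cA},\tilde{\cA}^\dag}$ from the hypothesized construction is \emph{a fortiori} a reduction that handles every non-adaptive adversary. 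Hence the same pair $(C,R)$ already witnesses a black-box construction of non-adaptive $O(\log\secp)$-ancilla PRUs from PRIs with $\Omega(\secp)$ stretch, contradicting Theorem~\ref{Intro_thm:main2}.

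The second bullet follows by the identical argument with ``PRU'' replaced by ``PRI with $O(\log\secp)$ stretch'', and the final contradiction taken with Theorem~\ref{Intro_thm:main3}. The only point worth checking explicitly is that the ancilla-count bound and the stretch bound are preserved verbatim in the weakening step; this is immediate because we do not modify the construction $C$ at all, but merely restrict the class of adversaries in the security game. There is no substantive obstacle here---the corollary is a purely syntactic consequence of the two theorems it cites, rather than a new separation that requires constructing an oracle or a QSVT-based adversary.
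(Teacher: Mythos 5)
Your proposal is correct and follows the same downward-closure logic the paper uses: the paper's one-line justification is precisely that (adaptive, ancilla-free/$O(\log\secp)$-ancilla) primitives trivially imply their non-adaptive, $O(\log\secp)$-ancilla counterparts, so the corollary is a direct consequence of \cref{Intro_thm:main2,Intro_thm:main3}. Your write-up merely makes explicit the two weakening steps (ancilla count and query adaptivity) and the unchanged security reduction, which is exactly the paper's intended reasoning.
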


\if0
\cite{MaHsi24} showed that PRUs can be constructed from (quantumly-secure) OWFs.
A natural question is whether this result can be improved by constructing PRUs or PRIs from weaker primitives than OWFs.
The results of \cite{BMMMY24,BosChenNeh24} imply that PRUs will not be constructed from one-way puzzles (OWPuzzs)~\cite{STOC:KhuTom24} in a black-box way.
The results of \cite{BosChenNeh24,ChenColSat24,TCC:AnaGulLin24} imply that PRUs will not be constructed from single-copy PRSGs~\cite{C:MorYam22} in a black-box way.
Our results, \cref{Intro_thm:main1,Intro_thm:main2,Intro_thm:main3}, improves these results because
PRIs and PRFSGs imply OWPuzzs and single-copy PRSGs,
and PRFSGs are separated from OWPuzzs and single-copy PRSGs \cite{BosChenNeh24,ChenColSat24,TCC:AnaGulLin24}.
\mor{Demo, kono ronbun wa log-ancilla PRU no kekka dakenanodewa? Hontou ni improve?}
\fi

Our main results are \cref{Intro_thm:main1,Intro_thm:main2,Intro_thm:main3}, but they are derived from the following 
technical results:
\ifnum\extendedabstract=0
\begin{theorem}[\cref{thm:main}, Informal]\label{Intro_thm:PRU_vs_PRFSG}
     There exists a unitary oracle $\cO$ such that
     PRFSGs exist but non-adaptive and $O(\log \secp)$-ancilla PRUs do not exist
     relative to $\cO$ and $\cO^\dagger$.
\end{theorem}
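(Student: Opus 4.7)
The plan is to construct a unitary oracle $\cO$ that behaves as a ``Haar-random PRFSG oracle.'' For each security parameter $\secp$, I sample, independently for every pair $(k,x) \in \{0,1\}^\secp \times \{0,1\}^{m(\secp)}$, a Haar random state $|\psi_{k,x}\rangle$, and define $\cO_\secp$ to be a unitary extension of the isometry $|k,x\rangle|0^m\rangle \mapsto |k,x\rangle|\psi_{k,x}\rangle$. The PRFSG construction relative to this oracle is the natural one: $G^{\cO}(k,x)$ queries $\cO_\secp$ on $|k,x\rangle|0^m\rangle$ and returns the second register. PRFSG security against QPT adversaries with access to both the challenge oracle $G(k,\cdot)$ and the full $\cO,\cO^\dagger$ follows from mutual independence of the $|\psi_{k,x}\rangle$ across keys: a standard hybrid (e.g.\ compressed-oracle) argument replaces the challenge states with fresh Haar random states, while queries to $\cO$ on keys $k' \neq k$ are statistically independent of the challenge.

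For the harder direction, suppose toward contradiction that $U_k := C^{\cO,\cO^\dagger}(k,\cdot)$ is an $O(\log\secp)$-ancilla non-adaptive PRU using $q = \poly(\secp)$ oracle queries. I would construct a non-adaptive adversary $\cA^{U,\cO,\cO^\dagger}$ that exploits the following structural observation: the image of $U_k$ on any fixed input lies ``mostly'' inside a polynomial-dimensional subspace of the output Hilbert space, spanned by the plausible oracle outputs $|\psi_{k',x}\rangle$ touched during the $q$ queries, tensored with the $2^{O(\log\secp)} = \poly(\secp)$-dimensional ancilla register. Concretely, the adversary applies $U_k$ to polynomially many copies of a canonical test state and then employs the quantum singular value transformation on a suitable block-encoding (built from fresh $\cO$ queries) to approximately project onto a ``cloud'' of plausible oracle outputs. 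For $U_k$ the post-projection acceptance probability is noticeably larger than for a Haar random unitary, yielding the required distinguishing advantage.

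The main obstacle is executing this QSVT-based argument rigorously. Three subtleties arise: (i) building a QPT block-encoding of the ``cloud projector'' using only non-adaptive queries to $U_k$ together with polynomially many queries to $\cO,\cO^\dagger$, and choosing a QSVT polynomial that achieves sharp enough spectral separation between the cloud and its complement; (ii) proving that Haar random unitaries, unlike $U_k$, send the test state to a state with vanishing overlap with this cloud --- a Haar measure concentration statement on a polynomial-rank projector inside a $2^\secp$-dimensional space; and (iii) arguing, via purifying $\cO$ and carefully tracking query-output dependencies, that the image of $U_k$ has noticeable overlap with the cloud. Item (iii) is where the $O(\log\secp)$-ancilla restriction becomes essential: the effective rank of the reachable subspace is at most $q \cdot 2^a = \poly(\secp)$, and this compression argument breaks down once $a = \omega(\log\secp)$, which is precisely why the theorem's conclusion is quantitatively tight at the logarithmic-ancilla regime.
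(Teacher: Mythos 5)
Your high-level plan follows the same arc as the paper --- use a Haar-state-based oracle to instantiate PRFSGs, and construct a QSVT-based adversary to break PRUs by exploiting the polynomial rank of the key-averaged Choi state --- but there are three concrete gaps that prevent the argument from closing.

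\textbf{The separation oracle is missing a $\unitaryPSPACE$-complete component.}
Your oracle $\cO$ consists only of the Haar-random PRFSG oracle. The paper's oracle is $\cO = (\cS, \cU)$ where $\cU$ is a $\unitaryPSPACE$-complete unitary family. This second component is not optional: the singular-value discrimination algorithm that the adversary runs requires roughly $\Theta(1/b)$ calls to the block-encoding unitary to achieve a spectral threshold $b$, and because the PRU key mixture can have exponentially small eigenvalues, $b$ must be chosen exponentially small (the paper takes $b = 2^{-2n}$ with $n = 2\secp\ell$, giving a doubly-exponential call count). The adversary cannot execute this in polynomial time; what it can do is observe that the whole discrimination circuit is a \emph{polynomial-space} unitary, and offload it to $\cU$ in a single query. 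Moreover, even building the block-encoding of the Choi state requires coherently controlling the circuit for each key $k$ in superposition, which again requires $\cU$ (controlled-$\cU$ remains in $\pureunitaryPSPACE$; controlled-$\cS$ is not available from $\cS$ alone). Without $\cU$ your QSVT adversary is not QPT, and the theorem statement is not proved.

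\textbf{The oracle cannot simply be queried to build the block-encoding; it must be removed from the Choi-state preparation.}
You propose building the block-encoding ``from fresh $\cO$ queries'' and your item~(iii) speaks of purifying $\cO$. The paper takes a different and, I believe, essential route: it replaces the PRU circuit's queries to $\cS$ by (a) classical descriptions obtained via process tomography for the small oracles $\cS_n$, $n \le d = O(\log\secp)$, whose dimension is $\poly(\secp)$; and (b) the identity for the large oracles $\cS_n$, $n > d$, using the fact that the swap between $\ket{0\cdots0}$ and a Haar-random state is $O(2^{c'/2}/2^{n/2})$-close to identity on the maximally entangled state. Only after this ``oracle removal'' step can the approximating Choi state $\rho_2 = (\cE_{\{V_k\},\ell}\otimes\identitymap)(\ketbra{\Omega}{\Omega})$ be prepared --- and hence block-encoded --- using $\cU$ alone. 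This is the main technical novelty of the PRU-breaking argument and it does not appear in your proposal. Note also that this step is exactly where the swap structure of the paper's oracle (as opposed to an arbitrary unitary extension of the isometry $|k,x\rangle|0^m\rangle \mapsto |k,x\rangle|\psi_{k,x}\rangle$) is used, since an arbitrary extension need not be close to identity on the maximally entangled state.

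\textbf{The polynomial-rank bound is of a different nature than you describe.}
You claim the effective rank is $q\cdot 2^a$ (queries times ancilla dimension), tied to the ``plausible oracle outputs $|\psi_{k',x}\rangle$ touched during the $q$ queries.'' But that span depends on the key and is therefore not a useful projector for the adversary, who does not know $k$. The object the paper projects onto is the support of the \emph{key-averaged} $\ell$-fold Choi state $\rho_2$, whose rank is bounded by $|\cK_\secp|\cdot 2^{c\ell}$: there are at most $|\cK_\secp|$ keys, each contributing a pure state on the joint (system, ancilla) register, and tracing out the $c\ell$-qubit ancilla increases rank by at most $2^{c\ell}$. Setting $\ell = \lceil\log|\cK_\secp|\rceil$ makes this $2^{(1+c)\ell} = \poly(\secp)$ when $c = O(\log\secp)$, while the corresponding Haar Choi state has overlap $O(\ell^{1/2}(2^{1+c}e^{-1}\ell/2^{2\secp})^\ell) = \negl(\secp)$ with any such low-rank projector. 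The number of queries $q$ does not enter this bound; the quantity that matters is the size of the key space, and the ancilla bottleneck is correctly identified but for the wrong reason.

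Finally, a smaller remark: the paper's PRFSG security proof uses a BBBV-based argument of Kretschmer together with a Lipschitz concentration bound over products of Haar measures; you propose a compressed-oracle argument. Compressed oracles are tailored to random functions rather than Haar-random \emph{states}, and it is not obvious the technique transfers; the BBBV route is the safe one here. This last point is a choice of technique and not a gap per se, but it is worth flagging.
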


\begin{theorem}[\cref{thm:PRI_vs_PRFSG}, Informal]\label{Intro_thm:PRI_vs_PRFSG}
     There exists a unitary oracle $\cO$ such that
     PRFSGs exist but non-adaptive and $O(\log \secp)$-ancilla PRIs with $O(\log\secp)$ stretch do not exist
     relative to $\cO$ and $\cO^\dagger$.
\end{theorem}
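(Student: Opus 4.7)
The plan is to reuse the unitary oracle $\cO$ constructed in the proof of \cref{Intro_thm:PRU_vs_PRFSG} and extend its PRU adversary to one that attacks PRIs with logarithmic stretch. The PRFSG existence proof relative to $\cO$ is independent of which primitive we are ruling out, so it carries over verbatim; the entire effort therefore concentrates on building the attack against the PRI side.

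Fix a candidate non-adaptive $O(\log\secp)$-ancilla PRI $\{\cI_k\}_k$ with stretch $s = O(\log\secp)$ and true ancilla count $a = O(\log\secp)$. Its unitary implementation $\tilde{\cI}_k$ acts on $\secp + s + a$ qubits; on input $|\psi\rangle\otimes|0^{s+a}\rangle$ it places $\cI_k|\psi\rangle$ on the first $\secp + s$ qubits while the remaining $a$ qubits return to $|0^a\rangle$. Since $s+a = O(\log\secp)$, the stretch-plus-ancilla register has dimension $\poly(\secp)$, so $\tilde{\cI}_k$ is effectively an $O(\log\secp)$-ancilla unitary, except that part of its ancilla is output rather than discarded. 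The adversary mimics the QSVT-based PRU distinguisher from \cref{Intro_thm:PRU_vs_PRFSG}: it non-adaptively prepares the analog of the PRU queries (padding the stretch register with $|0^s\rangle$), then uses $\cO$, $\cO^\dagger$, and QSVT to implement the same polynomial transformation on a polynomial-size block encoding of $\cI_k$, and finally applies the corresponding measurement to distinguish from a Haar random isometry.

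The main obstacle is showing that this test still separates the PRI from a genuine Haar random isometry $V\colon \secp \to \secp + s$, rather than from a Haar random unitary as in the PRU case. Two points need care. First, the adversary only queries $\cI_k$, not $\tilde{\cI}_k$, so the block encoding required by QSVT must be constructed from isometry queries alone; because only $a = O(\log\secp)$ qubits are truly discarded, a polynomial-size purification register suffices at polynomial overhead, and the QSVT machinery applies without modification. Second, a Haar random isometry $\secp \to \secp + O(\log\secp)$ can be viewed as a uniformly random $2^\secp$-dimensional subspace of an ambient space of dimension $\poly(\secp)\cdot 2^\secp$, so the concentration estimates underlying the PRU distinguisher (which relied on Haar random unitaries being incompressible) lift to the isometry setting up to a $\poly(\secp)$ loss, absorbable by polynomially many repetitions of the test. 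Once both points are in place, the attack succeeds with non-negligible advantage whenever $s+a = O(\log\secp)$, contradicting the assumed security of $\{\cI_k\}_k$ and completing the separation.
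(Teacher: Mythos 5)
Your high-level plan matches the paper: reuse the oracle $\cO=(\cS,\cU)$ from \cref{Intro_thm:PRU_vs_PRFSG}, carry over the PRFSG existence proof unchanged, and adapt the QSVT distinguisher to the isometry setting, with the constraint $s+c=O(\log\secp)$ being essential. But two parts of your account of the attack are off in ways that matter.

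First, you say the block encoding ``must be constructed from isometry queries alone.'' That misidentifies where the block encoding comes from. In the paper's argument the adversary does not query the candidate PRI (or its unitary implementation) to build the block encoding; it builds it entirely on its own. Concretely, the adversary performs process tomography on $\cS_n$ for $n\in[d]$ (with $d$ chosen so $2^d\le\poly(\secp)$), uses the resulting classical descriptions to synthesize the ``cleaned'' unitaries $\{V_k\}_k$ (with large-$n$ swap queries removed and small-$n$ ones replaced by their tomographed versions), and then, aided by the $\unitaryPSPACE$-complete oracle $\cU$, constructs a $(1,2^{-p},\poly)$-block encoding of the averaged Choi state $(\cF_{\{V_k\},\ell}\otimes\identitymap)(\ketbra{\Omega_{2^{\secp\ell}}}{\Omega_{2^{\secp\ell}}})$. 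The only place the test oracle ($\cI_k$ or Haar isometry) is queried is to prepare the $\ell$-fold Choi state that is fed into the SVD algorithm as the input state, and this is done non-adaptively. Also, a ``block encoding of $\cI_k$'' is not the right object; one block-encodes the density matrix $(\cF_{\{V_k\},\ell}\otimes\identitymap)(\ketbra{\Omega}{\Omega})$.

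Second, your treatment of the Haar-side bound is where the real new work lies, and the ``$\poly(\secp)$ loss absorbable by polynomially many repetitions'' step hides it. What is actually needed is a Haar twirl lemma for isometries: the random Choi state $(\cI_{\secp\to\secp+s,\ell}\otimes\identitymap)(\ketbra{\Omega_{2^{\secp\ell}}}{\Omega_{2^{\secp\ell}}})$ is $O(\ell^2/2^{\secp+s})$-close in trace norm to $\Exp_{\ket{\psi}\gets\sigma_{2^{2\secp+s}}}\ketbra{\psi}{\psi}^{\otimes\ell}$ (this is \cref{lem:Haar_isometry_Choi}, proved via the permutation-operator Haar twirl approximation). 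Combined with the rank bound $\Tr[Q']\le 2^{(1+c)\ell}$ and Stirling, this gives a genuinely \emph{negligible} overlap $\Tr[Q'\,\xi]$ in a single run, not a $1/\poly$ overlap needing amplification; no repetition argument is used or required. Relatedly, you should make explicit why $s=O(\log\secp)$ is needed: the tomography cutoff $d$ must now satisfy $\ell T\,2^{s+3c/2}/2^{d/2}=O(1/p)$, so $d=\Theta(\log(\ell T p)+s+c)$, and $2^d\le\poly(\secp)$ only if $s$ and $c$ are both $O(\log\secp)$. Your proposal points at this constraint but does not derive it.
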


\begin{theorem}[\cref{thm:PRI_vs_PRI}, Informal]\label{Intro_thm:PRI_vs_PRI}
     There exists a unitary oracle $\cO$ such that
     PRIs with $\Omega(\secp)$ stretch exist but non-adaptive and $O(\log \secp)$-ancilla PRIs with $O(\log\secp)$ stretch do not exist
     relative to $\cO$ and $\cO^\dagger$.
\end{theorem}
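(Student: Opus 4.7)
The plan is to mirror the blueprint of \cref{Intro_thm:PRU_vs_PRFSG,Intro_thm:PRI_vs_PRFSG}: construct a unitary oracle $\cO$ which unconditionally realizes $\Omega(\secp)$-stretch PRIs while ruling out $O(\log\secp)$-stretch, $O(\log\secp)$-ancilla PRIs through a QSVT-based adversary. Concretely, define $\cO = \{U_k\}_{k\in\{0,1\}^\secp}$ with each $U_k$ a Haar random unitary on $\secp + s(\secp)$ qubits where $s(\secp)=\Omega(\secp)$, and interpret it as the Haar random isometry $V_k|x\rangle \coloneqq U_k(|x\rangle|0\rangle^{s(\secp)})$.

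For the existence half, the obvious candidate $\cI_k \coloneqq V_k$ should be a PRI with $\Omega(\secp)$ stretch. Security is essentially unconditional: queries to $U_{k'}$ for $k'\neq k$ are independent Haar random unitaries and carry no information about $V_k$, and a $\poly(\secp)$-query algorithm cannot isolate the hidden key $k$ among $2^\secp$ candidates by Grover-style lower bounds. Hence the adversary's view in the real world (challenge $=V_k$ for a hidden $k$) is statistically close to its view in the ideal world (challenge $=$ an independent Haar random isometry $H$).

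For the impossibility half, given any candidate construction $C^{\cO,\cO^\dagger}$ using $T=\poly(\secp)$ queries, $O(\log\secp)$ stretch, and $O(\log\secp)$ ancilla, I would build a QPT adversary using the QSVT toolkit developed for the earlier theorems. The conceptual driver is a dimensional mismatch between the $\Omega(\secp)$-stretched oracle and the $O(\log\secp)$-stretched candidate: the output of $C^{\cO,\cO^\dagger}(k,\cdot)$ necessarily lies in an ``$\cO$-reachable'' subspace --- roughly generated by applying the unitary building blocks of $C$ together with $V_{k_j}^{\pm 1}$ for the queried keys, starting from computational basis states --- whereas a Haar random isometry with only $O(\log\secp)$ stretch is nearly orthogonal to this subspace once the latter is made small relative to the ambient $2^{\secp+O(\log\secp)}$-dimensional output space. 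The adversary (i) evaluates $\cI$ on a chosen probe input, (ii) uses QSVT with $\cO,\cO^\dagger$ queries to approximately block-encode a projector $\Pi$ onto a tailored reachable subspace, and (iii) measures $\Pi$ on the output, accepting iff the outcome is $+1$. This should yield an $\Omega(1)$-versus-$\negl(\secp)$ acceptance gap between the real and ideal worlds.

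The main obstacle will be finding a reachable subspace $\Pi$ that simultaneously satisfies: (a) $C^{\cO,\cO^\dagger}(k,\cdot)$'s image has $\Omega(1)$ amplitude in $\Pi$ in the real world, argued inductively along the circuit of $C$ with particular care for $\cO^\dagger$ queries, which can leak amplitude out of a naively defined reachable subspace; (b) a Haar random $O(\log\secp)$-stretch isometry has only $\negl(\secp)$ amplitude in $\Pi$ in the ideal world, via Haar concentration exploiting the $\Omega(\secp)$-versus-$O(\log\secp)$ stretch gap; and (c) $\Pi$ admits a QPT block-encoding from $\cO,\cO^\dagger$ so that QSVT applies. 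I would adapt the QSVT adversary template of \cref{Intro_thm:PRU_vs_PRFSG,Intro_thm:PRI_vs_PRFSG} with this stretch-gap projector; balancing the tension between (a) (wanting $\Pi$ large enough to absorb $C$'s output) and (b) (wanting $\Pi$ small enough that Haar isometries miss it) is where I anticipate the sharpest analysis is needed.
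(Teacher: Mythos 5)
Your existence half is fine in outline, but the impossibility half has two genuine gaps that break the argument: the oracle design and the missing $\unitaryPSPACE$-complete oracle.

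\paragraph{The oracle design does not support the removal step.}
You instantiate $\cO$ as a collection of Haar random unitaries $U_k$ given directly (interpreted as isometries via $V_k\ket{x}=U_k(\ket{x}\ket{0^{s}})$). The paper instead uses a \emph{swap} oracle $\HRI_{t,n,m}$ which reflects only across the $2\cdot 2^n$-dimensional subspace $\SpanSpace\{\ket{0}\ket{0^t}\ket{x},\ket{1}U_{n,m}(\ket{0^t}\ket{x})\}_x$ inside the $2^{n+t(n)+1}$-dimensional ambient space, and is therefore within $O(2^{-t(n)/2})$ of the identity on a half of a maximally entangled state when $t(n)$ is large (\cref{lem:HRI_is_almost_identity_for_Choi_state}). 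This is what makes the adversary work: queries to $\HRI_{t,n,m}$ at scales $n$ large enough that tomography is infeasible can simply be deleted, while queries at small scales can be tomographed in QPT. With your design, $U_k$ is a \emph{full} Haar random unitary on $2^{\secp+s(\secp)}$ dimensions, which is nowhere near identity, and its dimension is far too large for process tomography. There is no ``big/small'' dichotomy available to de-oraclize the candidate circuit, so the step (c) you flag — constructing a block-encoding of the Choi state from $\cO,\cO^\dagger$ — has no route to work. The swap structure in the paper is not incidental; it is exactly what lets the argument go through.

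\paragraph{The $\unitaryPSPACE$-complete oracle is necessary.}
Your distinguisher applies QSVT singular value discrimination (\cref{thm:singular_value_discrimination}) to a block-encoding of the reference Choi state. The degree needed is of order $1/\max\{b-a,\ldots\}$ with thresholds $a=2^{-3n},b=2^{-2n}$, i.e.\ exponentially many calls to the block-encoding. This is not a QPT circuit. The paper circumvents this by pairing the Haar oracle with the $\unitaryPSPACE$-complete oracle $\cU$ (\cref{lem:unitaryPSPACE_has_a_complete_problem}): once the de-oraclized circuit is implementable with only $\cU$ and the tomographed classical descriptions, its Choi-state purification sits in $\unitaryPSPACE$ and can be invoked as a single QPT query to $\cU$ (\cref{lem:block-encoding_of_random_Choi_states}). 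Your $\cO$ contains no such oracle, so the measurement $\Pi$ cannot be implemented in QPT. Finally, your ``reachable subspace'' heuristic is not the argument the paper makes: the actual separating projector is the support projector of the mixed Choi state $\cE_{\{V_k\},\ell}$, whose rank is bounded by $|\cK_\secp|\cdot 2^{c\ell}=\poly(\secp)$, and the Haar-isometry Choi state has negligible overlap with any such poly-rank projector via the symmetric-subspace dimension count (\cref{lem:Haar_isometry_Choi_has_negligible_overlap}). That is a simpler and sharper object than a circuit-inductive reachability subspace, and it already handles $\cO^\dagger$ queries for free.
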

\fi
\ifnum\extendedabstract=1
\begin{theorem}[Informal]\label{Intro_thm:PRU_vs_PRFSG}
     There exists a unitary oracle $\cO$ such that
     PRFSGs exist but non-adaptive and ancilla-free PRUs do not exist
     relative to $\cO$ and $\cO^\dagger$.
\end{theorem}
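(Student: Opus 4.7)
The plan is to take the oracle $\cO$ to consist of independent Haar random unitaries $\{U_{k,x}\}_{k \in \bit^\secp,\, x \in \bit^*}$ acting as $\cO\ket{k,x}\ket{\psi} = \ket{k,x}\,U_{k,x}\ket{\psi}$. The PRFSG candidate is $G(k,x) \coloneqq U_{k,x}\ket{0^n}$; for uniformly random $k$, the states $\{G(k,x)\}_{x}$ are independent Haar random states, and since the hidden key $k$ is information-theoretically concealed under any polynomial-query interaction with $\cO$ and $\cO^\dag$, a standard hybrid argument shows that $G$ satisfies adaptive quantum-accessible PRFSG security relative to $\cO, \cO^\dag$.

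The main effort is to rule out non-adaptive, $O(\log\secp)$-ancilla PRUs relative to the same oracle. Fix any candidate construction $C^{\cO, \cO^\dag}$ that on input $k$ realises a unitary $\tilde{C}_k$ on $n$ qubits using $a = O(\log\secp)$ ancilla qubits and $\poly(\secp)$ oracle queries. I would construct a non-adaptive QPT adversary $\cA^{W, \cO, \cO^\dag}$ that, given oracle access to a unitary $W$ (either $\tilde{C}_k$ for a random key $k$ or a Haar random unitary) together with $\cO, \cO^\dag$, distinguishes the two cases. The adversary is built from the quantum singular value transformation (QSVT): using only non-adaptive queries to $W$, it assembles a block encoding of a Hermitian operator $M$ formed from the responses of $W$ together with selected queries to $\cO$, and then applies a QSVT polynomial $p$ chosen to act as a sharp filter distinguishing the singular value profile of $M$ in the two cases.

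The hardest step will be showing that this filter actually separates the two cases. Intuitively, once $\cO$ is fixed, the unitary $\tilde{C}_k$ lies in an effectively low-dimensional family of unitaries parameterised by the oracle responses on which $C$ depends, since $a = O(\log\secp)$ ancilla qubits confine the relevant purification to a $\poly(\secp)$-dimensional auxiliary space. The proof should exploit this to show that a suitably chosen spectral statistic of $M$ is noticeably different in the pseudo case from its Haar-random mean, while Haar concentration pins the statistic in the real case. The main technical lemma will combine (i) Haar concentration to control the real case, (ii) a structural lemma stating that every $O(\log\secp)$-ancilla polynomial-query construction has a bounded-dimensional effective description given $\cO$, and (iii) a union bound over a suitable net of all polynomial-size QPT constructions to make the structural bound uniform. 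Designing the QSVT polynomial so that its gap survives both the union bound and the randomness of $\cO$ is the most delicate piece; this is exactly where the QSVT-based adversarial toolkit of the paper is needed, and where I expect the main technical difficulty to lie.
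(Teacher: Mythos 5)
Your proposed oracle cannot work: taking $\cO$ to be a controlled family of independent Haar random unitaries $\{U_{k,x}\}$ directly hands the adversary a PRU. On key $k$, simply apply $U_{k,0^m}$ by querying $\cO$ with control register $|k,0^m\rangle$; by Kretschmer's argument \cite{TQC:Kre21}, a polynomial-query distinguisher cannot locate the matching $U_{k,0^m}$ among exponentially many siblings, so this construction is a secure PRU relative to $\cO,\cO^\dagger$. Thus PRUs \emph{exist} relative to your oracle, which is the opposite of what the theorem requires.

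The paper instead uses a deliberately weaker oracle: $\cS_{n,m}$ is the unitary that swaps $|0\rangle|0^n\rangle$ with $|1\rangle|\psi_{n,m}\rangle$ and acts as the identity on the orthogonal complement. This hands you the Haar random \emph{state} $|\psi_{n,m}\rangle$ (so PRFSGs exist) but not the full Haar random \emph{unitary}. The same structure is what makes the attack on PRUs go through: $\cS_{n,m}$ differs from the identity only on a two-dimensional subspace, so applying it to one register of the maximally entangled state moves the state by only $O(2^{-n/2})$ in trace distance. The adversary can therefore replace all large-$n$ swap queries by the identity, process-tomograph the $O(1)$ small-$n$ ones, and only then run the QSVT singular-value discrimination against the low-rank support of the simulated Choi mixture --- using a $\unitaryPSPACE$-complete oracle $\cU$ that must also be built into the separation oracle to make the QSVT step QPT. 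None of these steps has an analogue for a full Haar random unitary oracle. Your outline of the QSVT filter is in the right spirit (the paper does project onto the $\poly(\secp)$-rank support of the Choi mixture over keys and uses the negligible overlap of the Haar Choi state with any low-rank projection), but the union bound over a net of all QPT constructions does not appear in the paper and is unnecessary; each candidate construction is broken by a tailored adversary.
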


\begin{theorem}[Informal]\label{Intro_thm:PRI_vs_PRFSG}
     There exists a unitary oracle $\cO$ such that
     PRFSGs exist but non-adaptive and ancilla-free PRIs with $O(\log\secp)$ stretch do not exist
     relative to $\cO$ and $\cO^\dagger$.
\end{theorem}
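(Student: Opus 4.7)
The plan is to adapt the oracle and QSVT-based adversary used for \cref{Intro_thm:PRU_vs_PRFSG} to the isometric setting. Let $\cO$ be the unitary defined by $\cO\ket{k,x}\ket{0^n} = \ket{k,x}\ket{\psi_{k,x}}$ for independent Haar-random $n$-qubit states $\ket{\psi_{k,x}}$ with $n = \poly(\secp)$, extended arbitrarily on the orthogonal complement of inputs of the form $\ket{k,x,0^n}$ so as to remain unitary. Relative to $(\cO,\cO^\dagger)$, the map $(k,x)\mapsto\ket{\psi_{k,x}}$ obtained by running $\cO$ on $\ket{k,x,0^n}$ and discarding the $(k,x)$ register is a quantumly-accessible, adaptively-secure PRFSG: $\cO^\dagger$ does not help the distinguisher since using it requires preparing the correct secret-key register $\ket{k}$, while a standard tracing-out-Haar argument handles forward queries.

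For the separation, fix any candidate non-adaptive ancilla-free PRI $V^{\cO,\cO^\dagger}$ with stretch $s = O(\log \secp)$. For each key $k$, the isometry $V_k^{\cO,\cO^\dagger}:\C^{2^\secp}\to\C^{2^{\secp+s}}$ has image of dimension only $2^\secp$, a $2^{-s} = 1/\poly(\secp)$ fraction of the codomain. Since the construction is non-adaptive and ancilla-free, $V_k^{\cO,\cO^\dagger}$ factors as $U_k^{(2)}\,(\cO^{\otimes a}\otimes(\cO^\dagger)^{\otimes b})\,U_k^{(1)}$ for oracle-independent unitaries $U_k^{(1)}, U_k^{(2)}$ on $\secp+s$ qubits, making its range an explicit $k$-dependent subspace. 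The distinguishing adversary $\cA$, given oracle access to $(\cO,\cO^\dagger)$ and a challenge $\cI$ equal either to $V_k^{\cO,\cO^\dagger}$ (with secret $k$) or to a fresh Haar-random isometry, prepares a fixed input $\ket{\phi}$, computes $\cI\ket{\phi}$, and then uses a QSVT-based procedure (mirroring the one constructed for \cref{Intro_thm:PRU_vs_PRFSG}) on a block-encoding built out of $\cO, \cO^\dagger$ and the public construction circuit to approximately apply the projector onto the range of $V_k^{\cO,\cO^\dagger}$, accepting iff the projection succeeds. In the real world this occurs with probability $1-\negl(\secp)$, while in the ideal world the expected success probability is $2^{-s} = 1/\poly(\secp)$, yielding a noticeable distinguishing gap.

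The main obstacle will be implementing the range projector efficiently with only $\poly(\secp)$ non-adaptive queries to $\cO, \cO^\dagger$ and without knowledge of $k$. The $O(\log\secp)$-ancilla (here, ancilla-free) restriction is what makes this tractable: it keeps the relevant workspace at dimension $\poly(\secp)\cdot 2^\secp$, so a suitable block-encoding derived from the oracle and the public construction circuit has an $\Omega(1/\poly(\secp))$ singular-value gap separating the range of $V_k^{\cO,\cO^\dagger}$ from its orthogonal complement — precisely the regime in which QSVT polynomial approximations to the threshold function are efficient. The core technical work, carried out in parallel with the proof of \cref{Intro_thm:PRU_vs_PRFSG}, will be to construct this block-encoding, handle the unknown $k$ via a combination of amplitude amplification and averaging, and bound the cumulative QSVT approximation error together with the statistical error incurred by the Haar average over the $\ket{\psi_{k,x}}$, so that the real-world accept probability remains $1-o(2^{-s})$.
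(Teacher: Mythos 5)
Your high-level plan matches the paper's in broad strokes — a Haar-state oracle to get PRFSGs, a low-rank/support argument to catch short-stretch PRIs, and QSVT singular-value discrimination to implement the test — but your distinguishing attack has a genuine gap, and your oracle is missing a necessary component.

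\paragraph{The single-copy test fails for large key spaces.}
You propose to fix a single input $\ket{\phi}$, compute $\cI\ket{\phi}$, and project onto something like ``the range of $V_k$,'' claiming real-world acceptance $1-\negl$ and ideal-world acceptance $\approx 2^{-s}$. But the adversary does not know $k$, so the only $k$-independent subspace it can single out is the support of $\Exp_k V_k\ketbra{\phi}{\phi}V_k^\dagger$, whose rank is bounded by $|\cK_\secp|$ rather than by $2^\secp$. In the Haar world, the output on a fixed input is the maximally mixed state on $\secp+s$ qubits, so the acceptance probability is $\mathrm{rank}(Q)/2^{\secp+s}\le |\cK_\secp|/2^{\secp+s}$. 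Since $|\cK_\secp|$ can be $2^{\poly(\secp)}\gg 2^{\secp+s}$ when $s=O(\log\secp)$, this probability need not be small, so the single-shot test is blind. ``Amplitude amplification and averaging'' cannot repair this: amplification boosts a test that already distinguishes; here the underlying one-shot statistic gives no advantage for large $|\cK_\secp|$.

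The paper avoids this by moving to $\ell=\lceil\log|\cK_\secp|\rceil$ copies of the \emph{Choi state} $(\cI^{\otimes\ell}\otimes I)\ket{\Omega_{2^{\secp\ell}}}$. On the PRI side, $\rho=(\cF_{\{V_k\},\ell}\otimes\identitymap)(\ketbra{\Omega}{\Omega})$ has rank at most $2^{(1+c)\ell}$; on the Haar side, the Choi state is (up to negligible error) uniform over a symmetric subspace of dimension $\binom{2^{2\secp+s}+\ell-1}{\ell}$. The crucial point is the scaling in $\ell$: the denominator $\binom{2^{2\secp+s}+\ell-1}{\ell}\approx 2^{(2\secp+s)\ell}/\ell!$ grows exponentially faster in $\ell$ than the numerator $2^{(1+c)\ell}$, so the overlap is negligible for \emph{any} $\poly$-size key (\cref{lem:Haar_isometry_Choi_has_negligible_overlap}). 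This scaling is exactly what your single-copy proposal loses.

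\paragraph{The oracle is missing the $\unitaryPSPACE$-complete component.}
You build $\cO$ purely from the Haar-state generator. But the QSVT singular-value discrimination you invoke must resolve singular values that can be exponentially small (the Choi state over $\{V_k\}$ can have $2^{-\poly(\secp)}$-sized eigenvalues), which is polynomial-\emph{space} but not polynomial-\emph{time}. The paper's separation oracle is $(\cS,\cU)$ where $\cU$ is a $\unitaryPSPACE$-complete problem (\cref{lem:unitaryPSPACE_has_a_complete_problem}); the adversary makes polynomially many queries to $\cU$ to run the QSVT procedure in QPT. Without such an oracle, your distinguisher is not efficient. Relatedly, to block-encode the averaged Choi state you need a $k$-controlled implementation of $V_k$ that does not query the un-controllable swap oracle; the paper handles this by process-tomographing the polynomially many small-dimension $\cS_n$ and showing the large ones act negligibly on Choi states (\cref{lem:swap_unitary_is_almost_identity_for_Choi_state;ancilla}), another step your sketch does not address.
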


\begin{theorem}[Informal]\label{Intro_thm:PRI_vs_PRI}
     There exists a unitary oracle $\cO$ such that
     PRIs with $\Omega(\secp)$ stretch exist but non-adaptive and ancilla-free PRIs with $O(\log\secp)$ stretch do not exist
     relative to $\cO$ and $\cO^\dagger$.
\end{theorem}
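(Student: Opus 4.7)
The plan is to construct a unitary oracle $\cO$ implementing a family of Haar random isometries $\{V_k\}_{k \in \zo{\secp}}$ with each $V_k \colon \C^{2^\secp} \to \C^{2^{(1+c)\secp}}$ for some constant $c > 0$, accessed via a unitary dilation so that both $V_k$ and $V_k^\dagger$ are queryable. The positive direction — that PRIs with $\Omega(\secp)$ stretch exist relative to $\cO$ — is immediate: setting $\cI_k := V_k$ gives a construction whose challenge oracle is literally a Haar random isometry, so it is trivially indistinguishable from one. The work lies in the negative direction.

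For the negative direction, fix any candidate $\cI'^{V_k, V_k^\dagger}$ making $q = \poly(\secp)$ queries, using $O(\log \secp)$ ancillas, and outputting on $\secp + O(\log\secp)$ qubits. I would design a non-adaptive QPT adversary $\cA$ as follows: (i) $\cA$ non-adaptively queries the challenge isometry on a polynomial-size set of test states $\{\ket{\psi_i}\}$; (ii) $\cA$ uses QSVT with oracle access to $V_k, V_k^\dagger$ to implement, with inverse-polynomial error and polynomially many queries, a test operator $T_k$ on the ambient space $\C^{2^{(1+c)\secp}}$ that acts as a singular-value filter for the image of $V_k$; (iii) $\cA$ embeds each output state into the ambient space (padding with $\ket{0}$'s), applies $T_k$, measures, and outputs a decision based on the acceptance statistics. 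The intuition, mirroring the technique behind \cref{Intro_thm:PRI_vs_PRFSG}, is that the output of $\cI'$, once embedded, lies essentially in a $\poly(\secp) \cdot 2^\secp$-dimensional subspace determined by $V_k$'s oracle-query orbit, so $T_k$ accepts with high probability in the real world; whereas a Haar random small-stretch isometry independent of $V_k$ produces outputs whose overlap with $V_k$'s image is concentrated near $2^{-c\secp}$ by the exponential dimension gap, so $T_k$ accepts with negligible probability.

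The main obstacle will be turning this intuition into a quantitative bound. Three pieces must be carefully combined: (1) a structural \textit{compression} lemma showing that, across all non-adaptive inputs, the joint support of $\cI'$'s outputs in the ambient space has small effective dimension and is contained in a simple polynomial-query orbit of $V_k$-images, despite the adversary's queries being quantum; (2) a QSVT design and query-complexity analysis for $T_k$, leveraging the unitary dilation of $V_k$ as an intrinsic block-encoding and applying a polynomial approximation of a thresholded projector onto $\mathrm{Im}(V_k)$ with inverse-polynomial precision; and (3) a Haar integration argument bounding the expected acceptance probability of $T_k$ when applied to a Haar random small-stretch isometry's outputs, using the exponential gap $2^{(1+c)\secp} \gg 2^{\secp + O(\log \secp)}$ to obtain negligible acceptance. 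The delicate interplay between the QSVT approximation error, the non-adaptive query structure of $\cA$, and the Haar moment calculations — which must simultaneously average over both the ideal-world isometry and the oracle $V_k$ — is where I expect the technical heart of the argument to lie.
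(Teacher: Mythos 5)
Your proposal diverges from the paper's approach in ways that leave fundamental gaps. First, the positive direction is not ``trivially indistinguishable'': the adversary has oracle access to the \emph{entire} family $\{V_k\}_k$ (and its inverse), so given a challenge it could in principle search for which $k$ was sampled. Ruling this out requires the concentration-plus-BBBV argument that the paper inherits from the PRFSG proof; you cannot wave it away. Second, and more seriously, your oracle is a unitary dilation of a Haar random isometry, whereas the paper's $\HRI_t$ oracle is a rank-$2\cdot 2^n$ \emph{perturbation of the identity} on a $2^{n+t(n)+1}$-dimensional space: it only moves the subspace $\mathrm{span}\{\ket{0}\ket{0^t}\ket{x},\ket{1}U_{n,m}\ket{0^t}\ket{x}\}_x$. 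This structure is load-bearing. The paper's key approximation step (\cref{lem:HRI_is_almost_identity_for_Choi_state}) shows that applying $\HRI_{t,n,m}$ to half a maximally entangled state moves it by only $O(2^{-t(n)/2})$ in Euclidean norm, precisely because $\Tr[\HRI_{t,n,m}]/2^{n+t(n)+1} = 1 - 2^{-t(n)}$. A Haar random dilation unitary $W_{k,n}$ has $\Tr[W_{k,n}]\approx 0$, so $\|(W_{k,n}\otimes I)\ket{\Omega} - \ket{\Omega}\|^2 \approx 2$ — the Choi state is completely scrambled. Consequently, the adversary cannot drop the mid-size oracle queries and simulate the candidate without the oracle, which is exactly what you would need to set up the QSVT block encoding. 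Your oracle design makes the one step that makes the argument work impossible.

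Third, even setting that aside, your test $T_k$ is the wrong invariant. You propose filtering for membership in $\mathrm{Im}(V_k)$, but the candidate isometry need not output anything in any $\mathrm{Im}(V_k)$: a candidate that ignores the oracle and outputs $\ket{\psi}\ket{0^s}$ produces states orthogonal to every $\mathrm{Im}(V_k)$ image, yet should still be caught as non-Haar. Moreover, the candidate outputs on $\secp + O(\log\secp)$ qubits and so cannot even query $V_{k,\secp}$ (which requires $(1+c)\secp$ wires), making ``the image of $V_k$'' for the $\secp$-sized oracle doubly irrelevant. The paper's distinguisher instead targets the \emph{rank of the key-averaged Choi state}: $(\cE_{\{V_k\},\ell}\otimes\identitymap)(\ketbra{\Omega_{2^{\secp\ell}}}{\Omega_{2^{\secp\ell}}})$ has rank $\le |\cK_\secp|\cdot 2^{c\ell} = \poly(\secp)$ regardless of what the candidate does with the oracle, while a Haar random isometry's Choi state overlaps any polynomial-rank projector negligibly (\cref{lem:Haar_isometry_Choi_has_negligible_overlap}). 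That is what QSVT's singular-value discrimination is used to test, and the projector is implemented after tomographing only the $O(\log\secp)$-sized oracles. Finally, your oracle is missing the $\unitaryPSPACE$-complete component $\cU$, which the distinguisher needs to actually run the QSVT circuit, and it should be defined at all input sizes $n$, not only at $n=\secp$.
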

\fi
\ifnum\extendedabstract=0
Here the existence and non-existence of primitives relative to oracles mean the following.
\begin{definition}
\label{def:oraclesepa}
Let $\cO$ be a unitary oracle.
We say that a primitive exists relative to $\cO$ and $\cO^\dagger$   
if there exists a QPT algorithm $C^{(\cdot,\cdot)}$ such that both of the following two conditions are satisfied:
\begin{itemize}
    \item 
    $C^{\cO,\cO^\dagger}$ satisfies the correctness of the primitive.
    \item
    $C^{\cO,\cO^\dagger}$ satisfies the security of the primitive against any QPT adversary $\cA^{\cO,\cO^\dagger}$
    that can query $\cO$ and $\cO^\dagger$.
\end{itemize}
\end{definition}
\fi

\ifnum\extendedabstract=0
In \cref{sec:black-box_construction}, we will explain that this oracle separation implies the impossibility of the
black-box construction.
A high-level overview of our proofs of \cref{Intro_thm:PRU_vs_PRFSG,Intro_thm:PRI_vs_PRFSG,Intro_thm:PRI_vs_PRI} 
will be explained 
in \cref{sec:technicaloverview}.
\fi

\ifnum\extendedabstract=0
In the above definition, \cref{def:oraclesepa}, the qurey to both $\cO$ and $\cO^\dagger$ are allowed.
If the query to only $\cO$ is allowed, and that to $\cO^\dagger$ is not allowed,
what we can rule out is not the black-box construction of \cref{def:BB},
but a more restricted one where $C$ queries only $\tilde{G}$ and $R$ queries only $\tilde{\cA}$. 
Because our \cref{Intro_thm:PRU_vs_PRFSG} shows the oracle separation in terms of \cref{def:oraclesepa},
we can exclude the general black-box constructions,
which is an important advantage of our results\footnote{However, our oracle separations do not rule out the case when $C$ queries $\widetilde{G}^\top,\bar{\widetilde{G}}$, or $R$ queries $\cA^\top,\bar{\cA}$, where $(\cdot)^\top$ denotes the transpose, and $\bar{(\cdot)}$ denotes the complex conjugate.
This is because queries to $\cO^\top$ or $\bar{\cO}$ are not allowed in \cref{Intro_thm:PRU_vs_PRFSG,Intro_thm:PRI_vs_PRFSG,Intro_thm:PRI_vs_PRI}.
We expect that our separations can be extended to the case when the query to $\cO^\top$ and $\bar{\cO}$ are allowed using a similar technique in \cite{C:Zhandry25}.}.
\fi

\cref{Intro_thm:PRU_vs_PRFSG,Intro_thm:PRI_vs_PRFSG,Intro_thm:PRI_vs_PRI} also indicate that
all primitives that are known to be implied by PRFSGs or PRIs 
(such as PRSGs, private-key quantum money, OWSGs, OWPuzzs, EFI pairs, SKE, commitments, MAC, etc.)
also exist relative to $\cO$ and $\cO^\dagger$.
However, some caution is needed for
the existence of IND-CPA SKE with quantum ciphertexts and EUF-CMA MAC (with unclonable tags), 
because known constructions of these primitives from PRFSGs \cite{C:AnaQiaYue22} query the inverse of a unitary implementation of PRFSGs.
One advantage of our result, \cref{Intro_thm:PRU_vs_PRFSG}, is that the security of PRFSGs holds against adversaries that query not only $\cO$ but also $\cO^\dagger$.
Because of this advantage, the known constructions of
IND-CPA SKE with quantum ciphertexts and EUF-CMA MAC (with unclonable tags) from PRFSGs
automatically imply their existence relative to $\cO$ and $\cO^\dagger$. 
\ifnum\extendedabstract=0
\ifnum\submission=1
(For details, see \cref{subsec:many_primitives}.)
\fi
\ifnum\submission=0
(For details, see \cref{subsec:many_primitives}.)
\fi
\fi

\ifnum\extendedabstract=0
Several new ideas and techniques are used to show \cref{Intro_thm:PRU_vs_PRFSG}, many of which are of independent interest
and should be useful for other applications in quantum cryptography. 
In particular, for the oracle separation, we construct a direct attack to PRUs.
To the best of our knowledge, this is the first time that a direct attack to PRUs has been constructed.
All previous results that break PRUs first reduced PRUs to PRSGs and then broke PRSGs \cite{TQC:Kre21,TCC:AGQY22} by using the shadow tomography \cite{Shadow2,NatPhys:HKP20}.
\fi
\ifnum\extendedabstract=0
As we will explain later, our key idea for constructing the direct attack to PRUs is to leverage the quantum singular-value transformation (QSVT) \cite{STOC:GSLW19}.
\fi
\ifnum\extendedabstract=1
Our key idea for constructing the direct attack to PRUs is to leverage the quantum singular-value transformation (QSVT) \cite{STOC:GSLW19}.
\fi
To our knowledge, this is the first time that QSVT
has been used to separate quantum cryptographic primitives. 
We believe that QSVT should be useful for other applications in quantum cryptography.

\if0
this fact is essential for\mor{for what?} the following reasons.
The all-known constructions of such primitives use PRFSGs as a black box, but they also use the inverse of the PRFSG generation algorithm to construct decryption or verification algorithms.
Therefore, if the query to $\cO^\dag$ is not allowed in \cref{Intro_thm:PRU_vs_PRFSG}, it is not clear whether IND-CPA SKE with quantum ciphertexts and EUF-CMA MAC (with unclonable tags) exist relative to $\cO$ or not.
\shogo{Since we also allow the query to $\cO^\dag$ in \cref{Intro_thm:PRU_vs_PRFSG}, IND-CPA SKE with quantum ciphertexts and EUF-CMA MAC (with unclonables tags) also exist relative to $\cO$. Moeover, PRSGs and UPSGs also exist ralative to $\cO$. Since we can also define a black-box construction of non-adaptive PRUs from such primitives as in \cref{def:BB}, we obtain the following as a cororally of \cref{Intro_thm:PRU_vs_PRFSG}}

\begin{corollary}\label{Intro_coro:separation_PRUs_from_PRSGs}
    There is no black-box constructions of non-adaptive PRUs from PRSGs, IND-CPA SKE with quantum ciphertexts, EUF-CMA MAC with unclonable tags, and UPSGs.
\end{corollary}
\fi

\ifnum\submission=0
Finally, our results are summarized in \cref{fig}.

\usetikzlibrary{positioning} 
\usetikzlibrary{calc} 
\usetikzlibrary {quotes}
\tikzset{>=latex} 

\tikzstyle{normalimply}=[->,black,line width=1.6]
\tikzstyle{curveimply}=[->,black,line width=1.6,bend right]
\tikzstyle{curveseparation}=[->,black,dashed,line width=1.6, bend right]
\tikzstyle{thiswork}=[->,red,dashed,line width=1.6]
\begin{figure}[h]
\begin{center}
    \begin{tikzpicture}[scale=0.9,every edge quotes/.style = {font=\footnotesize,fill=white}]
      \def\h{-2.0} 
      \def\w{2.6} 

       \node[] (PRF) at (4*\w,0.1*\h) {PRFs};
       \node[] (PRFSG) at (4*\w,5*\h) {PRFSGs};
       \node[] (PRU) at (4*\w,1.3*\h) {PRUs};
       \node[] (shortPRI) at (4*\w,2.5*\h) {PRIs with $O(\log\secp)$ stretch};
       \node[] (longPRI) at (4*\w,4*\h) {PRIs with $\Omega(\secp)$ stretch};
       \node[] (many) at (4*\w,6*\h) {PRSGs, SKE, unclonable MAC, UPSGs, private money, OWSGs, OWpuzzs, EFI, etc.};
        \node[] (non-adaptivePRU) at (2.5*\w,2*\h) {Non-adaptive PRUs};
        \node[] (non-adaptivePRI) at (6.6*\w,3*\h) {Non-adaptive PRIs with $O(\log\secp)$ stretch};

        \draw[curveimply] (PRF) edge["\cite{MaHsi24}" left] 
        (PRU);
        \draw[curveseparation](PRU)
        edge["\cite{TQC:Kre21,KreQiaTal24}" right]
        (PRF);
        \draw[thiswork](longPRI)
        edge["\text{\cref{Intro_thm:main3}}"]
        (non-adaptivePRI);
        \draw[thiswork](longPRI)
        edge["\text{\cref{Intro_thm:main2}}"]
        (non-adaptivePRU);
        \draw[thiswork](PRFSG)
        edge["\text{\cref{Intro_thm:main1}}"]
        (non-adaptivePRI);
        \draw[normalimply] (PRFSG) edge[] (many);
        \draw[normalimply=black] (PRU) edge["\text{Trivial}"] (non-adaptivePRU);
        \draw[normalimply=black] (PRU) edge["\text{Trivial}"] (shortPRI);
        \draw[normalimply=black] (shortPRI) edge["\text{Trivial}"] (non-adaptivePRI);
        \draw[normalimply=black] (shortPRI) edge["\text{Trivial}"] (longPRI);
        \draw[normalimply] (longPRI) edge["\cite{TCC:AGQY22}"] (PRFSG);
    \end{tikzpicture}
\end{center}
\caption{A summary of our results and known results.  
An arrow from primitive A to primitive B indicates that A implies B.  
A dashed arrow from A to B indicates that there is no black-box construction from A to B.  
A red dashed arrow from A to B indicates that there is no black-box construction from A to $O(\log \secp)$-ancilla B.
SKE means IND-CPA SKE with quantum ciphertexts. Unclonable MAC means EUF-CMA MAC with unclonable tags. Private money means privet-key quantum money schemes.}
\label{fig}
\end{figure}
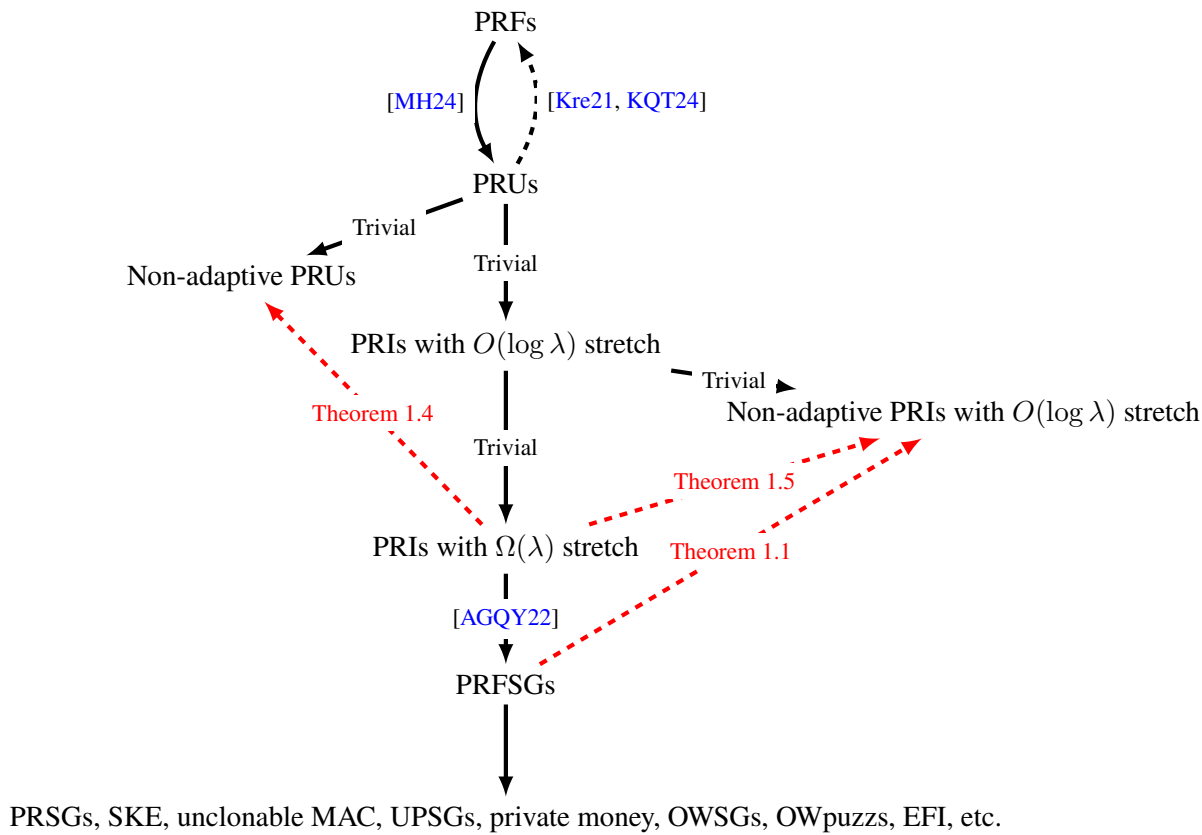
\fi
\ifnum\submission=1
In the supplemental material, we give a figure summarizing our results in \cref{fig}.
\fi

\if0
\paragraph{Additional result for unitary design.}
\shogo{supp$(\nu)$の下限を\cite{brandao2016local}から更新できるので入れとく。Appendix?とりあえずここは後回しでok。Crypto投稿用の原稿には含めなくていい。}
\shogo{近似unitary designのsupportの下限: \cite{brandao2016local,NTKD24}
exactなやつの下限:\cite{RoySco09}}
\fi

\ifnum\extendedabstract=0
\subsection{Technical Overview}
\label{sec:technicaloverview}
As we have mentioned, our main results \cref{Intro_thm:main1,Intro_thm:main2,Intro_thm:main3} are obtained from the technical result
\cref{Intro_thm:PRU_vs_PRFSG,Intro_thm:PRI_vs_PRFSG,Intro_thm:PRI_vs_PRI}.
In this subsection, we will overview the high-level idea of the proof of \cref{Intro_thm:PRU_vs_PRFSG}.
\cref{Intro_thm:PRI_vs_PRFSG,Intro_thm:PRI_vs_PRI} can be shown in a similar idea.

Our goal is to construct a unitary oracle $\cO$
such that
PRFSGs exist but PRUs do not
relative to $\cO$ and $\cO^\dagger$.
Our oracle $\cO$ consists of two oracles $\cS$ and $\cU$.
$\cS$ is used to construct PRFSGs. 
$\cU$ is an oracle that solves $\unitaryPSPACE$-complete problem~\cite{ITCS:RosYue22,BosEfrMetPorQiaYue23}, which is
used to break PRUs.

\paragraph{Constructing PRFSGs.}
The oracle $\cS\coloneqq\{\cS_{k,x}\}_{k,x}$ is a set of unitary oracles $\cS_{k,x}$. 
Each $\cS_{k,x}$ works as follows:
\begin{itemize}
\item 
Sample a Haar random state $|\psi_{k,x}\rangle$.
    \item 
    If the input is $|0\rangle|0...0\rangle$, return $|1\rangle|\psi_{k,x}\rangle$.
    \item 
    If the input is $|1\rangle|\psi_{k,x}\rangle$, return $|0\rangle|0...0\rangle$.
    \item 
    For other inputs, do nothing.
\end{itemize}
In other words, the oracle $\cS_{k,x}$ ``swaps'' $|0\rangle|0...0\rangle$ and $|1\rangle|\psi_{k,x}\rangle$.
Similar oracles were considered in \cite{BMMMY24,BosChenNeh24,ChenColSat24}.
With this $\cS$, we construct a PRFSG $G^{\cS}$ as follows:
\begin{enumerate}
    \item 
    On input $(k,x)$, query $|0\rangle|0...0\rangle$ to $\cS_{k,x}$ in order to get $|1\rangle|\psi_{k,x}\rangle$.
    \item 
    Output $|\psi_{k,x}\rangle$.
\end{enumerate}
Intuitively, it is clear that $G^{\cS}$ satisfies the security of PRFSGs, because
its output is Haar random states.
In fact, we can show the security of $G^{\cS}$ based on the proof template of \cite{TQC:Kre21}.
However, the template cannot be directly used in this case, and
some careful re-investigations are required. 
First, define
\begin{align}
\mathsf{Adv}(\cS)\coloneqq
\Pr_{k}[1\gets\cA^{\cS_k,\cS}]-\Pr_{\{|\vartheta_x\rangle\}_x\gets \sigma}[1\gets\cA^{\cT,\cS}],
\label{shimeshitai}
\end{align}
where $\cS_k\coloneqq\{\cS_{k,x}\}_x$, $\sigma$ is a Haar random state measure and
$\cT$ is the following oracle.
\begin{enumerate}
    \item 
    For each $x$, sample a Haar random state $|\vartheta_x\rangle$ independently in advance.
    \item 
    When $x$ is queried, it returns $|\vartheta_x\rangle$.
\end{enumerate}
This $\mathsf{Adv}(\cS)$ is the advantage of the adversary $\cA$ in the security game of PRFSGs when $\cS$ is chosen.
(Actually, we have to consider not $\cA^{\cS}$ but $\cA^{\cS,\cS^\dagger,\cU,\cU^\dagger}$. However,
because $\cS^\dagger=\cS$ and $\cU^\dagger$ can be simulated by querying $\cU$, we have only to consider
$\cA^{\cS,\cU}$. Moreover, our argument below is about unbounded $\cA$, and therefore we can also ignore $\cU$.) 
Our goal is to show that $|\mathsf{Adv}(\cS)|$ is small with high probability over $\cS$.
To that end, we first show
\begin{align}
|\Exp_\cS\mathsf{Adv}(\cS)|\le\negl(\secp).
\label{shimeshitai2}
\end{align}
This is shown by using the technique of \cite{TQC:Kre21}, which is 
based on the BBBV theorem~\cite{SICOMP:BBBV97}.
However, the technique of \cite{TQC:Kre21} cannot be directly used because in his case
each $\cS_k$ is a Haar random unitary, while in our case not. Fortunately, we can confirm that his proof also holds
for our $\cS_k$, and therefore we can show \cref{shimeshitai2} in a similar way.
From \cref{shimeshitai2}, we want to show our goal that $|\mathsf{Adv}(\cS)|$ is small with high probablity over $\cS$
via the following concentration inequality:\footnote{$\mathsf{Adv}(\cS)$ is a function of unitary,
because each Haar random state in $\cS$ can be replaced with a Haar random unitary applied on $|0...0\rangle$.}
\begin{align}
\Pr_{\cS}[|\mathsf{Adv}(\cS)-\Exp_{\cS'}\mathsf{Adv}(\cS')|\ge \delta]\le e^{-O(\delta/L^2)}.    
\end{align}
This concentration inequality holds if
$\mathsf{Adv}(\cS)$ is $L$-Lipshitz. 
By a straightforward calculation, we confirm it.
In this way, we can show that our constructed PRFSG $G^\cS$ is secure against $\cA^{\cS}$.

\if0
\begin{align}
\Pr_{U}[|f(U)-\Exp_{V}f(V)|\ge \delta]\le e^{-O(\delta/L^2)}    
\end{align}
for any $L$-Lipshitz function $f$, i.e., $|f(U)-f(V)|\le\sqrt{\sum_{m}\|U_m-V_m\|_2^2}$.
We want to apply this inequality for $\Adv(\cS)$.
For that propose, we view $\Adv(\cS)$ as a function of unitaries $U=(U_1,...,U_{2^{2\secp}})$.
\fi

\if0
As an oracle to ensure the existence of PRFSGs, we use the swapping\mor{standard name?} oracles $\cS\coloneqq\{\cS_n\}_{n\in\N}$,\mor{Amit paper and Bostanci paepr cite?} where $\cS_n\coloneqq\sum_{m\in\bit^n}\ketbra{m}{m}\otimes\cS_{n,m}$, and each $\cS_{n,m}$ swaps the fixed state $\ket{0...0}$ and a\mor{an} $n$-qubit Haar random state $\ket{\psi_{n,m}}$.\footnote{More precisely, $\cS_{n,m}$ swaps $\ket{0}\ket{0...0}$ and $\ket{1}\ket{\psi_{n,m}}$. For its detail, see \cref{def:unitary_orcle}.}
For breaking PRUs, we add $\unitaryPSPACE$ complete oracle $\cU$ \cite{ITCS:RosYue22,BosEfrMetPorQiaYue23} which is a sequence of unitaries space-efficiently implementable, and enables us to simulate any polynomial-space algorithm.
Thus, we use $\cO\coloneqq(\cS,\cU)$ as a separation oracle.
Here, each $\cS_n$ is equal to its inverse.
Moreover, we can simulate the queries to the inverse of $\cU$ since it is also space-efficiently implementable unitaries. Thus, it suffices to consider the forward query to $\cO$.
\fi

\if0
\paragraph{Construction of quantumly accessible adaptive secure PRFSGs.}
\mor{quantumly-accessible adaptively-secure}
First, we construct quantum-accessible adaptively-secure PRFSGs.
Our construction is straightforward: let $\secp$ be a security parameter. 
For a secret-key $k\in\bit^\secp$ and a bit string $x\in\bit^\secp$,\mor{key-length=message-length no mono nomi?} we define $\ket{\phi_k(x)}\coloneqq\ket{\psi_{2\secp,(k,x)}}$, where $(k,x)$ denotes the concatinate\mor{concatenation} of $k$ and $x$.\mor{what is $\psi$?} 
Since this is efficiently preparable by querying the swap unitary, it suffices to show the security.
The security proof is similar to \cite{TQC:Kre21}:
for each $k\in\bit^\secp$, we define $\cS_{2\secp,(k,\cdot)}\coloneqq\sum_{x\in\bit^\secp}\ketbra{x}{x}\otimes\cS_{2\secp,(k,x)}$.
By BBBV\mor{standard dewa nai} theorem \cite{SICOMP:BBBV97}, any unbounded adversary cannot\shogo{no-... can?} distinguish $\cS_{2\secp,(k,\cdot)}$ from $\cS_{\{\ket{\vartheta_x\}}}\coloneqq\sum_{x\in\bit^\secp}\ketbra{x}{x}\otimes\cS_{\ket{\vartheta_x}}$ if the adversary queries at most $o(2^{\secp/2})$ times, where each $\cS_{\ket{\vartheta_x}}$ swaps $\ket{0...0}$ and an independent Haar random state $\ket{\vartheta_x}$.
Since $\cU$ and other swapping oracles $\{\cS_n\}_{n\neq2\secp}$ are independent of $\cS_{2\secp}$, this indistinguishability holds even if the adversary queries $\cO$.
Since the quantum query to our PRFSG is simulatable by querying $\cS_{2\secp,(k,\cdot)}$, we can conclude our PRFSG construction satisfies the quantumly-accessible adaptive security.
\fi

\paragraph{Breaking PRUs.}
Here we describe how to break ancilla-free PRUs, since the argument generalizes to $O(\log \secp)$-ancilla PRUs with postselection.
Let $\cS$ be the unitary oracle introduced above.
Let $\cU$ be the $\unitaryPSPACE$-complete oracle.
Let $F^{\cS,\cU}$ be a QPT algorithm that, on input $k$ and a state,
applies a unitary $U_k$ on the state. (Actually, we have to consider $F^{\cS,\cS^\dagger,\cU,\cU^\dagger}$, but as we have mentioned above, we can ignore $\cS^\dagger$ and $\cU^\dagger$.)
Our goal is to show that $F^{\cS,\cU}$ cannot satisfy the security of PRUs. In other words, we construct a QPT adversary 
$\cA^{\cS,\cU}$ that distinguishes the query to $F^{\cS,\cU}$ and that to the oracle that applies Haar random unitaries.

To show it, we first define
two states 
\begin{align}
\rho_0\coloneqq\Exp_{U\gets\mu}(U^{\otimes \ell}\otimes I)|\Phi\rangle\langle\Phi|(U^{\otimes \ell}\otimes I)^\dag,
\end{align}
and
\begin{align}
\rho_1\coloneqq\Exp_{k\gets\cK_\secp}(U_k^{\otimes \ell}\otimes I)|\Phi\rangle\langle\Phi|(U_k^{\otimes \ell}\otimes I)^\dag.
\end{align}
Here, $\mu$ is the Haar measure over $\secp$-qubit unitaries,
$U_k$ is a $\secp$-qubit unitary,
$\ell\coloneqq\lceil\log|\cK_\secp|\rceil$, $\cK_\secp$ is the key space,
$|\Phi\rangle\coloneqq\frac{1}{\sqrt{2^{\ell\secp}}}\sum_{x\in\bit^{\ell\secp}}|x\rangle|x\rangle$ is the $\ell\secp$-qubit maximally entangled state.
The adversary can generate $\rho_0$ 
if it queries the Haar random oracle, while
it can generate $\rho_1$ if it queries $F^{\cS,\cU}$. 
Therefore, if $\rho_0$ and $\rho_1$ can be distinguished 
by using the $\unitaryPSPACE$-complete oracle $\cU$, the adversary can break the PRU.

The question is therefore how to distinguish $\rho_0$ and $\rho_1$ by using
$\cU$?
There is one issue here.
Each $U_k$ can depend on $\cS$, because
the PRU generator $F^{\cS,\cU}$ can query $\cS$, while the $\unitaryPSPACE$-complete oracle $\cU$ (that $\cA$ queries) is independent of $\cS$. 
When $\cS$ acts on small number of qubits, $\cA$ can get its classical information by
querying $\cA$'s $\cS$ many times and doing the process tomography~\cite{FOCS:HKOT23}, and can send the classical information to $\cU$ as is done in \cite{TQC:Kre21}.
However, when $\cS$ acts on large number of qubits, this strategy does not work,
because the process tomography is no longer efficient.
Our key observation is that when $\cS$ acts on large number of qubits,
it almost does not cause any effect, because $\cS$, which swaps only $|0...0\rangle$ and Haar random states,
is almost the identity operation.\footnote{This step of our proof requires the ancilla-free condition. As mentioned earlier, we can relax this condition to the \( O(\log \secp) \)-ancilla condition by using postselection. Whether the same result can be established without this assumption remains an open question.}

Hence, we introduce $\{U_k'\}_k$ that is the same as $\{U_k\}_k$ except that
$\cS$ acting on small number of qubits is simulated by the classical information obtained
via the process tomography. Then, if we define
another state
\begin{align}
\rho_2\coloneqq\Exp_{k\gets\cK_\secp}((U_k')^{\otimes \ell}\otimes I)|\Phi\rangle\langle\Phi|((U_k')^{\otimes \ell}\otimes I)^\dag,
\end{align}
our goal is to distinguish $\rho_0$ and $\rho_2$
by using $\cU$ since $\rho_1$ is statistically close to $\rho_2$.
Let $Q$ be the projection onto the support of $\rho_2$.
Clearly, $\Tr[Q\rho_2]=1$.
On the other hand, as we will explain later, $\Tr[Q\rho_0]$ is negligible.
Therefore, if we can implement $\cQ$, we can distinguish $\rho_0$ and $\rho_2$.
The fact that $\Tr[Q\rho_0]$ is negligible can be shown from the following lemma.
\begin{lemma}[\cref{lem:Haar_Choi_has_negligible_overlap}, Informal]\label{Intro_lem:design_needs_large_support} 
    For the above $Q$ and $\rho_0$,
    \begin{align}
        \Tr[Q\rho_0]
        \le 
        \negl(\secp)
        \label{eq:design_needs_large_support}.
    \end{align}
\end{lemma}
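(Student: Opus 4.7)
The plan is to apply the basic operator inequality $\Tr[Q\rho_0]\le \operatorname{rank}(Q)\cdot\|\rho_0\|_{\mathrm{op}}$ and to control each factor separately. Since $\rho_2$ is a convex combination of at most $|\cK_\secp|$ pure states indexed by keys $k$, its support---and hence the projection $Q$---has rank at most $|\cK_\secp|\le 2^\ell$. All remaining work goes into showing that $\|\rho_0\|_{\mathrm{op}}$ is small enough to wash out this factor.

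To bound $\|\rho_0\|_{\mathrm{op}}$, I first regroup the $2\ell\secp$ qubits of $|\Phi\rangle$ into $\ell$ pairs, writing $|\Phi\rangle = \bigotimes_{i=1}^\ell |\phi^+\rangle_i$ for the $d$-dimensional Bell pair $|\phi^+\rangle$ with $d=2^\secp$. Then $(U^{\otimes\ell}\otimes I)|\Phi\rangle = |U\rangle^{\otimes\ell}$ (up to a fixed unitary reordering that preserves the spectrum), where $|U\rangle\coloneqq (U\otimes I)|\phi^+\rangle$ is the Choi state of $U$. Consequently $\rho_0=\Exp_{U\sim\mu}|U\rangle\langle U|^{\otimes\ell}$. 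Using the identity $\langle U|U'\rangle=\Tr[U^\dagger U']/d$ together with the left-invariance of the Haar measure (setting $W\coloneqq U^\dagger U'$), I compute
\begin{align}
\Tr[\rho_0^2] \;=\; \Exp_{U,U'}|\langle U|U'\rangle|^{2\ell} \;=\; \frac{1}{d^{2\ell}}\Exp_{W\sim\mu}|\Tr W|^{2\ell}.
\end{align}
In our regime $\ell=\secp\le d=2^\secp$, the Diaconis--Shahshahani moment formula (equivalently, a short Weingarten/Schur--Weyl computation) gives $\Exp_W|\Tr W|^{2\ell}=\ell!$. Hence $\Tr[\rho_0^2]=\ell!/d^{2\ell}$, and the elementary bound $\|\rho_0\|_{\mathrm{op}}^2\le \Tr[\rho_0^2]$ (valid for any density operator) yields $\|\rho_0\|_{\mathrm{op}}\le \sqrt{\ell!}/d^\ell$.

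Combining the two estimates gives $\Tr[Q\rho_0]\le 2^\ell\cdot \sqrt{\ell!}/d^\ell = \sqrt{\ell!}/2^{(\secp-1)\ell}$, which is $\negl(\secp)$ since the exponent $(\secp-1)\ell \ge \secp(\secp-1)$ dominates the $\tfrac{1}{2}\ell\log\ell$ growth coming from $\sqrt{\ell!}$. The main technical step is the moment identity $\Exp_{W\sim\mu}|\Tr W|^{2\ell}=\ell!$ in the regime $\ell\le d$; the rest of the argument---the rank bound on $Q$ from the convex combination defining $\rho_2$, the rewriting of $\rho_0$ as a Haar average of tensor powers of Choi states, and the standard operator inequalities---is routine. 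A small point to verify is that the tensor-factor reordering used to pass from $(U^{\otimes\ell}\otimes I)|\Phi\rangle$ to $|U\rangle^{\otimes\ell}$ is a fixed unitary on the ambient Hilbert space, so it leaves the spectrum of $\rho_0$ invariant and the bound on $\|\rho_0\|_{\mathrm{op}}$ untouched.
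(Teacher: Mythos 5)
Your proof is correct, but it takes a genuinely different route from the paper's. The paper first replaces $\rho_0$ by $\Exp_{\ket{\psi}\gets\sigma_{2^{2\secp}}}\ketbra{\psi}{\psi}^{\otimes\ell}$ using a Haar-twirl approximation lemma (\cref{lem:Haar_state_vs_Haar_choi_state}, attributed to Harrow), incurring an additive $O(\ell^2/2^\secp)$ trace-distance error, then evaluates the resulting state exactly as $\Pi_{\symetric}/\binom{2^{2\secp}+\ell-1}{\ell}$ and pairs it with the rank bound on $Q$. You instead bound $\|\rho_0\|_\infty$ directly via $\Tr[\rho_0^2]$, which you compute \emph{exactly} using the moment identity $\Exp_{W\sim\mu_d}|\Tr W|^{2\ell}=\ell!$ for $\ell\le d$, and then apply $\Tr[Q\rho_0]\le\mathrm{rank}(Q)\,\|\rho_0\|_\infty$. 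Each route rests on one nontrivial Haar-measure fact (the twirl approximation vs.\ the Diaconis--Shahshahani moment formula); your bound is exact (no additive error term) and actually numerically tighter for $\ell\ge 2$, whereas the paper's dominant contribution is the $O(\ell^2/2^\secp)$ approximation error, but both are comfortably negligible. One small inaccuracy: $\ell=\lceil\log|\cK_\secp|\rceil$ is a polynomial in $\secp$, not necessarily equal to $\secp$ as you write; fortunately your argument only uses $\ell\le d=2^\secp$ (for the moment identity) and $\ell=\poly(\secp)$ (so that $\tfrac{1}{2}\ell\log\ell=O(\ell\log\secp)\ll\secp\ell$), both of which hold. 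Note also that the paper's formal lemma handles $O(\log\secp)$ ancillas, which inflates $\mathrm{rank}(Q)$ to $2^{(1+c)\ell}$ rather than $2^\ell$; your argument absorbs this factor without change.
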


\if0
\paragraph{How to break non-adaptive PRUs: base observation.}
Next, we show how to break non-adaptive PRUs relative to $\cO$.
Before\shogo{the} proof, we define some notations. $\mu_d$ denotes the Haar measure over the $d$-dimensional unitary group. For a distribution $\nu$ over the $d$-dimensional unitary group and an integer $t$, we define a quantum channel
\begin{align}
        \cM_{\nu,t}(\cdot)\coloneqq
        \Exp_{U\gets\nu}U^{\otimes t}(\cdot)U^{\dag\otimes t},
    \end{align}
where $U^{\otimes t}(\cdot)U^{\dag\otimes t}$ denotes the quantum channel that maps $\rho\mapsto U^{\otimes t}\rho U^{\dag\otimes t}$ for any state $\rho$.
For unitaries $\{U_k\}_{k\in\cK}$, we view $\{U_k\}_{k\in\cK}$ as a distribution over them\mor{imihumei}. For example, we often use the notation 
\begin{align}
    \cM_{\{U_k\},t}(\cdot)=
        \Exp_{k\gets\cK}U_k^{\otimes t}(\cdot)U_k^{\dag\otimes t}.
\end{align}

We now present the proof overview.
Our strategy is based on the following lemma:

\begin{lemma}[\cref{lem:design_needs_large_support}, Informal]\label{Intro_lem:design_needs_large_support}
    Let $t,d\in\N$ such that $d\ge t^2$.
    Let $Q$ be a projection. 
    Then,
    \begin{align}
        \Tr[Q(\cM_{\mu_{2^\secp},t}\otimes \identitymap)(\ketbra{\Omega_{2^{\secp t}}}{\Omega_{2^{\secp t}}})]
        \le 
        O\bigg(
        \frac{t!\Tr[Q]}{2^{2\secp t}}
        \bigg)
        \label{eq:design_needs_large_support},
    \end{align}
    where $\ket{\Omega_{2^{\secp t}}}=\frac{1}{2^{\secp t/2}}\sum_{x\in[2^{\secp t}]}\ket{x}\ket{x}$\mor{colon} is the maximally entangled state.
\end{lemma}

Suppose that $\secp$ is the security parameter and $\{U_k\}_{k\in\cK}$ is a PRU candidate.
Namely, $\{U_k\}_{k\in\cK}$ is a family of QPT implementable $\secp$-qubit unitaries relative to $\cO$ with key space $\cK$, where the size of $\cK$ is at most $2^\ell$ for some polynomial $\ell$. 
From \cref{Intro_lem:design_needs_large_support}, we can see that $\ell$-copies of the random \Choi states over the Haar measure are "statistically" distinguishable from that over $\{U_k\}_{k\in\cK}$. 
To see that, consider the projective measurement $\{Q,I-Q\}$, where $Q$ is the projection onto the support of $(\cM_{\{U_k\},\ell}\otimes \identitymap)(\ketbra{\Omega_{d^\ell}}{\Omega_{d^\ell}})$.
Cleary, its measurement "accept" $(\cM_{\{U_k\},\ell}\otimes \identitymap)(\ketbra{\Omega_{d^\ell}}{\Omega_{d^\ell}})$ with propability $1$.
To see this measurement "reject" $(\cM_{\mu_{2^\secp},\ell}\otimes \identitymap)(\ketbra{\Omega_{d^\ell}}{\Omega_{d^\ell}})$, we use the following propety: the rank of $Q$ is at most $2^\ell$.
This is because we have
\begin{align}
    (\cM_{\{U'_k\},\ell}\otimes\identitymap)(\ketbra{\Omega_{2^{\secp\ell}}}{\Omega_{2^{\secp\ell}}})
    \propto\sum_{k\in\cK}
    (U_k^{'\otimes\ell}\otimes I)\ketbra{\Omega_{2^{\secp\ell}}}{\Omega_{2^{\secp\ell}}}(U_k^{'^\dag\otimes\ell}\otimes I),
\end{align}
which implies the rank of $(\cM_{\{U'_k\},\ell}\otimes\identitymap)(\ketbra{\Omega_{2^{\secp\ell}}}{\Omega_{2^{\secp\ell}}})$ is at most $|\cK|\le2^\ell$. 
Since $Q$ is the projection onto the support of its state, the rank of $Q$ is equal to that of $(\cM_{\{U'_k\},\ell}\otimes\identitymap)(\ketbra{\Omega_{2^{\secp\ell}}}{\Omega_{2^{\secp\ell}}})$ which is at most $|\cK|\le2^\ell$, so we have the rank of $Q$ is at most $2^\ell$.
Thus, from \cref{Intro_lem:design_needs_large_support} with $t=\ell$,
\begin{align}
    \Tr[Q(\cM_{\mu_{2^\secp},\ell}\otimes \identitymap)(\ketbra{\Omega_{2^{\secp \ell}}}{\Omega_{2^{\secp \ell}}})]
        \le
        O\bigg(
        \frac{\ell!2^\ell}{2^{2\secp \ell}}
        \bigg)
        \approx O\bigg(
        \frac{\ell}{2^{2\secp-1}}
        \bigg)^\ell,
\end{align}
where the approximation follows from Starling's formula.
\fi

How can we implement $Q$ with $\cU$?
Our novel idea is to use the singular-value discrimination (SVD) algorithm, which is a concrete example of 
quantum singular-value transformation (QSVT) \cite{STOC:GSLW19}.
Let $M$ be any positive matrix such that there exists a unitary $V$ that satisfies
$M=(\bra{0...0}\otimes I)V(\ket{0...0}\otimes I)$.
Such an encoding of a matrix to a unitary is called a block encoding~\cite{STOC:GSLW19}.
The SVD algorithm, which can query $V$, can solve the following promise problem:
Given a single-copy quantum state $\xi$, decide
whether the support of $\xi$ is in the support of $M$ or 
the support of $\xi$ is orthogonal to that of $M$.
Because our goal is to implement $Q$, which decides whether a given state is in the support of $\rho_2$,
we have only to take $M=\rho_2$.
It is known that a block encoding of any quantum state $\xi$ can be efficiently
implementable by using a unitary $W$ satisfying that $W\ket{0...0}$ is a purification of $\xi$~\cite{ICALP:vApGil19}. 
Thus if we take $M=\rho_2$, we can efficiently implement its block encoding.
Then if we run the SVD algorithm on input $\rho_0$ or $\rho_2$, we can distinguish $\rho_0$ and $\rho_2$.
The SVD algorithm can be realized in quantum polynomial space~\cite{STOC:GSLW19},
and therefore $Q$ can be realized by querying $\cU$.

Recall that our original goal is to distinguish $\rho_0$ from $\rho_1$.
Since $\rho_1$ is statistically close to $\rho_2$, the above algorithm can also distinguish $\rho_0$ from $\rho_1$.
In summary, therefore, a QPT adversary $\cA^{\cS,\cU}$ 
can break the PRU $F^{\cS,\cU}$.

\if0
Thus, at first glance, we seem to be able to implement a block-encoding of the random \Choi states over a PRU candidate $\{U_k\}_k$, i.e., $(\cM_{\{U_k\},\ell}\otimes \identitymap)(\ketbra{\Omega_{d^\ell}}{\Omega_{d^\ell}})$.
Note that it is a positive matrix, so each eigenvalue and eigenvector are equal to each singular value and right singular vector, respectively. 
In general, $(\cM_{\{U_k\},\ell}\otimes \identitymap)(\ketbra{\Omega_{d^\ell}}{\Omega_{d^\ell}})$ could have exponentially small eigenvalues, but the above algorithm is space-efficient even for exponentially small threshold parameter $a$ and $b$.
Thus, it seems to be implementable the singular value discrimination algorithm for $(\cM_{\{U_k\},\ell}\otimes \identitymap)(\ketbra{\Omega_{d^\ell}}{\Omega_{d^\ell}})$ with exponentially small $a$ and $b$ by querying the $\unitaryPSPACE$ complete problem $\cU$.

\if0
\begin{theorem}[Singular Value Discrimination Algorithm, Informal \cite{STOC:GSLW19}]
    Let $0\le a<b\le1$. Suppose that a matrix $A$ is block-encoded in a unitary $U$. Namely, we can write $A=(\bra{0...0}\otimes I)U(\ket{0...0}\otimes I)$.
    Let $\xi$ be a given unknown state promised that
    \begin{itemize}
        \item the support of $\xi$ is contained in the subspace spanned by the right singular vectors of $A$ with singular value at most $a$ or
        \item the support of $\xi$ is contained in the subspace spanned by the right singular vectors of $A$ with singular value at least $b$.
    \end{itemize}
    Then, there exists an algorithm $\cD$ satisfying the following:
    \begin{itemize}
        \item on input a single copy of $\xi$, $\cD$ distinguishes between the first case or the second case with at most negligible error;
        \item $\cD$ uses $U, U^\dag$ $T(a,b)$ times and other single-qubit gates, and uses a single ancilla qubit. Here, if both $a$ and $b$ are exponentially small, $T(a,b)$ is exponentially large.
    \end{itemize}
\end{theorem}
\fi

However, we have a subtle problem here: we need to query the swapping oracle $\cS$ to implement each $U_k$ in general since $\{U_k\}_k$ is a PRU candidate relative to $\cO=(\cS,\cU)$.
Because of this issue, it is not clear how to implement a block-encoding of $(\cM_{\{U_k\},\ell}\otimes \identitymap)(\ketbra{\Omega_{d^\ell}}{\Omega_{d^\ell}})$.
This comes from the unclearity how to implement a unitary $V$ such that $V\ket{0...0}$ is a purification of $(\cM_{\{U_k\},\ell}\otimes \identitymap)(\ketbra{\Omega_{d^\ell}}{\Omega_{d^\ell}})$.
We can preparable its state by querying $(\cS,\cU)$, but to purify it in the standard way,\footnote{Postpone all measurements and apply the controlled-gate when the original algorithm uses the outcomes of intermediate measurements.} we need the controlled-$\cS$ and controlled-$\cU$. 
Although we can implement the controlled-$\cU$ with query access to $\cU$ by using the property of $\cU$, it remains unclear how to implement the controlled-$\cS$ given query access to $\cS$.
\fi

\if0
\paragraph{Removing the action of swapping oracles from the random \Choi states.}
To overcome the above problem, we show that we can remove the query to the swapping oracle $\cS_n$ for all $n$ to prepare the random \Choi states over $\{U_k\}_k$ (with small error). 
If we show that, we can prepare the purification of the random \Choi states over $\{U_k\}_k$ with access only to $\cU$.
Then, we can also implement the block-encoding of it, therefore, we can run the singular value discrimination algorithm with access only to $\cU$.

Suppose that we need to make $T$ queries to implement $U_k$. We remove the queries to the swap oracles $\cS_n$ for $n=2\log(\secp T)$.
For simplicity, suppose that the query to $\cS_n=\sum_{m\in\bit^n}\ketbra{m}{m}\otimes\cS_{n,m}$ is replaced with $\cS_{n,m}$, where recall that $\cS_{n,m}$ swaps some fixed state and $n$-qubit Haar random state $\ket{\psi_{n,m}}$.
Then, each $\secp$-qubit unitary $U_k$ is written as follows:\footnote{More precisly, we have to write $U_k=V_{T+1}(\cS_{n_T,m_T}\otimes I)V_{T}\cdots V_2(\cS_{n_1,m_1}\otimes I)V_1$ since each $\cS_{n_i,m_i}$ is not $\secp$-qubit unitary. However, we omit the tensor product with $I$ for the notational simplicity.}

\begin{align}
    U_k=V_{T+1}\cS_{n_T,m_T}V_{T}\cdots V_2\cS_{n_1,m_1}V_1,
\end{align}
where each $V_i$ denotes the unitary QPT implementable by querying $\cU$, and each $\cS_{n_i,m_i}$ is the $i$th query to $\cS$ with $n=n_i\le\secp$ and $m=m_i$.
Suppose that $n_i\ge2\log(\secp T)$. Then, by the straightforward calculation, we have
\begin{align}
    \|(\cS_{n_i,m_i}\otimes I)\ketbra{\Omega_{2^\secp}}{\Omega_{2^\secp}}(\cS_{n_i,m_i}\otimes I)^\dag
    -\ketbra{\Omega_{2^\secp}}{\Omega_{2^\secp}}
    \|_1
    \le2^{-n_i/2}\le \frac{1}{\secp T},
\end{align}
where $\|\cdot\|_1$ denotes the trace norm.
Thus, we have
\begin{align}
    (V_{i+1}\cS_{n_i,m_i}V_{i}\otimes I)\ket{\Omega_{2^\secp}}
    &=(V_{i+1}\cS_{n_i,m_i}\otimes V_i^\Gamma)\ket{\Omega_{2^\secp}}
    \notag\\
    &\approx_{1/\secp T}(V_{i+1}\otimes V_i^\Gamma)\ket{\Omega_{2^\secp}}
    \notag\\
    &=(V_{i+1}V_{i}\otimes I)\ket{\Omega_{2^\secp}},
\end{align}
where $V_i^\Gamma$ denotes the transpose of $V_i$, and $\approx_{1/\secp T}$ means the difference between both sides of it is at most $1/(\secp T)$.
Therefore, by doing this argument step by step, we have
\begin{align}
    (U_k\otimes I)\ket{\Omega_{2^\secp}}\approx_{1/\secp}(\widetilde{U_k}\otimes I)\ket{\Omega_{2^\secp}},
\end{align}
where unitary $\widetilde{U_k}$ is defined
\begin{align}
    U_k=V_{T+1}\widetilde{\cS}_{n_T,m_T}V_{T}\cdots V_2\widetilde{\cS}_{n_1,m_1}V_1.
\end{align}
Here, for each $i\in[T]$,
\begin{align}
    \widetilde{\cS}_{n_i,m_i}\coloneqq
    \begin{cases}
        I \:&\text{if }n_i\ge2\log(\secp T),\\
        \cS_{n_i,m_i}\:&\text{otherwise.}
    \end{cases}
\end{align}
Clearly, $\widetilde{U_k}$ is implementable without querying $\cS_{n,m}$ for all $n\ge2\log(\secp T)$ and $m\in\bit^n$, and it satisfies
\begin{align}
    (\cM_{\{U_k\},\ell}\otimes \identitymap)(\ketbra{\Omega_{2^{\secp\ell}}}{\Omega_{2^{\secp\ell}}})
    \approx_{1/\secp}
    (\cM_{\{\widetilde{U_k}\},\ell}\otimes \identitymap)(\ketbra{\Omega_{2^{\secp\ell}}}{\Omega_{2^{\secp\ell}}})
\end{align}
Although we assume the query to the swapping oracle is restricted in the above argument, we can obtain the same result for the general case with minor modifications. 

To conclude the proof overview, we remove the queries to the swap oracles for $n<2\log(\secp T)$.
It is easy: because the dimension of swap oracles is at most $2^{2\log(\secp T)}\le\poly(\secp)$, we can do the process tomography \cite{FOCS:HKOT23} for the swap oracles.
By obtaining the classical descriptions of the swap oracles, we can implement $U'_k$ without querying $\cS$, where the random \Choi states over $\{U_k'\}_k$ is close to that over the original PRU candidate $\{U_k\}_k$.

In summary, the following algorithm breaks the non-adaptive security of PRUs:
first, do the process tomography for all small-size swap oracles to get their classical descriptions. Next, by querying PRU or Haar random oracle $U$, get $\ell$ copies of \Choi state $(U^{\otimes\ell}\otimes I)\ket{\Omega_{2^{\secp\ell}}}$. At last, run the singular value discrimination algorithm with classical descriptions of small-size swap oracles and query to $\cU$.

\fi

\fi
\ifnum\submission=0
\subsection{Related Works}

\paragraph{Comparison with the concurrent work \cite{BHMV25}.}
The concurrent work by \cite{BHMV25} also separates non-adaptive and ancilla-free PRUs from PRFSGs. 
Their separation oracle, which they call unitary common Haar function-like state (CHFS) oracle, is the same as our separation oracle.
However, they use a different technique to break non-adaptive and ancilla-free PRUs. 
\fi

\ifnum\submission=0
\paragraph{Comparison with the previous works.}
There are several works that separate Microcrypt primitives.
In addition to the works we have already mentioned~\cite{TQC:Kre21,ChenColSat24,BosChenNeh24,BMMMY24},
there are other three papers.
\cite{TCC:ColMut24} separated multi-query secure quantum digital signatures from PRUs. 
\cite{TCC:AnaGulLin24} separated quantum computation classical communication (QCCC) primitives from PRFSGs.
\cite{GMMY24} constructed oracles such that QCCC primitives exist but 
$\mathbf{BQP}=\mathbf{QCMA}$, and quantum lightning \cite{EC:Zhandry19b} exist but $\mathbf{BQP}=\mathbf{QMA}$ relative to the oracles.
These separation results are incomparable with our work.
\fi
\ifnum\submission=1
\paragraph{Comparison with the previous works.}
There are several works that separate Microcrypt primitives~\cite{TQC:Kre21,ChenColSat24,BosChenNeh24,BMMMY24,TCC:ColMut24,TCC:AnaGulLin24,GMMY24,C:GolZha25}.
These separation results are incomparable with our work.
\fi

\if0
\paragraph{Application of quantum singular value transformation to oracle separation.}
\mor{koremo technical overview ni kaku}
Quantum singular value transformation (QSVT) \cite{STOC:GSLW19} is a highly useful paradigm for constructing quantum algorithms, and it has been applied to many fields.  
In cryptography, QSVT has been employed in various works \cite{FOCS:LomMaSpo22,cryptoeprint:2022/786,STOC:BKNY23,STOC:BQSY24,KMPSW24,MQBEDI24}.  
However, despite its versatility, to the best of our knowledge, no prior works have utilized QSVT for oracle separations, including separations for complexity classes.  
In this work, we leverage QSVT to construct an adversary that breaks PRUs relative to our separation oracle.
We believe that QSVT enables us to construct adversaries even for other cryptographic primitives in the oracle separation.
\fi

\if0
\subsection{Open Problems}
We\mor{have} separated non-adaptive PRUs from quantumly-accessible adaptively-secure PRFSGs, and therefore our result also separates PRUs from many primitives implied by quantumly-accessible adaptively-secure PRFSGs.
However, some primitives, such as pseudorandom isometries\mor{hankaku space}\cite{EC:AGKL24}, pseudorandom state scramblers \cite{TCC:LQSYZ24}, succinct quantum state commitments \cite{STOC:GJMZ23}, are implied by PRUs but it is not known how to construct them from quantumly-accessible adaptively-secure PRFSGs.
Moreover, it is still open whether these primitives can be separated from quantumly-accessible adaptively-secure PRFSGs.
\fi

\section{Preliminaries}\label{sec:preliminaries}

\subsection{Basic Notations}
\label{sec:basic_notations}
This paper uses the standard notations of quantum computing and cryptography. 
For bit strings $x$ and $y$, $(x,y)$ denotes their concatenation.
We use $\secp$ as the security parameter.
$[n]$ means the set $\{1,2,...,n\}$.
For any set $S$, $x\gets S$ means that an element $x$ is sampled uniformly at random from the set $S$.
We write $\negl$ as a negligible function
and $\poly$ as a polynomial. 
QPT stands for quantum polynomial time.
For an algorithm $\cA$, $y\gets \cA(x)$ means that the algorithm $\cA$ outputs $y$ on input $x$.

We use $I\coloneqq\ketbra{0}{0}+\ketbra{1}{1}$ as the identity on a single qubit.
For the notational simplicity, we sometimes write $I^{\otimes n}$ just as $I$ when the dimension is clear from the context. 
For a vector $\ket{\psi}$, we define its norm as $\|\ket{\psi}\|\coloneqq\sqrt{\braket{\psi|\psi}}$.
For any matrix $A$, we define the $p$-norm $\|A\|_p\coloneqq(\Tr[(A^\dag A)^{p/2}])^{1/p}$.
In particular, we call it the trace norm when $p=1$ and the Frobenious norm when $p=2$.
For any matrix $A$, the operator norm $\|\cdot\|_\infty$ is defined as $\|A\|_\infty\coloneqq\max_{\ket{\psi}}\sqrt{\bra{\psi}A^\dag A\ket{\psi}}$, where the maximization is taken over all pure states $\ket{\psi}$.
$\identitymap$ denotes the identity channel, i.e., $\identitymap(\rho)=\rho$ for any state $\rho$.
For two channels $\cE$ and $\cF$ that take $d$ dimensional states as inputs, we say $\|\cE-\cF\|_\diamond\coloneqq\max_{\ket{\psi}}\|(\mathrm{id}\otimes\cE)(\ket{\psi}\bra{\psi})-(\mathrm{id}\otimes\cF)(\ket{\psi}\bra{\psi})\|_1$ is the diamond norm between $\cE$ and $\cF$, where the maximization is taken over all $d^2$ dimensional pure states.

$\Linear(d)$ denotes the set of all $d\times d$ matrices.
The set (or group) of $d$-dimensional unitary matrices and states are denoted by $\Unitaries(d)$ and $\States(d)$, respectively.
$\mu_d$ and $\sigma_d$ denote the Haar measure over $\Unitaries(d)$ and $\States(d)$, respectively.
For $U\in\Unitaries(d)$, c-$U\coloneqq\ketbra{0}{0}\otimes I+\ketbra{1}{1}\otimes U$ is the controlled-$U$.
$\ket{\Omega_d}\coloneqq\frac{1}{\sqrt{d}}\sum_{x\in[d]}\ket{x}\ket{x}$ is the maximally entangled state. 
For $U\in\Unitaries(d)$, $U(\cdot)U^\dag$ denotes the channel that maps $\rho\mapsto U\rho U^\dag$ for all $d$-dimensional state $\rho$. 

\subsection{Useful Facts}
Here we introduce several useful facts that we will use.
\begin{lemma}[Gentle Measurement Lemma \cite{winter1999coding,watrous2018theory}]\label{lem:gentle_measurement}
    Let $\rho$ be a quantum state, and $0\le\epsilon\le1$. Let $M$ be a matrix such that $0\le M\le I$ and
    \begin{align}
        \Tr[M\rho]\ge1-\epsilon.
    \end{align}
    Then,
    \begin{align}
        \bigg\|\rho-\frac{\sqrt{M}\rho\sqrt{M}}{\Tr[M\rho]}\bigg\|_1\le \sqrt{\epsilon}.
    \end{align}
\end{lemma}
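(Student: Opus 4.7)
The plan is to lift the problem from mixed states to pure states by purification, and then invoke the closed-form expression for the trace distance between two pure states. Concretely, I would fix any purification $|\psi\rangle_{AB}$ of $\rho$ on an enlarged system, so that $\rho = \Tr_B[|\psi\rangle\langle\psi|]$. The vector $(\sqrt{M}\otimes I_B)|\psi\rangle$ is an (unnormalized) purification of $\sqrt{M}\rho\sqrt{M}$, so $|\phi\rangle := (\sqrt{M}\otimes I_B)|\psi\rangle/\sqrt{\Tr[M\rho]}$ is a normalized purification of the target state $\sigma := \sqrt{M}\rho\sqrt{M}/\Tr[M\rho]$.

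Next, by monotonicity of the trace norm under partial trace together with the standard pure-state identity,
$$\|\rho - \sigma\|_1 \;\le\; \bigl\||\psi\rangle\langle\psi| - |\phi\rangle\langle\phi|\bigr\|_1 \;=\; 2\sqrt{1 - |\langle\psi|\phi\rangle|^2}.$$
A direct calculation gives $\langle\psi|\phi\rangle = \Tr[\sqrt{M}\rho]/\sqrt{\Tr[M\rho]}$. The scalar inequality $t \le \sqrt{t}$ on $[0,1]$ lifts via the spectral calculus to the operator inequality $M \le \sqrt{M}$, whence $\Tr[\sqrt{M}\rho] \ge \Tr[M\rho]$. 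Squaring and dividing yields $|\langle\psi|\phi\rangle|^2 \ge \Tr[M\rho]^2/\Tr[M\rho] = \Tr[M\rho] \ge 1-\epsilon$, and substituting back gives a bound of the form $O(\sqrt{\epsilon})$.

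The main point to be careful about is the constant rather than any conceptual obstruction: the purification argument above is essentially Winter's original proof and naturally produces $2\sqrt{\epsilon}$, whereas the statement quotes $\sqrt{\epsilon}$. I would first cross-check the cited references~\cite{winter1999coding,watrous2018theory} to determine whether this reflects a trace-distance convention (e.g.\ half the trace norm) or a sharper analysis. If a tighter constant is genuinely needed, I would replace the Fuchs--van de Graaf-type step by a direct fidelity bound, applying Uhlmann's theorem to get $F(\rho,\sigma) \ge \Tr[\sqrt{M}\rho]/\sqrt{\Tr[M\rho]} \ge \sqrt{\Tr[M\rho]}$ and then using $\|\rho - \sigma\|_1 \le 2\sqrt{1-F(\rho,\sigma)^2}$. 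In any case, for the downstream use in this paper only the scaling $\|\rho - \sigma\|_1 = O(\sqrt{\epsilon})$ is required, so the exact constant is not a substantive obstacle.
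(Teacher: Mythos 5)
The paper does not actually prove this lemma --- it is quoted as an external fact with citations to Winter and Watrous --- so there is no internal proof to compare against. Your purification argument is the standard textbook proof and its logic is correct: $|\phi\rangle := (\sqrt{M}\otimes I)|\psi\rangle/\sqrt{\Tr[M\rho]}$ is indeed a normalized purification of $\sqrt{M}\rho\sqrt{M}/\Tr[M\rho]$; monotonicity of the trace norm under partial trace reduces the problem to two pure states; the overlap computation $\langle\psi|\phi\rangle = \Tr[\sqrt{M}\rho]/\sqrt{\Tr[M\rho]}$ is right; and since $0\le M\le I$ gives $M\le\sqrt{M}$ by functional calculus, you correctly conclude $|\langle\psi|\phi\rangle|^2 \ge \Tr[M\rho] \ge 1-\epsilon$, hence $\|\rho-\sigma\|_1 \le 2\sqrt{\epsilon}$.

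Your flagged worry about the constant is not a weakness of your argument but a genuine error in the lemma as stated. With the paper's own convention $\|A\|_1 = \Tr[(A^\dagger A)^{1/2}]$, the claimed bound $\sqrt{\epsilon}$ is false: take $\rho = \tfrac{1}{2}(\ketbra{0}{0}+\ketbra{1}{1})$ and $M=\ketbra{0}{0}$, so that $\Tr[M\rho]=\tfrac12$ and $\epsilon=\tfrac12$; then $\sigma = \ketbra{0}{0}$ and $\|\rho-\sigma\|_1 = 1 > \sqrt{1/2}$, while your bound $2\sqrt{\epsilon}=\sqrt{2}$ holds. The stated $\sqrt{\epsilon}$ is evidently a convention slip (it is the bound one would get for the \emph{normalized} trace distance $\tfrac12\|\cdot\|_1$). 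Your alternative via Uhlmann's theorem plus Fuchs--van de Graaf produces exactly the same $2\sqrt{\epsilon}$, not anything sharper, so there is no route to $\sqrt{\epsilon}$ here. None of this affects the paper downstream, where the lemma is only used to absorb negligible or $O(1/\poly)$ errors and the constant is immaterial.
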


The following is a well-known fact. For its proof, see \cite{MatrixAnalysis}.

\begin{lemma}[H\"{o}lder's Inequality \cite{MatrixAnalysis}]\label{lem:Holder}
    Let $A$ and $B$ be matrices of the same size. Then, for any $p>0$, we have $\|AB\|_p\le\|A\|_p\|B\|_\infty.$
    Moreover, we have $\|AB\|_\infty\le\|A\|_\infty\|B\|_\infty$
\end{lemma}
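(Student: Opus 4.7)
The plan is to reduce both inequalities to the well-known fact that for $d\times d$ positive semidefinite matrices $X \le Y$, the ordered eigenvalues satisfy $\lambda_i(X) \le \lambda_i(Y)$ for every $i$ (Weyl's monotonicity principle). This will let us compare the singular values of $AB$ with those of $A$ term by term, and then the $p$-norm inequality follows immediately from summing.

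First I would dispatch the operator norm case. For any unit vector $\ket{\psi}$, the definition gives $\|AB\ket{\psi}\| \le \|A\|_\infty \, \|B\ket{\psi}\| \le \|A\|_\infty\|B\|_\infty$, so taking the supremum over $\ket{\psi}$ yields $\|AB\|_\infty \le \|A\|_\infty \|B\|_\infty$. This is just submultiplicativity of the operator norm and needs no further work.

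For the main inequality $\|AB\|_p \le \|A\|_p \|B\|_\infty$, I would start from the operator inequality $BB^\dag \le \|B\|_\infty^2 I$, which is just the definition of the operator norm. Conjugating both sides by $A$ preserves the positive-semidefinite order, yielding
\begin{equation}
(AB)(AB)^\dag \;=\; A(BB^\dag)A^\dag \;\le\; \|B\|_\infty^2\, AA^\dag.
\end{equation}
Applying Weyl's monotonicity principle to the positive operators $(AB)(AB)^\dag$ and $\|B\|_\infty^2\, AA^\dag$, we get $\lambda_i((AB)(AB)^\dag) \le \|B\|_\infty^2 \,\lambda_i(AA^\dag)$ for every $i$, which translates into the singular-value bound $\sigma_i(AB) \le \|B\|_\infty \,\sigma_i(A)$.

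Raising to the $p$-th power and summing over $i$ then gives
\begin{equation}
\|AB\|_p^p \;=\; \sum_i \sigma_i(AB)^p \;\le\; \|B\|_\infty^p \sum_i \sigma_i(A)^p \;=\; \|B\|_\infty^p \,\|A\|_p^p,
\end{equation}
and taking the $p$-th root finishes the proof. There is no real obstacle here since every ingredient is standard; the only mildly delicate step is invoking Weyl monotonicity for the ordered eigenvalues of positive operators, which is why the paper cites \cite{MatrixAnalysis} rather than reproducing a proof.
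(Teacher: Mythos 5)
Your proof is correct, and the paper itself does not give a proof of this lemma — it simply labels it a ``well-known fact'' and defers to \cite{MatrixAnalysis}, which is the same standard reference you invoke for the Weyl monotonicity step. The argument you give (submultiplicativity for the $\infty$-norm; $BB^\dag \le \|B\|_\infty^2 I$, conjugation by $A$, and Weyl monotonicity to get the entrywise singular-value bound $\sigma_i(AB) \le \|B\|_\infty \sigma_i(A)$, then summing $p$-th powers) is the standard textbook route and works for all $p>0$ since only the pointwise bound on singular values is used, not the triangle inequality.
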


\if0
\begin{lemma}\label{lem:explicit_form_of_diamond_norm_for_pure_state}
    For any $U,V\in\Unitaries(d)$,
    \begin{align}
        \|U(\cdot)U^\dag-V(\cdot)V^\dag\|_\diamond=\sqrt{1-\min_{\rho}|\Tr[\rho U^\dag V]|^2},
    \end{align}
    where the minimization is taken over all $d$-dimensional states.
\end{lemma}

\begin{proof}
    We have
    \begin{align}
        \|U(\cdot)U^\dag-V(\cdot)V^\dag\|_\diamond
        &=\max_{\ket{\psi}}\|(U\otimes I)\ketbra{\psi}{\psi}(U\otimes I)^\dag-(V\otimes I)\ketbra{\psi}{\psi}(V\otimes I)^\dag\|_1\notag\\
        &=\max_{\ket{\psi}}2\sqrt{1-|\bra{\psi}(U^\dag V\otimes I)\ket{\psi}|^2}\notag\\
        &=\sqrt{1-\min_{\ket{\psi}}|\bra{\psi}(U^\dag V \otimes I)\ket{\psi}|^2},
    \end{align}
    where the maximization and minimization are taken over all $\ket{\psi}\in\States(d^2)$. From the Schmidt decomposition, for any $\ket{\psi}\in\States(d^2)$, there exists a $d$-dimensional state $\rho$ such that $\ket{\psi}=(\sqrt{\rho}\otimes I)\ket{\Omega}$. Here $\ket{\Omega}=\sum_{x\in[d]}\ket{x}\ket{x}$ is a maximally entangled state. Thus,
    \begin{align}
        \min_{\ket{\psi}}|\bra{\psi}(U^\dag V\otimes I)\ket{\psi}|=\min_\rho|\bra{\Omega}(\sqrt{\rho}(U^\dag V)\sqrt{\rho} \otimes I)\ket{\Omega}|=\min_{\rho}|\Tr[\rho U^\dag V]|.
    \end{align}
    By putting them together, we have
    \begin{align}
        \|U(\cdot)U^\dag-V(\cdot)V^\dag\|_\diamond=\sqrt{1-\min_{\rho}|\Tr[\rho U^\dag V]|^2},
    \end{align}
    where the minimization is taken over all $d$-dimensional states.
\end{proof}
\fi

\if0
\begin{lemma}[\cite{FOCS:HKOT23}]\label{lem:relation_between_metrics_for_unitaries}
    For any $U,V\in\Unitaries(d)$,
    \begin{align}
        \frac{1}{2}\|U(\cdot)U^\dag-V(\cdot)V^\dag\|_\diamond\le\min_{z\in\Unitaries(1)}\|U-zV\|_\infty\le\|U(\cdot)U^\dag-V(\cdot)V^\dag\|_\diamond.
    \end{align}
\end{lemma}

\begin{lemma}\label{lem:closness_of_controlled-U_and_V_in_diamond_norm}
    Let $U,V\in\Unitaries(d)$ such that
    \begin{align}
        \|U(\cdot)U^\dag-V(\cdot)V^\dag\|_\diamond\le\epsilon.
    \end{align}
    Then, there exists a $z\in\Unitaries(1)$ such that
    \begin{align}
        \|\text{c-}U(\cdot)\text{c-}U^\dag-\text{c-}(zV)(\cdot)\text{c-}(zV)^\dag\|_\diamond\le2\epsilon,
    \end{align}
    where c-$U\coloneqq\ketbra{0}{0}\otimes I+\ketbra{1}{1}\otimes U$.
\end{lemma}

\begin{proof}[Proof of \cref{lem:closness_of_controlled-U_and_V_in_diamond_norm}]
     From \cref{lem:relation_between_metrics_for_unitaries}, there exists a $z\in\Unitaries(1)$ such that
     \begin{align}
         \|U-zV\|_\infty\le\|U(\cdot)U^\dag-V(\cdot)V^\dag\|_\diamond\le\epsilon.\label{eq:good_z}
     \end{align}
     Then,
     \begin{align}
         \|\text{c-}U(\cdot)\text{c-}U^\dag-\text{c-}(zV)(\cdot)\text{c-}(zV)^\dag\|_\diamond
         \le& 2\min_{w\in\Unitaries(1)}\|\text{c-}U-w\cdot\text{c-}(zV)\|_\infty\tag{By \cref{lem:relation_between_metrics_for_unitaries}}\\
         \le&2\|\text{c-}U-\text{c-}(zV)\|_\infty\notag\\
         =&2\|\ketbra{1}{1}\otimes(U-zV)\|_\infty\notag\\
         =&2\|U-zV\|_\infty\notag\\
         \le&2\epsilon\tag{By \cref{eq:good_z}}.
     \end{align}
\end{proof}
\fi

\ifnum\submission=0
We need the following lemma when we use the concentration inequality which we introduce later.
\begin{lemma}[Lemma 28 in \cite{TQC:Kre21}]\label{lem:Lipschitz}
    Let $\cA^{U}$ be a quantum algorithm that makes $T$ queries to $U\in\Unitaries(d)$ and its inverse. Then, $f(U)=\Pr[1\gets\cA^{U}]$ is $2T$-Lipschitz in the Frobenius norm, i.e., $|f(U)-f(V)|\le2T\|U-V\|_2$ for all $U,V\in\Unitaries(d)$.
\end{lemma}

By applying the triangle inequality, we have the following.

\begin{lemma}\label{lem:f+g_Lipschitz}
    Let $f,g:\Unitaries(d)\to\R$ be $L$-Lipschitz functions in the Frobenious norm, i.e.,
    $|f(U)-f(V)|\le L\|U-V\|_2$ and $|g(U)-g(V)|\le L\|U-V\|_2$ for any $U,V\in\Unitaries(d)$. Then, $f+g$ is $2L$-Lipschitz. Namely, for any $U,V\in\Unitaries(d)$,
    \begin{align}
        |f(U)+g(U)-f(V)-g(V)|\le2L\|U-V\|_2.
    \end{align}
\end{lemma}

Regarding the Frobenius norm, we use the following.

\begin{lemma}\label{lem:Uket0-Vket0_is_less_than_U-V}
    Let $U,V\in\Unitaries(d)$. Then, for any $\ket{\psi}\in\States(d)$,
    \begin{align}
        \|U\ketbra{\psi}{\psi}U^\dag-V\ketbra{\psi}{\psi}V^\dag\|_2
        \le2\|U-V\|_2
    \end{align}
\end{lemma}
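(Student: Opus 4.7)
The plan is a standard triangle-inequality trick together with H\"older's inequality (\cref{lem:Holder}). First I would add and subtract a mixed term to obtain the identity
\[
U\ketbra{\psi}{\psi}U^\dag - V\ketbra{\psi}{\psi}V^\dag
= (U-V)\ketbra{\psi}{\psi}U^\dag + V\ketbra{\psi}{\psi}(U^\dag - V^\dag),
\]
and then apply the triangle inequality for $\|\cdot\|_2$ to split the task into bounding the two summands separately.

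For each summand I would use \cref{lem:Holder} in the forms $\|AB\|_2 \le \|A\|_2\|B\|_\infty$ and $\|AB\|_2 \le \|A\|_\infty\|B\|_2$. The key observation is that the rank-one operators $\ketbra{\psi}{\psi}U^\dag = \ket{\psi}\bra{U\psi}$ and $V\ketbra{\psi}{\psi} = \ket{V\psi}\bra{\psi}$ both have operator norm $1$, since $\ket{\psi}$, $U\ket{\psi}$, and $V\ket{\psi}$ are all unit vectors. This yields
\[
\|(U-V)\ketbra{\psi}{\psi}U^\dag\|_2 \le \|U-V\|_2, \qquad
\|V\ketbra{\psi}{\psi}(U^\dag - V^\dag)\|_2 \le \|U^\dag - V^\dag\|_2.
\]

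Finally I would use the fact that the Frobenius norm is invariant under taking adjoints, so $\|U^\dag - V^\dag\|_2 = \|U-V\|_2$, and sum the two contributions to obtain the desired bound $2\|U-V\|_2$. There is no real obstacle here; the only thing to be careful about is applying H\"older's inequality in the correct slot so that the rank-one operator is bounded in operator norm rather than Frobenius norm, which is where the essential factor of $1$ comes from.
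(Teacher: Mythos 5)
Your proof is correct and follows essentially the same route as the paper: telescoping decomposition, triangle inequality, then H\"older to peel off the rank-one factor with operator norm $1$, giving $\|U-V\|_2$ twice. The only cosmetic difference is in handling the second summand: you apply H\"older in the form $\|AB\|_2 \le \|A\|_\infty\|B\|_2$ (which follows from the stated \cref{lem:Holder} by taking adjoints but is not literally what it says), whereas the paper first uses $\|A^\dag\|_2 = \|A\|_2$ to rewrite $\|V\ketbra{\psi}{\psi}(U-V)^\dag\|_2 = \|(U-V)\ketbra{\psi}{\psi}V^\dag\|_2$ so that the stated form $\|AB\|_2\le\|A\|_2\|B\|_\infty$ applies to both terms identically.
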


\begin{proof}[Proof of \cref{lem:Uket0-Vket0_is_less_than_U-V}]
    We obtain the inequality as follows:
    \begin{align}
        &\|U\ketbra{\psi}{\psi}U^\dag-V\ketbra{\psi}{\psi}V^\dag\|_2
        \notag\\
        \le&\|U\ketbra{\psi}{\psi}U^\dag-V\ketbra{\psi}{\psi}U^\dag\|_2+\|V\ketbra{\psi}{\psi}U^\dag-V\ketbra{\psi}{\psi}V^\dag\|_2
        \tag{By the triangle inequality}\\
        =&\|(U-V)\ketbra{\psi}{\psi}U^\dag\|_2+\|V\ketbra{\psi}{\psi}(U-V)^\dag\|_2
        \notag\\
        =&\|(U-V)\ketbra{\psi}{\psi}U^\dag\|_2+\|(U-V)\ketbra{\psi}{\psi}V^\dag\|_2
        \tag{By $\|A^\dag\|_2=\|A\|_2$}\\
        \le&\|U-V\|_2\|\ketbra{\psi}{\psi}U^\dag\|_\infty
        +\|(U-V)\|_2\|\ketbra{\psi}{\psi}V^\dag\|_\infty
        \tag{By H\"older's inequality, \cref{lem:Holder}}\\
        \le&2\|U-V\|_2,
    \end{align}
    where in the last inequality we have used $\|\ketbra{\psi}{\psi}W\|_\infty\le\|\ketbra{\psi}{\psi}\|_\infty\|W\|_\infty\le1$ for any $W\in\Unitaries(d)$ from \cref{lem:Holder}.
\end{proof}

The following lemma is implicitly shown in \cite{TQC:Kre21}.\footnote{In \cite{TQC:Kre21}, he showed \cref{lem:BBBV} only for the case when $\cD$ is the Haar measure. However, his proof does not rely on any property of the Haar measure because its proof essentially depends on the BBBV theorem \cite{SICOMP:BBBV97}. Thus, we can obtain the same claim for a general distribution $\cD$ from his original proof.}

\begin{lemma}[Lemma 31 in \cite{TQC:Kre21}]\label{lem:BBBV}
    Let $\cD$ be a distribution over $\Unitaries(d)$. Suppose that $\cA$ is a quantum algorithm that queries $U=(U_1,..,U_N)\in\Unitaries(d)^N$ and $O\in\Unitaries(d)$. We see $U$ as $\sum_{n\in[N]}\ketbra{n}{n}\otimes U_n$. For fixed $U\in\Unitaries(d)^N$, define
    \begin{align}
        \Adv(\cA,U)\coloneqq
        \Pr_{k\gets[N]}[1\gets\cA^{U_k,U}]-\Pr_{O\gets\cD}[\cA^{O,U}].
    \end{align}
    Then, there exists a constant $c>0$ such that, for any $T$-query algorithm $\cA$,
    \begin{align}
        \bigg|
         \Exp_{U\gets\cD^N}[\Adv(\cA,U)]
        \bigg|
        \le\frac{cT^2}{N}.
    \end{align}
    Here, $U\gets\cD^N$ denotes that each $U_k$ is independently sampled from $\cD$.
\end{lemma}
\fi

\if0
\begin{lemma}\label{lem:indentity_sufficient_condition}
    Let $U$ be a unitary such that $\bra{\psi}U\ket{\psi}$ is independent of $\ket{\psi}$, i.e., for any $\ket{\psi}$ and $\ket{\phi}$, $\bra{\psi}U\ket{\psi}=\bra{\phi}U\ket{\phi}$. Then, $U=cI$ for some constant $c$.
\end{lemma}

\begin{proof}[Proof of \cref{lem:indentity_sufficient_condition}]
    Let $U=\sum_iu_i\ket{e_i}\bra{e_i}$ be the spectral decompotison of $U$. Then,
    \begin{align}
        u_i=\bra{e_i}U\ket{e_i}=\bra{e_j}U\ket{e_j}=u_j
    \end{align}
    for all $i$ and $j$. This means all eigenvalues of $U$ are the same. Let $c$ denote the eigenvalue of $U$. Then, we have
    \begin{align}
        U=\sum_ic\ket{e_i}\bra{e_i}=cI
    \end{align}
    which concludes the proof.
\end{proof}

We have the following by combining the threshold search algorithm \cite{ITCS:WatBos24} and $\unitaryPSPACE$ simulation \cite{ITCS:GirRaz22}. For its proof, see \cite{bostanci2024oracle}.
\shogo{Check whether we need a classical description of $M_i$ or not. Maybe we only need to know how to implement its measurement.}

\begin{lemma}[Threshold Search Algorithm with $\unitaryPSPACE$ Oracle \cite{bostanci2024oracle}]\label{lem:threshold_search}
Let $\{M_i\}_{i\in N}$ be a collection of two outcome measurements and $\rho$ be an unknown state such that 
\begin{align}
    \Tr[M_i\rho]\ge\frac{3}{4}
\end{align}
for some $i\in[N]$. In addition, we require $M_i$ to be implementable given access to $\unitaryPSPACE$ oracle for each $i$.
Then, there exists a QPT algorithm given 
access to $\unitaryPSPACE$ oracle that, on input $O(\log N)$ copies of $\rho$, outputs $j\in[N]$ such that $\Tr[M_j\rho]\ge1/3$ with probability $c$, where $0<c<1$ is constant.
\end{lemma}

\fi

\if0
We use a probabilistic version of the BBBV theorem \cite{SICOMP:BBBV97}. 

\begin{lemma}[\cite{CCC:AarIngKre22}, Lemma 37]\label{lem:BBBV}
    Let $\cA^y$ be a (possibly unbounded) quantum algorithm that makes $T$ queries to $y\in\bit^n$. Let $x\in\bit^n$ be a bit string and $\cD$ be a distribution over $\bit^n$ such that, for all $i\in[n]$, $\Pr_{y\gets\cD}[x_i\neq y_i]\le p$. Then, for any $r>0$,
    \begin{align}
        \Pr_{y\gets\cD}[|\Pr[1\gets\cA^y]-\Pr[1\gets\cA^x]|\ge r]\le\frac{64pT^2}{r^2}.
    \end{align}
\end{lemma}
\fi

We will use the following process tomography algorithm.

\begin{theorem}[\cite{FOCS:HKOT23}]\label{thm:process_tomography_HKOT23}
    There exists a quantum algorithm $\cA$ that, given a black-box access to $Z\in\Unitaries(D)$, satisfies the following:
    \begin{itemize}
        \item Accuracy: On input $\epsilon,\mu\in(0,1)$, $\cA$ outputs a classical description of a unitary $Z'$ such that
        \begin{align}
            \Pr_{Z'\gets\cA}[\|Z(\cdot)Z^\dag-Z'(\cdot)Z'^\dag\|_\diamond\le\epsilon]\ge1-\eta.
        \end{align}
        \item Query complexity: $\cA$ makes $O(\frac{D^2}{\epsilon}\log\frac{1}{\eta})$ queries to $Z$.
        \item Time complexity: The time complexity of $\cA$ is $\poly(D,\frac{1}{\epsilon},\log\frac{1}{\eta})$.
    \end{itemize}
\end{theorem}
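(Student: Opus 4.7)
The plan is to reduce unitary process tomography to pure-state tomography via the Choi-Jamio\l{}kowski isomorphism. Each query to $Z$ is spent by applying it to one half of a freshly prepared maximally entangled state $|\Omega_D\rangle$, producing a copy of the Choi state $|J(Z)\rangle := (Z \otimes I)|\Omega_D\rangle \in \States(D^2)$. The problem then reduces to learning a classical description of this $D^2$-dimensional pure state and ``rounding'' the learned vector back to a valid unitary.

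For the rounding step, given a candidate pure state $|\psi'\rangle \in \States(D^2)$ I would reshape its amplitude vector into a $D \times D$ matrix $M$ and define $Z'$ as the unitary appearing in the polar decomposition $\sqrt{D}\, M = Z' P$. A direct calculation shows that $Z'$ maximizes $|\langle J(V)|\psi'\rangle|$ over $V \in \Unitaries(D)$, so a Choi-state infidelity at most $\delta$ implies $1 - |\Tr[Z^\dag Z']|^2/D^2 \le O(\delta)$. Combining this with the standard inequality $\|Z(\cdot)Z^\dag - Z'(\cdot)Z'^\dag\|_\diamond \le 2\sqrt{1 - \min_{z \in \Unitaries(1)} |\Tr[Z^\dag z Z']|^2 / D^2}$ translates Choi-state fidelity into the diamond-norm guarantee claimed by the theorem; the global-phase minimization is precisely what the polar-factor rounding absorbs for free.

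The main obstacle is achieving the claimed query complexity $O\!\left(D^2\, \epsilon^{-1} \log \eta^{-1}\right)$ rather than the weaker $O\!\left(D^2\, \epsilon^{-2} \log \eta^{-1}\right)$ that a black-box application of optimal pure-state tomography would yield, since one would need Choi-state infidelity $\sim \epsilon^2$ in order to get diamond distance $\epsilon$. To shave the extra factor of $\epsilon^{-1}$, my plan is to exploit that the unknown state is not an arbitrary pure state but a Choi state of a unitary, living inside a $D^2$-dimensional subspace carrying a very structured $\Unitaries(D) \times \Unitaries(D)$ action. Concretely, I would apply $Z$ in parallel on a carefully symmetrized ancilla register and perform Schur sampling followed by a Keyl-style estimator: by Schur-Weyl duality, the distribution over Young-diagram labels concentrates on the sector determined by $Z$, and the estimator on the irrep data directly outputs a description of a unitary whose diamond-distance error shrinks \emph{linearly}, rather than quadratically, in the number of queries. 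Boosting the constant success probability to $1-\eta$ is then handled by $O(\log(1/\eta))$ independent repetitions followed by a tournament that uses Choi-state SWAP tests to select the best candidate, accounting for the $\log(1/\eta)$ factor in the final bound.
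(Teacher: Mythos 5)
The paper does not prove this statement; it is imported verbatim from Haah, Kothari, O'Donnell, and Tang (FOCS 2023), so there is no in-paper proof to compare against. Evaluating your sketch on its own terms, there is a genuine gap in the step where you claim to beat the $\epsilon^{-2}$ barrier.

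Your rounding and metric conversion are fine, and you correctly identify the obstacle: going through Choi-state infidelity loses a square root. But the proposed fix does not close it. The loss is not an artifact of generic pure-state tomography that a smarter Schur--Weyl estimator can repair --- it is intrinsic to the metric conversion. Writing $Z^\dagger Z' = e^{iH}$ with $\Tr H = 0$, the Choi infidelity is $\approx \|H\|_2^2 / D$ while the diamond distance is controlled by $\|H\|_\infty$; in the worst case (error concentrated on a single eigenvalue) these differ by a factor of $\sqrt{D}$. Even using the \emph{optimal} covariant Choi-state estimator for $\Unitaries(D)$ (Kahn's estimator, which is essentially the Keyl-on-the-group construction you gesture at) one gets Choi infidelity $\delta = O(D^2/n)$ from $n$ parallel uses, hence diamond error $O(\sqrt{D\delta}) = O(\sqrt{D^3/n})$, and so $n = O(D^3/\epsilon^2)$ --- worse than the claimed $O(D^2/\epsilon)$ in both parameters. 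The Young-diagram concentration you invoke gives you the right \emph{constant} in the infidelity bound, not a change in the \emph{exponent} of $\epsilon$.

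The actual HKOT23 argument avoids this by never trying to learn $Z$ to diamond precision $\epsilon$ from a fixed Choi-state ensemble. It bootstraps: from a coarse estimate $\hat Z$, it forms the near-identity residual $\hat Z^{-1}Z = e^{iH}$ with $\|H\|_\infty \le \theta$, then applies this residual \emph{sequentially} $\approx 1/\theta$ times to amplify $H$ to a constant before re-estimating. Each round shrinks the operator-norm error by a constant factor, the query cost of each round is $O(D^2)$, and summing a geometric series yields $O(D^2/\epsilon)$. This sequential, Heisenberg-limited amplification is exactly the ingredient missing from a one-shot, parallel Choi-state strategy, and it cannot be recovered by choosing a cleverer covariant POVM on $|J(Z)\rangle^{\otimes n}$. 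If you want to salvage your outline, you would need to add the bootstrapping loop (or some equivalent phase-amplification step) on top of your estimator; as written, it proves $O(D^3/\epsilon^2)$, not the stated bound.
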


\if0
\begin{lemma}[\cite{AAS20}]\label{lem:duality_swap_and_distinguish}
    The following two claims hold.
    \begin{enumerate}
        \item \label{item:swap_to_distinguish}
        Let $\ket{x}$ and $\ket{y}$ be states on register $\regA\regB$ such that $\braket{x|y}=0$. Suppose that $U$ is a unitary over $\regA$ such that $\bra{y}_{\regA\regB}(U_\regA\otimes I_\regB)\ket{x}_{\regA\regB}=a$ and $\bra{x}_{\regA\regB}(U_\regA\otimes I_\regB)\ket{y}_{\regA\regB}=b$. 
        Then, there exists an algorithm $\cA$ such that
        \begin{itemize}
            \item $\cA$ distinguishes $\ket{\psi}_{\regA\regB}=\frac{\ket{x}_{\regA\regB}+\ket{y}_{\regA\regB}}{\sqrt{2}}$ from $\ket{\phi}_{\regA\regB}=\frac{\ket{x}_{\regA\regB}-\ket{y}_{\regA\regB}}{\sqrt{2}}$ with advantage $\frac{|a+b|}{2}$ without acting on $\regB$.
            \item $\cA$ uses single black-box access controlled-$U$ and $O(1)$ additional gates.
        \end{itemize}

        \item \label{item:distinguish_to_swap}
        Let $\ket{\psi}$ and $\ket{\phi}$ be states on $\regA\regB$. Let $\cA$ be an algorithm whose time complexity is $T$. Suppose that $\cA$ distinguishes $\ket{\psi}_{\regA\regB}$ from $\ket{\phi}_{\regA\regB}$ with advantage $\Delta$ without using any additional ancilla register besides $\regB$. Then, there exists a $O(T)$-time computable unitary $U$ over $\regA\regB$ such that
        \begin{align}
            \frac{|\bra{y}U\ket{x}+\bra{x}U\ket{y}|}{2}=\Delta,
        \end{align}
        where $\ket{x}\coloneqq\frac{\ket{\psi}+\ket{\phi}}{\sqrt{2}}$ and $\ket{y}\coloneqq\frac{\ket{\psi}-\ket{\phi}}{\sqrt{2}}$.  Moreover, if $\cA$ does not act on $\regB$, then $U$ also does not act on $\regB$.
    \end{enumerate}
\end{lemma}

\begin{remark}
    The original statement in \cite{AAS20}, they do not mention whether the obtained unitary or algorithm acts on the $\regB$ or not. However, we can see that they are correct by checking their construction as mentioned in \cite{EC:HhaMorYam23}.\shogo{ここ書き方変える。HMYも同時に引用してHMYの特別な場合とする。}
\end{remark}
\fi

\begin{lemma}[Borel-Cantelli]\label{lem:Borel-Cantelli}
    Suppose that $\{X_n\}_{n\in\N}$ is a sequence of random variables such that $X_n\in\bit$. If $\sum_{n\in\N}\Exp[X_n]$ is finite, then
    \begin{align}
        \Pr\bigg[\sum_{n\in\N}X_n=\infty\bigg]=0.
    \end{align}
\end{lemma}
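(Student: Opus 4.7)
The plan is to view each $X_n$ as the indicator of the event $A_n := \{X_n = 1\}$, so that $\Exp[X_n] = \Pr[A_n]$ and the sum $S := \sum_{n \in \N} X_n$ counts how many of the $A_n$ occur. The conclusion $\Pr[S = \infty] = 0$ then amounts to saying that almost surely only finitely many of the $A_n$ occur.

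First I would pass the expectation through the sum using the monotone convergence theorem (or Tonelli), which is legitimate because each $X_n$ is nonnegative (indeed $\{0,1\}$-valued). This gives
\begin{align*}
\Exp[S] \;=\; \Exp\!\left[\sum_{n \in \N} X_n\right] \;=\; \sum_{n \in \N} \Exp[X_n],
\end{align*}
and by hypothesis the right-hand side is finite. Next I would argue that any nonnegative random variable with finite expectation must be finite almost surely: indeed, if $\Pr[S = \infty] = \delta > 0$, then $\Exp[S] \ge \delta \cdot \infty = \infty$, contradicting $\Exp[S] < \infty$. (One can also formalize this via Markov's inequality, $\Pr[S \ge M] \le \Exp[S]/M$, and letting $M \to \infty$.) Therefore $\Pr[S = \infty] = 0$, which is exactly the claim.

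Alternatively, the same conclusion can be reached through the tail-event formulation: set $B_N := \bigcup_{n \ge N} A_n$, apply the union bound to get $\Pr[B_N] \le \sum_{n \ge N} \Pr[A_n]$, and note that this tail of a convergent series tends to $0$ as $N \to \infty$. Since the event $\{S = \infty\}$ is contained in $\bigcap_N B_N$, continuity of measure from above gives $\Pr[S = \infty] \le \lim_{N \to \infty} \Pr[B_N] = 0$. I do not foresee any real obstacle here — the only subtle point is justifying the interchange of sum and expectation, which is immediate from monotone convergence given that $X_n \ge 0$.
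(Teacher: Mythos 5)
Your proof is correct. The paper does not supply a proof of this lemma — it is invoked as the standard (first) Borel–Cantelli lemma — and both of the arguments you give (the Tonelli/monotone-convergence plus Markov argument, and the tail-union-bound argument) are the usual textbook proofs, so there is nothing to reconcile.
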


\if0
\begin{lemma}\label{lem:caluculation_for_applying_duality}
    Let $\ket{x}$ and $\ket{y}$ be states on $\regA$. Let $\ket{\tau}$ be a state on $\regB$. Suppose that $U$ is a unitary over $\regA\regB$ such that, for some $0\le\epsilon\le1$,
    \begin{align}
        \frac{|\bra{y}\bra{\tau}U\ket{x}\ket{\tau}+\bra{x}\bra{\tau}U\ket{y}\ket{\tau}|}{2}\ge1-\epsilon\label{eq:assumption_in_caluculation_for_applying_duality}
    \end{align}
    Then,
    \begin{align}
        \frac{1}{2}\|(\bra{y}\otimes I)U\ket{x}\ket{\tau}+(\bra{x}\otimes I)U\ket{y}\ket{\tau}\|_2\ge1-3\sqrt{\epsilon}.\label{eq:conclusion_in_caluculation_for_applying_duality}
    \end{align}
\end{lemma}

\begin{proof}[Proof of \cref{lem:caluculation_for_applying_duality}]
    Without loss of generality, we can assume $|\bra{x}\bra{\tau}U\ket{y}\ket{\tau}|\ge|\bra{y}\bra{\tau}U\ket{x}\ket{\tau}|$. From \cref{eq:assumption_in_caluculation_for_applying_duality}, we have
    \begin{align}
        |\bra{x}\bra{\tau}U\ket{y}\ket{\tau}|\ge\frac{|\bra{y}\bra{\tau}U\ket{x}\ket{\tau}+\bra{x}\bra{\tau}U\ket{y}\ket{\tau}|}{2}\ge1-\epsilon.\label{eq:xUy_is_large}
    \end{align}
    Moreover,
    \begin{align}
        1-\epsilon\le\frac{|\bra{y}\bra{\tau}U\ket{x}\ket{\tau}+\bra{x}\bra{\tau}U\ket{y}\ket{\tau}|}{2}\le\frac{|\bra{y}\bra{\tau}U\ket{x}\ket{\tau}|+1}{2},
    \end{align}
    which implies
    \begin{align}
        |\bra{y}\bra{\tau}U\ket{x}\ket{\tau}|\ge1-2\epsilon.\label{eq:yUx_is_large}
    \end{align}
    
    By using \cref{eq:xUy_is_large,eq:yUx_is_large}, we show \cref{eq:conclusion_in_caluculation_for_applying_duality}. From the triangle inequality,
    \begin{align}
        &\|(\bra{y}\otimes I)U\ket{x}\ket{\tau}+(\bra{x}\otimes I)U\ket{y}\ket{\tau}\|_2\\
        \ge &\bigg|\|(\bra{y}\otimes \ketbra{\tau}{\tau})U\ket{x}\ket{\tau}+(\bra{x}\otimes \ketbra{\tau}{\tau})U\ket{y}\ket{\tau}\|_2-\|(\bra{y}\otimes I^\bot)U\ket{x}\ket{\tau}+(\bra{x}\otimes I^\bot)U\ket{y}\ket{\tau}\|_2\bigg|,\label{eq:triangle_ineq_in_caluculation_for_applying_duality}
    \end{align}
    where $I^\bot\coloneqq I-\ketbra{\tau}{\tau}$. First, we can estimate the first term as follows:
    \begin{align}
        \|(\bra{y}\otimes \ketbra{\tau}{\tau})U\ket{x}\ket{\tau}+(\bra{x}\otimes \ketbra{\tau}{\tau})U\ket{y}\ket{\tau}\|_2&=|\bra{y}\bra{\tau}U\ket{x}\ket{\tau}+\bra{x}\bra{\tau}U\ket{y}\ket{\tau}|\|\ket{\tau}\|_2\\
        &=2(1-\epsilon),\label{eq:first_term_in_caluculation_for_applying_duality}
    \end{align}
    where we have used \cref{eq:assumption_in_caluculation_for_applying_duality}. Next, let us estimate the second term of \cref{eq:triangle_ineq_in_caluculation_for_applying_duality}. From the triangle inequality,
    \begin{align}
        \|(\bra{y}\otimes I^\bot)U\ket{x}\ket{\tau}+(\bra{x}\otimes I^\bot)U\ket{y}\ket{\tau}\|_2
        \le&\|(\bra{y}\otimes I^\bot)U\ket{x}\ket{\tau}\|_2+\|(\bra{x}\otimes I^\bot)U\ket{y}\ket{\tau}\|_2.
    \end{align}
    Here, we have 
    \begin{align}
        &\|(\bra{y}\otimes I^\bot)U\ket{x}\ket{\tau}\|_2^2\notag\\
        =&\|(\bra{y}\otimes I)U\ket{x}\ket{\tau}-(\bra{y}\otimes \ketbra{\tau}{\tau})U\ket{x}\ket{\tau}\|_2^2\tag{By $I^\top=I-\ketbra{\tau}{\tau}$}\\
        =&\|(\bra{y}\otimes I)U\ket{x}\ket{\tau}\|_2^2+\|(\bra{y}\otimes \ketbra{\tau}{\tau})U\ket{x}\ket{\tau}\|_2^2-2|\bra{y}\bra{\tau}U\ket{x}\ket{\tau}|^2\notag\\
        \le&2-2(1-2\epsilon)\tag{By \cref{eq:yUx_is_large}}\\
        =&4\epsilon.\label{eq:ineq1_in_caluculation_for_applying_duality}
    \end{align}
    Similarly, we have
    \begin{align}
        \|(\bra{x}\otimes I^\bot)U\ket{y}\ket{\tau}\|_2^2\le2\epsilon.\label{eq:ineq2_in_caluculation_for_applying_duality}
    \end{align}
    Therefore,
    \begin{align}
        &\|(\bra{y}\otimes I^\bot)U\ket{x}\ket{\tau}+(\bra{x}\otimes I^\bot)U\ket{y}\ket{\tau}\|_2\notag\\
        \le&\|(\bra{y}\otimes I^\bot)U\ket{x}\ket{\tau}\|_2+\|(\bra{x}\otimes I^\bot)U\ket{y}\ket{\tau}\|_2\tag{By the triangle inequality}\\
        \le&\sqrt{4\epsilon}+\sqrt{2\epsilon}\tag{By \cref{eq:ineq1_in_caluculation_for_applying_duality,eq:ineq2_in_caluculation_for_applying_duality}}\\
        \le&4\sqrt{\epsilon}.\label{eq:second_term_in_caluculation_for_applying_duality}
    \end{align}
    By putting \cref{eq:triangle_ineq_in_caluculation_for_applying_duality,eq:first_term_in_caluculation_for_applying_duality,eq:second_term_in_caluculation_for_applying_duality} together, we have
    \begin{align}
        \frac{1}{2}\|(\bra{y}\otimes I)U\ket{x}\ket{\tau}+(\bra{x}\otimes I)U\ket{y}\ket{\tau}\|_2\ge1-\epsilon-2\sqrt{\epsilon}\ge1-3\sqrt{\epsilon}.
    \end{align}
\end{proof}
\fi

\subsection{The Haar Measure}

We use the properties of the Haar measure. Recall that $\mu_d$ is the Haar measure over $\Unitaries(d)$, and $\sigma_d$ is that over $\States(d)$. 
\ifnum\submission=0
We will use the following concentration property.
\begin{theorem}[Theorem 5.17 in \cite{Mec}]\label{thm:Haar_concentration}
     Given $d_1,\ldots,d_k \in \N$, let $X = \Unitaries(d_1)\times\cdots\times\Unitaries(d_k)$. Let $\mu = \mu_{d_1} \times \cdots \times \mu_{d_k}$ be the product of Haar measures on $X$. Suppose that $f: X \to \R$ is $L$-Lipschitz in the $\ell^2$-sum of Frobenius norm, i.e., for any $U=(U_1,...,U_k)\in X$ and $V=(V_1,...,V_k)\in X$, 
     we have $|f(U)-f(V)|\le L\sqrt{\sum_i\|U_i-V_i\|^2_2}$. Then for every $\delta > 0$,
    \begin{align}
       \Pr_{U\gets \mu}\left[f(U)\ge\Exp_{V \gets \mu}[f(V)]+\delta\right] \le \exp\left(-\frac{(d-2)\delta^2}{24L^2}\right),
    \end{align}
    where $d \coloneqq \min\{d_1,\ldots,d_k\}$.
\end{theorem}
\fi

\if0
We use a Haar twirl approximation shown in \cite{HhaYam24}

\begin{lemma}[\cite{HhaYam24}]\label{lem:Haar_twirl_approx}
    Let $\ell,d\in\N$ such that $d>\sqrt{6}\ell^{7/4}$. Let $\regA$ be a $ d^$-dimensional register, and $\regB$ be any dimensional register. Then, for any state $\rho$ on the registers $\regA\regB$,
    \begin{align}
        \bigg\|
        \Exp_{U\gets\mu_d}(U^{\otimes\ell}_\regA\otimes \identitymap_\regB)(\rho_{\regA\regB})
        -\sum_{\pi\in S_\ell}\frac{1}{d^\ell}R_{\pi,\regA}\otimes\Tr_\regA[(R_{\pi,\regA}^\dag\otimes I_\regB)\rho_{\regA\regB}]
        \bigg\|_1
        \le O\bigg(\frac{\ell^2}{d}\bigg),
    \end{align}
    where $S_\ell$ denotes the set of all permutaions over $\ell$ elements, and $R_\pi$ is a unitary that maps $\ket{x_1,...,x_\ell}\mapsto\ket{x_{\pi(1)},...,x_{\pi(\ell)}}$ for each $\pi\in S_\ell$.
\end{lemma}
\fi

The following is a well-known fact \cite{Harrow13,Quantum:Mele24}.

\begin{lemma}\label{lem:Haar_states}
    Let $\ell,d\in\N$. Then,
    \begin{align}
        \Exp_{\ket{\psi}\gets\sigma_d}\ketbra{\psi}{\psi}^{\otimes\ell}=\frac{\Pi_\symetric}{\binom{d+\ell-1}{\ell}}.
    \end{align}
    Here, $\Pi_\symetric$ is the following projection:
    \begin{align}
        \Pi_\symetric=\frac{1}{\ell!}\sum_{\pi\in S_\ell}R_{\pi},
    \end{align}
    where $S_\ell$ denotes the permutation group over $\ell$ elements, and $R_\pi$ is permutation unitary such that $R_\pi\ket{x_1,...,x_k}=\ket{x_{\pi^{-1}(1)},...,x_{\pi^{-1}(k)}}$ for all $x_1,...,x_k\in[d]$ and $\pi\in S_\ell$.
\end{lemma}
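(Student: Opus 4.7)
The plan is to prove this standard fact via the representation-theoretic characterization of the symmetric subspace, combined with Schur's lemma and a dimension count.

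First, I would observe that the right-invariance of the Haar measure implies that the state
\[
\rho \;\coloneqq\; \Exp_{\ket{\psi}\gets\sigma_d}\ketbra{\psi}{\psi}^{\otimes\ell}
\]
satisfies $U^{\otimes \ell}\rho (U^{\dag})^{\otimes \ell}=\rho$ for every $U\in\Unitaries(d)$. This is because $\sigma_d$ can equivalently be generated as $U\ket{\psi}$ for $\ket{\psi}\gets\sigma_d$ and $U$ an arbitrary fixed unitary (or equivalently, sample a Haar random unitary applied to $\ket{0\ldots0}$ and use unitary invariance of $\mu_d$).

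Second, I would note that $\rho$ is supported entirely inside the symmetric subspace $\mathrm{Sym}^\ell(\mathbb{C}^d)$, since each $\ket{\psi}^{\otimes \ell}$ is manifestly invariant under every permutation $R_\pi$, and hence $\Pi_{\symetric}\rho\Pi_{\symetric}=\rho$. Now the key representation-theoretic fact is that $\mathrm{Sym}^\ell(\mathbb{C}^d)$ is an irreducible representation of $\Unitaries(d)$ under the action $U\mapsto U^{\otimes \ell}|_{\mathrm{Sym}^\ell(\mathbb{C}^d)}$. Combined with the invariance from the first step, Schur's lemma forces $\rho$ to be a scalar multiple of $\Pi_{\symetric}$, say $\rho = c\,\Pi_{\symetric}$ for some constant $c$.

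Third, I would fix the normalization by taking the trace. Since $\Tr[\rho]=1$ and $\Tr[\Pi_{\symetric}] = \dim\mathrm{Sym}^\ell(\mathbb{C}^d)=\binom{d+\ell-1}{\ell}$ (a standard stars-and-bars count of the basis $\ket{x_1,\ldots,x_\ell}$ with $x_1\le\cdots\le x_\ell$, symmetrized), we obtain $c=1/\binom{d+\ell-1}{\ell}$, giving the claimed identity.

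The main obstacle, if one wants a fully self-contained argument, is justifying the irreducibility of $\mathrm{Sym}^\ell(\mathbb{C}^d)$ as a $\Unitaries(d)$-representation (or equivalently, the Schur--Weyl decomposition of $(\mathbb{C}^d)^{\otimes \ell}$). If one prefers to avoid invoking Schur's lemma, an alternative route is to directly verify the identity by expanding $\Pi_{\symetric}=\frac{1}{\ell!}\sum_{\pi}R_\pi$ and computing $\Tr[(A_1\otimes\cdots\otimes A_\ell)\,\Exp_{\ket{\psi}}\ketbra{\psi}{\psi}^{\otimes\ell}]$ via Weingarten calculus, matching it against $\frac{1}{\binom{d+\ell-1}{\ell}\ell!}\sum_\pi \Tr[R_\pi(A_1\otimes\cdots\otimes A_\ell)]$; but the Schur's lemma route is cleaner and is the standard approach in the literature (e.g., Harrow's survey on the symmetric subspace).
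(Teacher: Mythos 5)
The paper does not prove this lemma; it cites it as a well-known fact from Harrow's symmetric-subspace survey and Mele's review. Your proof is correct and is precisely the standard argument given in those references: unitary invariance of $\rho$ under $U^{\otimes\ell}(\cdot)(U^\dagger)^{\otimes\ell}$, support in $\mathrm{Sym}^\ell(\mathbb{C}^d)$, irreducibility of the symmetric power as a $\Unitaries(d)$-representation via Schur--Weyl, Schur's lemma to get $\rho = c\,\Pi_{\symetric}$, and the trace/dimension count $\Tr[\Pi_{\symetric}]=\binom{d+\ell-1}{\ell}$ to fix $c$. Nothing to flag.
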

We often use the expectation of unitaries' action over some distribution. Thus, we define the following for the notational simplicity.

\begin{definition}\label{def:map_wrp_to_nu}
    Let $\ell,d\in\N$.
    For a distribution $\nu$ over $\Unitaries(d)$, we define 
    \begin{align}
        \cM_{\nu,\ell}(\cdot)\coloneqq
        \Exp_{U\gets\nu}U^{\otimes\ell}(\cdot)U^{\dag\otimes\ell}.
    \end{align}
\end{definition}

\if0
We use an approximate unitary design.

\begin{definition}[$\epsilon$-Approximate Unitary $t$-Design \cite{brandao2016local}]\label{def:unitary_design}
    A probability distribution $\nu$ over $\Unitaries(d)$ is an $\epsilon$-approximate unitary $t$-design if
    \begin{align}
         (1-\epsilon)\cM_{\nu,\ell}(\cdot)
         \le\cM_{\mu_d,\ell}(\cdot)
         (1+\epsilon)\cM_{\nu,\ell}(\cdot).
    \end{align}
    Here, for two channels $\cN_1$ and $\cN_2$, we define $\cN_1\ge\cN_2$ if and only if $\cN_1-\cN_2$ is completely positive.
\end{definition}
It is known that an approximate unitary design is efficiently sampleable \cite{brandao2016local}. Based on that, \cite{TQC:Kre21} showed that we can efficiently obtain a classical description of $U$ sampled from an approximate unitary design.

\begin{lemma}[\cite{TQC:Kre21,brandao2016local}]\label{lem:eff_generation_of_design}
    For each $n,t\in\N$ and $\epsilon>0$, there exist $s(n)=\poly(n,t,\log(1/\epsilon))$ and a $\poly(n,t,\log(1/\epsilon))$-time classical sampling algorithm that takes as input a random string $x\gets\bit^{s(n)}$ and outputs a description of a quantum circuit on $n$ qubits such that the circuits sampled from the sampling algorithm form an $\epsilon$-approximate unitary $t$-design over $\Unitaries(2^n)$. 
\end{lemma}
\fi

\ifnum\submission=0
We use a relationship between Haar random states and Haar random Choi–Jamiołkowski states.

\begin{lemma}[Implicitly Shown in \cite{Harrow23}]\label{lem:Haar_state_vs_Haar_choi_state}
    Let $\ell,d\in\N$ such that $d\ge\ell^2$. Then,
    \begin{align}
         \bigg\|
         (\cM_{\mu_d,\ell}\otimes \identitymap)(\ketbra{\Omega_{d^\ell}}{\Omega_{d^\ell}})
         -
         \Exp_{\ket{\psi}\gets\sigma_{d^2}}\ketbra{\psi}{\psi}^{\otimes\ell}
         \bigg\|_1
         \le O\bigg(\frac{\ell^2}{d}\bigg)
    \end{align}
    where $\ket{\Omega_{d^\ell}}=\frac{1}{d^{\ell/2}}\sum_{x\in[d^\ell]}\ket{x}\ket{x}$ is the maximally entangled state.
\end{lemma}
\fi

\if0
We need the following lemma to break PRUs.

\begin{lemma}\label{lem:design_needs_large_support}
    Let $\ell,d\in\N$ such that $d\ge\ell^2$.
    Let $Q$ be a projection. 
    Then,
    \begin{align}
        \Tr[Q(\cM_{\mu_d,\ell}\otimes \identitymap)(\ketbra{\Omega_{d^\ell}}{\Omega_{d^\ell}})]
        \le 
        \bigg(1-\frac{\ell^2}{2d}\bigg)^{-1}
        \frac{\Tr[Q]}{\binom{d^2+\ell-1}{\ell}}
        \label{eq:design_needs_large_support},
    \end{align}
    where $\ket{\Omega_{d^\ell}}=\frac{1}{d^{\ell/2}}\sum_{x\in[d^\ell]}\ket{x}\ket{x}$ is the maximally entangled state.
\end{lemma}

\begin{proof}[Proof of \cref{lem:design_needs_large_support}]
    We have \cref{eq:design_needs_large_support} as follows:
    \begin{align}
        \Tr[Q(\cM_{\mu_d,\ell}\otimes \identitymap)(\ketbra{\Omega_{d^\ell}}{\Omega_{d^\ell}})]
        \le
        &\bigg(1-\frac{\ell^2}{2d}\bigg)^{-1}
        \Tr\bigg[Q\Exp_{\ket{\psi}\gets\sigma_{d^2}}\ketbra{\psi}{\psi}^{\otimes\ell}
        \bigg]
        \tag{By \cref{lem:Haar_state_vs_Haar_choi_state}}\\
        =&\bigg(1-\frac{\ell^2}{2d}\bigg)^{-1}
        \frac{\Tr[Q\Pi_\symetric]}{\binom{d^2+\ell-1}{\ell}}
        \tag{By \cref{lem:Haar_states}}\\
        \le&\bigg(1-\frac{\ell^2}{2d}\bigg)^{-1}\frac{\Tr[Q]}{\binom{d^2+\ell-1}{\ell}}.
        \tag{By $0\le\Pi_\symetric\le I$}
    \end{align}
\end{proof}
\fi

\subsection{Unitary Complexity}

For the separation, we use unitary complexity classes. First, we remind the definition of $\unitaryPSPACE$.\footnote{
Note that
in \cite{BosEfrMetPorQiaYue23}, 
\cref{eq:def_unitaryPSPACE}
is replaced with
$\bigcap_{p\in\poly}\unitaryPSPACE_{1/p}$.
}

\if0
\begin{definition}[$\statePSPACE$ \cite{ITCS:RosYue22}]\label{def:statePSPACE}
    Let $\delta:\N\to\R$ be a function. We define $\statePSPACE_\delta$ to be the set of a sequence of quantum states $\{\rho_n\}_n$ such that $\{\rho_n\}_n$ is quantum polynomial-space preparable with error $\delta$. Namely, there exists a quantum polynomial-space algorithm $C(\cdot)$ that, on input $1^n$, outputs a quantum state $\sigma_n$ such that
    \begin{align}
        \|\sigma_n-\rho_n\|_1\le\delta(n)
    \end{align}
    for all $n\in\N$.
\end{definition}
\fi

\if0
\begin{theorem}[\cite{FOCS:MetYue23}]\label{thm:statePSPACE_is_closed_under_purification}
    Let $\delta:\N\to\R$ be a function. Suppose that $\{\rho_n\}_n\in\statePSPACE_\delta$. Then, there exists a $\{\ket{\psi_n}\}_n\in\statePSPACE_{2\sqrt{\delta}}$ such that $\ket{\psi_n}$ is a purification of $\rho_n$ for all $n$.
\end{theorem}
\fi

\begin{definition}[$\unitaryPSPACE$ \cite{ITCS:RosYue22}]\label{def:unitaryPSPACE}
    Let $\delta:\N\to\R$ be a function. We define $\unitaryPSPACE_\delta$ to be the set of a sequence of unitaries $\{U_n\}_{n\in\N}$ such that $\{U_n\}_{n\in\N}$ is quantum polynomial-space implementable\footnote{Here, intermediate measurements are allowed.} 
    with error $\delta$. Namely, there exists a quantum polynomial-space algorithm $C(\cdot,\cdot)$ such that
    \begin{align}
        \|C(1^n,\cdot)-U_n(\cdot)U_n^\dag\|_\diamond\le\delta(n)
    \end{align}
    for all sufficiently large $n\in\N$. We also define
    \begin{align}
        \unitaryPSPACE\coloneqq\bigcap_{p\in\poly}\unitaryPSPACE_{2^{-p}}.\label{eq:def_unitaryPSPACE}
    \end{align}
\end{definition}

In the definition of $\unitaryPSPACE$, intermediate measurements are allowed. 
\cite{FOCS:MetYue23} introduced a variant of $\unitaryPSPACE$, which is called $\pureunitaryPSPACE$ where intermediate measurements are not allowed.\footnote{
It is trivial that $\pureunitaryPSPACE\subseteq\unitaryPSPACE$. However, the other direction is not trivial, because
a quantum polynomial-space algorithm $C$ that implements $U_n$ with an exponentially small error 
could perform exponentially many intermediate measurements, but postponing these measurements requires exponentially many ancilla qubits.}

\begin{definition}[$\pureunitaryPSPACE$ \cite{FOCS:MetYue23}]
    Let $\delta:\N\to\R$ be a function. We define $\pureunitaryPSPACE_\delta$ to be the set of a sequence of unitaries $\{U_n\}_{n\in\N}$ such that $\{U_n\}_{n\in\N}$ is quantum polynomial-space implementable with error $\delta$ and without any intermediate measurement. Namely, there exists a polynomial-space family of unitary circuit $\{C_n\}_n$ such that
    \begin{align}
        \bigg\|
          C_n\ket{\psi}\ket{0...0}-(U_n\ket{\psi})\ket{0...0}
        \bigg\|
        \le\delta(n)
    \end{align}
    for all sufficiently large $n\in\N$ and all pure state $\ket{\psi}$. Here, $\ket{0...0}$ is a state on the ancilla register. We also define
    \begin{align}
        \pureunitaryPSPACE\coloneqq\bigcap_{p\in\poly}\pureunitaryPSPACE_{2^{-p}}.
    \end{align}
\end{definition}

\begin{remark}\label{remark:contrlizatin_of_pureUnitaryPSPACE}
    It is clear that if $\{U_n\}_{n\in\N}\in\pureunitaryPSPACE$, we have $\{\text{c-}U_n\}_{n\in\N}\in\pureunitaryPSPACE$ and $\{U_n^\dag\}_{n\in\N}\in\pureunitaryPSPACE$.
\end{remark}

\if0
\begin{remark}\label{remark:stateSPACE_and_unitaryPSPACE}
    Suppose that $\{\rho_n\}_n\in\statePSPACE$. From \cref{def:statePSPACE,def:unitaryPSPACE}, for any polynomial $p$, there exists a QPT algorithm $\cA^{(\cdot)}$ with oracle access to $\unitaryPSPACE$ such that $\|\rho_n-\sigma_n\|_1\le 1/p(n)$ for all $n$. Here $\sigma_n$ is an output of $\cA^{\unitaryPSPACE}(1^n)$.
\end{remark}
\fi

\if0
\shogo{pureunitaryPSPACEに変えれば以下の補題はちゃんと示せる。}

\begin{lemma}\label{lem:controlled_U_is_also_in_unitaryPSPACE}
    Let $\{U_n\}_n\in\unitaryPSPACE$. Then, $\{\text{c-}U_n\}_n\in\unitaryPSPACE$. Here c-$U$ denotes the controlled-$U$.
\end{lemma}

\begin{proof}[Proof of \cref{lem:controlled_U_is_also_in_unitaryPSPACE}]
    \shogo{この補題は実際にはSUCSSINCTUHLMANNにしか使わず、それはpureunitaryPSPACE-computableなのでcontrolled化できるのは自明。なのでこの補題はいらない。}
    Our goal is to show $\{\text{c-}U_n\}_n\in\unitaryPSPACE_{1/p}$ for any polynomial $p$. Since $\{U_n\}_n\in\unitaryPSPACE$, there exists a quantum polynomial space algorithm $\cA$ that
    \begin{align}
        \|\cA(1^n,\cdot)-U_n(\cdot)U_n^\dag\|_\diamond\le\frac{1}{2p(n)}.
    \end{align}
    From \cref{lem:closness_of_controlled-U_and_V_in_diamond_norm}, for all $n$, there exists a $z_n\in\Unitaries(1)$ such that
    \begin{align}
        \|\text{c-}(z\identitymap)\circ\text{c-}\cA(1^n,\cdot)-U_n(\cdot)U_n^\dag\|_\diamond\le\frac{1}{p(n)},
    \end{align}
    where c-$\cA(1^n\cdot)$ denotes controlled-$\cA$. Since controlled-$\cA$ is also a quantum polynomial space algorithm, this implies $\{\text{c-}U_n\}_n\in\unitaryPSPACE_{1/p}$.
\end{proof}
\fi

 In \cite{BosEfrMetPorQiaYue23}, they showed that a certain problem, SUCCINCTUHLMANN, is $\unitaryPSPACE$-complete.
 For this paper, we need only the fact that $\unitaryPSPACE$ has a complete problem, which is formalized as follows.\footnote{
 There are two remarks. First,
 \cite{BosEfrMetPorQiaYue23} showed that SUCCINCTUHLMANN is in $\unitaryPSPACE$. However, their proof can be easily modified so that
 the problem is in $\pureunitaryPSPACE$. 
 Second, in \cite{BosEfrMetPorQiaYue23}, their definition of the reduction and $\unitaryPSPACE$ allows the inverse polynomial error, while we only allow the exponentially small error in \cref{eq:hardness_of_unitaryPSPACE} and \cref{eq:def_unitaryPSPACE}. However, we can easily see their original proof of Theorem 7.14 \cite{BosEfrMetPorQiaYue23} also works in our case.
 }
 

\begin{lemma}[\cite{BosEfrMetPorQiaYue23}]\label{lem:unitaryPSPACE_has_a_complete_problem}
    $\unitaryPSPACE$ has a complete problem. Namely, there exists a sequence of unitaries $\cU=\{U_n\}_n$ that satisfies the following.
    \begin{itemize}
        \item 
        $\cU\in\unitaryPSPACE$. Moreover, $\cU\in\pureunitaryPSPACE$.
        \item For any polynomial $p$ and $\cV=\{V_n\}_n\in\unitaryPSPACE$, there exists a QPT algorithm $\cA^{(\cdot)}$ such that
        \begin{align}
            \|\cA^{\cU}(1^n,\cdot)-V_n(\cdot)V_n^\dag\|_\diamond\le2^{-p(n)}\label{eq:hardness_of_unitaryPSPACE}
        \end{align}
        for all sufficiently large $n\in\N$. 
    \end{itemize}
\end{lemma}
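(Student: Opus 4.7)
The plan is to invoke the SUCCINCTUHLMANN problem of \cite{BosEfrMetPorQiaYue23}, which they show is $\unitaryPSPACE$-complete, and to verify that their construction already gives the two minor strengthenings we need: membership in $\pureunitaryPSPACE$ rather than merely $\unitaryPSPACE$, and a reduction with exponentially small diamond-norm error rather than inverse polynomial. First I would define the sequence $\cU=\{U_n\}_n$ as the block-diagonal unitary whose block indexed by a pair of polynomial-size quantum circuits $(C_0,C_1)$ (preparing states $\rho_0,\rho_1$ on a distinguished output register) acts as the Uhlmann partial isometry sending a purification generated by $C_0$ to the closest purification of $\rho_1$ generated by $C_1$. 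This is exactly the unitary studied in \cite{BosEfrMetPorQiaYue23}, packaged across all instances of a given description length $n$.

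For the membership claim, I would follow the QSVT-based implementation of the Uhlmann isometry in \cite{BosEfrMetPorQiaYue23}. The key point is that each ingredient can be made coherent: block encodings of the state-preparation circuits $C_0,C_1$ are polynomial-space implementable without intermediate measurement; and QSVT applied to a polynomial-degree approximation of the sign function (needed to realize the polar factor) can itself be performed purely unitarily using only polynomially many ancilla qubits. By pushing the degree of the approximating polynomial up by a polynomial factor, every approximation step can be driven to error $2^{-p(n)}$ while keeping the space usage polynomial. Composing these ingredients via the triangle inequality for the diamond norm shows $\cU\in\pureunitaryPSPACE$, and the controllability observation in \cref{remark:contrlizatin_of_pureUnitaryPSPACE} then gives access to controlled and inverse versions for free.

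For the hardness claim, given $\cV=\{V_n\}_n\in\unitaryPSPACE$ and a target error $2^{-p(n)}$, I would choose the SUCCINCTUHLMANN instance whose $C_0$ prepares the maximally entangled state $\ket{\Omega}$ on two registers and whose $C_1$ prepares the Choi state $(V_n\otimes I)\ket{\Omega}$. Both circuits are polynomial-space implementable up to error $2^{-q(n)}$ for any polynomial $q$, since $V_n\in\unitaryPSPACE$. The Uhlmann isometry linking these two purifications acts as $V_n$ on the first register, so a single query to $\cU$, together with polynomially many extra gates to prepare the required registers and to encode the instance, produces a QPT oracle algorithm $\cA^{\cU}$ satisfying \cref{eq:hardness_of_unitaryPSPACE}.

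The main obstacle is the error accounting that was loose in the original paper: \cite{BosEfrMetPorQiaYue23} only stated the reduction for inverse-polynomial error, so I need to check that taking every approximation parameter to be $2^{-\poly(n)}$ still fits in polynomial space. This reduces to the standard fact that QSVT with degree $\poly(n)$ yields error $2^{-\poly(n)}$ on any interval bounded away from zero by $1/\poly(n)$, and that block encodings of polynomial-space unitaries can be made exponentially accurate using polynomial space. Once this bookkeeping is checked, the rest of the argument is identical to the proof of Theorem 7.14 in \cite{BosEfrMetPorQiaYue23}.
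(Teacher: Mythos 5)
Your proposal takes the same route the paper takes: cite the $\unitaryPSPACE$-completeness of SUCCINCTUHLMANN from \cite{BosEfrMetPorQiaYue23} and then argue that their proof supports the two small strengthenings the lemma asks for, namely membership in $\pureunitaryPSPACE$ and a reduction with exponentially small (rather than inverse-polynomial) diamond-norm error. The paper itself does not spell out those checks; it disposes of them in a footnote, so you are supplying detail where the paper supplies a pointer. Your membership argument (drive the QSVT approximation degree up by a polynomial factor, do everything unitarily with polynomially many ancillas, use \cref{remark:contrlizatin_of_pureUnitaryPSPACE} for controlled and inverse access) is the intended one.

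One bookkeeping slip in your hardness reduction: if $C_0$ prepares $\ket{\Omega}_{AB}$ and $C_1$ prepares $(V_n\otimes I)\ket{\Omega}_{AB}$ with the distinguished register being $A$, then the Uhlmann unitary acts on the purifying register $B$, and by the ricochet identity $(V_n\otimes I)\ket{\Omega}=(I\otimes V_n^\top)\ket{\Omega}$ it equals $V_n^\top$, not $V_n$. You should instead take $C_1$ to prepare $(I\otimes V_n)\ket{\Omega}$, or equivalently apply the reduction to $V_n^\top$ (which is in $\unitaryPSPACE$ whenever $V_n$ is, by reversing and transposing each gate of the polynomial-space circuit). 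This does not change the structure of the argument, and both purifications still have maximally mixed marginals on $A$ so fidelity is exactly $1$ and the Uhlmann unitary is unique up to phase, putting the instance squarely in the yes-promise. With that correction the reduction is sound and matches the proof of Theorem 7.14 in \cite{BosEfrMetPorQiaYue23}.
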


\if0
\begin{remark}\label{remark:controlled-Uhlmann}
    Since $\cU\in\pureunitaryPSPACE$, controlled-$\cU$ is also contained in $\pureunitaryPSPACE$.
\end{remark}
\fi

\if0
We use the following theorem to construct our adversary that breaks PRUs.\footnote{In the statement of \cite{FOCS:MetYue23}, they do not mention that $\{K_n\}_n$ is contained in $\pureunitaryPSPACE$. However, in their proof, they explicitly show it by constructing $K_n$.}

\begin{theorem}[Algortihmic Uhlmann Theorem \cite{FOCS:MetYue23}]\label{thm:algorithmic_Uhlmann}
    Let $\delta:\N\to\R$ to be a function. Let $\{\ket{\psi_n}\}_n,\{\ket{\phi_n}\}_n\in\statePSPACE_\delta$. Suppose that $\ket{\psi_n}$ and $\ket{\phi_n}$ are the state on the same registers $\regA_n\regB_n$ for all $n\in\N$. Then, for any polynomial $q$, there exists a $\{K_n\}_n\in\pureunitaryPSPACE$ satisfying the following: For all $n\in\N$, $K_n$ is a unitary over $\regB_n$ and an ancilla register $\regR_n$. Moreover,
        \begin{align}
        \bigg\|(I_{\regA_n}\otimes K_{n,\regB_n\regR_n})\ket{\phi_n}_{\regA_n\regB_n}\ket{0}_{\regR_n}-\ket{\psi_n}_{\regA_n\regB_n}\ket{0}_{\regR_n}\bigg\|^2\le2(1-F(\rho_n,\sigma_n))+O(\delta(n))+\negl(n).
        \end{align}
    Here, $\rho_{n,\regA_n}\coloneqq\Tr_{\regB_n}\ketbra{\psi_n}{\psi_n}_{\regA_n\regB_n}$ and $\sigma_{n,\regA_n}\coloneqq\Tr_{\regB_n}\ketbra{\phi_n}{\phi_n}_{\regA_n\regB_n}$.
    
\end{theorem}

\begin{remark}
    In \cite{FOCS:MetYue23}, they do not show the second claim. However, by checking their explicit construction of $K_n$, we can confirm that $K_n$ also satisfies the second claim. For the reader's convenience, we give its proof in the Appendix. \shogo{後で証明をappendixに書く。}
\end{remark}
\fi

\subsection{Quantum Singular Value Transformation and Block Encoding}
To break PRUs, we use a quantum singular value transformation (QSVT) \cite{STOC:GSLW19}. Especially, we use the following singular value discrimination algorithm.

\begin{theorem}[Singular Value Discrimination Algorithm \cite{STOC:GSLW19}]\label{thm:singular_value_discrimination}
    Let $0\le a<b\le1$. Suppose that $M\in\Linear(d)$ can be written $M=\widetilde{\Pi}U\Pi$ with some $U\in\Unitaries(d)$ and projections $\Pi,\widetilde{\Pi}\in\Linear(d)$. Let $\xi$ be a given unknown state promised that
    \begin{itemize}
        \item the support of $\xi$ is contained in the subspace $W_0$, which is the subspace spanned by the right singular vectors of $M$ with singular value at most $a$ or
        \item the support of $\xi$ is contained in the subspace $W_1$, which is the subspace spanned by the right singular vectors of $M$ with singular value at least $b$.
    \end{itemize}
    Then, for each $\eta>0$, there exists an algorithm $\cD$ satisfying the following:
    \begin{itemize}
        \item on input a single copy of $\xi$, $\cD$ distinguishes between the first case or the second case with probability at least $1-\eta$;
        \item $\cD$ uses $U,U^\dag,\CNOT{\Pi},\CNOT{\widetilde{\Pi}}$ and other single-qubit gates 
        $$
        O\bigg(
         \frac{1}{\max\{b-a,\sqrt{1-a^2}-\sqrt{1-b^2}\}}
         \log\bigg(\frac{1}{\eta}\bigg)
         \bigg)
        $$ times, and uses a single ancilla qubit. Here, $\CNOT{\Pi}\coloneqq\Pi\otimes X+(I-\Pi)\otimes I$ and $\CNOT{\widetilde{\Pi}}$ is defined in the same way.
    \end{itemize}
\end{theorem}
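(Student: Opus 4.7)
The plan is to derive this as a direct application of the QSVT framework of Gilyén--Su--Low--Wiebe. Writing $M = \widetilde{\Pi} U \Pi$ exhibits $M$ as a ``projected unitary encoding'' in the sense of that paper, so singular value transformations of $M$ by odd polynomials $p$ can be implemented by a quantum circuit that alternates $U$, $U^\dagger$, and the reflections $\CNOT{\Pi}$, $\CNOT{\widetilde{\Pi}}$, using a single ancilla qubit. Concretely, for any odd real polynomial $p$ of degree $n$ with $\|p\|_{[-1,1]} \le 1$, the QSVT theorem produces a circuit $V_p$ acting on the system plus one ancilla such that
\begin{equation*}
(\langle 0| \otimes \widetilde{\Pi}) V_p (|0\rangle \otimes \Pi) \;=\; p^{\text{SV}}(M),
\end{equation*}
where $p^{\text{SV}}(M) = \sum_i p(\sigma_i) |\widetilde{u}_i\rangle\langle u_i|$ is the singular value transformation of $M$, and $V_p$ uses $n$ queries to $U, U^\dagger$, $O(n)$ uses of $\CNOT{\Pi}, \CNOT{\widetilde{\Pi}}$, plus $O(n)$ single-qubit gates.

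The discrimination algorithm is then: run $V_p$ on $|0\rangle \otimes \xi$ for a suitable odd polynomial $p$, measure the ancilla in the computational basis, and output ``second case'' iff the outcome is $0$ and a second measurement of $\CNOT{\widetilde{\Pi}}$ indicates $\widetilde{\Pi}$. The probability of accepting when $\xi$ is supported in the singular-vector subspace with singular values in a set $S$ is, up to the polynomial error, equal to $\mathbb{E}_{\sigma}[p(\sigma)^2]$ where $\sigma \in S$. So we need an odd polynomial $p$ with $|p(x)| \le \eta'$ on $[0,a]$ and $|p(x)| \ge 1 - \eta'$ on $[b,1]$ for $\eta' = \Theta(\eta)$. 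Standard constructions (shifted-and-scaled odd approximations of the sign function, as in Lemma~25 of Gily\'en--Su--Low--Wiebe) give such a $p$ of degree $n = O\bigl(\frac{1}{b-a} \log(1/\eta)\bigr)$, yielding the first term in the $\max$.

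To obtain the second term, observe that applying QSVT with the roles of $\Pi$ and $\widetilde{\Pi}$ interchanged, or equivalently applying an even polynomial, lets us transform the ``complementary'' singular values $\sqrt{1-\sigma^2}$ instead: the relevant polynomial separation problem becomes separating $[\sqrt{1-b^2}, \sqrt{1-a^2}]$ from its complement, whose gap is $\sqrt{1-a^2} - \sqrt{1-b^2}$. Running whichever of the two variants has the larger gap yields query complexity $O\bigl(\max\{b-a,\sqrt{1-a^2}-\sqrt{1-b^2}\}^{-1} \log(1/\eta)\bigr)$. Finally, the success probability of the measurement can be amplified to $1 - \eta$ by choosing the polynomial error $\eta' = \Theta(\eta)$; this only affects constants in the degree, so the claimed query complexity is preserved.

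The main technical hurdle is the existence of the desired odd polynomial approximation to the threshold function with the stated degree bound, but this is precisely the content of the polynomial approximations of $\operatorname{sgn}$ used throughout the QSVT literature, so no new analysis is required here. The only place one needs to be careful is in verifying that the form $M = \widetilde{\Pi} U \Pi$ (rather than the more restrictive block-encoding form $M = (\langle 0| \otimes I) U (|0\rangle \otimes I)$) still falls under the QSVT framework; this is handled by the projected unitary encoding generalization in the same paper, which is exactly why the circuit cost is stated in terms of $\CNOT{\Pi}$ and $\CNOT{\widetilde{\Pi}}$ rather than a single ancilla-controlled reflection.
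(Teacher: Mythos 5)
The paper does not prove this theorem; it is stated as an imported result (Theorem~28 of Gily\'en--Su--Low--Wiebe) in the preliminaries, so there is no proof in the paper to compare against. Your sketch is a faithful reconstruction of the GSLW argument --- projected unitary encoding, QSVT with an odd polynomial approximation to a threshold function of degree $O((b-a)^{-1}\log(1/\eta))$, and switching to the complementary encoding $(I-\widetilde{\Pi})U\Pi$ (whose singular values are $\sqrt{1-\sigma^2}$ and share right singular vectors with $M$ on the range of $\Pi$) to obtain the $\sqrt{1-a^2}-\sqrt{1-b^2}$ term --- and is correct at the level of detail given.
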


When we apply the singular value discrimination algorithm, we need to encode a matrix into a unitary circuit. This technique is referred to as block encoding.

\begin{definition}[Block Encoding \cite{STOC:GSLW19}]\label{def:block_encoding}
    Let $M\in\Linear(d)$.
    We say that $U\in\Unitaries(2^ad)$ is an $(\alpha,\epsilon,a)$-block encoding of $M$ for some $\alpha\ge1,\epsilon\ge0$ and $a\in\N$ if it satisfies
    \begin{align}
        \|M-\alpha(\bra{0^a}\otimes I)U(\ket{0^a}\otimes I)\|_\infty\le\epsilon.
    \end{align}
\end{definition}

\if0
\begin{lemma}\label{lem:approx_block_encoding_unitary}
    Let $U\in\Unitaries(2^ad)$ be a\mor{an} $(\alpha,\epsilon,a)$-block encoding of $A\in\Linear(d)$. 
    \mor{Let $b>0$ be an integer.}
    Suppose that $V\in\Unitaries(2^{a+b}d)$ is a unitary satisfying
    \begin{align}
        \|\ket{0^b}(U\ket{\psi})-V\ket{0^b}\ket{\psi}\|\le\eta
    \end{align}
    for any $\ket{\psi}$. Then, $V$ is a\mor{an} $(\alpha,\epsilon+\alpha\eta,a+b)$ block encoding of $A$.
\end{lemma}

\begin{proof}[Proof of \cref{lem:approx_block_encoding_unitary}]
    We have the following inequality:
    \mor{getsuyou ni kakuninn}
    \begin{align}
        &\|(\bra{0^{a}}\otimes I)U(\ket{0^{a}}\otimes I)-(\bra{0^{a+b}}\otimes I)V(\ket{0^{a+b}}\otimes I)\|_\infty
        \notag\\
        =&\max_{\ket{\phi}}
        \|(\bra{0^{a}}\otimes I)U(\ket{0^{a}}\ket{\phi})-(\bra{0^{a+b}}\otimes I)V(\ket{0^{a+b}}\ket{\phi})\|
        \notag\\
        =&\max_{\ket{\phi}}
        \|(\ketbra{0^{a+b}}{0^{a+b}}\otimes I)(\ketbra{0^b}{0^b}\otimes U-V)(\ket{0^{a+b}}\ket{\phi})\|
        \notag\\
        \le&\max_{\ket{\phi}}
        \|(\ketbra{0^b}{0^b}\otimes U-V)(\ket{0^{a+b}}\ket{\phi})\|
        \tag{By $\|\Pi\ket{\beta}\|\le\|\ket{\beta}\|$ for any projection $\Pi$}\\
        =&\max_{\ket{\phi}}
        \|\ket{0^b}(U\ket{0^a}\ket{\phi})-V(\ket{0^{a+b}}\ket{\phi})\|
        \notag\\
        \le&\eta.\label{eq:block_encoding_of_U_and_V}
    \end{align}
    Therefore,
    \begin{align}
        &\|A-\alpha(\bra{0^{a+b}}\otimes I)V(\ket{0^{a+b}}\otimes I)\|_\infty
        \notag\\
        \le&
        \|A-\alpha(\bra{0^{a}}\otimes I)U(\ket{0^{a}}\otimes I)\|_\infty
        +
        \alpha\|(\bra{0^{a}}\otimes I)U(\ket{0^{a}}\otimes I)-(\bra{0^{a+b}}\otimes I)V(\ket{0^{a+b}}\otimes I)\|_\infty
        \notag\\
        \le&\epsilon+\alpha\eta
        \tag{By \cref{eq:block_encoding_of_U_and_V}},
    \end{align}
    which concludes the proof.
\end{proof}
\fi

\ifnum\submission=0
In general, it is not clear that we can space-efficiently implement a block encoding unitary of any matrix $M$.
The following lemma ensures that we can implement a block encoding unitary of a density operator if we can generate its purification.

\begin{lemma}[Lemma 12 in \cite{ICALP:vApGil19}]\label{lem:block-encoding_of_state}
    Let $U$ be a unitary over registers $\regA$ and $\regB$, where $\regA$ and $\regB$ are $n$-qubit register and $m$-qubit register, respectively. Define $\rho_{\regA}\coloneqq\Tr_{\regB}[(U\ketbra{0...0}{0...0}U^\dag)_{\regA\regB}]$. Then, there exists a $(1,0,n+m)$-block encoding unitary $V$ of $\rho$, where $V$ is implementable with single use of $U$ and $U^\dag$, and $n+1$ two-qubit gates.
\end{lemma}
\fi

\subsection{Cryptographic Primitives}

We recall PRUs defined by \cite{C:JiLiuSon18}.

\begin{definition}[Pseudorandom Unitaries \cite{C:JiLiuSon18}]\label{def:PRU}
    We define that an algorithm $G$ is a pseudorandom unitary generator (PRU) if it satisfies the following:
    \begin{itemize}
        \item Correctness: Let $\secp$ be the security parameter. Let $\cK_\secp$ denote the key-space at most $\poly(\secp)$ bits.
         $G$ is a QPT algorithm such that $G(k,\ket{\psi})=U_k\ket{\psi}$ for any $\secp$-qubit state $\ket{\psi}$.
         \item Pseduorandomness: For any uniform QPT algorithm $\cA^{(\cdot)}$,
        \begin{align}
            \bigg|\Pr_{k\gets\cK_\secp}[1\gets\cA^{U_k}(1^\secp)]-\Pr_{U\gets\mu_{2^\secp}}[1\gets\cA^{U}(1^\secp)]
            \bigg|\le\negl(\secp).\label{eq:security_def_of_PRU}
        \end{align}
    \end{itemize}
    If \cref{eq:security_def_of_PRU} holds for any non-adaptively-querying adversary, we call $G$
    non-adaptive PRU.\footnote{Here, non-adaptive query means that the adversary queries
    $U^{\otimes\poly(\secp)}$ only once.}
    If $G(k,\cdot)$ uses at most $c$ ancilla qubits to implement $U_k$ for all $k \in \cK_\secp$, we call $G$ a $c$-ancilla PRU.
\end{definition}
\begin{remark}
We could define PRUs secure against    
non-uniform adversaries, but in this paper we can break PRUs against uniform
adversaries, and therefore we provide only the definition of the latter.
\end{remark}

\begin{definition}[Pseduorandom Isometries \cite{EC:AGKL24}]
    Let $s:\N\mapsto\N$ be a function such that $s(\secp)\le\poly(\secp)$.
    We define that an algorithm $G$ is a pseudorandom isometry generator with $s$ stretch (PRI) if it satisfies the following:
    \begin{itemize}
        \item Correctness: Let $\secp$ be the security parameter. Let $\cK_\secp$ denote the key-space at most $\poly(\secp)$ qubits.
         $G$ is a QPT algorithm such that $G(k,\ket{\psi})=\cI_k\ket{\psi}$ for any $\secp$-qubit state $\ket{\psi}$, where $\cI_k$ is an isometry that maps $\secp$ qubits to $\secp+s(\secp)$ qubits.
         \item Pseduorandomness: For any uniform QPT algorithm $\cA^{(\cdot)}$,
        \begin{align}
            \bigg|\Pr_{k\gets\cK_\secp}[1\gets\cA^{\cI_k}(1^\secp)]-\Pr_{U\gets\mu_{2^{\secp+s(\secp)}}}[1\gets\cA^{\cI_U}(1^\secp)]
            \bigg|\le\negl(\secp),\label{eq:security_def_of_PRI}
        \end{align}
        where, for each $U\in\Unitaries(2^{\secp+s(\secp)})$, $\cI_U$ is the isometry that maps $\secp$-qubit state $\ket{\psi}$ to $(\secp+s(\secp))$-qubit state $U(\ket{\psi}\ket{0^s})$.\footnote{Without loss of generality, the ancilla state can be $\ket{0^s}$ due to the right invariance of the Haar measure.} 
        
    \end{itemize}
    If \cref{eq:security_def_of_PRI} holds for any non-adaptively-querying adversary, we call $G$
    non-adaptive PRI with $s$ stretch.\footnote{Here, non-adaptive query means that the adversary queries
    $\cI^{\otimes\poly(\secp)}$ only once.}
    If $G(k,\cdot)$ uses at most $s+c$ ancilla qubits to implement $\cI_k$ for all $k \in \cK_\secp$, we call $G$ a $c$-ancilla PRI with $s$ stretch.\footnote{To implement an isometry from $\secp$ qubits to $(\secp+s)$ qubits, we need at least $s$ ancilla qubits.}
\end{definition}

Quantumly-accessible adaptively-secure PRFSGs were defined in \cite{TCC:AGQY22}.

\begin{definition}[Quantumly-accessible adaptively-secure PRFSGs \cite{TCC:AGQY22}]\label{def:PRFSG}
    We define that an algorithm $G$ is a quantumly-accessible adaptively-secure PRFSG if it satisfies the following: 
    \begin{itemize}
        \item Correctness: Let $\secp\in\N$ be the security parameter. Let $q$ be a polynomial. Let $\cK_\secp$ denote the key-space at most $\poly(\secp)$ bits.
        $G$ is a QPT algorithm that takes a key $k\in\cK_\secp$ and a bit string $x$ as input, and outputs a pure $q(\secp)$-qubit state $\ket{\phi_k(x)}$.

        \item Quantumly-accessible adaptive security: For any QPT adversary $\cA^{(\cdot)}$ and any bit sting $y$ whose length is at most polynomial of $\secp$,
        \begin{align}
            \bigg|\Pr_{k\gets\cK_\secp}[1\gets\cA^{
            G(k,\cdot)}(1^\secp,y)]-\Pr_{\{\ket{\vartheta_x}\}\gets\sigma}[1\gets\cA^{\cH_{\{\ket{\vartheta_x}\}}}(1^\secp,y)]\bigg|\le\negl(\secp),
        \end{align}
        where $\{\ket{\vartheta_x}\}\gets\sigma$ denotes that each $\ket{\vartheta_x}$ is independently chosen from the Haar measure $\sigma_{2^q}$.
        Here, the actions of $G(k,\cdot)$ and $\cH_{\{\ket{\vartheta_x}\}}$ are defined as follows:
        \begin{itemize}
            \item $G(k,\cdot):$ It applies $\ket{x}\mapsto\ket{x}\ket{\phi_k(x)}$ coherently.\footnote{When a superposition $\sum_x\alpha_x|x\rangle\ket{\xi_x}$ is queried, it outputs $\sum_x\alpha_x|x\rangle|\phi_k(x)\rangle\ket{\xi_x}$. In general
            it is not possible when the junk states depending on $x$ appear, but in our construction of PRFSG, junk states are independent of $x$.}
            \item $\cH_{\{\ket{\vartheta_x}\}}:$ It applies
            $\ket{x}\mapsto\ket{x}\ket{\vartheta_x}$ coherently.
        \end{itemize}
    \end{itemize}
\end{definition}

In this paper, we often omit the term "quantumly-accessible adaptively-secure".

\begin{remark}
Here we provide the definition of PRFSGs secure against non-uniform adversaries with classical advice because
we can construct it. We can also consider the security against all non-uniform adversaries with quantum advice, but it is not clear whether our construction \cref{const:PRFSGs} satisfies the security. We leave it to the future work.
\end{remark}

\section{Separation Oracle}
In this section, we define an oracle that separates between PRUs and PRFSGs. 
Our separation oracle is defined as follows:

\begin{definition}[Separation Oracle]\label{def:unitary_orcle}
    We define an oracle $\cO\coloneqq(\cS,\cU)$ as follows:
    \begin{itemize}
        \item 
        For each $n\in\N$ and $m\in\bit^n$, sample $\ket{\psi_{n,m}}\in\States(2^n)$ from the Haar measure $\sigma_{2^n}$. Then, define the $(n+1)$-qubit swapping unitary
        \begin{align}
            \cS_{n,m}\coloneqq\ketbra{0}{1}\otimes\ketbra{0^n}{\psi_{n,m}}+\ketbra{1}{0}\otimes\ket{\psi_{n,m}}\bra{0^{n}}+I_\bot^{n,m}
        \end{align}
        for each $n\in\N$ and $m\in\bit^n$.
        Here, $I_\bot^{n,m}$ is the identity on the subspace orthogonal to $\SpanSpace\{\ket{0}\ket{0^{n}},\ket{1}\ket{\psi_{n,m}}\}$.
        We define $\cS\coloneqq\{\cS_{n}\}_{n\in\N}$, where 
        $\cS_n\coloneqq\sum_{m\in\bit^n}\ketbra{m}{m}\otimes
        \cS_{n,m}$ is a $(2n+1)$-qubit unitary.
        \item $\cU\coloneqq\{U_n\}_{n\in\N}$ is the $\unitaryPSPACE$ complete problem in \cref{lem:unitaryPSPACE_has_a_complete_problem}.
    \end{itemize}
\end{definition}

In this work, we allow not only the query to $\cO$ but also the query to the inverse of $\cO$.
When we write $\cA^\cO$ for an algorithm $\cA$, it can query $\cO$ and the inverse of $\cO$.

\begin{remark}\label{remark:considering_only_forward_query}
    For any $\ket{\psi_{n,m}}\in\States(2^n)$, $\cS_{n,m}=\cS_{n,m}^\dag$ by its definition. Thus, when we consider an algorithm $\cA^\cS$, it suffices to consider the forward query to $\cS$ regardless of the choice of $\cS$.
\end{remark}

Our goal is to show the following.

\begin{theorem}\label{thm:main}
    With probability $1$ over the choice of $\cO$ defined in \cref{def:unitary_orcle}, the following are satisfied:
    \begin{enumerate}
        \item Quantumly-accessible adaptively-secure PRFSGs exist relative to $\cO$.\label{item:PRFSGs_exist}
        \item Non-adaptive, $O(\log\secp)$-ancilla PRUs do not exist relative to $\cO$.\label{item:PRUs_do_not_exist}
    \end{enumerate}
\end{theorem}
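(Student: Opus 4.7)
The plan is to follow the outline given in the technical overview, establishing each of the two items separately and then combining them via Borel--Cantelli. For both items, I would first fix a QPT algorithm (a candidate PRFSG adversary, or a PRU construction together with a candidate adversary), bound the expected distinguishing advantage over the random choice of $\cS$, then upgrade expectation to a concentration statement using a Lipschitz bound on the advantage viewed as a function of the underlying Haar unitaries. Since there are only countably many QPT algorithms, applying \cref{lem:Borel-Cantelli} to the resulting failure events (one per algorithm, one per security parameter) then gives probability $1$ over $\cO$. Since $\cS_{n,m}$ is Hermitian, queries to $\cS^\dagger$ reduce to queries to $\cS$ (\cref{remark:considering_only_forward_query}), and $\cU^\dagger$ can be simulated by $\cU$ because $\cU \in \pureunitaryPSPACE$ (\cref{remark:contrlizatin_of_pureUnitaryPSPACE}).

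For Item \ref{item:PRFSGs_exist}, I would take $G^{\cS}(k,x)$ to query $\cS_{\secp,(k,x)}$ on $\ket{0}\ket{0^\secp}$ and output the second register. Security against an adversary $\cA^{\cS,\cU}$ reduces to bounding $\mathsf{Adv}(\cS) = \Pr_k[1 \gets \cA^{\cS_k,\cS,\cU}] - \Pr_{\{|\vartheta_x\rangle\}}[1 \gets \cA^{\cT,\cS,\cU}]$, where $\cS_k = \{\cS_{\secp,(k,x)}\}_x$ and $\cT$ is the fresh-Haar oracle. Since $\cU$ is independent of $\cS$, we may absorb it into the algorithm and apply \cref{lem:BBBV} (which does not rely on $\cD$ being the Haar measure on unitaries, only that each $\cS_{\secp,(k,x)}$ behaves like a random unitary on the small invariant subspace determined by the swap) to conclude $|\Exp_\cS \mathsf{Adv}(\cS)| \le O(T^2 / 2^\secp)$. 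Because each $\cS_{n,m}$ is a polynomially Lipschitz function of the underlying Haar-sampled $\ket{\psi_{n,m}}$ (equivalently of a Haar unitary) and \cref{lem:Lipschitz,lem:f+g_Lipschitz,lem:Uket0-Vket0_is_less_than_U-V} give a $\poly(\secp)$ Lipschitz constant for $\mathsf{Adv}$, \cref{thm:Haar_concentration} yields $\Pr_{\cS}[|\mathsf{Adv}(\cS)| \ge \secp^{-\omega(1)}] \le \exp(-2^{\Omega(\secp)})$. Summing over $\secp$ and over an enumeration of QPT adversaries and applying \cref{lem:Borel-Cantelli} gives security of $G^\cS$ with probability $1$.

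For Item \ref{item:PRUs_do_not_exist}, fix a non-adaptive $O(\log \secp)$-ancilla PRU candidate $F^{\cS,\cU}$ with key space $\cK_\secp$ of size at most $2^\ell$, $\ell = \poly(\secp)$. I would follow the overview: form the states $\rho_0$ (Haar twirl of the maximally entangled state to the $\ell$-th power) and $\rho_1$ (same but with $U_k$ in place of Haar). Standard arguments show $\rho_0$ and $\rho_1$ can be prepared by the adversary using $\ell$ non-adaptive queries to its oracle. Next, define $\rho_2$ by replacing each query to $\cS_n$ with $n \ge 2\log(\secp T)$ by the identity (justified via the trace-distance bound $\|(\cS_{n,m} \otimes I)\ket{\Omega}\langle\Omega|(\cS_{n,m}\otimes I)^\dagger - \ket{\Omega}\langle\Omega|\|_1 \le 2^{-n/2}$ telescoped over $T = \poly(\secp)$ gates, where the $O(\log \secp)$-ancilla hypothesis lets us keep a genuine maximally entangled state throughout) and each query to $\cS_n$ with $n < 2\log(\secp T)$ by the classical description obtained from process tomography (\cref{thm:process_tomography_HKOT23}, using $\poly(\secp)$ queries since the dimension is $\poly(\secp)$). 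Then $\|\rho_1 - \rho_2\|_1 \le \negl(\secp)$, and $\rho_2$ is a convex combination of at most $|\cK_\secp| \le 2^\ell$ pure states, so the projection $Q$ onto its support has rank at most $2^\ell$. By \cref{Intro_lem:design_needs_large_support} (combining \cref{lem:Haar_state_vs_Haar_choi_state} with \cref{lem:Haar_states}), $\Tr[Q \rho_0] \le O(\ell!\, 2^\ell / 2^{2\secp\ell}) = \negl(\secp)$, while $\Tr[Q \rho_2] = 1$. To realize $Q$ efficiently, I would prepare a purification of $\rho_2$ using only $\cU$ (the tomography output makes the state $\pureunitaryPSPACE$-computable), use \cref{lem:block-encoding_of_state} to obtain a block encoding, and then apply the singular value discrimination algorithm of \cref{thm:singular_value_discrimination} with exponentially small threshold gap, which remains space-efficient and hence can be carried out by a QPT adversary with access to $\cU$. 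This yields an efficient distinguisher for every candidate; again, a Borel--Cantelli argument over all QPT constructions and security parameters gives the probability-$1$ statement.

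The main obstacle is the middle step of the second item: removing the large-dimension queries to $\cS$ from the preparation of the Choi-like state $\rho_1$. Naive removal would go through the Stinespring dilation and require a controlled-$\cS$, which the adversary does not have. The overview's trick is that on the maximally entangled state, each large-dimension swap $\cS_{n,m}$ is trace-distance close to identity, so one can telescope $T$ such near-identities while they sit on different registers; making this accounting tight, especially under the $O(\log \secp)$-ancilla condition (via postselection) rather than ancilla-free, is the delicate point and is exactly where the bound on ancilla count enters the theorem.
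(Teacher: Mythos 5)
Your proposal follows essentially the same route as the paper for both items: the PRFSG is built by querying $\cS_{2\secp}$ on $(k,x)$, with security proved via the BBBV-style bound (\cref{lem:BBBV}), a Lipschitz estimate for the advantage, the Haar concentration inequality (\cref{thm:Haar_concentration}), and Borel--Cantelli; and the PRU adversary uses process tomography on small swap oracles, near-identity removal of large swaps acting on the Choi state, a rank bound on the support, and QSVT singular-value discrimination implemented through $\cU$.

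One refinement in the ancilla accounting that you correctly flagged as the delicate point but stated imprecisely: when the candidate uses $c = O(\log\secp)$ ancilla qubits, the modified Choi-like state (your $\rho_2$, the paper's $\cE_{\{V_k\},\ell}$ applied to the maximally entangled state) is obtained by tracing out the $c\ell$ ancilla qubits, so it is a mixture of $|\cK_\secp|$ density operators each of rank at most $2^{c\ell}$, not a convex combination of pure states. Hence the projection $Q$ onto its support has rank at most $2^{(1+c)\ell}$ rather than $2^\ell$ (\cref{lem:Haar_Choi_has_negligible_overlap}). Similarly, the near-identity bound for a large swap $\cS_n$ on the Choi state picks up a factor of $O(2^{c/2})$ rather than being $O(2^{-n/2})$ (\cref{lem:swap_unitary_is_almost_identity_for_Choi_state;ancilla}). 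Because $c = O(\log\secp)$, these are only polynomial corrections and the conclusion that $\Tr[Q\rho_0]$ is negligible survives, but the paper explicitly sets the tomography cutoff $d = 2\log(\ell T p) + c$ to absorb the $2^{c/2}$ factor, and you must propagate both corrections through the estimates to make the argument rigorous.
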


We prove the existence of PRFSGs in \cref{sec:PRFSGs}, and the non-existence of PRUs in \cref{sec:PRUs}.
\section{Constructing PRFSGs}
\label{sec:PRFSGs}
In this section, we show that quantumly-accessible adaptively-secure PRFSGs exist relative to $\cO$. We construct PRFSGs as follows:

\begin{definition}\label{const:PRFSGs}
    Let $\cO=(\cS,\cU)$ be the oracle in \cref{def:unitary_orcle}. Relative to $\cO$, we define a QPT algorithm $G^{\cO}$ as follows:
    \begin{enumerate}
        \item Let $k,x\in\bit^\secp$ be an input.\footnote{In our explicit construction, the length of the secret key is the same as that of the input bit string. However, our security proof works if they are different. On the other hand, the number of qubits of output states must be larger than the length of the secret key and input bit string. In particular, it is not clear whether short PRFSGs exist or not relative to our oracle. We leave it to further work.} Here, $k$ is a secret key and $x$ is an input bit string.
        \item Prepare $\ket{(k,x)}\ket{0}\ket{0^{2\secp}}$. Here $(k,x)$ denotes the concatination of $k$ and $x$.
        \item Obtain $\ket{(k,x)}\ket{1}\ket{\psi_{2\secp,(k,x)}}$ by querying $\ket{(k,x)}\ket{0}\ket{0^{2\secp}}$ to $\cS_{2\secp}$. 
        \item Output $\ket{\psi_{2\secp,(k,x)}}$.
    \end{enumerate} 
\end{definition}

\ifnum\submission=0
The goal of this section is to prove the following:
\fi
\ifnum\submission=1
We can prove the following. 
Since the proof is the same as \cite{TQC:Kre21}, we give it in the supplemental material.
\fi

\begin{restatable}{theorem}{PRFSGs}\label{thm:PRFSGs_relative_to_unitary_oracle}
    With probability $1$ over the randomness of $\cO$ (defined in \cref{def:unitary_orcle}), \cref{const:PRFSGs} is a quantumly-accessible adaptively-secure PRFSGs relative to $\cO$.
\end{restatable}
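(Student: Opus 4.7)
The plan is to follow the template of Kretschmer~\cite{TQC:Kre21}, adapting it to our swapping oracle $\cS$ rather than to the Haar random unitary oracle. Fix a uniform QPT adversary $\cA$ making $T = \poly(\secp)$ queries and an advice string $y$ of length $\poly(\secp)$. The first step is to reduce to adversaries that query only $\cS$. By \cref{remark:considering_only_forward_query}, $\cS=\cS^\dagger$, so backward queries to $\cS$ are free. Queries to $\cU$ and $\cU^\dagger$ can be handled by observing that $\cU$ is fixed (not sampled) and that any advantage statement over the random choice of $\cS$ is actually a statement about a deterministic transformation applied to $\cA^{\cS,\cU}$; one may equivalently analyze an unbounded adversary $\cA'$ that absorbs $\cU,\cU^\dagger$ into its internal computation. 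Crucially, $\cS_{n}$ for $n\neq 2\secp$ is independent of the ``challenge'' block $\cS_{2\secp}$, and $\cS_{2\secp,(k',\cdot)}$ for $k'\neq k$ is independent of $\cS_{2\secp,(k,\cdot)}$, so these oracles can be simulated freely given unbounded power. Hence it suffices to bound
\begin{align}
\mathsf{Adv}(\cA,\cS) \;\coloneqq\; \Pr_{k\gets\{0,1\}^\secp}\!\left[1\gets \cA^{\cS_{2\secp,(k,\cdot)}}(1^\secp,y)\right] - \Pr_{\{|\vartheta_x\rangle\}\gets\sigma}\!\left[1\gets \cA^{\cT_{\{|\vartheta_x\rangle\}}}(1^\secp,y)\right],
\end{align}
where $\cT_{\{|\vartheta_x\rangle\}}$ is the oracle that swaps $|0\rangle|0^{2\secp}\rangle$ with $|1\rangle|\vartheta_x\rangle$ on input index $x$.

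The second step is to bound $|\Exp_{\cS}\mathsf{Adv}(\cA,\cS)|$ using \cref{lem:BBBV}. For each fixed $k\in\{0,1\}^\secp$, the oracle $\cS_{2\secp,(k,\cdot)}=\sum_x \ketbra{x}{x}\otimes \cS_{2\secp,(k,x)}$ can be viewed as a controlled unitary with $N=2^\secp$ branches, where each branch $\cS_{2\secp,(k,x)}$ is drawn from a distribution $\cD$ (the distribution of swap-to-Haar-state unitaries). Since \cref{lem:BBBV} is stated for arbitrary distributions $\cD$ and gives a bound $O(T^2/N)$ on the expectation over the choice of the $N$ oracles, applying it with $N=2^\secp$ yields $|\Exp_{\cS}\mathsf{Adv}(\cA,\cS)| \le O(T^2/2^\secp) = \negl(\secp)$. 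Here we use that the oracle $\cT_{\{|\vartheta_x\rangle\}}$ is sampled from the same per-index distribution $\cD$, independently across $x$, so the ``uniform-$k$'' game and the ``independent states'' game fit exactly Kretschmer's framework.

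The third step is concentration. Viewing each $\ket{\psi_{2\secp,(k,x)}}$ as $W_{k,x}\ket{0^{2\secp}}$ for Haar random $W_{k,x}$, the function $\cS \mapsto \mathsf{Adv}(\cA,\cS)$ becomes a function of a product of Haar random unitaries $\{W_{k,x}\}_{k,x}$ (together with the unitaries encoding the other $\cS_n$). By \cref{lem:Lipschitz}, \cref{lem:f+g_Lipschitz}, and \cref{lem:Uket0-Vket0_is_less_than_U-V}, $\mathsf{Adv}(\cA,\cdot)$ is $O(T)$-Lipschitz in the Frobenius norm on each $W_{k,x}$ factor. \cref{thm:Haar_concentration} then gives
\begin{align}
\Pr_{\cS}\!\left[|\mathsf{Adv}(\cA,\cS)| \ge 1/p(\secp)\right] \;\le\; 2\exp\!\left(-\Omega\!\left(\frac{2^{2\secp}}{p(\secp)^2 T^2}\right)\right)
\end{align}
for any polynomial $p$, which is doubly-exponentially small in $\secp$. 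The main (mild) obstacle here is verifying the Lipschitz constant carefully when the adversary also makes queries to the unrelated $\cS_n$ for $n\neq 2\secp$; this follows from \cref{lem:f+g_Lipschitz} since those factors do not depend on the challenge block and contribute $0$ to the Lipschitz constant in the relevant direction.

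The final step is to promote this per-adversary bound to an almost-sure statement over the random choice of $\cO$, via a Borel--Cantelli argument over a countable enumeration of QPT algorithms and polynomial advice strings. Enumerate the pairs $(\cA_i, y_i, p_i)$ of (uniform QPT oracle machine, advice, polynomial security loss) countably. For each triple and each $\secp$, the event $E_{i,\secp} = \{|\mathsf{Adv}(\cA_i,\cS)| \ge 1/p_i(\secp)\}$ has probability at most $2\exp(-\Omega(2^{2\secp}/(p_i(\secp)^2 T_i(\secp)^2)))$, which is summable over $\secp$. By \cref{lem:Borel-Cantelli}, with probability $1$ only finitely many $E_{i,\secp}$ occur for each $i$, and hence by a second countable union over $i$, with probability $1$ every QPT adversary against \cref{const:PRFSGs} has negligible advantage. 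The main subtlety is ensuring that the per-adversary failure probability is summable even after accounting for all polynomial query bounds, which is immediate from the doubly-exponential decay above.
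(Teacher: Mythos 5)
Your proof follows the same high-level template as the paper (Kretschmer's BBBV-style expectation bound, Haar concentration via a Lipschitz estimate, Borel--Cantelli), but there are two places where your reduction, as written, would not go through.

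\textbf{Dropping $\cS_{2\secp}$ from the advantage.}
Your definition of $\mathsf{Adv}(\cA,\cS)$ gives the adversary access only to the challenge oracle $\cS_{2\secp,(k,\cdot)}$ (or the i.i.d.\ Haar-state oracle). You justify this by saying that $\cS_{2\secp,(k',\cdot)}$ for $k'\ne k$ ``can be simulated freely.'' That argument does not work: the adversary holds oracle access to the entire $\cS_{2\secp}=\sum_{m}\ketbra{m}{m}\otimes\cS_{2\secp,m}$, and it does not know $k$, so it cannot decompose $\cS_{2\secp}$ into the ``innocent'' sub-oracles (which it would simulate) and the ``challenge'' sub-oracle (which it would forward). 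Queries to $\cS_{2\secp}$ in superposition over the control index implicitly query the secret key's block. The correct reduction, as in \cref{lem:indistinguishability_of_swap_oracle}, keeps $\cS_{2\secp}$ as a second oracle available to the adversary in both branches; this is precisely the $U$-vs.-$(U_k,U)$ setup of \cref{lem:BBBV}, so the expectation bound $O(T^2/2^\secp)$ still comes out, but your intermediate ``hence it suffices to bound $\mathsf{Adv}(\cA,\cS)$'' step, as stated, bounds the wrong quantity.

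\textbf{Non-uniform advice and Borel--Cantelli.}
You enumerate triples $(\cA_i,y_i,p_i)$ as a countable set. But the PRFSG security notion in \cref{def:PRFSG} quantifies over advice $y$ that can depend on $\secp$, so the relevant object is a sequence $\{y_\secp\}_\secp$ with $|y_\secp|\le q(\secp)$, and the set of such sequences is uncountable. A countable enumeration of fixed finite strings misses them. The paper instead fixes a pair $(\cA,q)$ (countably many), takes a union bound over all $2^{q(\secp)}$ advice strings \emph{at each security parameter} before applying Borel--Cantelli, and uses the doubly-exponential decay $\exp(-\Omega(2^\secp/T^2))$ to absorb the exponential $2^{q(\secp)}$ blowup. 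This is the step that makes the non-uniform claim go through; your write-up acknowledges a ``subtlety'' but attributes it to query bounds rather than to the advice, and the enumeration you give does not actually perform the per-$\secp$ union bound.

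Both gaps are fixable without changing the overall strategy -- keep $\cS_{2\secp}$ as an oracle in the advantage definition so that \cref{lem:BBBV} applies verbatim, and restructure Borel--Cantelli to union over advice per security parameter -- which is what the paper does.
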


\ifnum\submission=0
\ifnum\submission=1
\section{Security proof for PRFSGs}
In this section, we prove \cref{thm:PRFSGs_relative_to_unitary_oracle}.
\fi
Our strategy is the same as \cite{TQC:Kre21} which shows PRUs exist relative to exponentially many Haar random unitary oracles.
As a first step, we need that swap oracles are indistinguishable from independent swap oracles on average. 
\ifnum\submission=0
This is formalized as follows and directly follows from \cref{lem:BBBV}.
\fi
\ifnum\submission=1
This is formalized as follows:
\fi

\begin{lemma}\label{lem:indistinguishability_of_swap_oracle}
    Let $\cA^{(\cdot,\cdot)}$ be an algorithm. Let $\secp\in\N$. 
    For each fixed $\cS$ defined in \cref{def:unitary_orcle},
    we define
    \begin{align}
        \Adv(\cA,\cS_{2\secp})\coloneqq
         \Pr_{k\gets\bit^\secp}[1\gets\cA^{\cT_{2\secp,k},\cS_{2\secp}}]
         -\Pr_{\ket{\vartheta_1},...,\ket{\vartheta_{2^\secp}}\gets\sigma_{2^{2\secp}}}[1\gets\cA^{\cT_{\{\ket{\vartheta_x}\}},\cS_{2\secp}}],
    \end{align}
    where
    \begin{itemize}
        \item 
        $\cT_{2\secp,k}
        \coloneqq\sum_{x\in\bit^\secp}\ketbra{x}{x}
        \otimes\cS_{2\secp,(k,x)}$.
        \item for $\ket{\vartheta_1},...,\ket{\vartheta_{2^\secp}}$, 
        $\cT_{\{\ket{\vartheta_x}\}}
        \coloneqq
        \sum_{x\in\bit^\secp}\ketbra{x}{x}
        \otimes\cT_{\ket{\vartheta_x}}$. Here, for $\ket{\vartheta}\in\States(2^{2\secp})$, we define $\cT_{\ket{\vartheta}}\coloneqq
        \ketbra{0}{1}\otimes\ketbra{0^{2\secp}}{\vartheta}
        +\ketbra{1}{0}\otimes\ket{\vartheta}\bra{0^{2\secp}}+I^{\ket{\vartheta}}_\bot$, where $I^{\ket{\vartheta}}_\bot$ is the identity on the subspace orthogonal to $\SpanSpace\{\ket{0}\ket{0^{2\secp}},\ket{1}\ket{\vartheta}\}$.
    \end{itemize}
    Then, there exists a constant $c>0$ such that, for any algorithm $\cA^{(\cdot,\cdot)}$ that makes $T$ queries in total,
    \begin{align}
        \bigg|
        \Exp_{\cS_{2\secp}\gets\sigma}[\Adv(\cA,\cS_{2\secp})]
        \bigg|
        \le\frac{cT^2}{2^\secp},
    \end{align}
    where $\cS_{2\secp}\gets\sigma$ denotes that, for each $m\in\bit^{2\secp}$, $\ket{\psi_{2\secp,m}}$ is drawn from the Haar measure $\sigma_{2^{2\secp}}$ independently.
\end{lemma}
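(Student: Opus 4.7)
The plan is to derive this lemma as a direct instantiation of \cref{lem:BBBV}. First, I would identify the right distribution: let $\cD$ denote the law over $\Unitaries(2^{3\secp+1})$ of $\cT_{\{\ket{\vartheta_x}\}} = \sum_{x \in \bit^\secp} \ketbra{x}{x} \otimes \cT_{\ket{\vartheta_x}}$ induced by drawing each $\ket{\vartheta_x}$ independently from $\sigma_{2^{2\secp}}$. Since the Haar random states $\ket{\psi_{2\secp, m}}$ used to define $\cS$ are sampled independently over all $m \in \bit^{2\secp}$, writing $m = (k, x)$ shows that under $\sigma$ the family $\{\cT_{2\secp, k}\}_{k \in \bit^\secp}$ is precisely a tuple of $N \coloneqq 2^\secp$ i.i.d.\ samples from $\cD$.

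Next I would match the setup of \cref{lem:BBBV}: take $N = 2^\secp$, $d = 2^{3\secp+1}$, and $U_k \coloneqq \cT_{2\secp, k}$ for $k \in \bit^\secp$. The key identification is that the ``combined'' oracle appearing in \cref{lem:BBBV} agrees with $\cS_{2\secp}$, since
\begin{align*}
    \sum_{k \in \bit^\secp} \ketbra{k}{k} \otimes U_k
    = \sum_{k, x} \ketbra{(k,x)}{(k,x)} \otimes \cS_{2\secp,(k,x)}
    = \cS_{2\secp}.
\end{align*}
Consequently, $\Adv(\cA, \cS_{2\secp})$ as defined here coincides with $\Adv(\cA, U)$ in \cref{lem:BBBV}, and sampling $U \gets \cD^N$ coincides with sampling $\cS_{2\secp} \gets \sigma$. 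By \cref{remark:considering_only_forward_query} every swap oracle involved is self-inverse, so the $T$-query budget of \cref{lem:BBBV} absorbs both forward and inverse queries to either oracle without extra factors.

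Finally, applying \cref{lem:BBBV} directly yields
\begin{align*}
    \left| \Exp_{\cS_{2\secp} \gets \sigma}[\Adv(\cA, \cS_{2\secp})] \right|
    \le \frac{cT^2}{N} = \frac{cT^2}{2^\secp},
\end{align*}
which is the stated bound. There is essentially no obstacle here; the only step that requires care is checking the identifications above, as all the nontrivial content (the BBBV-style query bound) is already packaged inside \cref{lem:BBBV}.
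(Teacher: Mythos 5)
Your proof is correct and follows essentially the same route as the paper: the paper's own proof of this lemma is a one-liner observing that $\cS_{2\secp}=\sum_{k\in\bit^\secp}\ketbra{k}{k}\otimes\cT_{2\secp,k}$ and then invoking \cref{lem:BBBV} with $N=2^\secp$ and $\cD$ the law of $\cT_{\{\ket{\vartheta_x}\}}$. You have simply spelled out the identifications (the i.i.d.\ structure of $\{\cT_{2\secp,k}\}_k$ over $m=(k,x)$, the matching of $\Adv$ definitions, and the self-inverse handling of inverse queries) that the paper leaves implicit.
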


\ifnum\submission=1
For the proof, we need the following lemma.\footnote{In \cite{TQC:Kre21}, he showed \cref{lem:BBBV} only for the case when $\cD$ is the Haar measure. However, his proof does not rely on any property of the Haar measure because its proof essentially depends on the BBBV theorem \cite{SICOMP:BBBV97}. Thus, we can obtain the same claim for a general distribution $\cD$ from his original proof.}

\begin{lemma}[Lemma 31 in \cite{TQC:Kre21}]\label{lem:BBBV}
    Let $\cD$ be a distribution over $\Unitaries(d)$. Suppose that $\cA$ is a quantum algorithm that queries $U=(U_1,..,U_N)\in\Unitaries(d)^N$ and $O\in\Unitaries(d)$. We see $U$ as $\sum_{n\in[N]}\ketbra{n}{n}\otimes U_n$. For fixed $U\in\Unitaries(d)^N$, define
    \begin{align}
        \Adv(\cA,U)\coloneqq
        \Pr_{k\gets[N]}[1\gets\cA^{U_k,U}]-\Pr_{O\gets\cD}[\cA^{O,U}].
    \end{align}
    Then, there exists a constant $c>0$ such that, for any $T$-query algorithm $\cA$,
    \begin{align}
        \bigg|
         \Exp_{U\gets\cD^N}[\Adv(\cA,U)]
        \bigg|
        \le\frac{cT^2}{N}.
    \end{align}
    Here, $U\gets\cD^N$ denotes that each $U_k$ is independently sampled from $\cD$.
\end{lemma}
\fi

\begin{proof}[Proof of \cref{lem:indistinguishability_of_swap_oracle}]
    Note that $\cS_{2\secp}=\sum_{k\in\bit^\secp}\ketbra{k}{k}\otimes\cT_{2\secp,k}$. Thus, this claim follows from \cref{lem:BBBV} with $N=2^\secp,$  and $\cD=\sigma$.
\end{proof}

Next, we want to show that $\Adv(\cA,\cS_{2\secp})$ is negligible with overwhelming probability over the choice of $\cS_{2\secp}$ by invoking the concentration inequality (\cref{thm:Haar_concentration}). 
For that goal, we view $\Adv(\cA,\cS_{2\secp})$ as a function of $U\in\Unitaries(2^{2\secp})^{2^{2\secp}}$, and need to show that it satisfies Lipshcitz condition in \cref{thm:Haar_concentration}. 
This is formalized as follows:

\begin{restatable}{lemma}{Lipschitz}\label{lem:Lipschitz_swap_oracle}
    Let $n\in\N$.
    For $U=(U_1,...,U_{2^n})\in\Unitaries(2^n)^{2^n}$, we define
    \begin{align}
        \Tilde{\cS_n}(U)\coloneqq\sum_{m\in\bit^n}\ketbra{m}{m}\otimes 
        \bigg(
         \ketbra{0}{1}\otimes\ketbra{0^n}{0^n}U_m^\dag+\ketbra{1}{0}\otimes U_m\ketbra{0^n}{0^n}+I_\bot(U_m)
        \bigg),
    \end{align}
    where each $I_\bot(U_m)$ is the identity on the subspace orthogonal to $\SpanSpace\{\ket{0}\ket{0^{n}},\ket{1}U_m\ket{0^n}\}$.
    Then, for any algorithm $\cA^{(\cdot)}$ that makes $T$ queries, $f(U)\coloneqq\Pr[1\gets\cA^{\Tilde{\cS_n}(U)}]$ is $8T$-Lipschitz in the $\ell^2$-sum of the Forbenious norm. Namely, for any $U=(U_1,...,U_{2^n}),V=(V_1,..,V_{2^n})\in\Unitaries(2^n)^{2^n}$,
    \begin{align}
        |f(U)-f(V)|\le 8T\sqrt{\sum_{m\in\bit^n}\|U_m-V_m\|^2_2}.
    \end{align}
\end{restatable}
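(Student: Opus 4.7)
The plan is to reduce to the standard $2T$-Lipschitz bound (\cref{lem:Lipschitz}) for $T$-query algorithms by controlling how the oracle unitary $\tilde{\cS}_n(U)$ depends on $U$ in Frobenius norm. Since $\tilde{\cS}_n(U)$ is self-inverse (cf.\ \cref{remark:considering_only_forward_query}), any $T$-query access to it is realized by at most $T$ forward queries to a single unitary oracle, so \cref{lem:Lipschitz} yields
\[
|f(U) - f(V)| \leq 2T\,\|\tilde{\cS}_n(U) - \tilde{\cS}_n(V)\|_2.
\]
It therefore suffices to prove $\|\tilde{\cS}_n(U) - \tilde{\cS}_n(V)\|_2 \leq 2\sqrt{\sum_m \|U_m - V_m\|_2^2}$, which gives the $4T$-Lipschitz and in particular the $8T$-Lipschitz bound.

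To bound this Frobenius difference, I first exploit the block-diagonal structure. Writing $\tilde{\cS}_n(U) = \sum_{m\in\bit^n} \ketbra{m}{m}\otimes W_m(U)$, where $W_m(U)$ is the swap of $\ket{0}\ket{0^n}$ with $\ket{1}U_m\ket{0^n}$ (and the identity on the orthogonal complement), orthogonality of the $\ketbra{m}{m}$ gives
\[
\|\tilde{\cS}_n(U) - \tilde{\cS}_n(V)\|_2^2 = \sum_{m\in\bit^n} \|W_m(U) - W_m(V)\|_2^2,
\]
so the whole task reduces to showing $\|W_m(U) - W_m(V)\|_2 \leq 2\|U_m - V_m\|_2$ for each $m$.

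The key step is the factorization $W_m(U) = R_m(U)\,S_0\,R_m(U)^\dag$, where $R_m(U) = \ketbra{0}{0}\otimes I + \ketbra{1}{1}\otimes U_m$ is the controlled-$U_m$ and $S_0$ is the fixed, $U$-independent swap of $\ket{0}\ket{0^n}$ with $\ket{1}\ket{0^n}$ (identity on the orthogonal complement). This identity is verified by checking the action on $\ket{0}\ket{0^n}$, on $\ket{1}U_m\ket{0^n}$, and on vectors $\ket{0}\ket{v}$, $\ket{1}U_m\ket{v}$ with $\braket{0^n|v}=0$. Given the factorization, a telescoping trick together with H\"older's inequality (\cref{lem:Holder}, using $\|R_m(\cdot)\|_\infty = \|S_0\|_\infty = 1$) yields
\[
\|W_m(U) - W_m(V)\|_2 \leq 2\|R_m(U) - R_m(V)\|_2 = 2\|\ketbra{1}{1}\otimes(U_m - V_m)\|_2 = 2\|U_m - V_m\|_2.
\]
Combining the three estimates gives $|f(U)-f(V)| \leq 4T\sqrt{\sum_m \|U_m-V_m\|_2^2}$, which is in particular at most $8T\sqrt{\sum_m \|U_m-V_m\|_2^2}$.

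The main obstacle is that $I_\bot(U_m)$ is the identity on the orthogonal complement of a $U_m$-dependent two-dimensional subspace, so a naive entrywise manipulation of $\tilde{\cS}_n(U) - \tilde{\cS}_n(V)$ is awkward and does not obviously give the right $\|U_m-V_m\|_2$ bound. The rewriting $W_m(U) = R_m(U)\,S_0\,R_m(U)^\dag$ is the crucial trick: it cleanly absorbs all $U$-dependence into the controlled-unitary $R_m(U)$, whose Frobenius difference is trivially $\|U_m-V_m\|_2$, and makes the rest of the argument a routine application of H\"older's inequality.
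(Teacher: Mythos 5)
Your proof is correct and takes a genuinely different route from the paper's. The paper's proof expands $I_\bot(U_m)$ explicitly as $I - \ketbra{0}{0}\otimes\ketbra{0^n}{0^n} - \ketbra{1}{1}\otimes U_m\ketbra{0^n}{0^n}U_m^\dag$, writes out $\tilde{\cS}_n(U) - \tilde{\cS}_n(V)$ term by term, and bounds the pieces separately using the triangle inequality, H\"older's inequality, and an auxiliary estimate (\cref{lem:Uket0-Vket0_is_less_than_U-V}, which gives $\|U\ketbra{\psi}{\psi}U^\dag - V\ketbra{\psi}{\psi}V^\dag\|_2 \le 2\|U - V\|_2$). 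Your conjugation factorization $W_m(U) = R_m(U)\,S_0\,R_m(U)^\dag$, with $R_m(U)$ the controlled-$U_m$ and $S_0$ a fixed swap, is a neater structural observation: it isolates all the $U_m$-dependence inside a single unitary whose Frobenius difference is exactly $\|U_m - V_m\|_2$, and the telescoping bound $\|W_m(U)-W_m(V)\|_2 \le 2\|R_m(U)-R_m(V)\|_2$ then follows in one line from H\"older. This avoids the explicit expansion of $I_\bot(U_m)$ and the auxiliary lemma entirely, and it yields the sharper bound $\|\tilde{\cS}_n(U)-\tilde{\cS}_n(V)\|_2 \le 2\sqrt{\sum_m\|U_m-V_m\|_2^2}$ (the paper gets $4\sqrt{\sum_m\|U_m-V_m\|_2^2}$), hence a $4T$-Lipschitz constant rather than $8T$. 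Since the lemma only claims $8T$, both are adequate for the downstream concentration argument, but your version is tighter and more transparent.
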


\ifnum\submission=1
For the proof, we need the following lemma.
\begin{lemma}[Lemma 28 in \cite{TQC:Kre21}]\label{lem:Lipschitz}
    Let $\cA^{U}$ be a quantum algorithm that makes $T$ queries to $U\in\Unitaries(d)$ and its inverse. Then, $f(U)=\Pr[1\gets\cA^{U}]$ is $2T$-Lipschitz in the Frobenius norm, i.e., $|f(U)-f(V)|\le2T\|U-V\|_2$ for all $U,V\in\Unitaries(d)$.
\end{lemma}

By applying the triangle inequality, we have the following.

\begin{lemma}\label{lem:f+g_Lipschitz}
    Let $f,g:\Unitaries(d)\to\R$ be $L$-Lipschitz functions in the Frobenious norm, i.e.,
    $|f(U)-f(V)|\le L\|U-V\|_2$ and $|g(U)-g(V)|\le L\|U-V\|_2$ for any $U,V\in\Unitaries(d)$. Then, $f+g$ is $2L$-Lipschitz. Namely, for any $U,V\in\Unitaries(d)$,
    \begin{align}
        |f(U)+g(U)-f(V)-g(V)|\le2L\|U-V\|_2.
    \end{align}
\end{lemma}

Now we are ready to prove \cref{lem:Lipschitz_swap_oracle}.
\fi

\begin{proof}[Proof of \cref{lem:Lipschitz_swap_oracle}]
    Note that, for each $m\in\bit^n$, we can write $I_\bot(U_m)$ as follows:
    \begin{align}
        I_\bot(U_m)
        =I-\ketbra{0}{0}\otimes\ketbra{0^n}{0^n}-\ketbra{1}{1}\otimes U_m\ketbra{0^n}{0^n}U_m^\dag. \label{eq:explicit_form_of_swap_oracle}
    \end{align}
    Thus, we have
    \begin{align}
        &|f(U)-f(V)|\notag\\
        \le& 2T\|\Tilde{\cS_n}(U)-\Tilde{\cS_n}(V)\|_2 \tag{By \cref{lem:Lipschitz}}\\
        =&2T\bigg\|\sum_{m\in\bit^n}\ketbra{m}{m}\otimes
         \bigg(
          \ketbra{0}{1}\otimes\ketbra{0^n}{0^n}(U_m-V_m)^\dag+\ketbra{1}{0}\otimes (U_m-V_m)\ketbra{0^n}{0^n}
          \notag\\
          &
          -\ketbra{1}{1}\otimes U_m\ketbra{0^n}{0^n}U_m+\ketbra{1}{1}\otimes V_m\ketbra{0^n}{0^n}V_m^\dag
         \bigg)
        \bigg\|_2\tag{By \cref{eq:explicit_form_of_swap_oracle}}\\
        \le&2T\bigg\|\sum_{m\in\bit^n}\ketbra{m}{m}\otimes
         \bigg(\ketbra{0}{1}\otimes \ketbra{0^n}{0^n}(U_m-V_m)^\dag+\ketbra{1}{0}\otimes (U_m-V_m)\ketbra{0^n}{0^n}
         \bigg)
         \bigg\|_2
         \notag\\
         &+2T\bigg\|
         \sum_{m\in\bit^n}\ketbra{m}{m}\otimes\ketbra{1}{1}\otimes
          \bigg(
           U_m\ketbra{0^n}{0^n}U_m- V_m\ketbra{0^n}{0^n}V_m^\dag
          \bigg)
         \bigg\|_2,\label{eq:f(U)-f(V)}
    \end{align}
    where the last inequality follows from the triangle inequality. The first term in \cref{eq:f(U)-f(V)} is estimated as follows:

    \begin{align}
        &2T\bigg\|\sum_{m\in\bit^n}\ketbra{m}{m}\otimes
         \bigg(\ketbra{0}{1}\otimes \ketbra{0^n}{0^n}(U_m-V_m)^\dag+\ketbra{1}{0}\otimes (U_m-V_m)\ketbra{0^n}{0^n}
         \bigg)
         \bigg\|_2
         \notag\\
        \le&4T\bigg\|\sum_{m\in\bit^n}\ketbra{m}{m}\otimes
        \ketbra{1}{0}\otimes (U_m-V_m)\ketbra{0^n}{0^n}
         \bigg\|_2
        \tag{By $\|A^\dag\|_2=\|A\|_2$ and the triangle inequality}\\
        =&4T\sqrt{\sum_{m\in\bit^n}
        \bigg\|\ketbra{1}{0}\otimes
         (U_m-V_m)\ketbra{0^n}{0^n}
        \bigg\|^2_2}
        \tag{By $\|\sum_m\ketbra{m}{m}\otimes A_m\|_2=\sqrt{\sum_m\|A_m\|^2_2}$}\\
        =&4T\sqrt{\sum_{m\in\bit^n}
        \bigg\|
         (U_m-V_m)\ketbra{0^n}{0^n}
        \bigg\|^2_2}
        \tag{By $\|A\otimes B\|_2=\|A\|_2\|B\|_2$ and $\|\ketbra{1}{0}\|_2=1$}\\
        \le&4T\sqrt{\sum_{m\in\bit^n}\|U_m-V_m\|^2_2},\label{eq:ineq1_for_f(U)-f(V)}
    \end{align}
    where the last inequality follows from the H\"older's inequality (\cref{lem:Holder}) and $\|\ketbra{0^n}{0^n}\|_\infty=1$.
    On the other hand, the second term in \cref{eq:f(U)-f(V)} is estimated as follows:
    \begin{align}
        &2T\bigg\|
         \sum_{m\in\bit^n}\ketbra{m}{m}\otimes\ketbra{1}{1}\otimes
          \bigg(
           U_m\ketbra{0^n}{0^n}U_m-V_m\ketbra{0^n}{0^n}V_m^\dag
          \bigg)
         \bigg\|_2
         \notag\\
         =&2T\sqrt{\sum_{m\in\bit^n}
         \bigg\|
          U_m\ketbra{0^n}{0^n}U_m-V_m\ketbra{0^n}{0^n}V_m^\dag
         \bigg\|_2^2}
         \tag{By $\|\sum_m\ketbra{m}{m}\otimes A_m\|_2=\sqrt{\sum_m\|A_m\|^2_2}$}\\
         \le&4T\sqrt{\sum_{m\in\bit^n}
         \|U_m-V_m\|_2^2},\label{eq:ineq2_for_f(U)-f(V)}
    \end{align}
    \ifnum\submission=0
    where the last inequality follows from \cref{lem:Uket0-Vket0_is_less_than_U-V}.
    \fi
    \ifnum\submission=1
    where the last inequality follows from the following lemma.
    We give the proof later.
    \begin{lemma}\label{lem:Uket0-Vket0_is_less_than_U-V}
    Let $U,V\in\Unitaries(d)$. Then, for any $\ket{\psi}\in\States(d)$,
    \begin{align}
        \|U\ketbra{\psi}{\psi}U^\dag-V\ketbra{\psi}{\psi}V^\dag\|_2
        \le2\|U-V\|_2
    \end{align}
    \end{lemma}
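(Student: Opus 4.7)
The plan is to prove this by adding and subtracting an intermediate term, applying the triangle inequality, and then bounding each resulting piece via Hölder's inequality (\cref{lem:Holder}).

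More concretely, I would first write
\begin{align}
U\ketbra{\psi}{\psi}U^\dag - V\ketbra{\psi}{\psi}V^\dag
= (U-V)\ketbra{\psi}{\psi}U^\dag + V\ketbra{\psi}{\psi}(U-V)^\dag,
\end{align}
and then apply the triangle inequality for $\|\cdot\|_2$ to reduce the problem to bounding $\|(U-V)\ketbra{\psi}{\psi}U^\dag\|_2$ and $\|V\ketbra{\psi}{\psi}(U-V)^\dag\|_2$ separately. Using the identity $\|A^\dag\|_2 = \|A\|_2$, the second term equals $\|(U-V)\ketbra{\psi}{\psi}V^\dag\|_2$, so both pieces have the same shape: a product $(U-V) X$ where $X = \ketbra{\psi}{\psi}W$ for some unitary $W \in \{U^\dag, V^\dag\}$.

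Next, by Hölder's inequality (\cref{lem:Holder}), each such term is at most $\|U-V\|_2 \cdot \|\ketbra{\psi}{\psi}W\|_\infty$. The operator norm $\|\ketbra{\psi}{\psi}W\|_\infty$ is bounded by $\|\ketbra{\psi}{\psi}\|_\infty \cdot \|W\|_\infty = 1 \cdot 1 = 1$, again by Hölder and the fact that rank-one projections and unitaries have unit operator norm. Summing the two contributions yields the claimed bound $2\|U-V\|_2$.

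There is no real obstacle here — the argument is a short computation. The only thing one has to be slightly careful about is to apply the triangle inequality before trying to invoke Hölder (otherwise one would be trying to bound $\|U\ketbra{\psi}{\psi}U^\dag - V\ketbra{\psi}{\psi}V^\dag\|_2$ directly, which does not factor into a single product). Splitting via an intermediate term $V\ketbra{\psi}{\psi}U^\dag$ cleanly bypasses this issue.
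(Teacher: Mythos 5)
Your proposal is correct and is essentially the same argument as in the paper: the paper also adds and subtracts the intermediate term $V\ketbra{\psi}{\psi}U^\dag$, applies the triangle inequality, uses $\|A^\dag\|_2 = \|A\|_2$, and then bounds each piece via Hölder's inequality (\cref{lem:Holder}) together with $\|\ketbra{\psi}{\psi}W\|_\infty \le 1$. The only difference is cosmetic — you write out the algebraic decomposition first, while the paper inserts the intermediate term directly under the triangle inequality.
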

    \fi
    By combining \cref{eq:f(U)-f(V),eq:ineq1_for_f(U)-f(V),eq:ineq2_for_f(U)-f(V)}, we have
    \begin{align}
        |f(U)-f(V)|\le8T\sqrt{\sum_{m\in\bit^n}
         \|U_m-V_m\|_2^2},
    \end{align}
    which concludes the proof.
\end{proof}

\ifnum\submission=1
To complete the proof, we give the proof of \cref{lem:Uket0-Vket0_is_less_than_U-V}

\begin{proof}[Proof of \cref{lem:Uket0-Vket0_is_less_than_U-V}]
    We obtain the inequality as follows:
    \begin{align}
        &\|U\ketbra{\psi}{\psi}U^\dag-V\ketbra{\psi}{\psi}V^\dag\|_2
        \notag\\
        \le&\|U\ketbra{\psi}{\psi}U^\dag-V\ketbra{\psi}{\psi}U^\dag\|_2+\|V\ketbra{\psi}{\psi}U^\dag-V\ketbra{\psi}{\psi}V^\dag\|_2
        \tag{By the triangle inequality}\\
        =&\|(U-V)\ketbra{\psi}{\psi}U^\dag\|_2+\|V\ketbra{\psi}{\psi}(U-V)^\dag\|_2
        \notag\\
        =&\|(U-V)\ketbra{\psi}{\psi}U^\dag\|_2+\|(U-V)\ketbra{\psi}{\psi}V^\dag\|_2
        \tag{By $\|A^\dag\|_2=\|A\|_2$}\\
        \le&\|U-V\|_2\|\ketbra{\psi}{\psi}U^\dag\|_\infty
        +\|(U-V)\|_2\|\ketbra{\psi}{\psi}V^\dag\|_\infty
        \tag{By H\"older's inequality, \cref{lem:Holder}}\\
        \le&2\|U-V\|_2,
    \end{align}
    where in the last inequality we have used $\|\ketbra{\psi}{\psi}W\|_\infty\le\|\ketbra{\psi}{\psi}\|_\infty\|W\|_\infty\le1$ for any $W\in\Unitaries(d)$ from \cref{lem:Holder}.
\end{proof}
\fi

\ifnum\submission=1
We need the following concentration property.
\begin{theorem}[Theorem 5.17 in \cite{Mec}]\label{thm:Haar_concentration}
     Given $d_1,\ldots,d_k \in \N$, let $X = \Unitaries(d_1)\times\cdots\times\Unitaries(d_k)$. Let $\mu = \mu_{d_1} \times \cdots \times \mu_{d_k}$ be the product of Haar measures on $X$. Suppose that $f: X \to \R$ is $L$-Lipschitz in the $\ell^2$-sum of Frobenius norm, i.e., for any $U=(U_1,...,U_k)\in X$ and $V=(V_1,...,V_k)\in X$, 
     we have $|f(U)-f(V)|\le L\sqrt{\sum_i\|U_i-V_i\|^2_2}$. Then for every $\delta > 0$,
    \begin{align}
       \Pr_{U\gets \mu}\left[f(U)\ge\Exp_{V \gets \mu}[f(V)]+\delta\right] \le \exp\left(-\frac{(d-2)\delta^2}{24L^2}\right),
    \end{align}
    where $d \coloneqq \min\{d_1,\ldots,d_k\}$.
\end{theorem}
\fi

With \cref{lem:indistinguishability_of_swap_oracle,lem:Lipschitz_swap_oracle} at hand, we can argue $\Adv(\cA,\cS_{2\secp})$ is negligible with high probability.

\begin{lemma}\label{lem:Adv_is_negligible_whp}
    Let $c$ be a constant in \cref{lem:indistinguishability_of_swap_oracle}.
    Suppose that $\cA^{(\cdot,\cdot)}$ is an algorithm that makes $T$ queries in total. Then, for any $p\ge\frac{cT^2}{2^{\secp}}$,
    \begin{align}
        \Pr_{\cS_{2\secp}\gets\sigma}[|\Adv(\cA,\cS_{2\secp})|\ge p]
        \le 2\exp\bigg(
         -\frac{(2^{2\secp}-2)(p-cT^22^{-\secp})^2}{6144T^2}
        \bigg),
    \end{align}
    where $\Adv(\cA,\cS_{2\secp})$ is defined in \cref{lem:indistinguishability_of_swap_oracle}, and $\cS_{2\secp}\gets\sigma$ denotes that, for each $m\in\bit^{2\secp}$, $\ket{\psi_{2\secp,m}}$ is drawn from the Haar measure $\sigma_{2^{2\secp}}$ independently.
\end{lemma}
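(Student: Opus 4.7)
The plan is to apply the Haar concentration inequality (\cref{thm:Haar_concentration}) to the function $U \mapsto \Adv(\cA, \cS_{2\secp})$, where we parameterize $\cS_{2\secp}$ by $U = (U_m)_{m \in \bit^{2\secp}} \in \Unitaries(2^{2\secp})^{2^{2\secp}}$ via the identification $\ket{\psi_{2\secp, m}} = U_m \ket{0^{2\secp}}$. Under this parameterization, drawing $\cS_{2\secp} \gets \sigma$ corresponds exactly to drawing each $U_m$ from the Haar measure on $\Unitaries(2^{2\secp})$ independently, so the concentration inequality applies with $d_i = 2^{2\secp}$. The mean is already controlled by \cref{lem:indistinguishability_of_swap_oracle}, which gives $|\Exp_{\cS_{2\secp}}[\Adv(\cA, \cS_{2\secp})]| \le cT^2/2^\secp$. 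It therefore remains only to establish a suitable Lipschitz bound.

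I would decompose $\Adv(\cA, \cS_{2\secp}) = f_1(U) - f_2(U)$, where
\begin{align*}
f_1(U) &\coloneqq \Pr_{k\gets\bit^\secp}[1 \gets \cA^{\cT_{2\secp,k}, \cS_{2\secp}}], \\
f_2(U) &\coloneqq \Pr_{\ket{\vartheta_1},\ldots,\ket{\vartheta_{2^\secp}}\gets\sigma_{2^{2\secp}}}[1 \gets \cA^{\cT_{\{\ket{\vartheta_x}\}}, \cS_{2\secp}}],
\end{align*}
and bound the Lipschitz constant of each term separately. The key observation for $f_1$ is that a query to $\cT_{2\secp,k}$ on $\sum_x \alpha_x \ket{x}\ket{\cdot}$ can be simulated by a single query to $\cS_{2\secp}$ on $\sum_x \alpha_x \ket{k}\ket{x}\ket{\cdot}$, because $\cS_{2\secp} = \sum_m \ketbra{m}{m} \otimes \cS_{2\secp,m}$. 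Hence for each fixed $k$, the combined algorithm $\cA^{\cT_{2\secp,k}, \cS_{2\secp}}$ is effectively a $T$-query algorithm to $\widetilde{\cS}_{2\secp}(U)$, and \cref{lem:Lipschitz_swap_oracle} yields that $\Pr[1 \gets \cA^{\cT_{2\secp,k}, \cS_{2\secp}}]$ is $8T$-Lipschitz in $U$; averaging over $k$ preserves this bound, so $f_1$ is $8T$-Lipschitz. For $f_2$, the first oracle $\cT_{\{\ket{\vartheta_x}\}}$ is independent of $U$, so the algorithm makes at most $T$ queries to $\widetilde{\cS}_{2\secp}(U)$; averaging over the independent Haar random $\ket{\vartheta_x}$ again preserves the Lipschitz constant, giving that $f_2$ is also $8T$-Lipschitz. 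By the triangle inequality (or \cref{lem:f+g_Lipschitz} applied to $f_1$ and $-f_2$), $\Adv(\cA, \cdot)$ is $16T$-Lipschitz in the $\ell^2$-sum of Frobenius norms.

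The conclusion follows by invoking \cref{thm:Haar_concentration} with $L = 16T$, noting that $24L^2 = 6144 T^2$, and choosing $\delta = p - cT^2 2^{-\secp} \ge 0$ (valid by the hypothesis $p \ge cT^2/2^\secp$) so that the deviation threshold $\Exp[\Adv] + \delta$ is at most $p$. Applying the one-sided inequality both to $\Adv(\cA, \cdot)$ and to $-\Adv(\cA, \cdot)$ and using a union bound yields the factor of $2$ in front, delivering the stated bound. I do not anticipate a serious obstacle; the only subtlety is the careful bookkeeping to arrive at exactly the constant $16T$ for the Lipschitz bound (so that the denominator in the exponent is $6144 T^2$), for which the oracle-simulation observation above is what makes the two-oracle algorithm fit into \cref{lem:Lipschitz_swap_oracle}.
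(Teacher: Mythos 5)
Your proposal is correct and follows essentially the same route as the paper: the same parameterization of $\cS_{2\secp}$ by independent Haar random unitaries, the same decomposition into two terms that are each shown to be $8T$-Lipschitz via \cref{lem:Lipschitz_swap_oracle} (with the same simulation trick of coherently preparing $\ket{k}$ to fold the $\cT_{2\secp,k}$ query into a $\cS_{2\secp}$ query), and the same invocation of \cref{thm:Haar_concentration} with $L=16T$, $\delta = p - cT^2 2^{-\secp}$ and a union bound over the two tails. The only cosmetic difference is that you argue per fixed $k$ (resp.\ per fixed $\{\ket{\vartheta_x}\}$) and then average, whereas the paper packages the internal sampling into single randomized algorithms $\cA_1, \cA_2$; these are equivalent.
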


\begin{proof}[Proof of \cref{lem:Adv_is_negligible_whp}]
    Recall that 
    $\cS_{2\secp}=\sum_{m\in\bit^{2\secp}}
    \ketbra{m}{m}\otimes
    \cS_{2\secp,m}$ 
    and each $S_{2\secp,m}$ is a unitary that swaps between $\ket{0}\ket{0^{2\secp}}$ and $\ket{1}\ket{\psi_{2\secp,m}}$. It is clear that choosing $\ket{\psi_{2\secp,m}}$ from the Haar measure $\sigma_{2^{2\secp}}$ independently is exactly the same as setting $\ket{\psi_{2\secp,m}}\coloneqq U_m\ket{0^{2\secp}}$, where $U_m$ is chosen from the Haar measure $\mu_{2^{2\secp}}$ independently. Thus, let $\Tilde{\cS}_{2\secp}\gets\mu$ denote that, for each $m\in\bit^{2\secp}$, $\ket{\psi_{2\secp,m}}$ is defined by $U_m\ket{0^{2\secp}}$ and $U_m\gets\mu_{2^{2\secp}}$.
    Recall that
    \begin{align}
        \Adv(\cA,\Tilde{\cS}_{2\secp})=
         \Pr_{k\gets\bit^\secp}[1\gets\cA^{\cT_{2\secp,k},\Tilde{\cS}_{2\secp}}]
         -\Pr_{\ket{\vartheta_1},...,\ket{\vartheta_{2^\secp}}\gets\sigma_{2^{2\secp}}}[1\gets\cA^{\cT_{\{\ket{\vartheta_x}\}},\Tilde{\cS}_{2\secp}}].
    \end{align}
    We can see that both teams are $8T$-Lipschitz functions as follows:
    \begin{itemize}
        \item $\Pr_{k\gets\bit^\secp}[1\gets\cA^{\cT_{2\secp,k},\Tilde{\cS}_{2\secp}}]:$ define the following algorithm $\cA^{\Tilde{\cS}_{2\secp}}_1$. 
         \begin{enumerate}
             \item Choose $k\gets\bit^\secp$.
             \item Simulate $\cA^{\cT_{2\secp,k},\Tilde{\cS}_{2\secp}}$ with the query access to $\Tilde{\cS}_{2\secp}$. When $\cA$ queries a register $\regA$ to $\cT_{2\secp,k}$, simulate its query by preparing $\ket{k}$ on an ancilla register $\regR$ and querying $\regR\regA$ to $\Tilde{\cS}_{2\secp}$.
             \item Output what $\cA$ outputs.
         \end{enumerate}
        It is clear that $\Pr[1\gets\cA_1^{\Tilde{\cS}_{2\secp}}]=\Pr_{k\gets\bit^\secp}[1\gets\cA^{\cT_{2\secp,k},\Tilde{\cS}_{2\secp}}]$, and $\cA_1^{\Tilde{\cS}_{2\secp}}$ makes $T$ queries. Thus, from \cref{lem:Lipschitz_swap_oracle}, we can view that $\Pr_{k\gets\bit^\secp}[1\gets\cA^{\cT_{2\secp,k},\Tilde{\cS}_{2\secp}}]$ is an $8T$-Lipschitz function in the $\ell^2$-sum of Frobenious norm.

        \item $\Pr_{\ket{\vartheta_1},...,\ket{\vartheta_{2^\secp}}\gets\sigma_{2^{2\secp}}}[1\gets\cA^{\cT_{\{\ket{\vartheta_x}\}},\Tilde{\cS}_{2\secp}}]:$ define the following algorithm $\cA^{\Tilde{\cS}_{2\secp}}_2$. 
         \begin{enumerate}
             \item Choose $\ket{\vartheta_1},...,\ket{\vartheta_{2^\secp}}\gets\sigma_{2^{2\secp}}$ along with their classical descriptions.
             \item Simulate 
             $\cA^{\cT_{\{\ket{\vartheta_x}\}},\Tilde{\cS}_{2\secp}}$.
             When $\cA$ queries $\cT_{\{\ket{\vartheta_x}\}}$, $\cA_2$ applies $\cT_{\{\ket{\vartheta_x}\}}$ by using the classical descriptions of $\ket{\vartheta_1},...,\ket{\vartheta_{2^\secp}}$. 
             \item Output what $\cA^{\cT_{\{\ket{\vartheta_x}\}},\Tilde{\cS}_{2\secp}}$ outputs.
         \end{enumerate}
        It is clear that $\Pr[1\gets\cA_2^{\Tilde{\cS}_{2\secp}}]
        =\Pr_{\ket{\vartheta_1},...,\ket{\vartheta_{2^\secp}}\gets\sigma_{2^{2\secp}}}
        [1\gets\cA^{\cT_{\{\ket{\vartheta_x}\}},\Tilde{\cS}_{2\secp}}]$, and $\cA_2^{\Tilde{\cS}_{2\secp}}$ makes $T$ queries. Thus, from \cref{lem:Lipschitz_swap_oracle}, we can view that $\Pr_{\ket{\vartheta_1},...,\ket{\vartheta_{2^\secp}}\gets\sigma_{2^{2\secp}}}
        [1\gets\cA^{\cT_{\{\ket{\vartheta_x}\}},\Tilde{\cS}_{2\secp}}]$ is an $8T$-Lipschitz function in the $\ell^2$-sum of Frobenious norm.
    \end{itemize}
    Thus, from \cref{lem:f+g_Lipschitz}, we can view $\Adv(\cA,\Tilde{\cS}_{2\secp})$ is a function that maps $\Unitaries(2^{2\secp})^{2^{2\secp}}\to\R$ and it is a $16T$-Lipschitz in the $\ell^2$-sum of Frobenious norm. Therefore,
    \begin{align}
        &\Pr_{\Tilde{\cS}_{2\secp}\gets\mu}[\Adv(\cA,\Tilde{\cS}_{2\secp})\ge p]\notag\\
        \le&\Pr_{\Tilde{\cS}_{2\secp}\gets\mu}
         \bigg[
          \Adv(\cA,\Tilde{\cS}_{2\secp})\ge \Exp_{\Tilde{\cS'}_{2\secp}\gets\mu}[\Adv(\cA,\Tilde{\cS'}_{2\secp})]
          +p-cT^22^{-\secp}
         \bigg]\tag{By \cref{lem:indistinguishability_of_swap_oracle}}\\
         \le&\exp
         \bigg(
          -\frac{(2^{2\secp}-2)(p-cT^22^{-\secp})^2}{6144T^2}
         \bigg)\tag{By \cref{thm:Haar_concentration} with $d=2^{2\secp},\delta=p-cT^22^{-\secp}, L=16T$}.
    \end{align}
    We can obtain the same bound for $\Pr_{\Tilde{\cS}_{2\secp}\gets\mu}[\Adv(\cA,\Tilde{\cS}_{2\secp})\le -p]$ in the same way, so we obtain our claim.
\end{proof}

Now we are ready to prove \cref{thm:PRFSGs_relative_to_unitary_oracle}. We restate it here for the reader's convenience.

\PRFSGs*

\begin{proof}[Proof of \cref{thm:PRFSGs_relative_to_unitary_oracle}]
    First, we recall \cref{const:PRFSGs}: For a fixed $\cO=(\cS,\cU)$ and inputs $k,x\in\bit^\secp$, $G^\cO(k,x)=\ket{\psi_{2\secp,(k,x)}}$. From \cref{const:PRFSGs}, it is clear $G$ is a QPT algorithm. Thus, it suffices to show that $G$ satisfies quantumly-accessible adaptive security.

    From \cref{remark:considering_only_forward_query}, for the oracle $\cS$, it suffices to consider the forward query.
    Recall that, for fixed $\cO$ and $k\in\bit^\secp$, quantum query to $G^\cO(k,\cdot)$ is defined as follows: let $\sum_x\alpha_x\ket{x}_\regX\ket{\xi_x}_\regZ$ be an overall state of the adversary $\cA$. When $\cA$ queries the register $\regX$, return $\sum_x\alpha_x\ket{x}_\regX\ket{\psi_{2\secp,(k,x)}}_\regY\ket{\xi_x}_\regZ$.
    Similarly, when $\cA$ queries quantumly the register $\regX$ to the ideal oracle $\cH_{\{\ket{\vartheta_x}\}}$, it returns $\sum_x\alpha_x\ket{x}_\regX\ket{\vartheta_{x}}_\regY\ket{\xi_x}_\regZ$.

    To show the security, we want to invoke \cref{lem:Adv_is_negligible_whp}.
    However, since the above oracles $G^\cO(k,\cdot)$ and $\cH_{\{\ket{\vartheta_x}\}}$ are different from $\cT_{2\secp,k}$ and $\cT_{2\secp,\{\ket{\vartheta_x}\}}$, we cannot invoke \cref{lem:Adv_is_negligible_whp} directly.
    For that purpose,
    we construct an algorithm $\cB^{(\cdot),\cS_{2\secp}}$ with the query access to $\cT_{2\secp,k}$ or $\cT_{2\secp,\{\ket{\vartheta_x}\}}$ that simulates $\cA^{(\cdot),\cS_{2\secp}}$ with the query access to $G^\cO(k,\cdot)$ or $\cH_{\{\ket{\vartheta_x}\}}$, respectively:
    \begin{enumerate}
        \item $\cB$ simulates $\cA$. When $\cA$ queries the first oracles, $\cB$ simulates as follows.
        \begin{itemize}
            \item When $\cA$ queries the register $\regX$ to $\cS_{2\secp}$, $\cB$ queries it to $\cS_{2\secp}$.
            \item When $\cA$ queries the register $\regX$ to the first oracle ($G^\cO(k,\cdot)$ or $\cH_{\{\ket{\vartheta_x}\}}$), $\cB$ prepares $\ket{0}_\regA\ket{0^{2\secp}}_\regY$. Then, $\cB$ queries the registers $\regX,\regA$ and $\regY$ to the first oracle ($\cT_{2\secp,k}$ or $\cT_{2\secp,\{\ket{\vartheta_x}\}}$), and removes the register $\regA$. 
        \end{itemize}
         \item $\cB$ outputs what $\cA$ outputs.
    \end{enumerate}
    It is clear that 
    \begin{align}
        \Pr_{k\gets\bit^\secp}[1\gets\cB^{\cT_{2\secp,k},\cS_{2\secp}}]
        =\Pr_{k\gets\bit^\secp}[1\gets\cA^{G^\cO(k,\cdot),\cS_{2\secp}}]
    \end{align}
    and
    \begin{align}
        \Pr_{\ket{\vartheta_1},...,\ket{\vartheta_{2^\secp}}\gets\sigma_{2^{2\secp}}}[1\gets\cB^{\cT_{\{\ket{\vartheta_x}\}},\cS_{2\secp}}]
        =\Pr_{\ket{\vartheta_1},...,\ket{\vartheta_{2^\secp}}\gets\sigma_{2^{2\secp}}}[1\gets\cA^{\cH_{\{\ket{\vartheta_x}\}},\cS_{2\secp}}].
    \end{align}
    Thus, for any adversary $\cA$ with classical advice $y$,
    \begin{align}
        &\Pr_{\cS_{2\secp}\gets\sigma}
        \bigg[\bigg|
        \Pr_{k\gets\bit^\secp}[1\gets\cA^{\ket{G(k,\cdot)},\cS_{2\secp}}(1^\secp,y)]-\Pr_{\ket{\vartheta_1},...,\ket{\vartheta_{2^\secp}}\gets\sigma_{2^{2\secp}}}[1\gets\cA^{\cH_{\{\ket{\vartheta_x}\}},\cS_{2\secp}}(1^\secp,y)]
        \bigg|\ge\frac{1}{2^{\secp/2}}
        \bigg]\notag\\
        =&\Pr_{\cS_{2\secp}\gets\sigma}
        \bigg[\bigg|
        \Pr_{k\gets\bit^\secp}[1\gets\cB^{\cT_{2\secp,k},\cS_{2\secp}}(1^\secp,y)]
        -\Pr_{\ket{\vartheta_1},...,\ket{\vartheta_{2^\secp}}\gets\sigma_{2^{2\secp}}}[1\gets\cB^{\cT_{\{\ket{\vartheta_x}\}},\cS_{2\secp}}(1^\secp,y)]
        \bigg|\ge\frac{1}{2^{\secp/2}}
        \bigg]
        \notag
        \\
        \le&2\exp\bigg(
         -\frac{(2^{2\secp}-2)(2^{-\secp/2}-cT^22^{-\secp})^2}{6144T^2}
        \bigg)\tag{By \cref{lem:Adv_is_negligible_whp} with $p=2^{-\secp/2}$}\\
        \le&\exp\bigg(
        -O\bigg(
         \frac{2^\secp}{T^2}
        \bigg)
        \bigg).\label{eq:PRFSGs_adversary_cannot_win_with_fixed_advice}
    \end{align}
    Hence, for any $T$-query adversary $\cA$ and any polynomial $q$,
    \begin{align}
        &\Pr_{\cS_{2\secp}\gets\sigma}
        \bigg[\text{there exists a }y\in\bit^q \text{ s.t. }
        \notag\\
        &\;\;\bigg|
        \Pr_{k\gets\bit^\secp}[1\gets\cA^{\ket{G(k,\cdot)},\cS_{2\secp}}(1^\secp,y)]-\Pr_{\ket{\vartheta_1},...,\ket{\vartheta_{2^\secp}}\gets\sigma_{2^{2\secp}}}[1\gets\cA^{\cH_{\{\ket{\vartheta_x}\}},\cS_{2\secp}}(1^\secp,y)]
        \bigg|\ge\frac{1}{2^{\secp/2}}
        \bigg]\notag\\
        \le&2^q
        \exp\bigg(
        -O\bigg(
         \frac{2^\secp}{T^2}
        \bigg)
        \bigg)\tag{By the union bound and \cref{eq:PRFSGs_adversary_cannot_win_with_fixed_advice}}\\
        \le&\negl(\secp).
    \end{align}
    Since all $\cS_n$ with $n\neq2\secp$ and $\cU$ are independent of $\cS_{2\secp}$, the above inequality holds even if $\cA$ queries $\cS_n$ with $n\neq2\secp$ and $\cU$. Therefore, by applying the Borel-Cantelli lemma (\cref{lem:Borel-Cantelli}) with $\sum_\secp\negl(\secp)\le\sum_\secp\secp^{-2}\le\infty$, no adversaries with classical advice can distinguish \cref{const:PRFSGs} from independent Haar random states with at least $2^{-\secp/2}$ advantage relative to $\cO$ with probability $1$ over the randomness of $\cO$. This concludes the proof.
\end{proof}

\begin{remark}
    As in the case of \cite{TQC:Kre21}, it is not clear whether \cref{const:PRFSGs} is secure even against adversaries with quantum advice or not. In \cite{TQC:Kre21}, he gives an idea to extend the security proof against adversaries with quantum advice.
    The idea seems to work even in our case, but we leave it to future work.
    For details of the idea, see \cite{TQC:Kre21}.
\end{remark}

\fi
\section{Breaking PRUs}
\label{sec:PRUs}

In this section, we show that
with probability $1$ over $\cO$, non-adaptive and $O(\log\secp)$-ancilla PRUs do not exist relative to $\cO$.
For its proof, we construct a QPT adversary that breaks PRUs relative to $\cO$. 

\subsection{Construction of Adversary}
\label{subsec:adversary_breaking_PRUs}
Let $G^\cO$ be a QPT algorithm that satisfies the correctness condition for $c$-ancilla PRUs. 
In particular, we consider the case $c(\secp) = O(\log \secp)$.
For such $G^\cO$,
let $\{U_k\}_{k\in\cK_\secp}$ be the unitary implemented by $G^\cO$ on input $k\in\cK_\secp$, where $\cK_\secp$ denotes the key-space.
We define the following map:
\begin{align}
    \cM_{\{U_k\},\ell}(\cdot)=\Exp_{k\gets\cK_\secp}U^{\otimes\ell}_k(\cdot)U_k^{\dag\otimes\ell}.
\end{align}
Before constructing the adversary, we introduce some lemmas. 

\begin{restatable}{lemma}{ChoiStates}\label{lem:approx_random_Choi_state}
    Let $T(\secp)$ be a polynomial.
    Let $\epsilon>0$ and $c,d\in\N$.
    Let $\{U_k\}_{k\in\cK_\secp}$ be an ensemble of $\secp$-qubit unitaries each of which is QPT implementable by making $T$ queries to $\cO$ and using $c$ ancilla qubits, where $\cO$ is defined in \cref{def:unitary_orcle}. 
    For all $n\in[d]$, let $\cS'_{n}$ be any unitary satisfying 
    \begin{align}
        \|\cS_{n}(\cdot)\cS_{n}^\dag-\cS_{n}'(\cdot)\cS_{n}^{'\dag}\|_\diamond\le\epsilon.
    \end{align}
    Then, for any polynomial $\ell$, there exists a family 
    $\{V_k\}_{k\in\cK_\secp}$ 
    of $(\secp+c)$-qubit unitaries 
    such that each $V_k$ is QPT implementable with classical descriptions of $\cS'_n$ for all $n\in[d]$ and query access to the $\unitaryPSPACE$-complete oracle $\cU$ such that it satisfies
    \begin{align}
        &\bigg\|
         (\cM_{\{U_k\},\ell}\otimes \identitymap)(\ketbra{\Omega_{2^{\secp\ell}}}{\Omega_{2^{\secp\ell}}})-
         (\cE_{\{V_k\},\ell}\otimes \identitymap)(\ketbra{\Omega_{2^{\secp\ell}}}{\Omega_{2^{\secp\ell}}})
        \bigg\|_1
        \le O(\ell T\epsilon)+O\bigg(\frac{2^{c/2}\ell T}{2^{d/2}}\bigg).
        \label{eq:goal;approx_random_Choi_state;ancilla}
    \end{align}
    Here, $\cE_{\{V_k\},\ell}$ is a CPTP map acting on $\secp\ell$ qubits defined as follows:
    \begin{align}
        \cE_{\{V_k\},\ell}((\cdot)_\regA)\coloneqq
        \Tr_{\regB}
        \bigg[
         \Exp_{k\gets\cK_\secp}V^{\otimes\ell}_{k,\regA\regB}((\cdot)_\regA\otimes\ketbra{0^{c}}{0^c}^{\otimes\ell}_{\regB})
         V^{\dag\otimes\ell}_{k,\regA\regB}
        \bigg],
    \end{align}
    where $\regA$ is a $\secp\ell$-qubit register, and $\regB$ is a $c\ell$-qubit register.
\end{restatable}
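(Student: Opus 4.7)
My plan is to construct $V_k$ by replacing the $\cS$-queries inside a circuit for $U_k$, and then bound the Choi-state error via a two-step hybrid argument. First I fix a unitary implementation $\tilde{U}_k$ of $U_k$ on $\secp+c$ qubits and decompose it as $\tilde{U}_k=W_{T+1}\,Q_T\,W_T\cdots W_2\,Q_1\,W_1$, where each $W_i$ is an internal QPT unitary (implementable using only $\cU$-queries) and each $Q_i$ is a single oracle query either to some $\cS_{n_i,m_i}$ or to $\cU$. The algorithm for $V_k$ keeps the $W_i$'s and the $\cU$-queries intact, replaces $Q_i=\cS_{n_i,m_i}$ by the tomographic approximation $\cS'_{n_i,m_i}$ (accessed via its classical description) when $n_i\le d$, and replaces it by the identity when $n_i>d$; this uses only classical descriptions of the $\cS'_n$ for $n\in[d]$ and queries to $\cU$, as required.

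For the first hybrid step I swap each small-$n$ query for $\cS'_{n_i,m_i}$. Each replacement costs at most $\epsilon$ in diamond norm by hypothesis; telescoping over the $T$ queries in one copy of $\tilde{U}_k$ and over the $\ell$ tensor copies inside $\cM_{\{U_k\},\ell}$ yields a diamond-norm error of $O(\ell T\epsilon)$ between the original channel and the intermediate channel. Applying $\otimes\,\identitymap$ to the maximally entangled input and using $\|(\cN-\cN')\otimes\identitymap(\ket{\Omega_{2^{\secp\ell}}}\bra{\Omega_{2^{\secp\ell}}})\|_1\le\|\cN-\cN'\|_\diamond$ propagates this to the first term of \eqref{eq:goal;approx_random_Choi_state;ancilla}.

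The main obstacle is the second hybrid step, where I replace each large-$n$ query by the identity: here $\|\cS_{n,m}-I\|_\infty=\Theta(1)$, so the diamond-norm argument is useless and I must exploit the structure of the maximally entangled Choi input. My plan is to first compute $(\cS_{n,m}-I)^\dagger(\cS_{n,m}-I)=4\ket{\eta}\bra{\eta}$ with $\ket{\eta}=(\ket{0}\ket{0^n}-\ket{1}\ket{\psi_{n,m}})/\sqrt{2}$, so that for any $\ket{\phi}$ whose reduced density on the $n+1$ swap qubits is $I/2^{n+1}$, one has $\|(\cS_{n,m}-I)\ket{\phi}\|\le 2/2^{(n+1)/2}$. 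To account for the $c$ ancilla qubits, I view the Choi input $\ket{\Omega_{2^{\secp}}}_{L_jR_j}\ket{0^c}_{A_j}$ of each copy $j$ as the post-selection, up to the scalar $\sqrt{2^c}$, of the genuine maximally entangled state $\ket{\Omega'_{L_jA_j,R_jA'_j}}$ on the enlarged bipartition $(L_jA_j,R_jA'_j)$. Since the $\ell$ copies of the Choi input factorise, replacing a single query inside copy $j$ does not touch the other copies (whose $\tilde{U}_k$-blocks drop out by unitarity on unit vectors), and the per-query pure-state error is $O(2^{c/2}/2^{(n_i+1)/2})\le O(2^{c/2}/2^{d/2})$. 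Summing over all $\ell T$ large-$n$ replacements, passing from pure-state norm to Choi trace distance via $\|\Tr_A\ket{\chi_1}\bra{\chi_1}-\Tr_A\ket{\chi_2}\bra{\chi_2}\|_1\le 2\|\ket{\chi_1}-\ket{\chi_2}\|$, and using convexity in $k$, gives the second term $O(2^{c/2}\ell T/2^{d/2})$. Combining both hybrids by the triangle inequality then yields \eqref{eq:goal;approx_random_Choi_state;ancilla}.

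The delicate points to verify are (i) that the post-selection inflation factor is $2^{c/2}$ per replaced query rather than $2^{c\ell/2}$, which relies on factorising the Choi input across copies so that the $\tilde{U}_k$-blocks on untouched copies are norm-preserving, and (ii) that the tomographic approximations $\cS'_n$ can be chosen unitary with their global phase synchronised to $\cS_n$ so that the diamond-norm bound transfers cleanly to a pure-state norm bound inside the hybrid.
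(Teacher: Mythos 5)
Your proof plan matches the paper's approach in all essential respects: a two-step hybrid (the $\cS\to\cS'$ replacements bounded via diamond norm, the $\cS\to I$ replacements bounded via a pure-state argument on the dilated Choi state), with the post-selection trick to convert the $\ket{0^c}$ ancilla into a full maximally entangled state at the cost of a $2^{c/2}$ inflation. The key computation $(\cS_{n,m}-I)^\dagger(\cS_{n,m}-I)=4\ketbra{\eta}{\eta}$ is a slightly more streamlined version of the paper's projection $\Pi_n$ argument and yields the same $O(2^{-n/2})$ bound; your telescoping over all $\ell T$ queries directly in the $\ell$-copy state is equivalent to the paper's route of bounding a single copy and then applying $\|\rho^{\otimes\ell}-\sigma^{\otimes\ell}\|_1\le\ell\|\rho-\sigma\|_1$. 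The order of the two hybrid legs is swapped relative to the paper, which is immaterial. Two small notes: (a) the circuit queries the full $(2n+1)$-qubit oracle $\cS_n$, not $\cS_{n,m}$ for a fixed $m$; the calculation still gives the same bound since $\cS_n-I=-2\sum_m\ketbra{m}{m}\otimes\ketbra{\eta_m}{\eta_m}$ has rank $2^n$ inside a $2^{2n+1}$-dimensional space, so the ratio is unchanged, but the bookkeeping should be on $\cS_n$. (b) Your ``delicate point (ii)'' about phase-synchronising $\cS'_n$ is a non-issue, because the $\cS\to\cS'$ leg is handled purely at the channel level via the diamond-norm hypothesis and never descends to a pure-state comparison; only the $\cS\to I$ leg needs the pure-state argument, and there the exact $\cS_n$ appears.
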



\begin{restatable}{lemma}{PRUdistinguisher}\label{lem:PRU_distinguisher}
    Let $\{V_k\}_{k\in\cK_\secp}$ be a family of $(\secp+c)$-qubit unitaries in \cref{lem:approx_random_Choi_state}.
    Let $\ell(\secp)\coloneqq\lceil\log|\cK_\secp|\rceil$.
    Then, there exists a QPT algorithm $\cD^{\cU}$ that, on input classical descriptions of $\cS'_n$ for all $n\in[d]$, distinguishes $(\cE_{\{V_k\},\ell}\otimes \identitymap)(\ketbra{\Omega_{2^{\secp\ell}}}{\Omega_{2^{\secp\ell}}})$ from $(\cM_{\{U_k\},\ell,}\otimes \identitymap)(\ketbra{\Omega_{2^{\secp\ell}}}{\Omega_{2^{\secp\ell}}})$
    with advantage at least $1-\negl(\secp)$.
\end{restatable}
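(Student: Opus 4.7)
The plan is to construct $\cD^\cU$ by implementing the projection $Q$ onto the support of $\rho_V \coloneqq (\cE_{\{V_k\},\ell}\otimes\identitymap)(\ketbra{\Omega_{2^{\secp\ell}}}{\Omega_{2^{\secp\ell}}})$ via singular value discrimination (\cref{thm:singular_value_discrimination}) and using the resulting outcome as the distinguisher's output. Then $\Tr[Q\rho_V]=1$ will hold by construction, while $\Tr[Q\rho_0]$ is negligible for the Haar-averaged Choi state $\rho_0 \coloneqq (\cM_{\mu_{2^\secp},\ell}\otimes\identitymap)(\ketbra{\Omega_{2^{\secp\ell}}}{\Omega_{2^{\secp\ell}}})$ because the support of $\rho_V$ is low-rank.

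First I will build a block encoding of $\rho_V$ that lies in $\pureunitaryPSPACE$ relative to $\cU$. A purification of $\rho_V$ is obtained by preparing $|\cK_\secp|^{-1/2}\sum_k\ket{k}$ on a key register $\regRK$ and applying the controlled unitary $\sum_k\ketbra{k}{k}_{\regRK}\otimes V_k^{\otimes\ell}$ to the first half of $\ket{\Omega_{2^{\secp\ell}}}$ together with $c\ell$ ancilla qubits initialized to $\ket{0^{c\ell}}$. Each $V_k$ is QPT-implementable from the classical descriptions $\{\cS'_n\}_{n\in[d]}$ and query access to $\cU$, and by \cref{remark:contrlizatin_of_pureUnitaryPSPACE} the controlled version of $\cU$ also lies in $\pureunitaryPSPACE$, so the entire purification unitary is in $\pureunitaryPSPACE$. \cref{lem:block-encoding_of_state} then turns this purification into a block encoding of $\rho_V$ that is itself in $\pureunitaryPSPACE$.

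Next I invoke \cref{thm:singular_value_discrimination} on this block encoding with $a=0$ and $b$ equal to a positive lower bound on the smallest nonzero eigenvalue of $\rho_V$. Although $T(a,b)=2^{\poly(\secp)}$ is exponentially large, the composite unitary executed by the SVD procedure still acts on only $\poly(\secp)$ qubits and therefore itself lies in $\pureunitaryPSPACE$. By \cref{lem:unitaryPSPACE_has_a_complete_problem} it admits a QPT approximation with oracle access to $\cU$ up to exponentially small diamond error. Composing with the flag measurement produces $\cD^\cU$, whose acceptance probability on any input $\xi$ is within $\negl(\secp)$ of $\Tr[Q\xi]$, where $Q$ is the projection onto the span of right singular vectors with singular value at least $b$, which coincides with the support of $\rho_V$.

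Finally I will bound both acceptance probabilities. One has $\Tr[Q\rho_V]=1$ by definition of $Q$. For $\rho_0$, \cref{lem:Haar_state_vs_Haar_choi_state} replaces it with $\Exp_{\ket{\psi}\gets\sigma_{2^{2\secp}}}\ketbra{\psi}{\psi}^{\otimes\ell}$ up to trace distance $O(\ell^2/2^\secp)$, and \cref{lem:Haar_states} rewrites the latter as $\Pi_\symetric/\binom{2^{2\secp}+\ell-1}{\ell}$, giving
\begin{align}
\Tr[Q\rho_0]\;\le\;\frac{2^{(1+c)\ell}}{\binom{2^{2\secp}+\ell-1}{\ell}} + O\!\left(\frac{\ell^2}{2^\secp}\right) = \negl(\secp),
\end{align}
using that the rank of $\rho_V$ is bounded by $|\cK_\secp|\cdot 2^{c\ell}\le 2^{(1+c)\ell}$ and that $c=O(\log\secp)$ makes $2^{(1+c)\ell}$ vastly smaller than $2^{2\secp\ell}\le\binom{2^{2\secp}+\ell-1}{\ell}$. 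The main obstacle is step two: one must certify that the SVD procedure with an exponentially small threshold $b$ and $2^{\poly(\secp)}$ calls to the block encoding compiles into a single $\pureunitaryPSPACE$ circuit, which \cref{lem:unitaryPSPACE_has_a_complete_problem} then allows $\cD^\cU$ to simulate efficiently via $\cU$.
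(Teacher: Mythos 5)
Your high-level idea is the same as the paper's (build a block encoding of $\rho_V$, use the singular value discrimination algorithm to implement the support projection $Q$, and compare $\Tr[Q\rho_V]$ with $\Tr[Q\rho_0]$), but the specific way you invoke \cref{thm:singular_value_discrimination} has a genuine gap, centered on your choice of thresholds $a$ and $b$.

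You set $a=0$ and $b$ equal to ``a positive lower bound on the smallest nonzero eigenvalue of $\rho_V$.'' The problem is that no such lower bound is available. The state $\rho_V$ is a normalized sum of up to $|\cK_\secp|\cdot 2^{c\ell}$ rank-one projectors with no control over how close to parallel they are, so its smallest nonzero eigenvalue can be arbitrarily small---there is no a priori bound of the form $2^{-\poly(\secp)}$. Consequently, your claim that $T(a,b)=2^{\poly(\secp)}$ and hence the compiled SVD circuit sits in $\pureunitaryPSPACE$ is not justified. A second and related problem is that the block encoding $V_\secp$ you construct is only a $(1,2^{-p},\poly(\secp))$-block encoding: the encoded matrix $M=(\bra{0^{\poly}}\otimes I)V_\secp(\ket{0^{\poly}}\otimes I)$ satisfies $\|M-\rho_V\|_\infty\le 2^{-p}$ but is not equal to $\rho_V$. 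Thus the span of right singular vectors of $M$ with singular value $\ge b$ does \emph{not} coincide with the support of $\rho_V$, contrary to what you assert; $M$ can have spurious singular values of order $2^{-p}$ in directions orthogonal to $\supp(\rho_V)$, and its singular vectors inside $\supp(\rho_V)$ can have components sticking out of $\supp(\rho_V)$. Finally, even with an exact block encoding, the input $\rho_0$ is not in the promise subspace $W_0$---it has a tiny but nonzero overlap with $\supp(\rho_V)$---so \cref{thm:singular_value_discrimination} does not directly give ``acceptance probability within $\negl(\secp)$ of $\Tr[Q\xi]$ on any input''; the guarantee is only for inputs satisfying the promise.

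The paper's proof of this lemma (through \cref{lem:distinguisher}) avoids all three issues by choosing $a=2^{-3n}$, $b=2^{-2n}$, $p=4n$ (with $n$ the number of qubits of the state), all independent of the unknown spectrum of $\rho_V$. It then proves two quantitative facts: \cref{lem:RSV_subspace_and_support} shows that $\rho_V$ has weight at least $1-2^{n-p+1}-2^n\epsilon$ on the $\Pi_{\ge\epsilon}$ subspace of the approximate block encoding, and \cref{lem:negligible_overlap_with_large_RSVs} shows that any vector orthogonal to $\supp(\rho_V)$ has at most $2^{-p}\epsilon^{-1}$ overlap with $\Pi_{\ge\epsilon}$. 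Combined with the gentle measurement lemma (to post-select each input onto the promise subspace with negligible disturbance), this makes the SVD argument go through. You should replace your step ``invoke SVD with $a=0$ and $b$ equal to the smallest nonzero eigenvalue'' by this explicit three-threshold scheme and the corresponding perturbation lemmas.
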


With these ingredients at hand, we can now give an adversary to break PRUs, which implies the second item in \cref{thm:main}:


\begin{algorithm}
    \caption{Adversary distinguishing $\{U_k\}_{k\in\cK_\secp}$ from Haar random unitary.}
    \label{alg:break_PRU} 
    \vspace{2mm}
    \AlgOracle 
    The algorithm has query access to
    \begin{itemize}
        \item a unitary $W\in\Unitaries(2^\secp)$ which is whether $U_k$ or a Haar random unitary.
        
        \item the oracle $\cO=(\cS,\cU)$ and its inverse defined in \cref{def:unitary_orcle};
        
    \end{itemize}
    \AlgInput The algorithm takes the security parameter $1^\secp$ as input.

    Define $\ell\coloneqq\lceil\log|\cK_\secp|\rceil$ and $d\coloneqq 2\log(\ell Tp)+c$, where $p$ is a polynomial, and $T$ is the number of queries to $\cO$ to implement $U_k$.

    \vspace{2mm}
    \begin{enumerate}
        \item \label{algstep:process_tomography}
        For $n\in[d]$, run the process tomography algorithm in \cref{thm:process_tomography_HKOT23} on inputs $\epsilon\coloneqq\frac{1}{\ell Tp}$ and $\eta\coloneqq2^{-\secp-1}$ for $\cS_{n}$ to get a classical description of $\cS'_{n}$. Note that $\|\cS_{n}(\cdot)\cS_{n}^\dag-\cS'_{n}(\cdot)\cS'^\dag_{n}\|_\diamond\le\epsilon$ holds with probability at least $1-2^{-\secp}$ over the randomness of the process tomography algorithm.
        
        \item Prepare $(U^{\otimes\ell}\otimes I)\ket{\Omega_{2^{\secp\ell}}}$ by querying $U$.\label{algstep:prepare_Choi_state}

        \item \label{algstep:measurement}
        Let $\{V_k\}_{k\in\cK_\secp}$ be a family of unitaries in \cref{lem:approx_random_Choi_state}. Note that each $V_k$ is QPT implementable with access to $\cU$ and classical descriptions of $S'_{n}$ for all $n\in[d]$, where such classical descriptions are obtained in the step \ref{algstep:process_tomography}.
        Let $\cD^{(\cdot)}$ be a QPT algorithm in \cref{lem:PRU_distinguisher} for $\{V_k\}_{k\in\cK_\secp}$. 
        By querying $\cU$, run $\cD^\cU$ on input $(U^{\otimes\ell}\otimes I)\ket{\Omega_{2^{\secp\ell}}}$ and classical descriptions of $S'_{n}$ for all $n\in[d]$ to get $b\in\bit$. 
    \end{enumerate} 

    \AlgOutput The algorithm outputs $b$.
\end{algorithm}

\begin{restatable}{theorem}{BreakPRU}\label{thm:break_PRU}
Consider $c(\secp)=O(\log\secp)$
Let $\cO$ be a fixed oracle defined in \cref{def:unitary_orcle}.
Let $G^\cO$ be a QPT algorithm that satisfies the correctness of $c(\secp)$-ancilla PRU.
For such $G^\cO$,
let $\{U_k\}_{k\in\cK_\secp}$ be the unitary implemented by $G^\cO$ on input $k\in\cK_\secp$, where $\cK_\secp$ denotes the key-space.
Then, for any polynomial $p$, there exists a QPT adversary $\cA^{(\cdot,\cdot)}$ such that
    \begin{align}
        \bigg|\Pr_{k\gets\cK_\secp}[1\gets\cA^{U_k,\cO}(1^\secp)]-\Pr_{U\gets\mu_{2^\secp}}[1\gets\cA^{U,\cO}(1^\secp)]\bigg|
        \ge 1-O\bigg(\frac{1}{p(\secp)}\bigg).
    \end{align}
    Moreover, $\cA^{(\cdot),\cO}$ queries the first oracle non-adaptively.
\end{restatable}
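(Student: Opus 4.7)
The plan is to combine the three steps of Algorithm \ref{alg:break_PRU} using a triangle inequality. Throughout, I write $\ell=\lceil\log|\cK_\secp|\rceil$, $T$ for the number of $\cO$-queries used by the PRU generator, $c=O(\log\secp)$ for its ancilla count, and fix $d \coloneqq 2\log(\ell T p) + c$ and $\epsilon \coloneqq 1/(\ell T p)$. Both $d = O(\log\secp)$ and $1/\epsilon = \poly(\secp)$.

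First, I would bound the error from step 1 (process tomography). Since each $\cS_n$ with $n \le d$ acts on $2n+1 = O(\log\secp)$ qubits, its Hilbert-space dimension is $\poly(\secp)$. Applying \cref{thm:process_tomography_HKOT23} with the chosen $\epsilon$ and $\eta = 2^{-\secp-1}$ runs in polynomial time and produces a classical description of $\cS'_n$ satisfying $\|\cS_n(\cdot)\cS_n^\dag - \cS'_n(\cdot)\cS'^\dag_n\|_\diamond \le \epsilon$, except with probability $2^{-\secp-1}$. A union bound over the $d = O(\log\secp)$ values of $n$ shows that all the required inequalities hold except with probability $\negl(\secp)$.

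Next, I would invoke \cref{lem:approx_random_Choi_state} with these $\cS'_n$ to obtain a QPT-implementable family $\{V_k\}_{k \in \cK_\secp}$ such that the $\cM_{\{U_k\},\ell}$-Choi state and the $\cE_{\{V_k\},\ell}$-Choi state lie within trace distance $O(\ell T \epsilon) + O(2^{c/2}\ell T / 2^{d/2})$. The first term equals $O(1/p)$ by the choice of $\epsilon$; the second also equals $O(1/p)$ because $2^{d/2} = 2^{c/2}\ell T p$. In step 2, the single non-adaptive parallel query to $W$ produces $(W^{\otimes\ell}\otimes I)\ket{\Omega_{2^{\secp\ell}}}$, which averages to the $\cM_{\{U_k\},\ell}$-Choi state when $W=U_k$ with $k$ uniform, and to the $\cM_{\mu_{2^\secp},\ell}$-Choi state when $W$ is Haar random.

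Finally, in step 3 the algorithm $\cD^{\cU}$ from \cref{lem:PRU_distinguisher} (implementable efficiently because each $V_k$ uses only $\cU$ and the classical descriptions of the $\cS'_n$ obtained in step 1) is run; it distinguishes the $\cE_{\{V_k\},\ell}$-Choi state from the Haar-Choi state with advantage $1-\negl(\secp)$. Combining the $\negl(\secp)$ failure probability of step 1, the $O(1/p)$ trace-distance gap between the $\cM_{\{U_k\},\ell}$- and $\cE_{\{V_k\},\ell}$-Choi states, and the $1-\negl(\secp)$ distinguishing advantage from \cref{lem:PRU_distinguisher}, the triangle inequality yields a distinguishing advantage of at least $1-O(1/p(\secp))$ between the $W=U_k$ and Haar cases. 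Non-adaptivity is immediate because $W$ is only queried in the single parallel step 2. The only delicate point is the parameter balancing in the choice of $\epsilon$ and $d$ so that both terms in the bound of \cref{lem:approx_random_Choi_state} are $O(1/p)$ while keeping $d$ logarithmic so that step 1 remains efficient; once this balance is verified, the rest is straightforward bookkeeping.
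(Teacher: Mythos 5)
Your proposal is correct and mirrors the paper's proof almost exactly: both condition on the success of the tomography step (negligible failure probability), use \cref{lem:approx_random_Choi_state} with the parameter choices $\epsilon=1/(\ell Tp)$ and $d=2\log(\ell Tp)+c$ to bound the trace distance between the $\cM_{\{U_k\},\ell}$-Choi state and the $\cE_{\{V_k\},\ell}$-Choi state by $O(1/p)$, then invoke \cref{lem:PRU_distinguisher} to separate the latter from the Haar Choi state with advantage $1-\negl(\secp)$, and finish by the triangle inequality. The parameter bookkeeping (both error terms in \cref{lem:approx_random_Choi_state} being $O(1/p)$, $2^d=\poly(\secp)$ so step 1 is QPT, and non-adaptivity from the single parallel query) matches the paper's argument.
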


\ifnum\submission=1
We give an adversary in \cref{alg:break_PRU}.
Since the proof follows from a straightforward calculation, we give the proof in the supplemental material.
\fi

\ifnum\submission=0
\begin{proof}[Proof of \cref{thm:break_PRU}]
    We construct $\cA^{(\cdot,\cdot)}$ as in \cref{alg:break_PRU}.
    It is clear that $\cA^{(\cdot,\cdot)}$ is a QPT algorithm. Step~\ref{algstep:process_tomography} runs in QPT because $2^d = (\ell T p)^2 + 2^c \leq \poly(\secp)$, given that $c = O(\log \secp)$.  
    Steps~\ref{algstep:prepare_Choi_state} and \ref{algstep:measurement} also run in QPT, as established in \cref{lem:PRU_distinguisher}.

    Assume that the tomography is successful in the step \ref{algstep:process_tomography}, namely, we have 
    $\|\cS_{n}(\cdot)\cS_{n}^\dag-\cS'_{n}(\cdot)\cS'^\dag_{n}\|_\diamond\le\epsilon$ for all $n\in[d]$. For notational simplicity, let $E$ denote this event. Note that
    \begin{align}
        \Pr[E]\ge1-2^{-\secp}\ge1-\negl(\secp)\label{eq:prob_tomography_succeeds}
    \end{align}
    from \cref{thm:process_tomography_HKOT23}. Thus, it suffices to show that $\cA$ can distinguish $\{U_k\}_k$ from Haar random unitaries when the tomography succeeds.
    Recall that $\{V_k\}_k$ is a family of unitaries in the step \ref{algstep:measurement}, and $\cD^{(\cdot)}$ is a QPT algorithm in the step \ref{algstep:measurement} for $\{V_k\}_k$.
    Note that
    \begin{align}
        \Pr_{k\gets\cK_\secp}[1\gets\cA^{U_k,\cO}(1^\secp)|E]
        =\Pr\bigg[
         1\gets\cD^\cU\bigg(
          (\cM_{\{U_k\},\ell}\otimes \identitymap)(\ketbra{\Omega_{2^{\secp\ell}}}{\Omega_{2^{\secp\ell}}})
         \bigg)
        \bigg]\label{eq:probability_when_querying_PRU}
    \end{align}
    and
    \begin{align}
        \Pr_{U\gets\mu_{2^\secp}}[1\gets\cA^{U,\cO}(1^\secp)|E]
        =\Pr\bigg[
         1\gets\cD^\cU\bigg(
         (\cM_{\mu_{2^\secp,\ell}}\otimes \identitymap)(\ketbra{\Omega_{2^{\secp\ell}}}{\Omega_{2^{\secp\ell}}})
         \bigg)
        \bigg].\label{eq:probability_when_querying_Haar}
    \end{align}
    We prove our claim by the standard hybrid argument. 
    First, we replace $\rho_0\coloneqq(\cM_{\{U_k\},\ell}\otimes \identitymap)(\ketbra{\Omega_{2^{\secp\ell}}}{\Omega_{2^{\secp\ell}}})$ in \cref{eq:probability_when_querying_PRU} with $\rho_1\coloneqq(\cE_{\{V_k\},\ell}\otimes \identitymap)(\ketbra{\Omega_{2^{\secp\ell}}}{\Omega_{2^{\secp\ell}}})$. We obtain the following claim.
    
    \begin{claim}\label{claim:hyb0_to_hyb1}
        \begin{align}
            &\bigg|
             \Pr\bigg[
              1\gets\cD^\cU(\rho_0)
             \bigg]
             -\Pr\bigg[
              1\gets\cD^\cU(\rho_1)
             \bigg]
            \bigg|
            \le O\bigg(\frac{1}{p(\secp)}\bigg).
            \notag
        \end{align}
    \end{claim}

    \begin{proof}[Proof of \cref{claim:hyb0_to_hyb1}]
        From \cref{lem:approx_random_Choi_state}, we have
         \begin{align}
         \frac{1}{2}\|
          \rho_0-
          \rho_1
         \|_1
         \le O(\ell T\epsilon)+O\bigg(\frac{2^{c/2}\ell T}{2^{d/2}}\bigg)
         \le O\bigg(\frac{1}{p(\secp)}\bigg).
         \notag
         \end{align}
         Here, $\epsilon=\frac{1}{\ell Tp}$ and $d=2\log(\ell Tp)+c$. 
    \end{proof}

    Next, we replace $\rho_1$ with 
    $\rho_2\coloneqq(\cM_{\mu_{2^\secp},\ell}\otimes \identitymap)(\ketbra{\Omega_{2^{\secp\ell}}}{\Omega_{2^{\secp\ell}}})$. 
    We have the following claim. Because its proof is straightforward from \cref{lem:PRU_distinguisher}, we omit it.

    \begin{claim}\label{claim:hyb1_to_hyb2}
        \begin{align}
            &\bigg|\Pr\bigg[1\gets\cD^\cU(\rho_1)\bigg]-
              \Pr\bigg[1\gets\cD^\cU(\rho_2))\bigg]
            \bigg|
            \ge 1-\negl(\secp).
            \notag
        \end{align}
    \end{claim}

    Combing \cref{eq:prob_tomography_succeeds,eq:probability_when_querying_PRU,eq:probability_when_querying_Haar} with \cref{claim:hyb0_to_hyb1,claim:hyb1_to_hyb2}, we have
    \begin{align}
        &\bigg|\Pr_{k\gets\cK_\secp}[1\gets\cA^{U_k,\cO}(1^\secp)]-\Pr_{U\gets\mu_{2^\secp}}[1\gets\cA^{U,\cO}(1^\secp)]\bigg|
        \notag\\
        \ge&\Pr[E]\bigg|\Pr_{k\gets\cK_\secp}[1\gets\cA^{U_k,\cO}(1^\secp)|E]-\Pr_{U\gets\mu_{2^\secp}}[1\gets\cA^{U,\cO}(1^\secp)|E]\bigg|
        -2\Pr[\Bar{E}]
        \notag\\
        \ge&(1-\negl(\secp))
        \bigg(
         1-O\bigg(\frac{1}{p(\secp)}\bigg)-\negl(\secp)
        \bigg)
        -\negl(\secp)
        \notag\\
        \ge& 1-O\bigg(\frac{1}{p(\secp)}\bigg),
    \end{align}
    which concludes the proof.
\end{proof}
\fi

\subsection[Proof of Lemma~\ref{lem:approx_random_Choi_state}]{Proof of \cref{lem:approx_random_Choi_state}}
\label{subsec:proof_of_approx_random_Choi_state}

In this subsection, we prove \cref{lem:approx_random_Choi_state}. 
To show it, we remove queries to $\cS_n$ for all small $n$. 
To this end, we need the following lemma.
\ifnum\submission=1
Since the proof follows from a straightforward calculation, we give the proof in the supplemental material.
\fi

\begin{restatable}{lemma}{SwapChoi}\label{lem:swap_unitary_is_almost_identity_for_Choi_state;ancilla}
    Let $c'\in\N$.
    Let $2n+1\le\secp+c'$.
    Consider $\cS_n$ defined in~\cref{def:unitary_orcle}. 
    Then, for any $U,V\in\Unitaries(2^{\secp+c'})$,
    \begin{align}
        \frac{1}{2}\|\ketbra{\psi}{\psi}-\ketbra{\phi}{\phi}\|_1 \le O\bigg(\frac{2^{c'/2}}{2^{n/2}}\bigg),
    \end{align}
    where 
    \begin{align}
        \ket{\psi}_{\regA\regA'\regB} & \coloneqq ( U ( \cS_{n} \otimes I ) V )_{\regA\regB} \otimes I_{\regA'} \cdot \ket{\Omega_{2^\secp}}_{\regA\regA'}\ket{0^{c'}}_{\regB}, \\
        \ket{\phi}_{\regA\regA'\regB} & \coloneqq ( U V )_{\regA\regB} \otimes I_{\regA'} \cdot \ket{\Omega_{2^\secp}}_{\regA\regA'} \ket{0^{c'}}_{\regB}.
    \end{align}
    Here, $\regA,\regA'$ are $\secp$-qubit registers, respectively, and $\regB$ is a $c'$-qubit register.
\end{restatable}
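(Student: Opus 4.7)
The plan is to reduce the trace distance to a Hilbert-space norm and then exploit the low-rank structure of $\cS_n - I$. First, I would use that $\frac{1}{2}\|\ketbra{\psi}{\psi}-\ketbra{\phi}{\phi}\|_1 \le \|\ket{\psi}-\ket{\phi}\|$ for pure states, and that $U$ is unitary and therefore can be dropped. This reduces the problem to bounding $\|((\cS_n \otimes I) - I)\ket{\chi}\|$, where $\ket{\chi} \coloneqq (V_{\regA\regB} \otimes I_{\regA'}) \ket{\Omega_{2^\secp}}_{\regA\regA'}\ket{0^{c'}}_{\regB}$. The condition $2n+1 \le \secp+c'$ ensures that $\cS_n$ indeed acts inside $\regA\regB$, and I will write $\regC$ for the remaining $\secp+c'-2n-1$ qubits of $\regA\regB$.

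Next, I would compute $(\cS_n - I)^\dagger(\cS_n - I)$ explicitly using the structure from \cref{def:unitary_orcle}. Each block $\cS_{n,m}$ is a Hermitian involution which equals $I$ outside the two-dimensional subspace $\Span\{\ket{0}\ket{0^n},\,\ket{1}\ket{\psi_{n,m}}\}$ and acts as the swap inside it. Hence $\cS_{n,m} - I = -2 P_m$ where $P_m$ is the rank-one projector onto $\tfrac{1}{\sqrt{2}}(\ket{0}\ket{0^n} - \ket{1}\ket{\psi_{n,m}})$, which gives
\begin{align}
(\cS_n - I)^\dagger(\cS_n - I) = 4\tilde{P}, \qquad \tilde{P} \coloneqq \sum_{m\in\bit^n}\ketbra{m}{m}\otimes P_m,
\end{align}
a projector of rank $2^n$ on $2n+1$ qubits.

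Finally, I would use that the reduced state of $\ket{\chi}$ on $\regA\regB$ equals $\rho = V(\tfrac{I_\regA}{2^\secp}\otimes \ketbra{0^{c'}}{0^{c'}}_\regB)V^\dagger$, so that $\|((\cS_n\otimes I)-I)\ket{\chi}\|^2 = 4\Tr[(\tilde{P}\otimes I_\regC)\rho]$. Setting $W \coloneqq V^\dagger(\tilde{P}\otimes I_\regC)V$, which is a positive operator with $\|W\|_\infty \le 1$ and $\Tr[W] = 2^{\secp+c'-n-1}$, cyclicity of the trace yields
\begin{align}
\|((\cS_n\otimes I)-I)\ket{\chi}\|^2 \;=\; \frac{4}{2^\secp}\Tr\!\left[W\bigl(I_\regA\otimes\ketbra{0^{c'}}{0^{c'}}_\regB\bigr)\right] \;\le\; \frac{4}{2^\secp}\Tr[W] \;=\; \frac{2^{c'+1}}{2^n},
\end{align}
and taking square roots gives the advertised bound. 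The only delicate point, and the one I expect to state with the most care, is the third-step inequality: although $V$ is arbitrary and $\rho$ can have rank up to $2^\secp$, the projection onto $\ket{0^{c'}}$ inside $\rho$ produces the crucial factor $1/2^\secp$, and this is exactly what tames the $2^{\secp+c'-n-1}$ trace of $W$ into a bound of the required size.
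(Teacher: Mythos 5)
Your proof is correct, and the asymptotic bound comes out the same, $O(2^{c'/2}/2^{n/2})$. Your route is genuinely different from the paper's, though: the paper first lifts the state to one built from the \emph{full} maximally entangled state $\ket{\Omega_{2^{\secp+c'}}}$ over registers $\regA\regB\regA'\regB'$ (where $\regB'$ is a fresh $c'$-qubit copy), proves the cleaner bound $\|\ket{\psi'}-\ket{\phi'}\| \le O(2^{-n/2})$ there using the projection $\Pi_n = \sum_m \ketbra{m}{m}\otimes I_\bot^{n,m}$ and the identity $\|(A\otimes I)\ket{\Omega_D}\|^2 = \Tr[A^\dagger A]/D$, and then recovers $\ket{\psi},\ket{\phi}$ by post-selecting $\regB'$ on $\ket{0^{c'}}$, paying the $\sqrt{2^{c'}}$ factor via $\|I\otimes\bra{0^{c'}}\|_\infty = 1$. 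You instead work directly with the state as given, identify $\cS_n - I = -2\tilde P$ as (minus twice) a rank-$2^n$ projector, and bound $4\Tr[(\tilde P\otimes I_\regC)\rho]$ by cyclicity of trace and $\Tr[W\cdot(I\otimes\ketbra{0^{c'}}{0^{c'}})] \le \Tr[W]$ with $W = V^\dagger(\tilde P\otimes I_\regC)V$. Both arguments hinge on the same structural fact — the support of $I-\cS_n$ has exponentially small dimension compared with $2^{2n+1}$, which combined with the maximally-mixed reduced state on $\regA$ kills the dependence on the arbitrary $V$ — but your realization avoids the auxiliary register and the post-selection step entirely, and is in fact marginally tighter (your projector $\tilde P$ has rank $2^n$ rather than the paper's rank-$2^{n+1}$ support projector $I-\Pi_n$, though this is absorbed by the big-$O$). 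The paper's two-step decomposition is arguably more reusable because the intermediate bound over $\ket{\Omega_{2^{\secp+c'}}}$ is a clean standalone statement, but for proving exactly this lemma your approach is the more economical one.
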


\ifnum\submission=0
\begin{proof}[Proof of \cref{lem:swap_unitary_is_almost_identity_for_Choi_state;ancilla}]
    Define the following states:
    \begin{align}
        \ket{\psi'}_{\regA\regB\regA'\regB'} & \coloneqq ( U ( \cS_{n} \otimes I ) V )_{\regA\regB} \otimes I_{\regA'\regB'} \cdot \ket{\Omega_{2^{\secp+c'}}}_{\regA\regB\regA'\regB'}, \\
        \ket{\phi'}_{\regA\regB\regA'\regB'} & \coloneqq ( U V )_{\regA\regB} \otimes I_{\regA'\regB'} \cdot \ket{\Omega_{2^{\secp+c'}}}_{\regA\regB\regA'\regB'},
    \end{align}
    where $\regB'$ is a $c'$-qubit register and $\ket{\Omega_{2^{\secp+c'}}}$ is across $(\regA\regB,\regA'\regB')$.

    \if0
    \yaoting{We need a stronger version of~\cref{lem:swap_unitary_is_almost_identity_for_Choi_state} that shows the closeness in Euclidean distance. The proof is straightforward. I sketch it below.}
    \yaoting{
    Define the projection 
    \begin{align}\label{eq:projection_subspace}
        \Pi_{n,(x,\cdot)}\coloneqq\sum_{y\in\bit^{n-\ell}}\ketbra{y}{y}\otimes I_\bot^{n,(x,y)}
    \end{align} 
    onto the subspace on which $\cS_{n,(x,\cdot)}$ acts as the identity. It then follows that
    \begin{align}
        (I - \cS_{n,(x,\cdot)}) \Pi = 0.
    \end{align}
    First, we can see 
    $(\cS_{n,(x,\cdot)}\otimes I^{\otimes 2\secp-2n+\ell-1})\ket{\Omega_{2^\secp}}$
    is close to $\ket{\Omega_{2^\secp}}$ in Euclidean distance as follows:
    \begin{align}
        & \|\ket{\Omega_{2^{\secp}}} - (\cS_{n,(x,\cdot)}\otimes I^{\otimes 2\secp-2n+\ell-1}) \ket{\Omega_{2^{\secp}}}\| \\
        = & \|(I - \cS_{n,(x,\cdot)})\otimes I^{\otimes 2\secp-2n+\ell-1} \ket{\Omega_{2^{\secp}}}\| \\
        = & \|(I - \cS_{n,(x,\cdot)}) (I-\Pi) \otimes I^{\otimes 2\secp-2n+\ell-1} \ket{\Omega_{2^{\secp}}}\| 
        \tag{By~\cref{eq:projection_subspace}} \\
        \le & \|(I - \cS_{n,(x,\cdot)})\otimes I^{\otimes 2\secp-2n+\ell-1}\|_\infty \cdot \|(I-\Pi)\otimes I^{\otimes 2\secp-2n+\ell-1} \ket{\Omega_{2^{\secp}}}\| 
        \tag{By~$\|A\ket{v}\| \le \|A\|_\infty \|\ket{v}\|$} \\
        \le & 2 \|(I-\Pi)\otimes I^{\otimes 2\secp-2n+\ell-1} \ket{\Omega_{2^{\secp}}}\| 
        \tag{By the triangle inequality and the fact that unitaries have unit operator norm} \\
        = & O(2^{n/2}). 
    \end{align}
    }
    \yaoting{If we need to cite the fact about trace distance versus Euclidean distance, we can use, for exmaple, \cite[Lemma~1]{C:Zhandry19}.}
    \fi
    
    We can prove the following inequality, which we will prove later.
    \begin{align}
        \|\ket{\psi'} - \ket{\phi'}\| \le O \bigg(\frac{1}{2^{n/2}}\bigg).
        \label{eq:closeness}
    \end{align}
    Moreover, by definition, it holds that
    \begin{align}
        \ket{\psi}_{\regA\regA'\regB} & = \sqrt{2^{c'}} \cdot I_{\regA\regA'\regB} \otimes \bra{0^{c'}}_{\regB'} \cdot \ket{\psi'}_{\regA\regB\regA'\regB'}, 
        \notag \\
        \ket{\phi}_{\regA\regA'\regB} & = \sqrt{2^{c'}} \cdot I_{\regA\regA'\regB} \otimes \bra{0^{c'}}_{\regB'} \cdot \ket{\phi'}_{\regA\regB\regA'\regB'}.
        \label{eq:post_selection}
    \end{align}
    Putting them together, we obtain
    \begin{align}
        & \, \frac{1}{2}\|\ketbra{\psi}{\psi}-\ketbra{\phi}{\phi}\|_1 
        \notag \\
        \le & \, \|\ket{\psi} - \ket{\phi}\| 
        \tag{Since trace distance between pure states is bounded by their Euclidean distance} \\
        = & \, \sqrt{2^{c'}} \cdot \| I \otimes \bra{0^{c'}} \cdot (\ket{\psi'} - \ket{\phi'})\| 
        \tag{By~\cref{eq:post_selection}} \\
        \le & \, \sqrt{2^{c'}} \cdot \| I \otimes \bra{0^{c'}} \|_{\infty} \cdot \|\ket{\psi'} - \ket{\phi'}\| 
        \tag{By~$\|{A \ket{v}} \| \le \|A\|_\infty \|\ket{v}\|$} \\
        = & \, O\bigg(\frac{2^{c'/2}}{2^{n/2}}\bigg)
        \tag{By~$\| I \otimes \bra{0^{c'}} \|_{\infty} = 1$ and~\cref{eq:closeness}}
    \end{align}
    as desired.

    To conclude the proof, we prove \cref{eq:closeness}.
    Define the projection 
    \begin{align}\label{eq:projection_subspace}
        \Pi_{n}\coloneqq\sum_{m\in\bit^{n}}\ketbra{m}{m}\otimes I_\bot^{n,m}
    \end{align} 
    onto the subspace on which $\cS_{n}$ acts as the identity. It then follows that
    \begin{align}
        (I - \cS_{n}) \Pi_n = 0.
    \end{align}
    First, we can see 
    $(\cS_{n}\otimes I^{\otimes 2\secp+2c'-2n-1})\ket{\Omega_{2^{\secp+c}}}$
    is close to $\ket{\Omega_{2^{\secp+c}}}$ in Euclidean distance as follows:
    \begin{align}
        & \|\ket{\Omega_{2^{\secp+c'}}} - (\cS_{n}\otimes I^{\otimes 2\secp+2c'-2n-1}) \ket{\Omega_{2^{\secp+c'}}}\| \\
        = & \|(I^{\otimes 2n+1} - \cS_{n})\otimes I^{\otimes 2\secp+2c'-2n-1} \ket{\Omega_{2^{\secp+c'}}}\| \\
        = & \|(I^{\otimes 2n+1} - \cS_{n}) (I-\Pi_n) \otimes I^{\otimes 2\secp+2c'-2n-1} \ket{\Omega_{2^{\secp+c'}}}\| 
        \tag{By~\cref{eq:projection_subspace}} \\
        \le & \|(I^{\otimes 2n+1} - \cS_{n})\otimes I^{\otimes 2\secp+2c'-2n-1}\|_\infty \cdot \|(I^{\otimes 2n+1}-\Pi_n)\otimes I^{\otimes 2\secp+2c'-2n-1} \ket{\Omega_{2^{\secp+c'}}}\| 
        \tag{By~$\|A\ket{v}\| \le \|A\|_\infty \|\ket{v}\|$} \\
        \le & 2 \|(I^{\otimes 2n+1}-\Pi_n)\otimes I^{\otimes 2\secp+2c'-2n-1} \ket{\Omega_{2^{\secp+c'}}}\| 
        \tag{By the triangle inequality and the fact that unitaries have unit operator norm} \\
        =& 2\sqrt{\frac{1}{2^{\secp+c'}}\Tr[(I^{\otimes 2n+1}-\Pi_n)\otimes 2^{\secp+c'-2n-1}]}
        \tag{By $||(A\otimes I)\ket{\Omega_D}||^2=\frac{1}{D}\Tr[A^\dag A]$ for any $A\in\Unitaries(D)$}
        \\
        =& 2\sqrt{\frac{1}{2^{2n+1}}\Tr[I^{\otimes 2n+1}-\Pi_n]}
        \tag{By $\Tr[A\otimes B]=\Tr[A]\Tr[B]$ forn any matrix $A$ and $B$}
        \\
        =& 2\sqrt{\frac{1}{2^{2n+1}}\sum_{m\in\bit^n}\Tr[\ketbra{m}{m}\otimes(I^{\otimes n+1}-I^{n,m}_\bot)]}
        \tag{By \cref{eq:projection_subspace}}
        \\
        \le & O(2^{n/2}). 
    \end{align}
    Here, in the last line, we have used that $\Tr[I^{\otimes n+1}-I^{n,m}_\bot]=2$.
    Since $\ket{\psi'}_{\regA\regB\regA'\regB'}=(U_{\regA\regB}\otimes V^\top_{\regA'\regB'})((\cS_{n}\otimes I)_{\regA\regB}\otimes I_{\regA'\regB'})\ket{\Omega_{2^{\secp+c}}}_{\regA\regB\regA'\regB'}$ and $\ket{\phi'}_{\regA\regB\regA'\regB'}=(U_{\regA\regB}\otimes V^\top_{\regA'\regB'})\ket{\Omega_{2^{\secp+c}}}_{\regA\regB\regA'\regB'}$, we obtain \cref{eq:closeness}, which concludes the proof.
\end{proof}
\fi

\if0
\begin{lemma}\label{lem:swap_unitary_is_almost_identity_for_Choi_state}
    Let $2n-\ell+1\le\secp$ and $\ell\le n$.
    Suppose that, for each $m\in\bit^n$, $\cS_{n,m}$ is a swap unitary defined in \cref{def:unitary_orcle}. Then, for any $U,V\in\Unitaries(2^\secp)$ and $x\in\bit^{n-\ell}$,
    \begin{align}
        \frac{1}{2}\|\ketbra{\psi_x}{\psi_x}-\ketbra{\phi}{\phi}\|_1\le O\bigg(\frac{1}{2^{n/2}}\bigg),
    \end{align}
    where 
    \begin{align}
        \ket{\psi_x}&\coloneqq(U(\cS_{n,(x,\cdot)}\otimes I)V\otimes I)\ket{\Omega_{2^\secp}},\\
        \ket{\phi}&\coloneqq(UV\otimes I)\ket{\Omega_{2^\secp}}
    \end{align}
    and
    \begin{align}
        \cS_{n,(x,\cdot)}\coloneqq\sum_{y\in\bit^{n-\ell}}\ketbra{y}{y}\otimes\cS_{n,(x,y)}
    \end{align}
    is a $(2n-\ell+1)$-qubit unitary.
\end{lemma}

\begin{proof}[Proof of \cref{lem:swap_unitary_is_almost_identity_for_Choi_state}]
    First, we can see 
    $(\cS_{n,(x,\cdot)}\otimes I^{\otimes 2\secp-2n+\ell-1})\ket{\Omega_{2^\secp}}$
    is close to $\ket{\Omega_{2^\secp}}$ in the trace norm as follows:
    \begin{align}
        &\frac{1}{2}\bigg\|(\cS_{n,(x,\cdot)}\otimes I^{\otimes 2\secp-2n+\ell-1})\ketbra{\Omega_{2^\secp}}{\Omega_{2^\secp}}(S_{n,(x,\cdot)}\otimes I^{\otimes 2\secp-2n+\ell-1})^\dag-\ketbra{\Omega_{2^\secp}}{\Omega_{2^\secp}}\bigg\|_1
        \notag\\
        =&\sqrt{1-|\bra{\Omega_{2^\secp}}(S_{n,(x,\cdot)}\otimes I^{\otimes 2\secp-2n+\ell-1})\ket{\Omega_{2^\secp}}|^2}
        \tag{By $\frac{1}{2}\|\ketbra{\alpha}{\alpha}-\ketbra{\beta}{\beta}\|_1=\sqrt{1-|\braket{\alpha|\beta}|^2}$}\\
        =&\sqrt{1-\frac{1}{2^{2\secp}}|\Tr[S_{n,(x,\cdot)}\otimes I^{\otimes\secp-2n+\ell-1}]|^2}
        \notag\\
        =&\sqrt{1-\frac{(2^\secp-2^{\secp-n})^2}{2^{2\secp}}}
        \tag{By $\Tr[S_{n,(x,\cdot)}]=\sum_{y\in\bit^{n-\ell}}\Tr[\cS_{n,(x,y)}]=2^{n-\ell}(2^{n+1}-2)$}\\
        \le&O\bigg(\frac{1}{2^{n/2}}\bigg).\label{eq:swap_unitary_is_almost_identity_for_Choi_state}
    \end{align}
    From this inequality, we obtain our statement as follows:
    \begin{align}
        &\frac{1}{2}\|\ketbra{\psi_x}{\psi_x}-\ketbra{\phi}{\phi}\|_1
        \notag\\
        =&\frac{1}{2}\bigg\|
         (U(S_{n,(x,\cdot)}\otimes I)V\otimes I)\Omega_{2^\secp}(U(S_{n,(x,\cdot)}\otimes I)V\otimes I)^\dag
         -(UV\otimes I)\Omega_{2^\secp}(UV\otimes I)^\dag
        \bigg\|_1\notag\\
        =&
        \frac{1}{2}\bigg\|
         (U(S_{n,(x,\cdot)}\otimes I)\otimes V^\top)\Omega_{2^\secp}(U(S_{n,(x,\cdot)}\otimes I)\otimes V^\top)^\dag
         -(U\otimes V^\top)\Omega_{2^\secp}(U\otimes V^\top)^\dag
        \bigg\|_1\notag\\
        =&\frac{1}{2}\bigg\|
         ((S_{n,(x,\cdot)}\otimes I)\otimes I)\Omega_{2^\secp}((S_{n,(x,\cdot)}\otimes I)\otimes I)^\dag
         -\Omega_{2^\secp}
        \bigg\|_1\notag\\
        \le&O\bigg(\frac{1}{2^{n/2}}\bigg),\tag{By \cref{eq:swap_unitary_is_almost_identity_for_Choi_state}}
    \end{align}
    where $V^\top$ denotes the transpose of $V$.
\end{proof}

\fi

Having this lemma, we are ready to show \cref{lem:approx_random_Choi_state}.
\ifnum\submission=0
For the reader's convenience, we restate it now:

\ChoiStates*
\fi

\if0
\begin{proof}[Proof of \cref{lem:approx_random_Choi_state}]
    Note that each $\cS_n$ is applied only on the input register since we consider ancilla-free PRUs.
    We construct a $U'_k$ that accesses only $\cU$ as follows: apply the same circuit as $U_k$ except for querying $\cS_{n}$. When querying to $\cS_{n}$,
          \begin{itemize}
              \item if $n\in[d]$, apply $\cS'_{n}$ by using its classical description;
              \item if $n\in[\secp]/[d]$, do not apply any unitary circuit. 
          \end{itemize}
    We prove \cref{eq:goal;approx_random_Choi_state} by the standard hybrid argument.
    For each $k\in\cK_\secp$, we define a unitary $\widetilde{U}_k$ as follows: apply the same unitary as $U_k$ except for querying $\cS_{n}$. When querying to $\cS_{n}$,
          \begin{itemize}
              \item if $n\in[d]$, apply $\cS_{n}$;
              \item if $n\in[\secp]/[d]$, do not apply any unitary. 
          \end{itemize}
    We can show
    \begin{align}
        \bigg\|
         (\cM_{\{U'_k\},\ell}\otimes \identitymap)(\ketbra{\Omega_{2^{\secp\ell}}}{\Omega_{2^{\secp\ell}}})
         -(\cM_{\{\widetilde{U}_k\},\ell}\otimes \identitymap)(\ketbra{\Omega_{2^{\secp\ell}}}{\Omega_{2^{\secp\ell}}})
         \bigg\|_1\le O(\ell T\epsilon)
        \label{eq:hyb0;approx_random_Choi_state}
    \end{align}
    and
    \begin{align}
        \bigg\|
        (\cM_{\{\widetilde{U}_k\},\ell}\otimes \identitymap)(\ketbra{\Omega_{2^{\secp\ell}}}{\Omega_{2^{\secp\ell}}})
        -(\cM_{\{U_k\},\ell}\otimes \identitymap)(\ketbra{\Omega_{2^{\secp\ell}}}{\Omega_{2^{\secp\ell}}})
        \bigg\|_1\le O\bigg(\frac{\ell T}{2^{d/2}}\bigg).
        \label{eq:hyb1;approx_random_Choi_state}
    \end{align}
    We give their proofs later. From \cref{eq:hyb0;approx_random_Choi_state,eq:hyb1;approx_random_Choi_state} and the triangle inequality, we obtain \cref{eq:goal;approx_random_Choi_state}.

    First, we prove \cref{eq:hyb0;approx_random_Choi_state}. 
    For each $k\in\cK_\secp$, the difference between $U'_k$ and $\widetilde{U}_k$ lies only whether we apply $\cS'_{n}$ or $\cS_{n}$ for all $n\in[d]$. Since the number of queries to the swap unitaries is at most $T$, we have
    \begin{align}
        \|U'_k(\cdot)U_k^{'\dag}-\widetilde{U}_k(\cdot)\widetilde{U}_k^\dag\|_\diamond\le T\epsilon\label{eq:U'_k_is_close_to_tilde_U_k}
    \end{align}
    for all $k\in\cK_\secp$.
    Therefore, we have
    \begin{align}
        \bigg\|
         (U_k^{'\otimes\ell}\otimes I)\ketbra{\Omega_{2^{\secp\ell}}}{\Omega_{2^{\secp\ell}}}(U_k^{'\otimes\ell}\otimes I)^\dag-
         (\widetilde{U}_k^{\otimes\ell}\otimes I)\ketbra{\Omega_{2^{\secp\ell}}}{\Omega_{2^{\secp\ell}}}(\widetilde{U}_k^{\otimes\ell}\otimes I)^\dag
         \bigg\|_1
        \le\ell T\epsilon
    \end{align}
    for all $k\in\cK_\secp$, which implies \cref{eq:hyb0;approx_random_Choi_state}.

    Next, we prove \cref{eq:hyb1;approx_random_Choi_state}. For each $k\in\cK_\secp$, the difference between $U_k$ and $\widetilde{U}_k$ lies only whether we apply $\cS_{n}$ or not for all $n\in[\secp]/[d]$. 
    Note that the number of queries to the swap unitaries is at most $T$.
    Thus, from \cref{lem:swap_unitary_is_almost_identity_for_Choi_state}, we have
    \begin{align}
        \bigg\|
         (\widetilde{U}_k\otimes I)\ketbra{\Omega_{2^{\secp}}}{\Omega_{2^{\secp}}}(\widetilde{U}_k\otimes I)^\dag
         -(U_k\otimes I)\ketbra{\Omega_{2^{\secp}}}{\Omega_{2^{\secp}}}(U_k\otimes I)^\dag
        \bigg\|_1\le O\bigg(\frac{T}{2^{d/2}}\bigg) 
    \end{align}
    for all $k\in\cK_\secp$.
    By combing this inequality and $\|\rho^{\otimes\ell}-\sigma^{\otimes\ell}\|_1\le\ell\|\rho-\sigma\|_1$ for any state $\rho$ and $\sigma$\footnote{The latter follows from the triangle inequality.}, we have
    \begin{align}
        \bigg\|
         (\widetilde{U}_k^{\otimes\ell}\otimes I)\ketbra{\Omega_{2^{\secp\ell}}}{\Omega_{2^{\secp\ell}}}(\widetilde{U}_k^{\otimes\ell}\otimes I)^\dag
         -(U_k^{\otimes\ell}\otimes I)\ketbra{\Omega_{2^{\secp\ell}}}{\Omega_{2^{\secp\ell}}}(U_k^{\otimes\ell}\otimes I)^\dag
        \bigg\|_1\le O\bigg(\frac{\ell T}{2^{d/2}}\bigg) 
    \end{align}
    for all $k\in\cK_\secp$,
    which implies \cref{eq:hyb1;approx_random_Choi_state}.
\end{proof}
\fi

\if0
\begin{lemma}
    Let $T(\secp)$ be a polynomial and $c(\secp) = O(\log \secp)$.
    Let $\epsilon>0$ and $d\in\N$.
    Let $\{U_k\}_{k\in\cK_\secp}$ be an ensemble of $\secp$-qubit unitaries each of which is QPT implementable by making $T$ queries to $\cO$ and using $c$ ancilla qubits, where $\cO$ is defined in \cref{def:unitary_orcle}. 
    For all $n\in[d]$, let $\cS'_{n}$ be any unitary satisfying 
    \begin{align}
        \|\cS_{n}(\cdot)\cS_{n}^\dag-\cS_{n}'(\cdot)\cS_{n}^{'\dag}\|_\diamond\le\epsilon.
    \end{align}
    Then, for any polynomial $\ell$, there exists a family 
    $\{U'_k\}_{k\in\cK_\secp}$ 
    of $\secp$-qubit unitaries 
    such that each $U'_k$ is QPT implementable with classical descriptions of $\cS'_n$ for all $n\in[d]$ and query access to the $\unitaryPSPACE$-complete oracle $\cU$ such that it satisfies
    \begin{align}
        \bigg\|
         (\cM_{\{U_k\},\ell}\otimes \identitymap)(\ketbra{\Omega_{2^{\secp\ell}}}{\Omega_{2^{\secp\ell}}})-
         (\cM_{\{U_k'\},\ell}\otimes \identitymap)(\ketbra{\Omega_{2^{\secp\ell}}}{\Omega_{2^{\secp\ell}}})
        \bigg\|_1\le O(\ell T\epsilon)+O\bigg(\frac{2^{c/2}\ell T}{2^{d/2}}\bigg).
        \label{eq:goal;approx_random_Choi_state;ancilla}
    \end{align}
\end{lemma}
\fi

\begin{proof}[Proof of \cref{lem:approx_random_Choi_state}]
    For each $k \in \cK_\secp$, we may view $G^\cO(k,\cdot)$ as acting as follows: first, it prepares $\ket{0^c}$ in the ancilla register, then applies a $(\secp + c)$-qubit unitary $W_k$ to the input and ancilla qubits by querying $\cO$, and finally discards the ancilla register.
    
    We define a $(\secp+c)$-qubit unitary $V_k$ as follows. 
    Take the same circuit as in the implementation of $W_k$, except at the points where it queries $\cS_{n}$. 
    Whenever the circuit queries $\cS_{n}$,
          \begin{itemize}
              \item if $n\in[d]$, apply $\cS'_{n}$ by using its classical description;
              \item if $n\in[\secp+c]/[d]$, do not apply any unitary circuit. 
          \end{itemize}
    The unitary $V_k$ is then the unitary implemented by these modified procedures. 
    Thus, it is clear that $V_k$ is QPT implementable with classical descriptions of $\cS'_n$ for all $n\in[d]$ and query access to the $\unitaryPSPACE$-complete oracle $\cU$.
    
    We prove~\cref{eq:goal;approx_random_Choi_state;ancilla} by the standard hybrid argument.
    Define a unitary $\widetilde{V}_k$ as follows. Take the same circuit as in the implementation of $W_k$, except at the points where it queries $\cS_{n}$. Whenever the circuit queries $\cS_{n}$,
          \begin{itemize}
              \item if $n\in[d]$, apply $\cS_{n}$;
              \item if $n\in[\secp+c]/[d]$, do not apply any unitary. 
          \end{itemize}
    The unitary $\widetilde{V}_k$ is then the unitary implemented by these modified procedures.
    Define a CPTP map $\cE_{\{\widetilde{V_k}\}_k,\ell}$ acting on $\secp\ell$ qubits in the same manner as $\cE_{\{V_k\}_k,\ell}$.
    We can show
    \begin{align}
        &\bigg\|
         (\cE_{\{V_k\},\ell}\otimes \identitymap)(\ketbra{\Omega_{2^{\secp\ell}}}{\Omega_{2^{\secp\ell}}})
         -(\cE_{\{\widetilde{V_k}\},\ell}\otimes \identitymap)(\ketbra{\Omega_{2^{\secp\ell}}}{\Omega_{2^{\secp\ell}}})
         \bigg\|_1\le O(\ell T\epsilon)
        \label{eq:hyb0;approx_random_Choi_state;ancilla}
    \end{align}
    and
    \begin{align}
        \bigg\|
        (\cE_{\{\widetilde{V_k}\},\ell}\otimes \identitymap)(\ketbra{\Omega_{2^{\secp\ell}}}{\Omega_{2^{\secp\ell}}})
        -(\cM_{\{U_k\},\ell}\otimes \identitymap)(\ketbra{\Omega_{2^{\secp\ell}}}{\Omega_{2^{\secp\ell}}})
        \bigg\|_1\le O\bigg(\frac{2^{c/2}\ell T}{2^{d/2}}\bigg).
        \label{eq:hyb1;approx_random_Choi_state;ancilla}
    \end{align}
    We will give the proofs later. From~\cref{eq:hyb0;approx_random_Choi_state;ancilla,eq:hyb1;approx_random_Choi_state;ancilla} and the triangle inequality, we obtain~\cref{eq:goal;approx_random_Choi_state;ancilla}.
    
    First, we prove \cref{eq:hyb0;approx_random_Choi_state;ancilla}
    For each $k\in\cK_\secp$, the difference between $V_k$ and $\widetilde{V}_k$ lies only whether we apply $\cS'_{n}$ or $\cS_{n}$ for all $n\in[d]$. Since the number of queries to the swap unitaries is at most $T$, we have
    \begin{align}
        \|V_k(\cdot)V_k^{\dag}-\widetilde{V}_k(\cdot)\widetilde{V}_k^\dag\|_\diamond\le T\epsilon\label{eq:U'_k_is_close_to_tilde_U_k}
    \end{align}
    for all $k\in\cK_\secp$, which implies \cref{eq:hyb0;approx_random_Choi_state;ancilla}.
    
    Next, we prove~\cref{eq:hyb1;approx_random_Choi_state;ancilla}. For each $k\in\cK_\secp$, the difference between $W_k$ and $\widetilde{V}_k$ lies only whether we apply $\cS_{n}$ or not for all $n\in[\secp]/[d]$.Note that the number of queries to the swap unitaries is at most $T$.
    Thus, from~\cref{lem:swap_unitary_is_almost_identity_for_Choi_state;ancilla} with $c'=c$, we have
    \begin{align}
        \bigg\|
         (\widetilde{V}_k\otimes I)(\ketbra{0^c}{0^c}\otimes\ketbra{\Omega_{2^{\secp}}}{\Omega_{2^{\secp}}})(\widetilde{V}_k\otimes I)^\dag
         -(W_k\otimes I)(\ketbra{0^c}{0^c}\otimes\ketbra{\Omega_{2^{\secp}}}{\Omega_{2^{\secp}}})(W_k\otimes I)^\dag
        \bigg\|_1\le O\bigg(\frac{2^{c/2}T}{2^{d/2}}\bigg) 
    \end{align}
    for all $k\in\cK_\secp$.
    By combing this inequality and $\|\rho^{\otimes\ell}-\sigma^{\otimes\ell}\|_1\le\ell\|\rho-\sigma\|_1$ for any state $\rho$ and $\sigma$\footnote{The latter follows from the triangle inequality.}, we have
    \begin{align}
        &\bigg\|
         (\widetilde{V}^{\otimes\ell}_{k,\regB\regA}\otimes \identitymap_{\regA'})(\ketbra{0^c}{0^c}^{\otimes\ell}_\regB\otimes\ketbra{\Omega_{2^{\secp\ell}}}{\Omega_{2^{\secp\ell}}}_{\regA\regA'})
         \notag\\
         &-(W^{\otimes\ell}_{k,\regB\regA}\otimes \identitymap_{\regA'})
         (\ketbra{0^c}{0^c}^{\otimes\ell}_\regB\otimes\ketbra{\Omega_{2^{\secp\ell}}}{\Omega_{2^{\secp\ell}}}_{\regA\regA'})
        \bigg\|_1\le O\bigg(\frac{2^{c/2}\ell T}{2^{d/2}}\bigg) 
        \label{eq:V_k_vs_W_k}
    \end{align}
    for all $k\in\cK_\secp$, where $\regA$  and $\regA'$ are $\secp\ell$-qubit register, and $\regB$ is a $c\ell$-qubit rgister.
    Here, note that
    \begin{align}
        \Tr_\regB[(W^{\otimes\ell}_{k,\regB\regA}\otimes \identitymap_{\regA'})
         (\ketbra{0^c}{0^c}^{\otimes\ell}_\regB\otimes\ketbra{\Omega_{2^{\secp\ell}}}{\Omega_{2^{\secp\ell}}}_{\regA\regA'})]
         =U^{\otimes\ell}_{k,\regA}(\ketbra{\Omega_{2^{\secp\ell}}}{\Omega_{2^{\secp\ell}}}_{\regA\regA'}))U^{\dag\otimes\ell}_{k,\regA}
    \end{align}
    since $W_k$ is a purification of $G^\cO(k,\cdot)$.
    Therefore, we obtain~\cref{eq:hyb1;approx_random_Choi_state;ancilla} from \cref{eq:V_k_vs_W_k}.
\end{proof}

\subsection[Proof of Lemma~\ref{lem:PRU_distinguisher}]{Proof of \cref{lem:PRU_distinguisher}}
\label{subsec:proof_of_distinguisher}

\if0
In this subsection, we show \cref{lem:PRU_distinguisher}. 
Before giving the formal proof, we give a high-level idea of the proof.
Roughly speaking, our proof is based on the following observation. Consider that we sample $\ell$ copies of random Choi-Jamiołkowski states over Haar random unitaries or $\{U'_k\}_{k\in\cK_\secp}$.
Let $Q$ be a projection onto the supprot of $(\cM_{\{U'_k\},\ell}\otimes\identitymap)(\ketbra{\Omega_{2^{\secp\ell}}}{\Omega_{2^{\secp\ell}}})$. Since $(\cM_{\{U'_k\},\ell}\otimes\identitymap)(\ketbra{\Omega_{2^{\secp\ell}}}{\Omega_{2^{\secp\ell}}})$ is the probabilistic mixture of $|\cK_\secp|$ pure states, the rank of $Q$ is at most $|\cK_\secp|\le2^\ell$.
Thus, from \cref{lem:Haar_state_vs_Haar_choi_state}, $\Tr[Q(\cM_{\mu_{2^\secp},\ell}\otimes\identitymap)(\ketbra{\Omega_{2^{\secp\ell}}}{\Omega_{2^{\secp\ell}}})]$ is negligible.
Therefore, if we can implement the projective measurement $\{Q,I-Q\}$, we can distinguish $\ell$ copies of random Choi-Jamiołkowski states over Haar random unitaries from that over $\{U'_k\}_{k\in\cK_\secp}$.

The main challenge is to implement the projective measurement. 
Instead of performing the measurement, we use a singular value discrimination algorithm (\cref{thm:singular_value_discrimination}) as follows.
Since we know how to prepare random Choi-Jamiołkowski states over $\{U_k'\}$, i.e., $(\cM_{\{U'_k\},\ell}\otimes\identitymap)(\ketbra{\Omega_{2^{\secp\ell}}}{\Omega_{2^{\secp\ell}}})$, we have a block-encoding of it from \cref{lem:block-encoding_of_state}.
Moreover, the discrimination algorithm is space-efficient even if the cut-off parameters $a$ and $b$ are exponentially small.
Therefore, by querying $\unitaryPSPACE$ complete problem $\cU$, we can simulate the singular value discimination algorithm (\cref{thm:singular_value_discrimination}) for $(\cM_{\{U'_k\},\ell}\otimes\identitymap)(\ketbra{\Omega_{2^{\secp\ell}}}{\Omega_{2^{\secp\ell}}})$ with exponetially small threshold parameter $a$ and $b$.
\fi

To show \cref{lem:PRU_distinguisher}, we need some lemmas.
First lemma ensures that we can implement the block-encoding of $(\cE_{\{V_k\},\ell}\otimes \identitymap)(\ketbra{\Omega_{2^{\secp\ell}}}{\Omega_{2^{\secp\ell}}})$ by querying $\cU$ and by using the classical descriptions of $\cS_n'$.
\ifnum\submission=1
Since the proof follows from a straightforward calculation, we give it in the supplemental material.
\fi

\if0
For the proof, we need the following.

\begin{lemma}\label{lem:unitaryPSPACE_with_classical_advice}
    Let $\cU$ be the $\unitaryPSPACE$ complete problem in \cref{lem:unitaryPSPACE_has_a_complete_problem}.
    Suppose that $\{U_n,x_n\}_{n\in\N}$ is a sequence of unitaries and bit strings promised the following:
    \begin{enumerate}
        \item each length of $x_n$ is at most polynomial of $n$, and each $x_n$ classical descriptions of unitaries $\{S_m\}_{m\in[q(n)]}$, where $q$ is polynomial.
        \item 
        for any polynomial $p$,
        each $U_n$ is QPT implementable with query access to $\{S_m\}_m$, $\cU$ and its inverse with error $2^{-p(n)}$. Namely, there exists a QPT algorithm $C^{(\cdot)}(\cdot,\cdot)$ such that
        \begin{align}
            \|C^{\{S_m\}_m,\cU}(1^n,\cdot)-U_n(\cdot)U_n^\dag\|_\diamond\le2^{-p(n)}\label{eq:unitaryPSPACE_with_classical_advice}
        \end{align}
        for all sufficiently large $n\in\N$.
    \end{enumerate}
    Then, for all polynomial $r$, there exists a QPT algorithm $\cA^{(\cdot)}(\cdot,\cdot,\cdot)$ such that
    \begin{align}
        \|\cA^{\cU}(1^n,x_n,\cdot)-U_n(\cdot)U_n^\dag\|_\diamond\le2^{-r(n)}
    \end{align}
    for all sufficiently large $n\in\N$.
\end{lemma}

\begin{proof}[Proof of \cref{lem:unitaryPSPACE_with_classical_advice}]
    For each polynomial $r$, define $\cA$ as follows: 
    let $C^{(\cdot)}(\cdot,\cdot)$ be a QPT algorithm satisfying \cref{eq:unitaryPSPACE_with_classical_advice} for $p(n)\coloneqq r(n)$. $\cA$ simulates $C$. When, $C$ queries $\cU$ or its inverse, $\cA$ simulates by querying $\cU$ or its inverse, respectively. When, $C$ queries $S_m$, $\cA$ applies it by using $x_n$. 
    It is clear that, for any input state $\rho$, we have $\cA^\cU(1^n,x_n,\rho)=C^{\{S_m\}_m,\cU}(1^n,\rho)$, which implies $\|\cA^{\cU}(1^n,x_n,\cdot)-U_n(\cdot)U_n^\dag\|_\diamond\le2^{-r(n)}$.
\end{proof}

From this lemma, we can implement block-encoding of $(\cM_{\{U'_k\},\ell}\otimes\identitymap)(\ketbra{\Omega_{2^{\secp\ell}}}{\Omega_{2^{\secp\ell}}})$.

\begin{lemma}\label{lem:block-encoding_of_state_with_classical_advice}
    Let $\cU$ be the $\unitaryPSPACE$ complete problem in \cref{lem:unitaryPSPACE_has_a_complete_problem}.
    Suppose that $\{U_n,x_n\}_{n\in\N}$ is a sequence of unitaries and bit strings satisfying the same condition in \cref{lem:unitaryPSPACE_with_classical_advice}.
    Suppose that each $U_n$ acts on the registers $\regA_n$ and $\regB_n$, both of which are at most $\poly(n)$-qubit registers. Define $\rho_{n,\regA_n}\coloneqq\Tr_{\regB_n}[(U_n\ketbra{0...0}{0...0}U_n^\dag)_{\regA_n\regB_n}]$ for each $n$. Then, for any polynomial $p$, there exists a sequence of unitaries $\{V_n\}_{n\in\N}$ satisfying the following:
    \begin{itemize}
        \item each $V_n$ is QPT implementable with $x_n$ and query access to $\cU$.
        \item each $V_n$ is an $(1,2^{-p(n)},\poly(n))$-block encoding of $\rho_n$.
    \end{itemize}
\end{lemma}

\begin{proof}[Proof of \cref{lem:block-encoding_of_state_with_classical_advice}]
    From \cref{lem:unitaryPSPACE_with_classical_advice}, for any polynomial $r$, there exists a QPT algorithm $\cA^{(\cdot)}(\cdot,\cdot,\cdot)$ such that
    \begin{align}
        \|\cA^{\cU}(1^n,x_n,\cdot)-U_n(\cdot)U_n^\dag\|_\diamond\le2^{-r(n)}
    \end{align}
    for all sufficiently large $n\in\N$. Since $\cA$ is a QPT algorithm, we can postpone all intermediate measurements time-efficiently if the query to controlled-$\cU$ is allowed. From \cref{remark:contrlizatin_of_pureUnitaryPSPACE,lem:unitaryPSPACE_has_a_complete_problem}, for any polynomial $q$, it is QPT implementable by querying $\cU$ with error $2^{-q(n)}$ in the diamond norm.
    Note that $\cA$ queries  $\cU$ at most polynomial times since it is a QPT algorithm.
    Thus, by choosing $r$ and $q$ sufficiently large, for any polynomial $p$, there exists a unitary $W_n$ such that each $W_n$ is QPT implementable with $x_n$ and query access to $\cU$ and its inverse, and it satisfies
    \begin{align}
        \|\Tr_{\regC_n}[W_{n,\regA_n\regB_n\regC_n}((\cdot)_{\regA_n\regB_n}\otimes\ketbra{0...0}{0...0}_{\regC_n})W_{n,\regA_n\regB_n\regC_n}^\dag]-(U_n(\cdot)U_n^\dag)_{\regA_n\regB_n}\|_\diamond
        \le2^{-p(n)},\label{eq:purification_of_U_n}
    \end{align}
    where $\regC_n$ is an at most $\poly(n)$-qubit ancilla register to purify $\cA$.
    Define 
    \begin{align}
        \sigma_n\coloneqq\Tr_{\regB_n\regC_n}[W_{n,\regA_n\regB_n\regC_n}(\ketbra{0...0}{0...0}_{\regA_n\regB_n\regC_n})
        W_{n,\regA_n\regB_n\regC_n}^\dag].
    \end{align}
    From \cref{lem:block-encoding_of_state}, there exists a $(1,0,\poly(n))$-block encoding unitary $V_n$ of $\sigma_n$ such that $V_n$ is implementable with single use of $W_n$ and $W_n^\dag$, and $\poly(n)$ two-qubit gates.
    
    To conclude the proof, it suffices to confirm that this $V_n$ satisfies the two conditions in \cref{lem:block-encoding_of_state_with_classical_advice}.
    Since $W_n^\dag$ is also QPT implementable by querying $\cU_n$ and its inverse, $V_n$ is QPT implementable with $x_n$ and query access to $\cU$ and its inverse. Moreover, we have
    \begin{align}
        &\|(\bra{0^{\poly(n)}}\otimes I)V_n(\ket{0^{\poly(n)}}\otimes I)
        -\rho_n\|_\infty
        \notag\\
        =&\|\sigma_n-\rho_n\|_\infty
        \tag{Since $V_n$ is $(1,0,\poly(n))$-block encoding of $\sigma_n$}\\
        \le&\|\sigma_n-\rho_n\|_1
        \tag{By $\|A\|_\infty\le\|A\|_1$}\\
        \le&2^{-p(n)},
        \tag{By applying \cref{eq:purification_of_U_n} for input $\ketbra{0...0}{0...0}_{\regA_n\regB_n}$ and tracing out $\regB_n$}
    \end{align}
    which implies $V_n$ is a $(1,2^{-p(n)},\poly(n))$-block encoding of $\rho_n$.
\end{proof}
\fi

\begin{restatable}{lemma}{BlockEncoding}\label{lem:block-encoding_of_random_Choi_states}
    Let $\{V_k\}_{k\in\cK_\secp}$ be a family of $(\secp+c)$-qubit unitaries that are QPT implementable with classical descriptions of $\cS'_n$ for all $n\in[d]$ and 
    with the query access to $\cU$, where $\cU$ is the $\unitaryPSPACE$ complete problem in \cref{lem:unitaryPSPACE_has_a_complete_problem}.
    Let $\ell(\secp)\coloneqq\lceil\log|\cK_\secp|\rceil$. Then, for any polynomial $p$, there exists a unitary circuit $V_\secp$ satisfying the following:
    \begin{itemize}
        \item $V_\secp$ is QPT implementable with classical descriptions of $\cS'_n$ for all $n\in[d]$ and with the query access to $\cU$.
        \item $V_\secp$ is a $(1,2^{-p(\secp)},\poly(\secp))$-block encoding of $(\cE_{\{V_k\},\ell}\otimes \identitymap)(\ketbra{\Omega_{2^{\secp\ell}}}{\Omega_{2^{\secp\ell}}})$.
    \end{itemize}
\end{restatable}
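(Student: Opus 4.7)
The plan is to reduce the construction of a block encoding to preparing a suitable purification, and then invoke \cref{lem:block-encoding_of_state}. Concretely, the target state $\rho \coloneqq (\cE_{\{V_k\},\ell}\otimes \identitymap)(\ketbra{\Omega_{2^{\secp\ell}}}{\Omega_{2^{\secp\ell}}})$ is, by definition, the partial trace of a pure state that is itself a uniform mixture over $k\in\cK_\secp$. Introducing a key register $\regK$ in the uniform superposition $\frac{1}{\sqrt{|\cK_\secp|}}\sum_{k\in\cK_\secp}\ket{k}_\regK$ (or a state exponentially close to it when $|\cK_\secp|$ is not a power of two, obtainable from a QPT state-preparation subroutine) together with the ancilla register $\regB$ initialized to $\ket{0^c}^{\otimes\ell}$ and the maximally entangled state $\ket{\Omega_{2^{\secp\ell}}}$ on $\regA\regA'$, one sees that
\begin{align}
    \ket{\Psi} \coloneqq \frac{1}{\sqrt{|\cK_\secp|}} \sum_{k\in\cK_\secp} \ket{k}_\regK \otimes (V_{k,\regA\regB}^{\otimes\ell}\otimes I_{\regA'}) (\ket{0^c}^{\otimes\ell}_\regB \otimes \ket{\Omega_{2^{\secp\ell}}}_{\regA\regA'})
\end{align}
is a purification of $\rho$ on register $\regA\regA'$ with purifying system $\regK\regB$.

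The next step is to build a unitary $W$ such that $W\ket{0\dots0}$ approximates $\ket{\Psi}$, using only QPT access to $\cU$ and the classical descriptions of $\cS'_n$ for $n\in[d]$. This reduces to implementing a coherent, controlled version of each $V_k^{\otimes\ell}$ with $\regK$ as the control register. Since each $V_k$ is by assumption QPT-implementable with access to $\cU$ and the classical descriptions of $\cS'_n$, we may first postpone any intermediate measurements in its circuit by introducing polynomially many fresh ancilla qubits, yielding a unitary implementation. The only non-trivial gates in this implementation are the calls to $\cU$. By \cref{remark:contrlizatin_of_pureUnitaryPSPACE}, the controlled version of every unitary in $\pureunitaryPSPACE$ is again in $\pureunitaryPSPACE$, and by \cref{lem:unitaryPSPACE_has_a_complete_problem} the controlled complete unitary can be implemented by a QPT circuit with access to $\cU$ to within any $2^{-\poly(\secp)}$ diamond-norm error. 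Composing the at most $\poly(\secp)$ controlled calls in the $\ell$ copies of $V_k$ and choosing the per-call error sufficiently small, we obtain a unitary $W$ such that $\|W\ket{0\dots0}-\ket{\Psi}\| \le 2^{-p(\secp)-1}$.

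Now we apply \cref{lem:block-encoding_of_state} to $W$: since $W\ket{0\dots0}$ has reduced state on register $\regA\regA'$ that is $2^{-p(\secp)}$-close in trace distance (hence in operator norm after post-selection) to $\rho$, the resulting unitary $V_\secp$ is a $(1,2^{-p(\secp)},\poly(\secp))$-block encoding of $\rho$. By construction $V_\secp$ uses one call each to $W$ and $W^\dag$ plus $O(\poly(\secp))$ two-qubit gates, so it is QPT-implementable with classical descriptions of $\cS'_n$ for $n\in[d]$ and query access to $\cU$, as required.

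The main technical obstacle is the coherent handling of the controlled calls to $\cU$: a careless simulation would introduce intermediate measurements and destroy the purification structure. The key point is that the $\unitaryPSPACE$-complete oracle lies in $\pureunitaryPSPACE$, so its controlled simulation is measurement-free up to inverse-exponential error; tracking that the total accumulated error stays below $2^{-p(\secp)}$, and verifying that the $|\cK_\secp|$-uniform superposition can be prepared to within inverse-exponential error, are the routine but careful bookkeeping steps.
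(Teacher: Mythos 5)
Your proposal is correct and follows essentially the same route as the paper: construct a unitary that prepares a purification of $(\cE_{\{V_k\},\ell}\otimes\identitymap)(\ketbra{\Omega_{2^{\secp\ell}}}{\Omega_{2^{\secp\ell}}})$ by putting the key register in uniform superposition and applying the controlled $V_k^{\otimes\ell}$, handle the controlled queries to $\cU$ via the fact that $\cU\in\pureunitaryPSPACE$ (so controlled-$\cU$ is QPT-implementable with $\cU$-access to inverse-exponential error), and then invoke \cref{lem:block-encoding_of_state} together with $\|\cdot\|_\infty\le\|\cdot\|_1$ to obtain the $(1,2^{-p},\poly(\secp))$-block encoding. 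The only cosmetic difference from the paper is that you do not explicitly factor out a separate unitary loading the classical advice $x$ into a register, but this does not affect correctness.
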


\ifnum\submission=0
\begin{proof}[Proof of \cref{lem:block-encoding_of_random_Choi_states}]
    Let $x$ denote the concatenation of classical descriptions of $\cS'_n$ for all $n\in[d]$.
    Let $A$ and $B$ be unitaries such that $BA$ is a purification unitary of 
    $(\cM_{\{U'_k\},\ell}\otimes\identitymap)(\ketbra{\Omega_{2^{\secp\ell}}}{\Omega_{2^{\secp\ell}}})$.
    In other words,
    tracing out some qubits of $BA|0...0\rangle$ is
    equal to
    $(\cM_{\{V_k\},\ell}\otimes\identitymap)(\ketbra{\Omega_{2^{\secp\ell}}}{\Omega_{2^{\secp\ell}}})$.
   $A$ is a unitary that maps $|0...0\rangle$ to $|x\rangle$.
   $B$ is the following unitary:
    \begin{enumerate}
        \item First, map $\ket{0...0}\ket{x}\mapsto\frac{1}{\sqrt{|\cK_\secp|}}\sum_{k\in\cK_\secp}\ket{0^c}^{\otimes\ell}\ket{\Omega_{2^{\secp\ell}}}\ket{0...0}\ket{k}\ket{x}$.
        \item 
        Then, map $
        \frac{1}{\sqrt{|\cK_\secp|}}
        \sum_{k\in\cK_\secp}\ket{0^c}^{\otimes\ell}\ket{\Omega_{2^{\secp\ell}}}\ket{0...0}\ket{k}\ket{x}
        \mapsto\frac{1}{\sqrt{|\cK_\secp|}}
        \sum_{k\in\cK_\secp}(V_k^{\otimes\ell}\otimes I)(\ket{0^c}^{\otimes\ell}\ket{\Omega_{2^{\secp\ell}}})\ket{0...0}\ket{k}\ket{x}$.
    \end{enumerate}
    Clearly, $A$ is QPT implementable given $x$. 
    $B$ can be approximately QPT implementable with an exponentially-small error by querying $\cU$, 
    because of the following reason:
    The first step of $B$ is QPT implementable.
    For the second step of $B$, we have only to show that each controlled-$V_k$ is approximately QPT implementable by querying $\cU$.
    In fact, first, $V_k$ is QPT implementable on input $x,k$ and by querying $\cU$.
    Second, in order to implement the controlled-$V_k$, we need the controlled-$\cU$.
    The controlled-$\cU$ is in $\unitaryPSPACE$ from \cref{remark:contrlizatin_of_pureUnitaryPSPACE}, therefore it
    is approximately QPT implementable with an exponentially-small error by querying $\cU$.
    
    Thus, for any polynomial $p$, there exists a QPT algorithm $\cB$ that implements $B$ with error $2^{-p(\secp)}$ by querying $\cU$. Namely, it satisfies
    \begin{align}
        \|\cB^\cU(\cdot)-B(\cdot)B^\dag\|_\diamond\le2^{-p(\secp)}.
    \end{align}
    Since $\cB$ is a QPT algorithm, it queries $\cU$ at most polynomially many times. Thus, by postponing all intermediate measurements, we can assume that $\cB$ applies a QPT unitary $C$ by querying $\cU$.
    Thus, we have
    \begin{align}
        \|\Tr_{\regY}[C_{\regX\regY}((\cdot)_{\regX}\otimes\ketbra{0...0}{0...0}_{\regY})C^\dag_{\regX\regY}]-(B(\cdot)B^\dag)_{\regX}\|_\diamond\le2^{-p(\secp)},\label{eq:approx_of_purification_for_random_Choi_states}
    \end{align}
    where $\regX$ and $\regY$ denote the main register and the ancilla register, respectively.
    Suppose that $\regA$ and $\regA'$ are $\secp\ell$-qubit registers, and $\regB$ is a $c\ell$-qubit register.
    We decompose $\regX$ as $\regX_0\coloneqq\regB\regA\regA'$ and $\regX_1$, where $\regX_0$ is the first $2\secp\ell$ qubits, and $\regX_1$ is the other qubits.
    From \cref{eq:approx_of_purification_for_random_Choi_states} and $\Tr_{\regX_1}[(BA\ketbra{0...0}{0...0}A^\dag B^\dag)_{\regX}]=(\cM_{\{V_k\},\ell,\regB\regA}\otimes\identitymap_{\regA'})(\ketbra{0^c}{0^c}^{\otimes\ell}_\regB\otimes\ketbra{\Omega_{2^{\secp\ell}}}{\Omega_{2^{\secp\ell}}}_{\regA\regA'})$, we have
    \begin{align}
        &\|\Tr_{\regB\regX_1\regY}[C_{\regX\regY}A_{\regX}(\ketbra{0...0}{0...0}_{\regX\regY})A^\dag_\regX C^\dag_{\regX\regY}]
        -(\cE_{\{V_k\},\ell,\regA}\otimes \identitymap_{\regA'})(\ketbra{\Omega_{2^{\secp\ell}}}{\Omega_{2^{\secp\ell}}}_{\regA\regA'})\|_1
        \le2^{-p(\secp)}.\label{eq:approx_of_random_Choi_states}
    \end{align}

    Now we are ready to construct block-encoding of $\Tr_{\regB}[(\cM_{\{V_k\},\ell,\regB\regA}\otimes\identitymap)(\ketbra{0^c}{0^c}^{\otimes\ell}_\regB\otimes\ketbra{\Omega_{2^{\secp\ell}}}{\Omega_{2^{\secp\ell}}})_{\regA\regA'}]$. From \cref{lem:block-encoding_of_state}, there exists a $(1,0,\poly(\secp))$-block encoding unitary $V_\secp$ of
    \begin{align}
        \sigma_{\regA\regA'}\coloneqq\Tr_{\regB\regX_1\regY}[C_{\regX\regY}A_{\regX}(\ketbra{0...0}{0...0}_{\regX\regY})A^\dag_\regX C^\dag_{\regX\regY}].
    \end{align}
    From \cref{lem:block-encoding_of_state} $V_\secp$ can be realized with a single use of $C_{\regX\regY}A_{\regX}$ and its inverse, and $\poly(\secp)$ two-qubit gates. 
    Therefore, $V_\secp$ is QPT implementable with $x$ and with the query access to $\cU$. Moreover, $V_\secp$ satisfies
    \begin{align}
        &\|((\bra{0^{\poly(\secp)}}\otimes I)V_{\secp}(\ket{0^{\poly(\secp)}}\otimes I))_{\regA\regA'}
        -(\cE_{\{V_k\},\ell,\regA}\otimes \identitymap_{\regA'})(\ketbra{\Omega_{2^{\secp\ell}}}{\Omega_{2^{\secp\ell}}}_{\regA\regA'})
        \|_\infty
        \notag\\
        =&
        \|\sigma_{\regA\regA'}
        -(\cE_{\{V_k\},\ell,\regA}\otimes \identitymap_{\regA'})(\ketbra{\Omega_{2^{\secp\ell}}}{\Omega_{2^{\secp\ell}}}_{\regA\regA'})\|_\infty
        \tag{Since $V_\secp$ is an $(1,0,\poly(\secp))$-block encoding of $\sigma$}\\
        \le&
        \|\sigma_{\regA\regA'}
        -(\cE_{\{V_k\},\ell,\regA}\otimes \identitymap_{\regA'})(\ketbra{\Omega_{2^{\secp\ell}}}{\Omega_{2^{\secp\ell}}}_{\regA\regA'})\|_1
        \tag{By $\|A\|_\infty\le\|A\|_1$}\\
        \le&2^{-p(\secp)},
        \tag{By \cref{eq:approx_of_random_Choi_states}}
    \end{align}
    which implies that $V_{\secp}$ is a $(1,2^{-p(\secp)},\poly(\secp))$-block encoding of $(\cE_{\{V_k\},\ell,\regA}\otimes \identitymap_{\regA'})(\ketbra{\Omega_{2^{\secp\ell}}}{\Omega_{2^{\secp\ell}}}_{\regA\regA'})$.
\end{proof}
\fi

The next lemma ensures that $(\cM_{\mu_{2^\secp},\ell,\regA}\otimes\identitymap_{\regA'})(\ketbra{\Omega_{2^{\secp\ell}}}{\Omega_{2^{\secp\ell}}}_{\regA\regA'})$ has negligible overlap with the support of $(\cE_{\{V_k\},\ell}\otimes \identitymap)(\ketbra{\Omega_{2^{\secp\ell}}}{\Omega_{2^{\secp\ell}}})$.
\ifnum\submission=1
Since the proof follows from a straightforward calculation, we give it in the supplemental material.
\fi

\begin{restatable}{lemma}{SmallOverlap}\label{lem:Haar_Choi_has_negligible_overlap}
    Suppose that $c(\secp)=O(\log\secp)$.
    Let $Q$ be the projection onto the support of $(\cE_{\{V_k\},\ell}\otimes \identitymap)(\ketbra{\Omega_{2^{\secp\ell}}}{\Omega_{2^{\secp\ell}}})$. 
    Then,
    \begin{align}
        \Tr[Q(\cM_{\mu_{2^\secp},\ell}\otimes\identitymap)(\ketbra{\Omega_{2^{\secp\ell}}}{\Omega_{2^{\secp\ell}}})]
        \le\negl(\secp).
    \end{align}
\end{restatable}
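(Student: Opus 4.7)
The plan is to bound $\Tr[Q\rho_0]$, where $\rho_0 \coloneqq (\cM_{\mu_{2^\secp},\ell}\otimes\identitymap)(\ketbra{\Omega_{2^{\secp\ell}}}{\Omega_{2^{\secp\ell}}})$, by separately controlling (i) the rank of $Q$ and (ii) the largest eigenvalue of $\rho_0$ (or rather, of a Haar-state proxy for it). First I would argue that the rank of $Q$ is small. For each fixed $k \in \cK_\secp$, the state $(V_k^{\otimes\ell})_{\regA\regB}\otimes I_{\regA'}\,(\ket{0^c}^{\otimes\ell}_\regB\otimes\ket{\Omega_{2^{\secp\ell}}}_{\regA\regA'})$ is pure, and tracing out the $c\ell$-qubit register $\regB$ produces a density operator on $\regA\regA'$ of rank at most $2^{c\ell}$. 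Averaging over $k \in \cK_\secp$ and recalling $\ell = \lceil\log|\cK_\secp|\rceil$, the support of $(\cE_{\{V_k\},\ell}\otimes\identitymap)(\ketbra{\Omega_{2^{\secp\ell}}}{\Omega_{2^{\secp\ell}}})$ has dimension at most $|\cK_\secp|\cdot 2^{c\ell} \le 2^{\ell(c+1)}$, so $\Tr[Q] \le 2^{\ell(c+1)}$.

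Next I would pass from $\rho_0$ to $\ell$ copies of a Haar random state via \cref{lem:Haar_state_vs_Haar_choi_state}, which gives
\[
\bigl\|\rho_0 - \Exp_{\ket{\psi}\gets\sigma_{2^{2\secp}}}\ketbra{\psi}{\psi}^{\otimes\ell}\bigr\|_1 \le O(\ell^2/2^\secp),
\]
so that $\Tr[Q\rho_0] \le \Tr\!\bigl[Q\,\Exp_{\ket{\psi}\gets\sigma_{2^{2\secp}}}\ketbra{\psi}{\psi}^{\otimes\ell}\bigr] + O(\ell^2/2^\secp)$. Then \cref{lem:Haar_states} rewrites the expectation as $\Pi_\symetric/\binom{2^{2\secp}+\ell-1}{\ell}$, and since $0 \le \Pi_\symetric \le I$ we get
\[
\Tr[Q\rho_0] \le \frac{\Tr[Q]}{\binom{2^{2\secp}+\ell-1}{\ell}} + O(\ell^2/2^\secp) \le \frac{\ell!\cdot 2^{\ell(c+1)}}{2^{2\secp\ell}} + O(\ell^2/2^\secp),
\]
using the crude bound $\binom{2^{2\secp}+\ell-1}{\ell}\ge 2^{2\secp\ell}/\ell!$.

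Finally I would verify negligibility under the hypothesis $c(\secp)=O(\log\secp)$. With $\ell = \poly(\secp)$ and $c \le C\log\secp$ for some constant $C$, we have $\ell! \cdot 2^{\ell(c+1)} \le 2^{O(\ell\log\secp)}$, whereas the denominator is $2^{2\secp\ell}$. Since $\ell\ge 1$, the ratio is $2^{-2\secp\ell + O(\ell\log\secp)} = \negl(\secp)$, and the additive $O(\ell^2/2^\secp)$ is also negligible, completing the proof.

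The main conceptual point, and the only step that requires care, is the rank bound on $Q$: one must check that the $c\ell$ ancilla qubits introduced by the $O(\log\secp)$-ancilla implementation only inflate the rank by a factor $2^{c\ell}$, which remains subexponential precisely because $c=O(\log\secp)$. This is exactly where the $O(\log\secp)$-ancilla hypothesis enters the separation; any weakening of this hypothesis (to, say, $\omega(\log\secp)$ ancilla) would break the numerator-denominator gap and invalidate the argument.
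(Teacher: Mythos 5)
Your proposal is correct and matches the paper's own argument step for step: bound $\Tr[Q]\le 2^{(1+c)\ell}$ via the rank of the partial trace of each pure state $(V_k^{\otimes\ell}\otimes I)(\ket{0^c}^{\otimes\ell}\ket{\Omega_{2^{\secp\ell}}})$, pass from the Haar Choi state to $\Exp_{\ket{\psi}\gets\sigma_{2^{2\secp}}}\ketbra{\psi}{\psi}^{\otimes\ell}$ via \cref{lem:Haar_state_vs_Haar_choi_state}, invoke \cref{lem:Haar_states} together with $\Pi_\symetric\le I$, and finish with the $\binom{2^{2\secp}+\ell-1}{\ell}\ge 2^{2\secp\ell}/\ell!$ estimate and $c=O(\log\secp)$. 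Your concluding remark about where the $O(\log\secp)$-ancilla hypothesis enters is accurate and consistent with the paper's discussion.
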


\ifnum\submission=0
\begin{proof}[Proof of \cref{lem:Haar_Choi_has_negligible_overlap}]
    First, we prove that $\Tr[Q]\le2^{(1+c)\ell}$.
    Note that
    \begin{align}
        &(\cE_{\{V_k\},\ell,\regA}\otimes \identitymap_{\regA'})(\ketbra{\Omega_{2^{\secp\ell}}}{\Omega_{2^{\secp\ell}}}_{\regA\regA'})
        =\frac{1}{|\cK_\secp|}\sum_{k\in\cK_\secp}
        \Tr_\regB[(V^{\otimes\ell}_{k,\regB\regA}\otimes \identitymap_{\regA'})(\ketbra{0^c}{0^c}^{\otimes\ell}_\regB\otimes\ketbra{\Omega_{2^{\secp\ell}}}{\Omega_{2^{\secp\ell}}}_{\regA\regA'})],
    \end{align}
    where $\regA$ and $\regA'$ are $\secp\ell$-qubit registers, and $\regB$ is a $c\ell$-qubit register.
    For each $k$, the rank of $\Tr_\regB[(V^{\otimes\ell}_{k,\regB\regA}\otimes \identitymap_{\regA'})(\ketbra{0^c}{0^c}^{\otimes\ell}_\regB\otimes\ketbra{\Omega_{2^{\secp\ell}}}{\Omega_{2^{\secp\ell}}}_{\regA\regA'})]$ is at most $\min\{2^{c\ell},2^{2\secp\ell}\}=2^{c\ell}$ since $(V^{\otimes\ell}_{k,\regB\regA}\otimes \identitymap_{\regA'})(\ketbra{0^c}{0^c}^{\otimes\ell}_\regB\otimes\ketbra{\Omega_{2^{\secp\ell}}}{\Omega_{2^{\secp\ell}}}_{\regA\regA'})$ is pure.
    Thus, the rank of $Q$ is at most $2^{c\ell}\cdot|\cK_\secp|\le2^{(1+c)\ell}$, which implies $\Tr[Q]\le2^{(1+c)\ell}$.
    
    Having this, 
    \begin{align}
        \Tr[Q(\cM_{\mu_{2^\secp},\ell}\otimes \identitymap)(\ketbra{\Omega_{2^{\secp\ell}}}{\Omega_{2^{\secp\ell}}})]
        \le&
        \Tr[Q\Exp_{\ket{\psi}\gets\sigma_{2^{2\secp}}}\ketbra{\psi}{\psi}^{\otimes\ell}]
        +O\bigg(\frac{\ell^2}{2^\secp}\bigg)
        \tag{By \cref{lem:Haar_state_vs_Haar_choi_state}}
        \\
        =&
        \frac{\Tr[Q\Pi_{\symetric}]}{\binom{2^{2\secp}+\ell-1}{\ell}}+\negl(\secp)
        \notag
        \\
        \le&\frac{2^{(1+c)\ell}}{\binom{2^{2\secp}+\ell-1}{\ell}}+\negl(\secp)
        \tag{By $\Tr[Q\Pi_\symetric]\le\Tr[Q]\le2^{(1+c)\ell}$}\\
        \le&O\bigg(
        \frac{2^{(1+c)\ell}(\ell!)}{2^{2\secp\ell}}
        \bigg)
        +\negl(\secp)
        \notag\\
        \le&O\bigg(
        \frac{2^{(1+c)\ell} \ell^{\ell+1/2}e^{-\ell}}{2^{2\secp\ell}}
        \bigg)
        +\negl(\secp)
        \tag{By the Stirling's formula, $\ell!\le \ell^{\ell+1/2}e^{-\ell+1}$}\\
        =&O\bigg(\ell^{1/2}
        \bigg(\frac{2^{1+c}e^{-1} \ell}{2^{2\secp}}\bigg)^\ell
        \bigg)
        +\negl(\secp)
        \notag\\
        =&O\bigg(\ell^{1/2}
        \bigg(\frac{\poly(\secp)}{2^{2\secp}}\bigg)^\ell
        \bigg)
        +\negl(\secp)
        \tag{By $c(\secp)=O(\log\secp)$}\\
        \le&\negl(\secp),\label{eq:xi_has_negligible_overlap_with_Q}
    \end{align}
    which concludes the proof.
\end{proof}
\fi

The following lemma gives us an algorithm that distinguishes between two states if they are statistically far.

\begin{lemma}\label{lem:distinguisher}
    Suppose that $\rho$ and $\xi$ are $n(\secp)=\poly(\secp)$-qubit states, and satisfy the following:
    \begin{itemize}
        \item For any polynomial $p$, there exists a QPT implementable unitary $V_\secp$ with classical advice $c$ and with query access to $\cU$ such that $V_\secp$ is a $(1,2^{-p(\secp)},\poly(\secp))$-block encoding of $\rho$.
        \item For the projection $Q$ onto the support of $\rho$, $\Tr[Q\xi]\le\negl(\secp)$.
    \end{itemize}
    Then, there exists a QPT algorithm $\cD^\cU$ that, on input $c$, distinguishes $\rho$ from $\xi$ with advantage at least $1-\negl(\secp)$.
\end{lemma}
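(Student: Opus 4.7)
The plan is to (approximately) implement the projective measurement $\{Q, I-Q\}$ using the singular value discrimination algorithm (\cref{thm:singular_value_discrimination}) on a block encoding of $\rho$, and then to simulate the resulting polynomial-space algorithm with $\cU$ via \cref{lem:unitaryPSPACE_has_a_complete_problem}. Concretely, I would fix a sufficiently large polynomial $p(\secp)$, invoke the hypothesis to obtain a $(1,2^{-p},\poly(\secp))$-block encoding $V_\secp$ of $\rho$ that is QPT given classical advice $c$ and $\cU$-queries, and then run the algorithm $\cD_\mathrm{SVD}$ of \cref{thm:singular_value_discrimination} on the input copy of the unknown state, using $V_\secp$ as the block encoding of $M=\rho$, with parameters $a=0$, $b=2^{-n-\secp}$ (where $n=\poly(\secp)$ is the number of qubits of $\rho$), and error $\eta=2^{-\secp}$. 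Since $\rho$ is positive semidefinite, its singular values coincide with its eigenvalues, and $Q$ equals the projection onto the span of eigenvectors with strictly positive eigenvalue.

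By \cref{thm:singular_value_discrimination}, $\cD_\mathrm{SVD}$ uses $T=2^{\poly(\secp)}$ queries to $V_\secp$ and only a single ancilla qubit; inlining the QPT implementation of $V_\secp$ (which in turn inlines $\cU$'s $\pureunitaryPSPACE$ implementation), the full composition lies in $\unitaryPSPACE$, and \cref{lem:unitaryPSPACE_has_a_complete_problem} yields a QPT algorithm $\cD^\cU$ that simulates it within inverse-exponential error. For the analysis, let $\Pi_0$, $\Pi_\mathrm{mid}$, $\Pi_1$ denote the spectral projectors of $\rho$ onto the eigenvalues equal to $0$, lying in $(0,b)$, and lying in $[b,1]$, respectively, and let $E_1$ be the POVM element associated with the output $1$. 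The guarantee of \cref{thm:singular_value_discrimination} translates to $\Pi_1 E_1 \Pi_1 \ge (1-\eta)\Pi_1$ and $\Pi_0 E_1 \Pi_0 \le \eta\,\Pi_0$. Since $\rho$ commutes with each $\Pi_\cdot$, one gets $\Tr[E_1 \rho] \ge (1-\eta)\Tr[\Pi_1 \rho] \ge (1-\eta)(1 - 2^n b) \ge 1 - \negl(\secp)$. For $\xi$, decomposing a generic pure state in its support as $\ket{\psi} = \Pi_0\ket{\psi} + Q\ket{\psi}$ and applying Cauchy--Schwarz to the cross term $\langle\psi|\Pi_0 E_1 Q|\psi\rangle$ gives $\langle\psi|E_1|\psi\rangle \le \langle\psi|Q|\psi\rangle + O(\sqrt{\eta})$; averaging over a pure-state decomposition of $\xi$ yields $\Tr[E_1\xi] \le \Tr[Q\xi] + O(\sqrt{\eta}) \le \negl(\secp)$.

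The main subtlety is that $\rho$ can have exponentially small positive eigenvalues, so the threshold $b$ is forced to be exponentially small and $\cD_\mathrm{SVD}$ must perform exponentially many queries to the block encoding; this is precisely the step where $\unitaryPSPACE$-completeness becomes essential (and explains why we cannot stay in plain QPT). The remaining bookkeeping is to choose $p$ larger than $\log T + \omega(\log\secp)$, which is still a polynomial, so that the per-query block-encoding error $2^{-p}$ amplified over $T$ queries, together with the intrinsic SVD error $\eta$ and the $\unitaryPSPACE$-simulation error, all remain negligible.
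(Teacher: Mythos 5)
Your high-level plan is the same as the paper's: obtain an approximate block encoding of $\rho$, run the singular-value discrimination algorithm of \cref{thm:singular_value_discrimination} with exponentially small thresholds, and absorb the exponential cost of the SVD circuit using the $\unitaryPSPACE$-complete oracle $\cU$. That much is right, and so is the observation that the negligibility of $\Tr[Q\xi]$ is what ultimately drives the distinguishing advantage.

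There is, however, a genuine gap in the analysis. You define $\Pi_0,\Pi_{\mathrm{mid}},\Pi_1$ as the \emph{spectral projectors of $\rho$}, and then write the guarantees of \cref{thm:singular_value_discrimination} in the form $\Pi_1 E_1 \Pi_1 \ge (1-\eta)\Pi_1$ and $\Pi_0 E_1 \Pi_0 \le \eta\,\Pi_0$. But \cref{thm:singular_value_discrimination} guarantees this only for the singular-vector subspaces of the matrix actually encoded, namely $M=(\bra{0^a}\otimes I)V_\secp(\ket{0^a}\otimes I)$, which satisfies only $\|M-\rho\|_\infty \le 2^{-p}$ and is in general \emph{not equal} to $\rho$. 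Since $\rho$ has no spectral gap near the chosen thresholds, there is no generic perturbation bound (Davis--Kahan type) that lets you transfer from $M$'s singular projectors to $\rho$'s spectral projectors. This transfer is exactly the technical content the paper supplies in two separate lemmas: one showing that $\Tr[\Pi_{\ge\epsilon}\rho]\ge 1-2^{n-p+1}-2^n\epsilon$ (so $\rho$ sits mostly in $M$'s high-singular-value subspace, \cref{lem:RSV_subspace_and_support}), and another showing $\|\Pi_{\ge\epsilon}\ket{\psi}\|\le 2^{-p}\epsilon^{-1}$ whenever $Q\ket{\psi}=0$ (so $\rho$'s kernel has negligible overlap with $M$'s high-singular-value subspace, \cref{lem:negligible_overlap_with_large_RSVs}). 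These are not ``bookkeeping'' that a larger choice of $p$ absorbs via a naive $T\cdot 2^{-p}$ error budget; they are where the proof actually lives. The paper then post-selects onto the appropriate subspaces and invokes the gentle measurement lemma (\cref{lem:gentle_measurement}) to satisfy the promise of \cref{thm:singular_value_discrimination} cleanly.

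A secondary issue is the choice $a=0$. With $a=0$, the ``low'' subspace $W_0$ in \cref{thm:singular_value_discrimination} is the exact kernel of $M$. But $M$ is only $2^{-p}$-close to $\rho$ in operator norm; it need not have a nontrivial kernel at all, in which case $W_0=\{0\}$ and the guarantee $\Pi_0^{(M)} E_1 \Pi_0^{(M)}\le\eta\Pi_0^{(M)}$ is vacuous, giving you nothing to work with on the $\xi$ side. The paper avoids this by taking $a=2^{-3n}>0$ and $b=2^{-2n}$, and then using \cref{lem:negligible_overlap_with_large_RSVs} to argue that any state in $\rho$'s kernel lies (up to negligible norm) below the $a$-threshold of $M$.
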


Before giving the proof, we show  \cref{lem:PRU_distinguisher} with \cref{lem:distinguisher}.
We restate it here for the reader's convenience. 
\PRUdistinguisher*

\begin{proof}[Proof of \cref{lem:PRU_distinguisher}]
    Let $\rho=(\cE_{\{V_k\},\ell}\otimes \identitymap)(\ketbra{\Omega_{2^{\secp\ell}}}{\Omega_{2^{\secp\ell}}})$ and 
    $\xi=(\cM_{\mu_{2^\secp},\ell}\otimes\identitymap)(\ketbra{\Omega_{2^{\secp\ell}}}{\Omega_{2^{\secp\ell}}})$.
    From \cref{lem:block-encoding_of_random_Choi_states,lem:Haar_Choi_has_negligible_overlap}, they satisfy the condition in \cref{lem:distinguisher}.
    Therefore, we obtain \cref{lem:PRU_distinguisher} by applying \cref{lem:distinguisher}, which concludes the proof.
\end{proof}

Finally, we give the proof of \cref{lem:distinguisher}.

\begin{proof}[Proof of \cref{lem:distinguisher}]
    
    Suppose that $\rho$ and $\sigma$ are $n$-qubit state, where $n=\poly(\secp)$.
    Let $V_\secp$ be a $(1,2^{-p(\secp)},\poly(\secp))$-block encoding of $\rho$, where we chose a polynomial $p$ later.
    We construct $\cD^\cU$ from this $V_\secp$. Before that, consider the singular value discrimination algorithm in \cref{thm:singular_value_discrimination} with $M=(\ketbra{0^{\poly(\secp)}}{0^{\poly(\secp)}}\otimes I)V_\secp(\ketbra{0^{\poly(\secp)}}{0^{\poly(\secp)}}\otimes I)$, $\eta=2^{-\secp},a=2^{-3n}$ and $b=2^{-2n}$ by querying $\cU$.
    Based on this, we construct $\cD^\cU$ as follows: 
    \begin{enumerate}
        \item Take $n$-qubit state $\sigma$ as an input.
        \item Simulate the above singular value discrimination algorithm on input $\ketbra{0^{\poly(\secp)}}{0^{\poly(\secp)}}\otimes\sigma$.
        (Because this simulation queries $\cU$, it causes at most a negligible error.)
    \end{enumerate}

    Our goal is to show $\cD^\cU$ distinguishes $\rho$ from $\xi$.
    To use the singular value discrimination algorithm (\cref{thm:singular_value_discrimination}), the input state has to be inside of the promise with high probability.
    In other words, the input state must have a sufficiently large overlap with the subspace $W_0$ or $W_1$.
    Here, $W_0$ is the subspace spanned by the all right singular value vectors of $(\ketbra{0^{\poly(\secp)}}{0^{\poly(\secp)}}\otimes I)V_\secp(\ketbra{0^{\poly(\secp)}}{0^{\poly(\secp)}}\otimes I)$ with singular value is at most $a=2^{-3n}$, $W_1$ is the subspace spanned by the all right singular value vectors of $(\ketbra{0^{\poly(\secp)}}{0^{\poly(\secp)}}\otimes I)V_\secp(\ketbra{0^{\poly(\secp)}}{0^{\poly(\secp)}}\otimes I)$ with singular value is at least $b=2^{-2n}$.
    In the following, we show that $\rho$ 
    has at least $1-\negl(\secp)$ overlap with $W_1$
    and that $\xi$ 
    has at least $1-\negl(\secp)$ overlap with $W_0$.

    \paragraph{Large overlap with $W_1$.}
    First, we show that $\rho$ has a large overlap with $W_1$.
    This follows from that $V_\secp$ is a $(1,2^{-p(\secp)},\poly(\secp))$-block encoding of $\rho$.
    The formal statement is the following.

    \begin{restatable}{lemma}{RSVsubspace}\label{lem:RSV_subspace_and_support}
        Let $\Pi_{\ge\epsilon}$ be the projection onto the subspace spanned by right singular vectors of 
        $(\bra{0^{\poly(\secp)}}\otimes I)V_\secp(\ket{0^{\poly(\secp)}}\otimes I)$ 
        with singular value at least $\epsilon$. If $V_\secp$ is a $(1,2^{-p(\secp)},\poly(\secp))$-block encoding of $\rho$, then 
        \begin{align}
            \Tr[\Pi_{\ge\epsilon}\rho]\ge1-2^{n-p+1}-2^{n}\epsilon.
        \end{align}
    \end{restatable}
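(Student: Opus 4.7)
The plan is to unpack the block encoding condition and compare the action of $A := (\bra{0^{\poly(\secp)}} \otimes I) V_\secp (\ket{0^{\poly(\secp)}} \otimes I)$ with that of $\rho$ on the right singular vectors of $A$. By the definition of a $(1, 2^{-p(\secp)}, \poly(\secp))$-block encoding, we have $\|\rho - A\|_\infty \le 2^{-p(\secp)}$. Write the singular value decomposition $A = \sum_i s_i \ket{u_i}\bra{v_i}$ on the $2^n$-dimensional space; then $\Pi_{\ge \epsilon} = \sum_{i: s_i \ge \epsilon} \ketbra{v_i}{v_i}$, and we will bound the complementary projection $\Pi_{<\epsilon} := I - \Pi_{\ge \epsilon}$.

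The main step is the observation that, for every right singular vector $\ket{v_i}$, the length $\|\rho \ket{v_i}\|$ is close to $s_i = \|A \ket{v_i}\|$. Concretely, by the triangle inequality,
\begin{align}
\|\rho \ket{v_i}\| \le \|A \ket{v_i}\| + \|(\rho - A)\ket{v_i}\| \le s_i + 2^{-p(\secp)}.
\end{align}
Hence, for every index $i$ with $s_i < \epsilon$, we obtain $\|\rho \ket{v_i}\| < \epsilon + 2^{-p(\secp)}$. Applying Cauchy--Schwarz gives $\bra{v_i} \rho \ket{v_i} \le \|\rho \ket{v_i}\| < \epsilon + 2^{-p(\secp)}$. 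Since $A$ has at most $2^n$ singular values, the index set $\{i : s_i < \epsilon\}$ has cardinality at most $2^n$, so summing yields
\begin{align}
\Tr[\Pi_{<\epsilon}\, \rho] = \sum_{i: s_i < \epsilon} \bra{v_i} \rho \ket{v_i} \le 2^n\bigl(\epsilon + 2^{-p(\secp)}\bigr).
\end{align}
Using $\Tr[\rho] = 1$, we conclude $\Tr[\Pi_{\ge \epsilon}\, \rho] \ge 1 - 2^n \epsilon - 2^{n-p(\secp)}$, which is even slightly stronger than the claimed bound $1 - 2^n \epsilon - 2^{n-p(\secp)+1}$.

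There is no real obstacle here: the argument is essentially a routine perturbation bound on singular values, combined with Cauchy--Schwarz and the rank bound $\mathrm{rank}(\Pi_{<\epsilon}) \le 2^n$ coming from the dimension of the ``inner'' space of the block encoding. The only place to be careful is to work with $A$ (which need not be Hermitian or positive) rather than with $\rho$ directly, and to use the right singular vectors of $A$ as the diagonalizing basis for $\Pi_{\ge \epsilon}$.
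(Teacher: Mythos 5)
Your proof is correct, and it takes a genuinely different (and slightly cleaner) route than the paper's. The paper proceeds via the trace-norm inequality $\Tr[\Pi_{\ge\epsilon}\rho] \ge \bigl|\Tr[A\Pi_{\ge\epsilon}]\bigr| - \|\rho-A\|_1$ (where $A$ is the top-left block of $V_\secp$), then bounds each piece separately: $\|\rho-A\|_1 \le 2^n\|\rho-A\|_\infty \le 2^{n-p}$ and $\bigl|\Tr[A\Pi_{\ge\epsilon}]\bigr| \ge |\Tr[A]| - \|A(I-\Pi_{\ge\epsilon})\|_1 \ge 1 - 2^{n-p} - 2^n\epsilon$. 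The $\|\rho-A\|_1$ error enters twice in that chain (once directly, once hidden inside $|\Tr[A]| \ge 1 - 2^{n-p}$), which is exactly why the paper lands on $1 - 2^{n-p+1} - 2^n\epsilon$. You instead bound the mass on the complementary projection $\Pi_{<\epsilon}$ vector by vector: for each right singular vector $\ket{v_i}$ with $s_i<\epsilon$, Cauchy--Schwarz and the operator-norm bound give $\bra{v_i}\rho\ket{v_i} \le \|\rho\ket{v_i}\| \le s_i + 2^{-p} < \epsilon + 2^{-p}$, and summing over the at most $2^n$ such indices yields $\Tr[\Pi_{<\epsilon}\rho] \le 2^n\epsilon + 2^{n-p}$. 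This pays the error only once and gives $1 - 2^n\epsilon - 2^{n-p}$, which is strictly tighter than the stated bound by $2^{n-p}$. Your argument also avoids H\"older/trace-norm manipulations entirely, using only the operator norm and Cauchy--Schwarz; the one point to state explicitly (as you implicitly do) is that the set of right singular vectors in the full SVD, including those with zero singular value, forms an orthonormal basis so that $\Pi_{<\epsilon}$ has the diagonal form you use.
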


    \ifnum\submission=0
    We give its proof later. 
    \fi
    \ifnum\submission=1
    We provide the proof in the supplemental materials.
    \fi
    We define
    \begin{align}
        \rho'\coloneqq\frac{\Pi_{\ge2^{-2n}}\rho\Pi_{\ge2^{-2n}}}{\Tr[\Pi_{\ge2^{-2n}}\rho]},
        \label{eq:def_of_sigma}
    \end{align}
    where $\Pi_{\ge2^{-2n}}$ is the projection in \cref{lem:RSV_subspace_and_support} with $\epsilon=2^{-2n}$. Therefore, we have
    \begin{align}
        &\bigg|
         \Pr[1\gets\cD^\cU(\rho)]
         -\Pr[1\gets\cD^\cU(\rho')]
        \bigg|
        \notag\\
        \le&
        \|
         \rho
         -\rho'
        \|_1
        \notag\\
        \le&\sqrt{2^{n-p+1}+2^{-n}}
        \tag{By \cref{lem:RSV_subspace_and_support} and the gentle measurement lemma (\cref{lem:gentle_measurement})}\\
        \le&\sqrt{2^{n-p+1}+\negl(\secp)}.\label{eq:postselection_of_random_Choi_states_for_PRU}
    \end{align}

    \paragraph{Large overlap with $W_0$.}

    Next, we show $\xi$ has a large overlap with $W_0$.
    From the assumption, we have 
    \begin{align}
        \Tr[Q\xi]\le\negl(\secp),\label{eq:xi_has_negligible_overlap_with_Q}
    \end{align}
    where $Q$ is the projection onto the support of $\rho$.
    \if0
    The rank of $Q$ is at most $|\cK_\secp|\le2^\ell$, 
    and therefore $\Tr[Q]\le2^\ell$, 
    because
    \begin{align}
        \rho
        =\frac{1}{|\cK_\secp|}\sum_{k\in\cK_\secp}
        (U_k^{'\otimes\ell}\otimes I)\ketbra{\Omega_{2^{\secp\ell}}}{\Omega_{2^{\secp\ell}}}(U_k^{'^\dag\otimes\ell}\otimes I).
    \end{align}
    Thus we have
    \begin{align}
        \Tr[Q(\cM_{\mu_{2^\secp},\ell}\otimes \identitymap)(\ketbra{\Omega_{2^{\secp\ell}}}{\Omega_{2^{\secp\ell}}})]
        \le&
        \Tr[Q\Exp_{\ket{\psi}\gets\sigma_{2^{2\secp}}}\ketbra{\psi}{\psi}^{\otimes\ell}]
        +O\bigg(\frac{\ell^2}{2^\secp}\bigg)
        \tag{By \cref{lem:Haar_state_vs_Haar_choi_state}}
        \\
        =&
        \frac{\Tr[Q\Pi_{\symetric}]}{\binom{2^{2\secp}+\ell-1}{\ell}}+\negl(\secp)
        \notag
        \\
        \le&\frac{2^\ell}{\binom{2^{2\secp}+\ell-1}{\ell}}+\negl(\secp)
        \tag{By $\Tr[Q\Pi_\symetric]\le\Tr[Q]\le2^\ell$}\\
        \le&O\bigg(
        \frac{2^\ell(\ell!)}{2^{2\secp\ell}}
        \bigg)
        +\negl(\secp)
        \notag\\
        \le&O\bigg(
        \frac{2^\ell \ell^{\ell+1/2}e^{-\ell}}{2^{2\secp\ell}}
        \bigg)
        +\negl(\secp)
        \tag{By the Stirling's formula, $\ell!\le \ell^{\ell+1/2}e^{-\ell+1}$}\\
        =&O\bigg(\ell^{1/2}
        \bigg(\frac{2e^{-1} \ell}{2^{2\secp}}\bigg)^\ell
        \bigg)
        +\negl(\secp)
        \notag\\
        \le&\negl(\secp).
    \end{align}
    \fi
    We define
    \begin{align}
        \xi'\coloneqq\frac{(I-Q)\xi(I-Q)}{\Tr[(I-Q)\xi]}.
        \label{eq:def_of_tau}
    \end{align}
    With \cref{eq:xi_has_negligible_overlap_with_Q} and the gentle measurement lemma (\cref{lem:gentle_measurement}), we have
    \begin{align}
        \bigg|
         \Pr[1\gets\cD^\cU(\xi)]-
         \Pr[1\gets\cD^\cU(\xi')]
         \bigg|
        &\le
        \|
         \xi-
         \xi'
        \|_1
        =\negl(\secp).\label{eq:first_postselection_of_Haar_random_Choi_states}
    \end{align}
    Moreover, we can show that $\xi'$ has a large overlap with $W_0$ from the following lemma.
    \ifnum\submission=0
    We give its proof later. 
    \fi
    \ifnum\submission=1
    We provide the proof in the supplemental materials.
    \fi

    \begin{restatable}{lemma}{smallRSV}\label{lem:negligible_overlap_with_large_RSVs}
        Let $\ket{\psi}$ be a state such that $Q\ket{\psi}=0$, where $Q$ is the projection onto the support of 
        $\rho$. Let $\Pi_{\ge\epsilon}$ be the projection in \cref{lem:RSV_subspace_and_support}.  
        If $V_\secp$ is a $(1,2^{-p(\secp)},\poly(\secp))$-block encoding of $\rho$, then, $\|\Pi_{\ge\epsilon}\ket{\psi}\|\le2^{-p}\epsilon^{-1}$. 
    \end{restatable}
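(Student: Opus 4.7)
The plan is to exploit the block-encoding condition $\|M - \rho\|_\infty \le 2^{-p}$, where I write $M \coloneqq (\bra{0^{\poly(\secp)}}\otimes I) V_\secp (\ket{0^{\poly(\secp)}}\otimes I)$, together with the assumption $\rho\ket{\psi} = 0$ (which follows from $Q\ket{\psi}=0$, since $Q$ projects onto the support of $\rho$). First I would observe that
\begin{align}
\|M\ket{\psi}\| = \|(M-\rho)\ket{\psi} + \rho\ket{\psi}\| = \|(M-\rho)\ket{\psi}\| \le \|M-\rho\|_\infty \le 2^{-p},
\end{align}
using $\rho\ket{\psi}=0$ in the second equality and the operator-norm bound from the definition of a $(1,2^{-p},\poly(\secp))$-block encoding in the last step.

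Next, I would introduce the singular value decomposition $M = \sum_i s_i \ket{u_i}\bra{v_i}$, so that $\Pi_{\ge\epsilon} = \sum_{i:\, s_i \ge \epsilon} \ket{v_i}\bra{v_i}$. Since the $\ket{u_i}$ are orthonormal,
\begin{align}
\|M\ket{\psi}\|^2 = \sum_i s_i^2 |\braket{v_i|\psi}|^2 \ge \sum_{i:\, s_i \ge \epsilon} s_i^2 |\braket{v_i|\psi}|^2 \ge \epsilon^2 \sum_{i:\, s_i \ge \epsilon} |\braket{v_i|\psi}|^2 = \epsilon^2 \|\Pi_{\ge\epsilon}\ket{\psi}\|^2.
\end{align}

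Combining the two displays yields $\epsilon \|\Pi_{\ge\epsilon}\ket{\psi}\| \le \|M\ket{\psi}\| \le 2^{-p}$, which rearranges to the desired bound $\|\Pi_{\ge\epsilon}\ket{\psi}\| \le 2^{-p}\epsilon^{-1}$. There is no real obstacle here: the statement is essentially a direct consequence of (i) the block-encoding error bound translating into a bound on $\|M\ket{\psi}\|$ whenever $\ket{\psi}$ lies in $\ker \rho$, and (ii) the elementary fact that a vector with nontrivial component in $\Pi_{\ge\epsilon}$ cannot be mapped to something too small by $M$. The only mild care needed is to make sure the singular value decomposition lower bound uses orthonormality of the left singular vectors $\ket{u_i}$, which is immediate.
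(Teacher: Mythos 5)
Your proof is correct and matches the paper's argument essentially line for line: both derive $\|M\ket{\psi}\|\le 2^{-p}$ from $\rho\ket{\psi}=0$ and the operator-norm bound in the block-encoding definition, then use the singular value decomposition of $M$ to lower-bound $\|M\ket{\psi}\|$ by $\epsilon\|\Pi_{\ge\epsilon}\ket{\psi}\|$. The only differences are notational.
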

    
    Since $\Tr[Q\xi']=0$ by its definition (\cref{eq:def_of_tau}), by applying \cref{lem:negligible_overlap_with_large_RSVs} with $\epsilon=2^{-3n}$ we have
    \begin{align}
        \Tr[\Pi_{\ge2^{-3n}}\xi']\le2^{-2p+6n}.
    \end{align}
    Thus, let us define 
    \begin{align}
        \xi''\coloneqq\frac{(I-\Pi_{\ge2^{-3n}})\xi'(I-\Pi_{\ge2^{-3n}})}{\Tr[(I-\Pi_{\ge2^{-3n}})\tau]}.
    \end{align}
    From the above inequality and the gentle measurement lemma (\cref{lem:gentle_measurement}), we have
    \begin{align}
        \bigg|
         \Pr[1\gets\cD^\cU(\xi')]-\Pr[1\gets\cD^\cU(\xi'')]
        \bigg|
        \le\|\xi'-\xi''\|_1
        \le2^{-p+3n}.\label{eq:second_postselection_of_Haar_random_Choi_states}
    \end{align}

    \paragraph{Combining All Components.}
    Now we are ready to show $\cD^\cU$ distinguishes $\rho$ from $\xi$.
    Recall that $\cD^\cU$ simulates the singular value discrimination algorithm in \cref{thm:singular_value_discrimination} with $M=(\ketbra{0^{\poly(\secp)}}{0^{\poly(\secp)}}\otimes I)V_\secp(\ketbra{0^{\poly(\secp)}}{0^{\poly(\secp)}}\otimes I)$, $\eta=2^{-\secp},a=2^{-3n}$ and $b=2^{-2n}$ by querying $\cU$. Note that $\rho'$ is on $W_1$ and $\xi''$ is on $W_0$.
    Thus, from \cref{thm:singular_value_discrimination}, we have
    \begin{align}
        \bigg|
         \Pr[1\gets\cD^\cU(\rho')]-\Pr[1\gets\cD^\cU(\xi'')]
        \bigg|\ge1-\negl(\secp).
    \end{align}
    Moreover, by choosing $p(\secp)=4n(\secp)$, we have
    \begin{align}
        \bigg|
         \Pr[1\gets\cD^\cU(\rho)]
         -\Pr[1\gets\cD^\cU(\rho')]
        \bigg|
        \le&\sqrt{2^{-3n+1}+\negl(\secp)}
        \tag{By \cref{eq:postselection_of_random_Choi_states_for_PRU}}\\
        \le&\negl(\secp)
    \end{align}
    and
    \begin{align}
        \bigg|
         \Pr[1\gets\cD^\cU(\xi)]-
         \Pr[1\gets\cD^\cU(\xi'')]
        \bigg|
        \le&
        \negl(\secp)+2^{-n}
        \tag{By \cref{eq:first_postselection_of_Haar_random_Choi_states,eq:second_postselection_of_Haar_random_Choi_states}}\\
        \le&\negl(\secp).
    \end{align}
    With these inequalities at hand, we have
    \begin{align}
        &\bigg|
         \Pr[1\gets\cD^\cU(\rho)]
         -\Pr[1\gets\cD^\cU(\xi)]
        \bigg|
        \notag\\
        \ge&1-\negl(\secp),
    \end{align}
    which concludes the proof.
\end{proof}

\ifnum\submission=0
We give proofs of \cref{lem:RSV_subspace_and_support,lem:negligible_overlap_with_large_RSVs} to complete the proof. First, we show \cref{lem:RSV_subspace_and_support}. We restate it here for the reader's convenience.

\RSVsubspace*

\begin{proof}[Proof of \cref{lem:RSV_subspace_and_support}]
    We have
    \begin{align}
        &\Tr[\rho\Pi_{\ge\epsilon}]
        \\
        =&\bigg|
         \Tr[(\bra{0^{\poly(\secp)}}\otimes I)V_\secp(\ket{0^{\poly(\secp)}}\otimes I)\Pi_{\ge\epsilon}]+
        \Tr[(\rho-(\bra{0^{\poly(\secp)}}\otimes I)V_\secp(\ket{0^{\poly(\secp)}}\otimes I))\Pi_{\ge\epsilon}]
        \bigg|
        \\
        \ge&\bigg|
         \Tr[(\bra{0^{\poly(\secp)}}\otimes I)V_\secp(\ket{0^{\poly(\secp)}}\otimes I)\Pi_{\ge\epsilon}]
        \bigg|
         -\bigg\|(\rho-(\bra{0^{\poly(\secp)}}\otimes I)V_\secp(\ket{0^{\poly(\secp)}}\otimes I))\Pi_{\ge\epsilon}
         \bigg\|_1
        \tag{By the triangle inequality and $|\Tr[A]|\le\|A\|_1$}\\
        \ge&
        \bigg|
         \Tr[(\bra{0^{\poly(\secp)}}\otimes I)V_\secp(\ket{0^{\poly(\secp)}}\otimes I)\Pi_{\ge\epsilon}]
        \bigg|
        -\bigg\|\rho-(\bra{0^{\poly(\secp)}}\otimes I)V_\secp(\ket{0^{\poly(\secp)}}\otimes I)
        \bigg\|_1.
        \tag{By H\"{o}lder's inequality (\cref{lem:Holder}) and $\|\Pi_{\ge\epsilon}\|_\infty=1$}
   \end{align}
    We can show
    \begin{align}
        \bigg|
         \Tr[(\bra{0^{\poly(\secp)}}\otimes I)V_\secp(\ket{0^{\poly(\secp)}}\otimes I)\Pi_{\ge\epsilon}]
        \bigg|
        \ge1-2^{n-p}-2^{n}\epsilon\label{eq:bound_of_trace_with_Pi}
    \end{align}
    and 
    \begin{align}
        \|\rho-(\bra{0^{\poly(\secp)}}\otimes I)V_\secp(\ket{0^{\poly(\secp)}}\otimes I)\|_1
        \le2^{n-p}.\label{eq:bound_of_trace_distance}
    \end{align}
    We give their proofs later. With these inequalities at hand, we obtain \cref{lem:RSV_subspace_and_support}.

    To conclude the proof, we give proofs of \cref{eq:bound_of_trace_with_Pi,eq:bound_of_trace_distance}. We can show the latter as follows:
    \begin{align}
        &\|
       \rho
        -(\bra{0^{\poly(\secp)}}\otimes I)V_\secp(\ket{0^{\poly(\secp)}}\otimes I)
        \|_1
        \\
        \le&2^{n}
        \|
        \rho
        -(\bra{0^{\poly(\secp)}}\otimes I)V_\secp(\ket{0^{\poly(\secp)}}\otimes I)
        \|_\infty
        \tag{By $\|A\|_1\le d\|A\|_\infty$ for any $A\in\Linear(d)$}
        \\
        \le&2^{n-p}.
        \tag{Since $V_\secp$ is a $(1,2^{-p},\poly(\secp))$-block encoding of $\rho$}
    \end{align}
    We can show the former as follows.
    \begin{align}
        &\bigg|
         \Tr[(\bra{0^{\poly(\secp)}}\otimes I)V_\secp(\ket{0^{\poly(\secp)}}\otimes I))\Pi_{\ge\epsilon}]
        \bigg|
        \\
        =&
        \bigg|
         \Tr[\bra{0^{\poly(\secp)}}\otimes I)V_\secp(\ket{0^{\poly(\secp)}}\otimes I)]
        -
         \Tr[(\bra{0^{\poly(\secp)}}\otimes I)V_\secp(\ket{0^{\poly(\secp)}}\otimes I)(I-\Pi_{\ge\epsilon})]
        \bigg|
        \\
        \ge&
        \bigg|
         \Tr[\bra{0^{\poly(\secp)}}\otimes I)V_\secp(\ket{0^{\poly(\secp)}}\otimes I)]
        \bigg|
        -
        \bigg\|
         (\bra{0^{\poly(\secp)}}\otimes I)V_\secp(\ket{0^{\poly(\secp)}}\otimes I)(I-\Pi_{\ge\epsilon})
        \bigg\|_1,\label{eq:ineq1_for_bound_of_trace_with_Pi}
    \end{align}
    where we have used the triangle inequality and $|\Tr[A]|\le\|A\|_1=\|-A\|_1$ in the inequality. The first term can be estimated as follows:
    \begin{align}
        &\bigg|
         \Tr[\bra{0^{\poly(\secp)}}\otimes I)V_\secp(\ket{0^{\poly(\secp)}}\otimes I)]
        \bigg|
        \notag\\
        =&
        \bigg|
         \Tr[\rho]
         -
         \Tr[\rho-
         \bra{0^{\poly(\secp)}}\otimes I)V_\secp(\ket{0^{\poly(\secp)}}\otimes I)]
        \bigg|
        \notag\\
        \ge&
        |\Tr[\rho]|-
        \bigg\|
         \rho-
         \bra{0^{\poly(\secp)}}\otimes I)V_\secp(\ket{0^{\poly(\secp)}}\otimes I)
        \bigg\|_1
        \tag{By the triangle inequality and $|\Tr[A]|\le\|A\|_1$}
        \\
        \ge&1-2^{n-p},\label{eq:ineq2_for_bound_of_trace_with_Pi}
    \end{align}
    where we have used \cref{eq:bound_of_trace_distance} in the last inequality. To estimate the second term in \cref{eq:ineq1_for_bound_of_trace_with_Pi}, recall that $\Pi_{\ge\epsilon}$ is the projection onto the subspace spanned by right singular vectors of $\bra{0^{\poly(\secp)}}\otimes I)V_\secp(\ket{0^{\poly(\secp)}}\otimes I)$ whose singular values are at least $\epsilon$. 
    Thus, $I-\Pi_{\ge\epsilon}$ is the projection onto the subspace spanned by right singular vectors of $\bra{0^{\poly(\secp)}}\otimes I)V_\secp(\ket{0^{\poly(\secp)}}\otimes I)$ whose singular values are less than $\epsilon$.
    In addition to that, note that the number of singular values of $\bra{0^{\poly(\secp)}}\otimes I)V_\secp(\ket{0^{\poly(\secp)}}\otimes I)$ is at most $2^{n}$ since $\bra{0^{\poly(\secp)}}\otimes I)V_\secp(\ket{0^{\poly(\secp)}}\otimes I)$ is an operator acting on $n$ qubits.
    With these observations, we have
    \begin{align}
        \bigg\|
         (\bra{0^{\poly(\secp)}}\otimes I)V_\secp(\ket{0^{\poly(\secp)}}\otimes I)(I-\Pi_{\ge\epsilon})
        \bigg\|_1\le2^{n}\epsilon\label{eq:ineq3_for_bound_of_trace_with_Pi}
    \end{align}
    since $\|A\|_1$ is equivalent to the sum of its singular values.
    From \cref{eq:ineq1_for_bound_of_trace_with_Pi,eq:ineq2_for_bound_of_trace_with_Pi,eq:ineq3_for_bound_of_trace_with_Pi}, we obtain \cref{eq:bound_of_trace_with_Pi}.
\end{proof}

Finally, we show \cref{lem:negligible_overlap_with_large_RSVs}. We also restate it here.

\smallRSV*

\begin{proof}[Proof of \cref{lem:negligible_overlap_with_large_RSVs}]
    Since $Q$ is the projection onto $\rho$, we have $\rho\ket{\psi}=\rho Q\ket{\psi}=0$. Then, we have
    \begin{align}
        \|
         \bra{0^{\poly(\secp)}}\otimes I)V_\secp(\ket{0^{\poly(\secp)}}\otimes I)\ket{\psi}
        \|
        =&\big\|
         \big(\rho-
         \bra{0^{\poly(\secp)}}\otimes I)V_\secp(\ket{0^{\poly(\secp)}}\otimes I)\big)\ket{\psi}
        \big\|
        \\
        \le&
        \|\rho-
         \bra{0^{\poly(\secp)}}\otimes I)V_\secp(\ket{0^{\poly(\secp)}}\otimes I)
         \|_\infty
         \\
         \le&2^{-p},\label{eq:ineq1_for_negligible_overlap_with_large_RSVs}
    \end{align}
    where the last inequality follows from the assumption that $V_\secp$ is a $(1,2^{-p},\poly(\secp))$-blcok encoding of $\rho$.
    Let 
    \begin{align}
        (\bra{0^{\poly(\secp)}}\otimes I)V_\secp(\ket{0^{\poly(\secp)}}\otimes I)=\sum_i a_i\ketbra{w_i}{v_i}
    \end{align}
    be the singular value decomposition. Namely, $\{\ket{w_i}\}_i$ and $\{\ket{v_i}\}_i$ are sets of orthonormal states, and all $a_i$ are positive real numbers.
    Since each $\ket{v_i}$ is the right singular vector of $\bra{0^{\poly(\secp)}}\otimes I)V_\secp(\ket{0^{\poly(\secp)}}\otimes I)$ whose singular value is $a_i$, we have $\Pi_{\ge\epsilon}=\sum_{i:a_i\ge\epsilon}\ketbra{v_i}{v_i}$.
    Then, we have 
    \begin{align}
        \|
         \bra{0^{\poly(\secp)}}\otimes I)V_\secp(\ket{0^{\poly(\secp)}}\otimes I)\ket{\psi}
        \|
        =&
        \bigg\|
         \sum_i a_i\ket{w_i}\braket{v_i|\psi}
        \bigg\|
        \\
        =&
        \sqrt{
        \sum_{i}a_i^2|\braket{v_i|\psi}|^2
        }
        \\
        \ge&
        \sqrt{
        \sum_{i:a_i\ge\epsilon}a_i^2|\braket{v_i|\psi}|^2
        }
        \\
        \ge&\epsilon\sqrt{
        \sum_{i:a_i\ge\epsilon}|\braket{v_i|\psi}|^2
        }
        \\
        =&\epsilon\|\Pi_{\ge\epsilon}\ket{\psi}\|,\label{eq:ineq2_for_negligible_overlap_with_large_RSVs}
    \end{align}
    where in the last inequality we have used $\Pi_{\ge\epsilon}=\sum_{i:a_i\ge\epsilon}\ketbra{v_i}{v_i}$.
    From \cref{eq:ineq1_for_negligible_overlap_with_large_RSVs,eq:ineq2_for_negligible_overlap_with_large_RSVs},
    we have $\epsilon\|\Pi_{\ge\epsilon}\ket{\psi}\|\le2^{-p}$, which implies 
    $\|\Pi_{\ge\epsilon}\ket{\psi}\|\le2^{-p}\epsilon^{-1}$.
\end{proof}
\fi

\section{Oracle Separation Between PRIs with Short Stretch and PRFSGs}
\label{sec:PRFSG_vs_PRI}

\ifnum\submission=0
In this section, we prove the following.
\fi
\ifnum\submission=1
By modifying the proof in \cref{sec:PRUs}, we can prove the following.
We give it in the supplemental material.
\fi

\begin{theorem}\label{thm:PRI_vs_PRFSG}
    Then, with probability $1$ over the choice of $\cO$ defined in \cref{def:unitary_orcle}, the following are satisfied:
    \begin{itemize}
        \item Quantumly-accessible adaptively-secure PRFSGs exist relative to $\cO$.
        \item Non-adaptive, $O(\log\secp)$-ancilla PRIs with $O(\log\secp)$ stretch do not exist relative to $\cO$.
    \end{itemize}
\end{theorem}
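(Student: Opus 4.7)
The first claim, existence of quantumly-accessible adaptively-secure PRFSGs relative to $\cO$, follows immediately from \cref{thm:PRFSGs_relative_to_unitary_oracle}. For the second claim, my plan is to imitate the structure of \cref{sec:PRUs} and construct an explicit QPT adversary, now for isometries with $O(\log\secp)$ stretch. Let $G^{\cO}$ satisfy the correctness of a $c$-ancilla PRI with stretch $s$, where $s,c=O(\log\secp)$. Write $\cI_k$ for the implemented isometry from $\secp$ to $\secp+s$ qubits, and let $W_k$ be its $(\secp+s+c)$-qubit purification unitary, so that $\cI_k(\cdot)=\Tr_{\regC}[W_k(\cdot\otimes\ketbra{0^{s+c}}{0^{s+c}})W_k^\dagger]$ where $\regC$ is a $c$-qubit register traced out. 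The adversary, as in \cref{alg:break_PRU}, first runs process tomography on the swap oracles $\cS_n$ for $n\in[d]$, where now $d\coloneqq 2\log(\ell T p)+s+c=O(\log\secp)$, then non-adaptively queries $\cI^{\otimes\ell}$ (with $\ell=\lceil\log|\cK_\secp|\rceil$) on halves of maximally entangled states to obtain the Choi state, and finally runs the singular-value discrimination algorithm (\cref{thm:singular_value_discrimination}) by querying the $\unitaryPSPACE$-complete oracle $\cU$.

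The supporting lemmas require only straightforward modifications. The analog of \cref{lem:approx_random_Choi_state} is obtained by invoking \cref{lem:swap_unitary_is_almost_identity_for_Choi_state;ancilla} with parameter $c'=s+c$ in place of $c$: each query to a swap oracle with $n>d$ is approximated by the identity at cost $O(2^{(s+c)/2}/2^{n/2})$, and summing over the $T$ queries the total error remains $O(1/(\ell p))$ thanks to the choice of $d$. The resulting family $\{V_k\}$ of $(\secp+s+c)$-qubit unitaries accesses only $\cU$ and the tomography outputs, defining a mixture CPTP map $\cE_{\{V_k\},\ell}$ whose Choi state is block-encoded efficiently via \cref{lem:block-encoding_of_random_Choi_states}. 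Because $d=O(\log\secp)$, the tomography step still runs in QPT.

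The quantitative heart of the argument is the isometry analog of \cref{lem:Haar_Choi_has_negligible_overlap}. The rank of $(\cE_{\{V_k\},\ell}\otimes\identitymap)(\ketbra{\Omega_{2^{\secp\ell}}}{\Omega_{2^{\secp\ell}}})$ is bounded by $|\cK_\secp|\cdot 2^{c\ell}\le 2^{\ell(1+c)}$, because for each key the purified Choi state is pure and tracing the $c\ell$-qubit register produces rank at most $2^{c\ell}$. For a Haar random isometry, writing $\cI=UV$ with $V:\ket{x}\mapsto\ket{0^s}\ket{x}$ and $U\gets\mu_{2^{\secp+s}}$, the resulting Haar-averaged Choi state $\rho_0$ is invariant under simultaneous permutation of the $\ell$ input-output copies, hence lies in the $\ell$-fold symmetric subspace on $2^{2\secp+s}$ dimensions of dimension $\binom{2^{2\secp+s}+\ell-1}{\ell}$. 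Together with a Weingarten-based smallness bound for the maximum eigenvalue, this yields
\begin{align}
    \Tr[Q\cdot\rho_0]
    \le \frac{\Tr[Q]}{\binom{2^{2\secp+s}+\ell-1}{\ell}}+\negl(\secp),
\end{align}
which by Stirling is $O\bigl(\ell^{1/2}(\ell\cdot 2^{1+c+s-2\secp}/e)^\ell\bigr)+\negl(\secp)=\negl(\secp)$, since $\ell=\poly(\secp)$ and $c,s=O(\log\secp)$ make the exponent $2\secp-1-c-s=\Omega(\secp)$ dominate. Feeding these two analogs into \cref{lem:distinguisher} yields a QPT adversary with non-adaptive advantage $1-o(1)$; combined with the Borel--Cantelli argument of \cref{thm:main}, this proves that $O(\log\secp)$-ancilla non-adaptive PRIs with $O(\log\secp)$ stretch do not exist relative to $\cO$ with probability $1$.

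The main technical obstacle I anticipate is cleanly establishing the isometry analog of \cref{lem:Haar_state_vs_Haar_choi_state}: unlike in the unitary case, the reduced state on the reference system of an isometry Choi state is pinned to maximally mixed, so the naive ``$\Exp\ketbra{\psi}{\psi}^{\otimes\ell}$ on $2^{2\secp+s}$-dimensional Haar states'' approximation is not literally correct. What saves the day is that we only need an upper bound of the form $\|\rho_0\|_\infty\le(1+o(1))/\binom{2^{2\secp+s}+\ell-1}{\ell}$, and this follows directly from the symmetric-subspace containment plus a $O(\ell^2/2^{\secp+s})$-sized Weingarten remainder. Once this is in hand, the remaining steps are essentially the same calculations as in \cref{subsec:proof_of_approx_random_Choi_state,subsec:proof_of_distinguisher}.
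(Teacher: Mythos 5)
Your proposal follows the same overall strategy as the paper's proof of \cref{thm:PRI_vs_PRFSG}: inherit the PRFSG side from \cref{thm:PRFSGs_relative_to_unitary_oracle}, run process tomography on the small swap oracles, prepare $\ell$ copies of the Choi state of the candidate isometry, and apply singular-value discrimination against the block-encoding of the $\cU$-computable replacement family, using a rank bound of $2^{(1+c)\ell}$ on the support. Your treatment of the Choi-state perturbation is in fact cleaner than the paper's: instead of transposing the preparation isometry $A:\regX\to\regX\regY\regZ$ across the maximally entangled state via \cref{claim:choi_state_on_isometry} (which injects a multiplicative $2^{s+c}$ factor), you group the $s$-qubit output ancilla and the $c$-qubit traced-out ancilla together and apply \cref{lem:swap_unitary_is_almost_identity_for_Choi_state;ancilla} directly with $c'=s+c$. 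This yields the per-step error $O(2^{(s+c)/2-n/2})$, lets you take the smaller $d=2\log(\ell T p)+s+c$, and sidesteps the bookkeeping that leads to the paper's larger $d=2\log(\ell Tp)+3c+2s$; both choices are $O(\log\secp)$, so either suffices.

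One part of your argument, as written, does not go through: you claim that the isometry Choi state closeness issue is resolved by an operator-norm bound $\|\rho_0\|_\infty\le(1+o(1))/\binom{2^{2\secp+s}+\ell-1}{\ell}$, and that this ``follows directly from the symmetric-subspace containment plus a $O(\ell^2/2^{\secp+s})$-sized Weingarten remainder.'' Two problems. First, permutation invariance of $\rho_0$ is not the same as support in the symmetric subspace: a permutation-invariant mixed state in general spreads over several irreps, so the containment you are relying on is not literally true. Second, even granting closeness $\|\rho_0-\Exp_{\ket\psi}\ketbra{\psi}{\psi}^{\otimes\ell}\|_1\le O(\ell^2/2^{\secp+s})$ (which is what the paper actually proves, via the Haar twirl approximation of \cref{lem:Haar_approx} in the proof of \cref{lem:Haar_isometry_Choi}), an additive $O(\ell^2/2^{\secp+s})$ error in operator norm utterly dominates the tiny $1/\binom{2^{2\secp+s}+\ell-1}{\ell}$ target, so you cannot extract the multiplicative $(1+o(1))$ bound from it. Fortunately you never needed that bound: the displayed inequality $\Tr[Q\rho_0]\le\Tr[Q]/\binom{2^{2\secp+s}+\ell-1}{\ell}+\negl(\secp)$ is what matters, and it follows directly by writing $\Tr[Q\rho_0]\le\Tr[Q\cdot\Exp_{\ket\psi}\ketbra{\psi}{\psi}^{\otimes\ell}]+\|\rho_0-\Exp_{\ket\psi}\ketbra{\psi}{\psi}^{\otimes\ell}\|_1$ and observing that the trace-distance error is itself negligible because $\ell=\poly(\secp)$ and $s\ge 0$; no operator-norm control on $\rho_0$ is needed. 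Two further minor slips: the map you call $\cE$ should be the PRI version $\cF$ (the output of the replacement isometry has $s\ell$ untouched ancilla qubits plus $c\ell$ traced-out ones), and in the Stirling computation the exponent should be $2^{1+c-s-2\secp}$ rather than $2^{1+c+s-2\secp}$, since the binomial contributes $2^{(2\secp+s)\ell}$ in the denominator — this does not affect the conclusion since $2\secp$ dominates either way.
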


\begin{remark}
    \cref{thm:PRI_vs_PRFSG} is stronger than \cref{thm:main} because, for any $s$ and $c$, non-adaptive, $c$-ancilla PRIs with $s$ stretch are constructed from non-adaptive, $c$-ancilla PRUs in a black-box manner.
\end{remark}

\ifnum\submission=0
\ifnum\submission=1
\section{Proof of \cref{thm:PRI_vs_PRFSG}}
In this section, we prove \cref{thm:PRI_vs_PRFSG}.
\fi
Since we have already proved the first item in \cref{thm:PRFSGs_relative_to_unitary_oracle}, it suffices to prove the second item by constructing an adversary breaking PRIs.
We give the adversary in a similar way as in \cref{alg:break_PRU}.
Let $G^\cO$ be a QPT algorithm that satisfies the correctness of $c$-ancilla PRIs with $s$ stretch.
In particular, we consider the case when $c(\secp)=O(\log\secp)$ and $s(\secp)=O(\log\secp)$.
For such $G^\cO$,
let $\{\cI_k\}_{k\in\cK_\secp}$ be the isometry implemented by $G^\cO$ on input $k\in\cK_\secp$, where $\cK_\secp$ denotes the key-space.
We define the following map:
\begin{align}
    \cM_{\{\cI_k\},\ell}(\cdot)&=\Exp_{k\gets\cK_\secp}\cI^{\otimes\ell}_k(\cdot)\cI_k^{\dag\otimes\ell}.
\end{align}

To construct a PRI adversary, we need two lemmas as in \cref{sec:PRUs}.
These lemmas can be obtained by modifying \cref{lem:approx_random_Choi_state} and \cref{lem:distinguisher} for PRIs. We give their proofs later.

\begin{restatable}{lemma}{PRIChoiStates}\label{lem:approx_PRI_Choi_state}
    Let $T(\secp)$ be a polynomial.
    Let $\epsilon>0$ and $c,d,s\in\N$.
    Let $\{\cI_k\}_{k\in\cK_\secp}$ be an ensemble of isometries mapping $\secp$ qubits to $\secp+s$ qubits, where each $\cI_k$ is QPT implementable with $s+c$ ancilla qubits by making $T$ queries to $\cO$ defined in \cref{def:unitary_orcle}. 
    For all $n\in[d]$, let $\cS'_{n}$ be any unitary satisfying 
    \begin{align}
        \|\cS_{n}(\cdot)\cS_{n}^\dag-\cS_{n}'(\cdot)\cS_{n}^{'\dag}\|_\diamond\le\epsilon.
    \end{align}
    Then, for any polynomial $\ell$, there exists a family 
    $\{V_k\}_{k\in\cK_\secp}$ 
    of $(\secp+s+c)$-qubit unitaries 
    such that each $V_k$ is QPT implementable with classical descriptions of $\cS'_n$ for all $n\in[d]$ and query access to the $\unitaryPSPACE$-complete oracle $\cU$ such that it satisfies
    \begin{align}
        \bigg\|
         (\cM_{\{\cI_k\},\ell}\otimes \identitymap)(\ketbra{\Omega_{2^{\secp\ell}}}{\Omega_{2^{\secp\ell}}})-
         (\cF_{\{V_k\},\ell}\otimes \identitymap)(\ketbra{\Omega_{2^{\secp\ell}}}{\Omega_{2^{\secp\ell}}})
        \bigg\|_1\le O(\ell T\epsilon)
        +O\bigg(\frac{2^{s+3c/2}\ell T}{2^{d/2}}\bigg).
        \label{eq:goal;approx_random_Choi_state}
    \end{align}
    Here, $\cF_{\{V_k\},\ell}$ is a CPTP map from $\secp\ell$ qubits to $(\secp+s)\ell$ qubits defined as follows:
    \begin{align}
        \cF_{\{V_k\},\ell}((\cdot)_\regA)\coloneqq
        \Tr_{\regC}
        \bigg[
         \Exp_{k\gets\cK_\secp}V^{\otimes\ell}_{k,\regA\regB\regC}((\cdot)_\regA\otimes\ketbra{0^{s}}{0^s}^{\otimes\ell}_{\regB}\otimes\ketbra{0^{c}}{0^c}^{\otimes\ell}_{\regC})
         V^{\dag\otimes\ell}_{k,\regA\regB\regC}
        \bigg],
    \end{align}
    where $\regA$ is a $\secp\ell$-qubit register, $\regB$ is a $s\ell$-qubit register, and $\regC$ is a $c\ell$-qubit register.
\end{restatable}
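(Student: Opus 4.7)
The proof will closely mirror the argument for \cref{lem:approx_random_Choi_state} in \cref{subsec:proof_of_approx_random_Choi_state}, with the isometry structure handled by purifying each $\cI_k$ into an $(\secp+s+c)$-qubit unitary and tracing out $c$ ancilla qubits at the end. Specifically, for each $k \in \cK_\secp$, view $G^\cO(k,\cdot)$ as preparing $\ket{0^{s+c}}$ on the ancilla register, applying a $(\secp+s+c)$-qubit unitary $W_k$ to the input together with the ancilla by querying $\cO$, and finally discarding the last $c$ qubits (so that the remaining $\secp+s$ qubits implement $\cI_k$). We then define the unitary $V_k$ by taking the same circuit that implements $W_k$, but replacing every query to $\cS_n$ with $\cS'_n$ when $n \in [d]$ and removing the query entirely when $n \in [\secp+s+c]\setminus[d]$. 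Because $\cS'_n$ is supplied as a classical description and $V_k$ otherwise only queries $\cU$, each $V_k$ is QPT-implementable given the classical descriptions and oracle access to $\cU$.

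To bound the closeness of Choi states, I will insert an intermediate hybrid $\widetilde{V}_k$ that uses the \emph{true} $\cS_n$ for $n \in [d]$ and removes the queries for $n \in [\secp+s+c]\setminus[d]$. The first hybrid transition ($V_k$ vs.\ $\widetilde{V}_k$) accumulates at most $T\epsilon$ in diamond norm, because each of the at most $T$ swap queries is replaced by an $\epsilon$-close unitary; taking $\ell$ tensor copies and applying on one half of the maximally entangled state gives $O(\ell T \epsilon)$ in trace norm, exactly as in \cref{eq:hyb0;approx_random_Choi_state;ancilla}. The second hybrid transition ($\widetilde{V}_k$ vs.\ $W_k$) removes queries to $\cS_n$ for $n > d$. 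For each such query, I invoke \cref{lem:swap_unitary_is_almost_identity_for_Choi_state;ancilla} with $c' = s+c$ on the unitary-and-ancilla register, obtaining a single-query error of $O(2^{(s+c)/2}/2^{d/2})$ in trace norm; summed over at most $T$ queries and boosted by $\ell$ via the subadditivity $\|\rho^{\otimes\ell} - \sigma^{\otimes\ell}\|_1 \le \ell \|\rho - \sigma\|_1$, this yields $O(2^{(s+c)/2}\ell T / 2^{d/2})$, which is absorbed into the bound $O(2^{s+3c/2}\ell T / 2^{d/2})$ stated in the lemma.

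The final step is to trace out the $c\ell$ ancilla qubits to obtain the isometry Choi state. Concretely, for each $k$,
\begin{align}
\Tr_{\regC}\!\bigl[(W_{k,\regA\regB\regC}^{\otimes \ell} \otimes I_{\regA'})(\ketbra{\Omega_{2^{\secp\ell}}}{\Omega_{2^{\secp\ell}}}_{\regA\regA'} \otimes \ketbra{0^s}{0^s}^{\otimes\ell}_\regB \otimes \ketbra{0^c}{0^c}^{\otimes\ell}_\regC)(W_{k,\regA\regB\regC}^{\dag\otimes \ell} \otimes I_{\regA'})\bigr]
\end{align}
is exactly $(\cI_k^{\otimes\ell} \otimes I_{\regA'})\ketbra{\Omega_{2^{\secp\ell}}}{\Omega_{2^{\secp\ell}}}(\cI_k^{\dag\otimes\ell} \otimes I_{\regA'})$ on $(\secp+s)\ell + \secp\ell$ qubits. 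Since the partial trace is CPTP and hence contractive in trace norm, the hybrid bounds above descend verbatim to give \cref{eq:goal;approx_random_Choi_state} with the desired error.

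The main obstacle I anticipate is bookkeeping the registers and applying \cref{lem:swap_unitary_is_almost_identity_for_Choi_state;ancilla} correctly with the larger ancilla width $c' = s + c$: the swap lemma pays a factor $2^{c'/2}$ from post-selecting the ancilla register onto $\ket{0^{c'}}$, and care is needed to ensure the ancilla width used in the lemma matches the number of qubits on which $W_k$ acts beyond the Choi input register. Once this is set up cleanly, the two-step hybrid and the contractivity of partial trace deliver the claim without any further technical subtlety.
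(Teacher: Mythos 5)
Your proof is correct, and it is in fact cleaner than the paper's own argument for this lemma. The purification step (view $G^\cO(k,\cdot)$ as $\ket{0^s}\ket{0^c}$ followed by $W_k$ and a $\Tr_\regC$) is the same as the paper's. For the first hybrid ($V_k$ vs.\ $\widetilde{V}_k$, swapping $\cS_n \to \cS'_n$ for $n \in [d]$), your diamond-norm $T\epsilon$ argument is exactly the paper's. The genuine difference is in the second hybrid: you invoke \cref{lem:swap_unitary_is_almost_identity_for_Choi_state;ancilla} directly with $c' = s + c$, treating the full $(s+c)$-qubit ancilla block (both stretch and discard qubits) as the $\regB$ register of that lemma, then $\ell$-boost and trace out $\regC$. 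This is precisely the template used for \cref{lem:approx_random_Choi_state} in \cref{subsec:proof_of_approx_random_Choi_state}, applied with a wider ancilla. The paper's actual proof of \cref{lem:approx_PRI_Choi_state} instead sets up a per-query chain $\ket{\psi_i}$, pushes the state-preparation isometry $A:\regX \to \regX\regY\regZ$ to the reference register via \cref{claim:choi_state_on_isometry} (picking up a factor $2^{s+c}$ in trace norm from the $\sqrt{2^{s+c}}$ amplitude), then applies H\"older and the swap-closeness bound. Both routes are valid; yours is shorter, avoids \cref{claim:choi_state_on_isometry} entirely, and gives the sharper bound $O(2^{(s+c)/2}\ell T/2^{d/2})$, which is strictly stronger than the stated $O(2^{s+3c/2}\ell T/2^{d/2})$ since $s, c \ge 0$. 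Your careful remark that the ancilla width fed into \cref{lem:swap_unitary_is_almost_identity_for_Choi_state;ancilla} must match the number of qubits $W_k$ touches beyond the Choi input register is exactly the point where one must be careful, and you handle it correctly.
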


\ifnum\submission=0
For the next lemma, we define the Haar random isometry map as follows:

\begin{definition}[Haar Random Isometry Map]\label{def:Haar_isometry_map}
    We define\footnote{Here, we chose $\ket{0^s}$ as an input state. This does not lose any generality due to the right invariance of the Haar measure.}
    \begin{align}
     \cI_{\secp\to \secp+s,\ell}(\rho_\regA)\coloneqq
     \Exp_{U\gets\mu_{2^{\secp+s}}} 
     \bigg(
      \bigotimes_{i\in[\ell]}U_{\regA_i\regB_i}
     \bigg)
     (\rho_\regA\otimes\ketbra{0^s}{0^s}^{\otimes\ell}_{\regB})
     \bigg(
      \bigotimes_{i\in[\ell]}U_{\regA_i\regB_i}
     \bigg)^\dag,
    \end{align}
    where $\regA\coloneqq\bigotimes_{i\in[\ell]}\regA_i$ and $\regB\coloneqq\bigotimes_{i\in[\ell]}\regB_i$, where, for each $i\in[\ell]$, $\regA_i$ and $\regB_i$ are $\secp$-qubit register and $s$-qubit register, respectively.
\end{definition}

Now we are ready to give the following lemma.
\fi
\ifnum\submission=1
Regarding the next lemma, recall that $\cI_{\secp\to\secp+s}$ is the Haar random isometry map defined in \cref{def:Haar_isometry_map}.
\fi

\begin{restatable}{lemma}{PRIdistinguisher}\label{lem:PRI_distinguisher}
    Suppose that $c(\secp)=O(\log\secp)$.
    Let $\{V_k\}_{k\in\cK_\secp}$ be the family of $(\secp+s+c)$-qubit unitary, and $\cF_{\{V_k\}_k,\ell}$ be the CPTP map in \cref{lem:approx_PRI_Choi_state}.
    Let $\ell(\secp)\coloneqq\lceil\log|\cK_\secp|\rceil$.
    Then, there exists a QPT algorithm $\cD^{\cU}$ that, on input classical descriptions of $\cS'_n$ for all $n\in[d]$, distinguishes $(\cF_{\{V_k\},\ell}\otimes\identitymap)(\ketbra{\Omega_{2^{\secp\ell}}}{\Omega_{2^{\secp\ell}}})$ from $(\cI_{\secp\to\secp+s,\ell}\otimes\identitymap)(\ketbra{\Omega_{2^{\secp\ell}}}{\Omega_{2^{\secp\ell}}})$
    with advantage at least $1-\negl(\secp)$.
\end{restatable}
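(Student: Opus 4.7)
The plan is to mirror the proof of \cref{lem:PRU_distinguisher}: I aim to (i) construct a QPT block-encoding of $\rho\coloneqq(\cF_{\{V_k\},\ell}\otimes\identitymap)(\ketbra{\Omega_{2^{\secp\ell}}}{\Omega_{2^{\secp\ell}}})$ accessible with classical descriptions of $\{\cS'_n\}_{n\in[d]}$ and queries to $\cU$, (ii) establish a negligible-overlap bound $\Tr[Q\xi]\le\negl(\secp)$, where $Q$ is the projection onto the support of $\rho$ and $\xi\coloneqq(\cI_{\secp\to\secp+s,\ell}\otimes\identitymap)(\ketbra{\Omega_{2^{\secp\ell}}}{\Omega_{2^{\secp\ell}}})$, and then (iii) invoke \cref{lem:distinguisher} to obtain the distinguisher $\cD^\cU$.

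For (i), the construction from \cref{lem:block-encoding_of_random_Choi_states} carries over with only bookkeeping changes: the purification unitary coherently samples $k\in\cK_\secp$, prepares $\ket{\Omega_{2^{\secp\ell}}}_{AA'}\otimes\ket{0^s}^{\otimes\ell}_B\otimes\ket{0^c}^{\otimes\ell}_C$ on registers of the appropriate sizes, and then applies $\sum_k\ketbra{k}{k}\otimes V_k^{\otimes\ell}$ on $ABC$; each controlled $V_k^{\otimes\ell}$ is implementable to inverse-exponential diamond error in QPT via queries to the controlled version of $\cU$, which belongs to $\pureunitaryPSPACE$ by \cref{remark:contrlizatin_of_pureUnitaryPSPACE}. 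Invoking \cref{lem:block-encoding_of_state} on the resulting purification produces the desired $(1,2^{-p(\secp)},\poly(\secp))$-block-encoding $V_\secp$.

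For (ii), I first bound $\Tr[Q]\le|\cK_\secp|\cdot 2^{c\ell}\le 2^{(1+c)\ell}$ as in \cref{lem:Haar_Choi_has_negligible_overlap}, since for each fixed $k$ the purified state on $ABA'C$ has rank at most $2^{c\ell}$ after tracing out the $c\ell$-qubit register $C$. To bound $\Tr[Q\xi]$, I use that a Haar random isometry $\cI_U\colon\mathbb{C}^d\to\mathbb{C}^D$ (with $d=2^\secp$, $D=2^{\secp+s}$) arises as $\ket{\psi}\mapsto U(\ket{\psi}\ket{0^s})$ for Haar random $U\in\Unitaries(D)$; applying the Weingarten/Schur--Weyl formula to the Haar twirl on $(AB)^\ell$ and evaluating the partial inner product against $\bra{0^s}^{\otimes\ell}_B$ yields the exact identity
\begin{align}
  \xi_{ABA'}=\frac{1}{d^\ell}\sum_{\sigma,\tau\in S_\ell}\Wg(\sigma\tau^{-1},D)\,R_{\sigma,AB}\otimes R_{\tau^{-1},A'}.
\end{align}
Combining H\"older's inequality $|\Tr[Q(R_{\sigma,AB}\otimes R_{\tau^{-1},A'})]|\le\Tr[Q]\le 2^{(1+c)\ell}$ with the standard bound $|\Wg(\pi,D)|\le C^\ell/D^\ell$ (for an absolute constant $C$, valid when $D\ge\ell^2$) gives
\begin{align}
  |\Tr[Q\xi]|\le\frac{(\ell!)^2\,C^\ell\,2^{(1+c)\ell}}{(Dd)^\ell}=\frac{(\ell!)^2\,C^\ell}{2^{(2\secp+s-1-c)\ell}},
\end{align}
which is $\negl(\secp)$ whenever $c,s=O(\log\secp)$ and $\ell\le\poly(\secp)$.

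The main obstacle is the overlap bound in step (ii). A tempting shortcut is to rewrite $\xi$ as a post-selected Haar-twirled Choi state on $(\secp+s)$-qubit unitaries via $(\cI_U^{\otimes\ell}\otimes I)\ket{\Omega_{d^\ell}}=\sqrt{2^{s\ell}}(I\otimes\bra{0^s}^{\otimes\ell}_{B'})(U^{\otimes\ell}_{AB}\otimes I)\ket{\Omega_{D^\ell}}$ and then apply \cref{lem:Haar_state_vs_Haar_choi_state}. However, the $2^{s\ell}$ normalization from post-selection amplifies the $O(\ell^2/D)$ trace-norm error of that lemma into a bound that can exceed unity when $s\ell\gtrsim\secp$, which happens easily for $s=\Theta(\log\secp)$ and $\ell$ polynomial. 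Working directly from the exact Weingarten expansion of $\xi$ — an algebraic identity rather than a trace-norm approximation — sidesteps this amplification and yields the required negligible bound.
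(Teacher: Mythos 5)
Your proposal is correct, and step (i) (block encoding via \cref{lem:block-encoding_of_state} and purification of $\cF_{\{V_k\},\ell}$) and step (iii) (invoking \cref{lem:distinguisher}) are exactly what the paper does, via \cref{lem:block-encoding_of_PRI_Choi_states}. The $\Tr[Q]\le 2^{(1+c)\ell}$ bound in step (ii) also matches the paper. Where you diverge is in how you establish $\Tr[Q\,\xi]\le\negl(\secp)$. The paper proves a standalone closeness lemma (\cref{lem:Haar_isometry_Choi}) saying $\xi$ is $O(\ell^2/2^{\secp+s})$-close in trace norm to $\Exp_{\ket{\psi}\gets\sigma_{2^{2\secp+s}}}\ketbra{\psi}{\psi}^{\otimes\ell}$, which it derives from the Haar twirl approximation (\cref{lem:Haar_approx}) applied to the input state $\ketbra{0^{s\ell}}{0^{s\ell}}_\regB\otimes\ketbra{\Omega}{\Omega}_{\regA\regA'}$, using that $\ket{0^{s\ell}}$ is permutation-invariant. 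It then reuses the symmetric-subspace machinery from \cref{lem:Haar_Choi_has_negligible_overlap}. You instead work from the exact Weingarten expansion of $\xi$ and bound term-by-term via H\"older plus the coefficient decay $|\Wg(\pi,D)|\lesssim 1/D^\ell$. Both approaches are valid and amount to manipulating the same permutation sum; the paper's route produces a reusable statement ($\xi$ is close to an i.i.d.\ Haar-state average), while yours is a direct estimate that avoids needing a separate trace-norm approximation result.

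A couple of smaller remarks. Your ``main obstacle'' discussion — that rewriting $\xi$ as a $2^{s\ell}$-scaled post-selection of a $(\secp+s)$-qubit Haar Choi state and invoking \cref{lem:Haar_state_vs_Haar_choi_state} would blow up the trace-norm error — is a valid observation, but it is not a pitfall the paper falls into: \cref{lem:Haar_approx} is stated for an arbitrary input state $\rho_{\regA\regA'}$, so the paper applies it directly to $\ketbra{0^{s\ell}}{0^{s\ell}}\otimes\ketbra{\Omega}{\Omega}$ with \emph{no} post-selection normalization and hence no amplification. Also, your permutation label should be $R_{\tau,A'}$ rather than $R_{\tau^{-1},A'}$ under the paper's convention for $R_\tau$ and the identity $\Tr_\regA[(M_\regA\otimes I_{\regA'})\ketbra{\Omega}{\Omega}]=\frac{1}{d^\ell}M^\top_{\regA'}$, and the bound on the Weingarten sum is more naturally written as $\sum_{\sigma,\tau}|\Wg(\sigma\tau^{-1},D)|\le \ell!\cdot O(1)/D^\ell$ (the factor $(\ell!)^2 C^\ell$ you wrote is looser); neither affects the conclusion.
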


Based on these lemmas, we construct a PRI adversary as shown in \cref{alg:break_PRI}.
The {\color{red}red-highlighted lines} in \cref{alg:break_PRI} show the differences from \cref{alg:break_PRU}.

\begin{algorithm}
    \caption{Adversary distinguishing $\{\cI_k\}_{k\in\cK_\secp}$ from Haar random isometry relative to $(\cS,\cU)$.}
    \label{alg:break_PRI} 
    \vspace{2mm}
    \AlgOracle 
    The algorithm has query access to
    \begin{itemize}
        \item {\color{red}an isometry $\cI$ mapping $\secp$ qubits to $\secp+s(\secp)$ qubits, which is whether $\cI_k$ or a Haar random isometry.}
        
        \item the oracle $\cO=(\cS,\cU)$ and its inverse defined in \cref{def:unitary_orcle};
        
    \end{itemize}
    \AlgInput The algorithm takes the security parameter $1^\secp$ as input.

    Define $\ell\coloneqq\lceil\log|\cK_\secp|\rceil$ and {\color{red}$d\coloneqq 2\log(\ell Tp)+3c+2s$}, where $p$ is a polynomial, and $T$ is the number of queries to $\cO$ to implement {\color{red}$\cI_k$}.

    \vspace{2mm}
    \begin{enumerate}
        \item 
        \label{algstep:process_tomography_PRI}
        For $n\in[d]$, run the process tomography algorithm in \cref{thm:process_tomography_HKOT23} on inputs $\epsilon\coloneqq\frac{1}{\ell Tp}$ and $\eta\coloneqq2^{-\secp-1}$ for $\cS_{n}$ to get a classical description of $\cS'_{n}$. Note that $\|\cS_{n}(\cdot)\cS_{n}^\dag-\cS'_{n}(\cdot)\cS'^\dag_{n}\|_\diamond\le\epsilon$ holds with probability at least $1-2^{-\secp}$ over the randomness of the process tomography algorithm.
        
        \item {\color{red}Prepare $(\cI^{\otimes\ell}\otimes I)\ket{\Omega_{2^{\secp\ell}}}$ by querying $\cI$.}

        \item 
        {\color{red}Let $\{V_k\}_{k\in\cK_\secp}$ be a family of unitaries in \cref{lem:approx_PRI_Choi_state}.} Note that each $V_k$ is QPT implementable with access to $\cU$ and classical descriptions of $S'_{n}$ for all $n\in[d]$, where such classical descriptions are obtained in the step \ref{algstep:process_tomography_PRI}.
        Let $\cD^{(\cdot)}$ be a QPT algorithm in {\color{red}\cref{lem:PRI_distinguisher} for $\{V_k\}_{k\in\cK_\secp}$}. 
        By querying $\cU$, run $\cD^\cU$ on input {\color{red}$(\cI^{\otimes\ell}\otimes I)\ket{\Omega_{2^{\secp\ell}}}$} and classical descriptions of $S'_{n}$ for all $n\in[d]$ to get $b\in\bit$. 
    \end{enumerate} 

    \AlgOutput The algorithm outputs $b$.
\end{algorithm}

The following \cref{thm:break_PRI} implies \cref{thm:PRI_vs_PRFSG}.
We omit the proof of \cref{thm:break_PRI} as it is identical to that of \cref{thm:break_PRU}, except that \cref{lem:approx_random_Choi_state,lem:distinguisher} are respectively replaced by \cref{lem:approx_PRI_Choi_state,lem:PRI_distinguisher}.

\begin{theorem}\label{thm:break_PRI}
Suppose that $c(\secp)=O(\log\secp)$ and $s(\secp)=O(\log\secp)$.
Let $\cO$ be a fixed oracle defined in \cref{def:unitary_orcle}.
Let $G^\cO$ be a QPT algorithm that satisfies the correctness of $c$-ancilla PRIs with $s$ stretch.
For such $G^\cO$,
let $\{\cI_k\}_{k\in\cK_\secp}$ be the isometry mapping $\secp$ qubits to $\secp+s$ qubits implemented by $G^\cO$ on input $k\in\cK_\secp$, where $\cK_\secp$ denotes the key-space.
Then, for any polynomial $p$, the QPT adversary $\cA^{(\cdot,\cdot)}$ defined in \cref{alg:break_PRI} satisfies
    \begin{align}
        \bigg|\Pr_{k\gets\cK_\secp}[1\gets\cA^{\cI_k,\cO}(1^\secp)]-\Pr_{U\gets\mu_{2^{\secp+s}}}[1\gets\cA^{\cI_U,\cO}(1^\secp)]\bigg|
        \ge 1-O\bigg(\frac{1}{p(\secp)}\bigg),
    \end{align}
    where, for each $U\in\Unitaries(2^{\secp+s(\secp)})$, $\cI_U$ is the isometry that maps $\secp$-qubit state $\ket{\psi}$ to $(\secp+s(\secp))$-qubit state $U(\ket{\psi}\ket{0^s})$. 
    Here, $\cA^{(\cdot),\cO}$ queries not only $\cO$ but also its inverse. Moreover, $\cA^{(\cdot),\cO}$ queries the first oracle non-adaptively.
\end{theorem}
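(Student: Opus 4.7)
The plan is to mirror the structure of the proof of \cref{thm:break_PRU} essentially verbatim, with \cref{lem:approx_random_Choi_state} replaced by \cref{lem:approx_PRI_Choi_state} and \cref{lem:PRU_distinguisher} replaced by \cref{lem:PRI_distinguisher}. First, I would verify that the adversary $\cA^{(\cdot,\cdot)}$ defined in \cref{alg:break_PRI} is QPT. The key point is the choice $d \coloneqq 2\log(\ell T p) + 3c + 2s$: under the hypothesis $c(\secp), s(\secp) = O(\log \secp)$, we have $2^{d} \le \poly(\secp)$, so step \ref{algstep:process_tomography_PRI} runs in QPT by \cref{thm:process_tomography_HKOT23}. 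Steps 2 and 3 are QPT by construction and by \cref{lem:PRI_distinguisher}.

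Next, let $E$ denote the event that for every $n \in [d]$ the tomography satisfies $\|\cS_n(\cdot)\cS_n^\dag - \cS'_n(\cdot)\cS'^\dag_n\|_\diamond \le \epsilon$. By \cref{thm:process_tomography_HKOT23} and the union bound, $\Pr[E] \ge 1 - d \cdot 2^{-\secp-1} \ge 1 - \negl(\secp)$. Conditioned on $E$, the distributions of $\cA$'s output satisfy
\begin{align}
    \Pr_{k \gets \cK_\secp}[1 \gets \cA^{\cI_k,\cO}(1^\secp) \mid E]
    &= \Pr\bigl[1 \gets \cD^\cU\bigl((\cM_{\{\cI_k\},\ell}\otimes\identitymap)(\ketbra{\Omega_{2^{\secp\ell}}}{\Omega_{2^{\secp\ell}}})\bigr)\bigr], \\
    \Pr_{U \gets \mu_{2^{\secp+s}}}[1 \gets \cA^{\cI_U,\cO}(1^\secp) \mid E]
    &= \Pr\bigl[1 \gets \cD^\cU\bigl((\cI_{\secp\to\secp+s,\ell}\otimes\identitymap)(\ketbra{\Omega_{2^{\secp\ell}}}{\Omega_{2^{\secp\ell}}})\bigr)\bigr],
\end{align}
simply because querying $\cI$ once on each half of $\ell$ maximally entangled pairs produces the claimed Choi state in each case.

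The argument then proceeds by a two-step hybrid. First, apply \cref{lem:approx_PRI_Choi_state} to pass from $\rho_0 \coloneqq (\cM_{\{\cI_k\},\ell}\otimes\identitymap)(\ketbra{\Omega_{2^{\secp\ell}}}{\Omega_{2^{\secp\ell}}})$ to $\rho_1 \coloneqq (\cF_{\{V_k\},\ell}\otimes\identitymap)(\ketbra{\Omega_{2^{\secp\ell}}}{\Omega_{2^{\secp\ell}}})$. With $\epsilon = 1/(\ell T p)$ and $d = 2\log(\ell T p) + 3c + 2s$, the total variation cost is $O(\ell T \epsilon) + O(2^{s+3c/2}\ell T / 2^{d/2}) \le O(1/p(\secp))$. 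Second, apply \cref{lem:PRI_distinguisher} to conclude that $\cD^\cU$ distinguishes $\rho_1$ from $(\cI_{\secp\to\secp+s,\ell}\otimes\identitymap)(\ketbra{\Omega_{2^{\secp\ell}}}{\Omega_{2^{\secp\ell}}})$ with advantage at least $1 - \negl(\secp)$. Combining these via the triangle inequality and paying the negligible $\Pr[\bar E]$ loss yields the claimed $1 - O(1/p(\secp))$ distinguishing advantage. Non-adaptivity of the query to the first oracle is evident from step 2 of \cref{alg:break_PRI}.

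The only place that needs honest checking (as opposed to purely mechanical translation from \cref{thm:break_PRU}) is the bookkeeping on $d$: the additive $3c + 2s$ term in the stretch-aware version of the Choi-state approximation forces a larger $d$ than in the PRU case, and one must confirm that this still leaves $2^d = \poly(\secp)$ under the hypothesis $c, s = O(\log \secp)$. The rest of the reasoning is identical to the PRU case. No new probabilistic or algorithmic ideas are required; the whole theorem reduces to checking that the parameter choices remain polynomially bounded, which they do precisely because both the ancilla count and the stretch are logarithmic.
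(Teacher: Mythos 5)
Your proposal is correct and takes exactly the approach the paper takes: the paper explicitly omits the proof of this theorem, remarking that it is identical to the proof of \cref{thm:break_PRU} with \cref{lem:approx_random_Choi_state} and \cref{lem:distinguisher} replaced by \cref{lem:approx_PRI_Choi_state} and \cref{lem:PRI_distinguisher}, and with $d$ enlarged to $2\log(\ell T p) + 3c + 2s$ so that the error term $O(\ell T \epsilon) + O(2^{s+3c/2}\ell T/2^{d/2})$ is $O(1/p)$ while $2^d$ stays polynomial. Your identification of the parameter bookkeeping on $d$ as the only non-mechanical step matches the paper's own remark after the theorem.
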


\begin{remark}
    The parameter $d$ is chosen so that $\ell T2^{s+3c/2 - d/2} = O(1/p)$, representing the error term in \cref{lem:approx_PRI_Choi_state}.
    \cref{alg:break_PRI} is QPT because $2^d = (\ell Tp)^2 \cdot 2^{2s+3c} \le \poly(\secp)$, where the inequality holds under the assumption that $c(\secp)=O(\log\secp)$ and $s(\secp) = O(\log \secp)$.
    In contrast, if $c(\secp)=\omega(\log\secp)$ or $s(\secp) = \omega(\log \secp)$, then \cref{alg:break_PRI} would no longer be QPT.
\end{remark}

\subsection[Proof of Lemma~\ref{lem:approx_PRI_Choi_state}]{Proof of \cref{lem:approx_PRI_Choi_state}}

In this subsection, we prove \cref{lem:approx_PRI_Choi_state}. 
The proof strategy is the same as that of \cref{lem:approx_random_Choi_state}.
\if0
Suppose that $G^\cO$ implements an isometry mapping $\secp$-qubit register $\regA$ to $(\secp+s)$-qubit register $\regA\regB$, where $\regB$ is an $s$-qubit register. 
The action of $G$ can be seen as follows: prepare an initial state on the ancilla registers $\regB\regC$, then apply a unitary $V_1,$ query $\cS_{n_1}$,..., apply $V_T$, query $\cS_{n_T}$, apply $V_{T+1}$, then trace out the ancilla register $\regC$.
Without loss of generality, we can assume the action of $\cS_{n_{i+j}}V_{i+j}\cdots\cS_{n_i}$ depends on $\cS$ for any $i$ and $j$ because, if its action is independent of $\cS$, we can replace $\cS_{n_{i+j}}V_{i+j}\cdots\cS_{n_i}$ with some fixed unitary $V'$.
In that case, we can show the following.

\begin{lemma}\label{lem:disentangled_after_querying_S_n_on_PRI}
    After each query to $\cS_n$, for any input state, the registers $\regA\regB$ are not entangled with the register $\regC$.
\end{lemma}

Since the proof is the same as that of \cref{lem:disentangled_after_querying_S_n}, we omit it.
\cref{lem:disentangled_after_querying_S_n_on_PRI} and $\cS_n=\sum_m\ketbra{m}{m}\otimes \cS_{n,m}$ implies the following.

\begin{lemma}\label{lem:rule_out_entangled_queries_on_PRI}
     Let $\cO=(\cS,\cU)$ be the oracle defined in \cref{def:unitary_orcle}. 
     Let $G^{\cO}$ be a QPT algorithm that applies an isometry $\cI$ mapping $\secp$-qubit $\regA$ to $(\secp+s)$-qubit register $\regA\regA'$, where $\regA'$ is an $s$-qubit register.
     Then, $G$ makes only the following type of queries to $\cS=\{\cS_n\}_{n\in\N}$: 
     \begin{itemize}
         \item Apply 
         $\sum_{y\in\bit^{n-\ell}}\ketbra{y}{y}
         \otimes\cS_{n,(x,y)}$
         onto some $2n-\ell+1$ qubits of $\regA\regA'$, where $2n-\ell+1\le\secp+s,\ell\le n$ and $x\in\bit^\ell$.
         We can implement this unitary by preparing $\ket{x}$ on an ancilla register and querying the ancilla register and some $2n-\ell+1$ qubits of $\regA\regA'$ to $\cS_{n}$.
     \end{itemize}
\end{lemma}

Now we are ready to prove \cref{lem:approx_PRI_Choi_state}.
\fi

\begin{proof}[Proof of \cref{lem:approx_PRI_Choi_state}]
     Let $G^\cO(k,\cdot)$ be an algorithm implementing $\cI_k$.
     For each $k \in \cK_\secp$, we may view $G^\cO(k,\cdot)$ as acting as follows: first, it prepares $\ket{0^s}\ket{0^c}$ in the ancilla register, then applies a $(\secp +s+ c)$-qubit unitary $W_k$ to the input and ancilla qubits by querying $\cO$, and finally discards the last $c$ qubits in the ancilla register.
    
    We define a $(\secp+s+c)$-qubit unitary $V_k$ as follows. 
    Take the same circuit as in the implementation of $W_k$, except at the points where it queries $\cS_{n}$. 
    Whenever the circuit queries $\cS_{n}$,
          \begin{itemize}
              \item if $n\in[d]$, apply $\cS'_{n}$ by using its classical description;
              \item if $n\in[\secp+s+c]/[d]$, do not apply any unitary circuit. 
          \end{itemize}
    We prove \cref{eq:goal;approx_random_Choi_state} by the standard hybrid argument.
    For each $k\in\cK_\secp$, we define a $(\secp+s+c)$-qubit unitary $\widetilde{V}_k$ as follows: apply the same unitary as $W_k$ except for querying $\cS_{n}$. When querying to $\cS_{n}$,
          \begin{itemize}
              \item if $n\in[d]$, apply $\cS_{n}$;
              \item if $n\in[\secp+s+c]/[d]$, do not apply any unitary. 
          \end{itemize}
    Define a CPTP map $\cF_{\{\widetilde{V_k}\}_k,\ell}$ from $\secp\ell$ qubits to $(\secp+s)\ell$ qubits in the same manner as $\cF_{\{V_k\}_k,\ell}$.
    We can show
    \begin{align}
        \bigg\|
         (\cF_{\{V_k\},\ell}\otimes \identitymap)(\ketbra{\Omega_{2^{\secp\ell}}}{\Omega_{2^{\secp\ell}}})
         -(\cF_{\{\widetilde{V}_k\},\ell}\otimes \identitymap)(\ketbra{\Omega_{2^{\secp\ell}}}{\Omega_{2^{\secp\ell}}})
         \bigg\|_1\le O(\ell T\epsilon)
        \label{eq:hyb0;approx_PRI_Choi_state}
    \end{align}
    and
    \begin{align}
        \bigg\|
        (\cF_{\{\widetilde{V}_k\},\ell}\otimes \identitymap)(\ketbra{\Omega_{2^{\secp\ell}}}{\Omega_{2^{\secp\ell}}})
        -(\cM_{\{\cI_k\},\ell}\otimes \identitymap)(\ketbra{\Omega_{2^{\secp\ell}}}{\Omega_{2^{\secp\ell}}})
        \bigg\|_1\le O(\ell T2^{s+3c/2-d/2}).
        \label{eq:hyb1;approx_PRI_Choi_state}
    \end{align}
    We will give the proofs later. From \cref{eq:hyb0;approx_PRI_Choi_state,eq:hyb1;approx_PRI_Choi_state} and the triangle inequality, we obtain \cref{eq:goal;approx_random_Choi_state}.

    First, we prove \cref{eq:hyb0;approx_PRI_Choi_state}. For each $k\in\cK_\secp$, the difference between $V_k$ and $\widetilde{V}_k$ lies only whether we apply $\cS'_{n}$ or $\cS_{n}$ for all $n\in[d]$. Since the number of queries to the swap unitaries is at most $T$, we have
    \begin{align}
        \|V_k(\cdot)V_k^{\dag}-\widetilde{V}_k(\cdot)\widetilde{V}_k^\dag\|_\diamond\le T\epsilon
    \end{align}
    for all $k\in\cK_\secp$, which implies \cref{eq:hyb0;approx_PRI_Choi_state}.

    For each $k$, define a map $\cF_k$ as follows:
    \begin{align}
        \cF_k(\rho_{\regX})\coloneqq\Tr_{\regZ}[V_{k,\regX\regY\regZ}(\rho_\regX\otimes\ketbra{0^s}{0^s}_\regY\otimes\ketbra{0^c}{0^c}_\regZ)V_{k,\regX\regY\regZ}^\dag],
    \end{align}
    where $\regX$ is a $\secp$-qubit register, $\regY$ is a $s$-qubit register, and $\regZ$ is a $c$-qubit register.
    Suppose that we have
    \begin{align}
        \bigg\|
         (\cF_{k,\regX\to\regX\regY}\otimes \identitymap_{\regX'})(\ketbra{\Omega_{2^{\secp}}}{\Omega_{2^{\secp}}}_{\regX\regX'})
         -(\cI_{k,\regX\to\regX\regY}\otimes \identitymap_{\regX'})(\ketbra{\Omega_{2^{\secp}}}{\Omega_{2^{\secp}}}_{\regX\regX'})
         \bigg\|_1
        \le O(T2^{s+3c/2-d/2})\label{eq:hyb3;approx_PRI_Choi_state}
    \end{align}
    for all $k\in\cK_\secp$, where $\regX'$ is a $\secp$-qubit regoister.
    Since \cref{eq:hyb3;approx_PRI_Choi_state} implies \cref{eq:hyb1;approx_PRI_Choi_state}, it suffices to prove \cref{eq:hyb3;approx_PRI_Choi_state}.
    From the $c$-ancilla correctness condition, we can view that $\cI_k$ is implemented as follows:\footnote{Here, $A$ and each $V_i$ depend on $k$, but we omit the subscript of $k$ for notational simplicity.}
    \begin{enumerate}
        \item Prepare $\ket{0^{s}}_\regY\ket{0^c}_\regZ$ on an ancilla qubits, and apply $(\secp+s+c)$-qubit unitary on $\regX\regY\regZ$. This is equal to applying an isometry $A$ mapping $\regX$ to $\regX\regY\regZ$.
        \item For each $i\in[T]$, perform the following:
        Apply $\cT_i\coloneqq\cS_{n_i}$ onto some $2n_i+1$ qubits of $\regX\regY\regZ$, where $2n_i+1\le\secp+s+c$.
        Then, apply a $(\secp+s+c)$-qubit unitary $B_i$ on $\regX\regY\regZ$.
        \item Discard $\regZ$.
    \end{enumerate}
    From the above observation, we define the following hybrids for each $i\in[T]$:\footnote{Here $\prod_jB_j$ means $B_T\cdots B_1$. The order is important because each operation is not commutative in general.}

    \begin{align}
        \ket{\psi_i}_{\regX\regY\regZ\regX'} = \left(\prod_{j=i+1}^{T} \left(B_j\cT'_{j}\right)_{\regX\regY\regZ}\prod_{j=1}^{i} \left(B_j\cT_{j}\right)_{\regX\regY\regZ}A_{\regX\to\regX\regY\regZ} \otimes I_{\regX'}\right)\ket{\Omega_{2^\secp}}_{\regX\regX'},
    \end{align}
    where $\cT'_j$ is defined as follows:
    \begin{itemize}
        \item if $n_i\in[d]$, it is exactly the same as $\cT_i$.
        \item if $n_i\in[\secp+s+c]/[d]$, it is the identity.
    \end{itemize}
    Then, we can prove the following for all $i\in[T]$:
    \begin{align}
        \|\ketbra{\psi_{i-1}}{\psi_{i-1}}-\ketbra{\psi_i}{\psi_i}\|_1\le O(2^{s+3c/2-d/2})
        \label{eq:hyb4;approx_PRI_Choi_state}
    \end{align}
    Since $\Tr_\regZ[\ketbra{\psi_0}{\psi_0}_{\regX\regY\regZ\regX'}]=(\cF_{k,\regX\to\regX\regY}\otimes \identitymap_{\regX'})(\ketbra{\Omega_{2^{\secp}}}{\Omega_{2^{\secp}}}_{\regX\regX'})$, and $\Tr_\regZ[\ketbra{\psi_T}{\psi_T}_{\regX\regY\regZ\regX'}]=(\cI_{k,\regX\to\regX\regY}\otimes \identitymap_{\regX'})(\ketbra{\Omega_{2^{\secp}}}{\Omega_{2^{\secp}}}_{\regX\regX'})$, we obtain \cref{eq:hyb3;approx_PRI_Choi_state} from \cref{eq:hyb4;approx_PRI_Choi_state}.
    Thus, it remains to prove \cref{eq:hyb4;approx_PRI_Choi_state}.
    To this end, we need the following claim.
    
    \begin{claim}\label{claim:choi_state_on_isometry}
        Let $A$ be an isometry which maps $\ell_{\text{in}}$ qubits to $\ell_{\text{out}}$ qubits.
        Then,
        \begin{align}
            (A_{\regE}\otimes I_{\regF})\ket{\Omega_{2^{\ell_{\text{in}}}}}_{\regE\regF} = \sqrt{2^{\ell_{\text{out}}-\ell_{\text{in}}}}(I_{\regE'}\otimes A^{\top}_{\regF'})\ket{\Omega_{2^{\ell_{\text{out}}}}}_{\regE'\regF'},
        \end{align}
        where $\top$ denotes the transpose.
        Here, $\regE$ and $\regF$ are $\ell_{\text{in}}$-qubit registers, and $\regE'$ and $\regF'$ are $\ell_{\text{out}}$-qubit registers.
    \end{claim}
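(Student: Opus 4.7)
The plan is to verify this identity by direct computation in the computational basis; it is essentially the standard ``transpose trick'' (also called the ricochet identity), generalized from unitaries to isometries, and the normalization factor $\sqrt{2^{\ell_{\text{out}}-\ell_{\text{in}}}}$ is precisely what is needed to reconcile the two different dimensions of maximally entangled states that appear on the two sides. I will fix matrix elements $A_{yx} \coloneqq \bra{y} A \ket{x}$ with $x \in [2^{\ell_{\text{in}}}]$ and $y \in [2^{\ell_{\text{out}}}]$, so that by definition $A = \sum_{y,x} A_{yx}\ketbra{y}{x}$ and $A^\top = \sum_{y,x} A_{yx}\ketbra{x}{y}$.

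Then I will expand both sides and check that each reduces to the same expression
\begin{align}
\frac{1}{\sqrt{2^{\ell_{\text{in}}}}} \sum_{x,y} A_{yx}\, \ket{y} \otimes \ket{x}.
\end{align}
For the LHS, this follows immediately from the definition of $\ket{\Omega_{2^{\ell_{\text{in}}}}}$ together with $(A \otimes I)\ket{\Omega_{2^{\ell_{\text{in}}}}} = \frac{1}{\sqrt{2^{\ell_{\text{in}}}}}\sum_x (A\ket{x}) \otimes \ket{x}$. For the RHS, I apply the identity $A^\top \ket{y} = \sum_x A_{yx}\ket{x}$ and use the normalization $\sqrt{2^{\ell_{\text{out}}-\ell_{\text{in}}}} / \sqrt{2^{\ell_{\text{out}}}} = 1/\sqrt{2^{\ell_{\text{in}}}}$ to cancel the dimension factors.

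I do not expect any substantial obstacle here: the proof is a routine calculation, and the only mild subtlety is keeping careful track of the register dimensions (since $\regE$ has $\ell_{\text{in}}$ qubits before $A$ acts but $\ell_{\text{out}}$ afterwards, and the analogous situation holds on the RHS with $\regF'$). As long as the basis expansion is done consistently, the two expressions match term by term.
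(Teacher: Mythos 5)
Your proposal is correct and follows the same approach as the paper: both expand $A$ in the computational basis, compute each side of the identity explicitly, and observe that they coincide after accounting for the different normalization constants of $\ket{\Omega_{2^{\ell_{\text{in}}}}}$ and $\ket{\Omega_{2^{\ell_{\text{out}}}}}$.
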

    
    \cref{claim:choi_state_on_isometry} follows from a straightforward calculation.
    We give the proof later.
    We obtain \cref{eq:hyb4;approx_PRI_Choi_state} as follows.
    \begin{align}
        &\|\ketbra{\psi_{i-1}}{\psi_{i-1}}- \ketbra{\psi_{i}}{\psi_{i}} \|_{1}  
        \notag\\
        =& 
        \left\|
         \left(
          \left(
           \prod_{j=i}^{T} B_j\cT'_j
           \cdot
           \prod_{j=1}^{i-1} B_j\cT_j
           \cdot A
          \right)
         \otimes \identitymap
         \right)
         (\ketbra{\Omega_{2^\secp}}{\Omega_{2^\secp}}) 
         - 
         \left(
          \left(
           \prod_{j=i+1}^{T} B_j\cT'_j
           \cdot
           \prod_{j=1}^{i} B_j\cT_j
           \cdot A
          \right)
          \otimes \identitymap
         \right)
        (\ketbra{\Omega_{2^\secp}}{\Omega_{2^\secp}})
        \right\|_{1} 
        \notag\\
        =& 
        \left\|
         \left(
          \left(
           \prod_{j=i+1}^{T} B_j\cT_{j}
           \cdot V_{i}(\cT'_{i}-\cT_{i})
           \cdot
           \prod_{j=1}^{i-1} B_j\cT_{j}
           \cdot A
          \right)
           \otimes \identitymap
         \right)
         (\ketbra{\Omega_{2^\secp}}{\Omega_{2^\secp}})
        \right\|_{1} 
        \notag\\
        =& 2^{s+c}
        \left\|
         \left(
          \left(
           \prod_{j=i+1}^{T} B_j\cT_{j}
           \cdot V_{i}(\cT'_{i}-\cT_{i})
          \right)
          \otimes 
          \left(
           \prod_{j=1}^{t-i-1} B_j\cT_{j}
           \cdot A
          \right)^\top
         \right)
         (\ketbra{\Omega_{2^{\secp+s+c}}}{\Omega_{2^{\secp+s+c}}})
         \right\|_{1} 
        \tag{By \cref{claim:choi_state_on_isometry} with $\ell_{\text{in}}=\secp$ and $\ell_{\text{out}}=\secp+s+c$}\\
        \leq& 2^{s+c}
        \left\|
         \left(
          (\cT'_{i}-\cT_{i}) \otimes \identitymap
         \right)
         (\ketbra{\Omega_{2^{\secp+s+c}}}{\Omega_{2^{\secp+s+c}}})
        \right\|_{1} 
        \tag{By H\"{o}lder's inequality \cref{lem:Holder} with $\|A\|_\infty=\|A^\top\|_\infty\le1$ for any isometry $A$}\\
        \leq& O(2^{s+c3/2-d/2}),
    \end{align}
    where, in the last line, we have used \cref{lem:swap_unitary_is_almost_identity_for_Choi_state;ancilla} with $c'=s+c$.
    which concludes the proof.
\end{proof}

Finally, we give the proof of \cref{claim:choi_state_on_isometry}.

\begin{proof}[Proo of \cref{claim:choi_state_on_isometry}]
    Let $$A = \sum_{\substack{x\in\set{0,1}^{\ell_{\text{in}}}\\ y\in\set{0,1}^{\ell_{\text{out}}}}} \alpha_{x,y}\ketbra{y}{x},$$
    then 
    \begin{align}
        (A_{\regE}\otimes I_{\regF})\ket{\Omega_{\ell_{\text{in}}}}_{\regE\regF} &= \frac{1}{\sqrt{2^{\ell_{\text{in}}}}}(A_{\regE}\otimes I_{\regF})\sum_{\substack{x\in\set{0,1}^{\ell_{\text{in}}}}} \ket{x}_{\regE}\ket{x}_{\regF}\\
        &= \frac{1}{\sqrt{2^{\ell_{\text{in}}}}}\sum_{\substack{x\in\set{0,1}^{\ell_{\text{in}}}\\ y\in\set{0,1}^{\ell_{\text{out}}}}} \alpha_{x,y}\ket{y}_{\regE}\ket{x}_{\regF},
    \end{align}
    and 
    \begin{align}
        (I_{\regE'}\otimes A^{\top}_{\regF'})\ket{\Omega_{\ell_{\text{out}}}}_{\regE'\regF'} &= \frac{1}{\sqrt{2^{\ell_{\text{out}}}}}(I_{\regE'}\otimes A^{\top}_{\regF'})\sum_{\substack{y\in\set{0,1}^{\ell_{\text{out}}}}} \ket{y}_{\regE'}\ket{y}_{\regF'}\\
        &=\frac{1}{\sqrt{2^{\ell_{\text{out}}}}}\sum_{\substack{x\in\set{0,1}^{\ell_{\text{in}}}\\ y\in\set{0,1}^{\ell_{\text{out}}}}} \alpha_{x,y}\ket{y}_{\regE'}\ket{x}_{\regF'}.
    \end{align}
    Hence, we have 
    $$(A_{\regE}\otimes I_{\regF})\ket{\Omega_{\ell_{\text{in}}}}_{\regE\regF} = \sqrt{2^{\ell_{\text{out}}-\ell_{\text{in}}}}(I_{\regE'}\otimes A^{\top}_{\regF'})\ket{\Omega_{\ell_{\text{out}}}}_{\regE'\regF'}.$$
\end{proof}

\subsection[Proof of Lemma~\ref{lem:PRI_distinguisher}]{Proof of \cref{lem:PRI_distinguisher}}

In this subsection, we prove \cref{lem:PRI_distinguisher}.
To this end, it suffices to apply \cref{lem:distinguisher} with $\rho=(\cF_{\{V_k\},\ell}\otimes\identitymap)(\ketbra{\Omega_{2^{\secp\ell}}}{\Omega_{2^{\secp\ell}}})$ and $\xi=(\cI_{\secp\to\secp+s,\ell}\otimes\identitymap)(\ketbra{\Omega_{2^{\secp\ell}}}{\Omega_{2^{\secp\ell}}})$.
Before that, we need to certify that they satisfy the condition in \cref{lem:distinguisher}.
First, we need the following, which corresponds to \cref{lem:block-encoding_of_random_Choi_states}.

\begin{lemma}\label{lem:block-encoding_of_PRI_Choi_states}
    Let $\{V_k\}_{k\in\cK_\secp}$ be a family of $(\secp+s+c)$-qubit unitaries that are QPT implementable with classical descriptions of $\cS'_n$ for all $n\in[d]$ and 
    with the query access to $\cU$, where $\cU$ is the $\unitaryPSPACE$ complete problem in \cref{lem:unitaryPSPACE_has_a_complete_problem}.
    Let $\ell(\secp)\coloneqq\lceil\log|\cK_\secp|\rceil$. Then, for any polynomial $p$, there exists a unitary circuit $V_\secp$ satisfying the following:
    \begin{itemize}
        \item $V_\secp$ is QPT implementable with classical descriptions of $\cS'_n$ for all $n\in[d]$ and with the query access to $\cU$.
        \item $V_\secp$ is a $(1,2^{-p(\secp)},\poly(\secp))$-block encoding of $(\cF_{\{V_k\},\ell}\otimes\identitymap)(\ketbra{\Omega_{2^{\secp\ell}}}{\Omega_{2^{\secp\ell}}})$.
    \end{itemize}
\end{lemma}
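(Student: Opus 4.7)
The plan is to mimic the proof of \cref{lem:block-encoding_of_random_Choi_states} almost verbatim. The only structural change is that the state we wish to block-encode, $(\cF_{\{V_k\},\ell}\otimes\identitymap)(\ketbra{\Omega_{2^{\secp\ell}}}{\Omega_{2^{\secp\ell}}})$, now involves two ancilla registers: an $s\ell$-qubit stretch register $\regB$ (to be retained as part of the ``output'' in the trace over $\cF$) and a $c\ell$-qubit ancilla register $\regC$ (to be traced out), rather than a single $c\ell$-qubit ancilla register. Everything else proceeds identically.

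First I would construct a unitary $BA$ acting on registers $\regA\regB\regC\regA'\regK\regW$, where $\regA$ and $\regA'$ are $\secp\ell$-qubit registers, $\regB$ is $s\ell$ qubits, $\regC$ is $c\ell$ qubits, $\regK$ is $\ell$ qubits (holding the key), and $\regW$ is a polynomial-sized workspace. The unitary $A$ maps $\ket{0\cdots 0}$ to an encoding of the classical descriptions of $\{\cS'_n\}_{n\in[d]}$ in an auxiliary register. The unitary $B$ then performs the following steps: (i) prepare the uniform superposition $|\cK_\secp|^{-1/2}\sum_{k\in\cK_\secp}\ket{k}_\regK$; (ii) prepare $\ket{\Omega_{2^{\secp\ell}}}_{\regA\regA'}\ket{0^{s\ell}}_\regB\ket{0^{c\ell}}_\regC$; and (iii) apply the controlled-$V_k^{\otimes\ell}$ on $\regA\regB\regC$, controlled on $\regK$. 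By construction, tracing out $\regC\regK\regW$ from $BA\ket{0\cdots 0}$ gives exactly $(\cF_{\{V_k\},\ell}\otimes\identitymap)(\ketbra{\Omega_{2^{\secp\ell}}}{\Omega_{2^{\secp\ell}}})$ on $\regA\regB\regA'$.

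The only nontrivial step is (iii), and it is handled exactly as in the PRU case. Each $V_k$ is QPT implementable with the classical advice and query access to $\cU$; to obtain a controlled version we need controlled-$\cU$. By \cref{lem:unitaryPSPACE_has_a_complete_problem}, $\cU\in\pureunitaryPSPACE$, and by \cref{remark:contrlizatin_of_pureUnitaryPSPACE} controlled-$\cU$ is also in $\pureunitaryPSPACE$, hence approximately QPT implementable with exponentially small error by querying $\cU$. Postponing all intermediate measurements (which is possible in QPT once controlled-$\cU$ is available), we obtain for any polynomial $p$ a QPT unitary $C$ with query access to $\cU$ satisfying the analogue of \cref{eq:approx_of_purification_for_random_Choi_states}, i.e., $C$ approximates the action of $BA$ on the computational-basis state to within $2^{-p(\secp)}$ in the diamond norm.

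Finally, invoking \cref{lem:block-encoding_of_state} on the approximating purification circuit $C$ yields a $(1,0,\poly(\secp))$-block encoding $V_\secp$ of the approximated density operator $\sigma\coloneqq\Tr_{\regC\regK\regW}[C\ketbra{0\cdots0}{0\cdots0}C^\dagger]$ on $\regA\regB\regA'$. Since $\|\sigma-(\cF_{\{V_k\},\ell}\otimes\identitymap)(\ketbra{\Omega_{2^{\secp\ell}}}{\Omega_{2^{\secp\ell}}})\|_1\le 2^{-p(\secp)}$ and $\|\cdot\|_\infty\le\|\cdot\|_1$, the unitary $V_\secp$ is a $(1,2^{-p(\secp)},\poly(\secp))$-block encoding of the target state, and it is QPT implementable with the classical advice and query access to $\cU$ as required. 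I do not foresee any obstacle beyond the one already addressed in the PRU case, namely controlling a unitary that internally queries $\cU$; the extra stretch register $\regB$ introduces no new difficulty because it is never traced out in the construction of the block encoding.
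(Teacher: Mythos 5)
Your proposal is correct and takes essentially the same approach as the paper: the paper explicitly omits this proof, stating only that it is the same as the proof of \cref{lem:block-encoding_of_random_Choi_states}, and your write-up carries out exactly that adaptation, correctly identifying that the only bookkeeping change is to keep the $s\ell$-qubit stretch register $\regB$ as part of the block-encoded state while tracing out only the $c\ell$-qubit register $\regC$.
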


Since the proof is the same as that of \cref{lem:block-encoding_of_random_Choi_states}, we omit it.
Next, we show that $(\cI_{\secp\to\secp+s,\ell}\otimes\identitymap)(\ketbra{\Omega_{2^{\secp\ell}}}{\Omega_{2^{\secp\ell}}})$ has negligible overlap with the support of $(\cF_{\{V_k\},\ell}\otimes\identitymap)(\ketbra{\Omega_{2^{\secp\ell}}}{\Omega_{2^{\secp\ell}}})$ formalized as follows:

\begin{lemma}\label{lem:Haar_isometry_Choi_has_negligible_overlap}
    Suppose that $c(\secp)=O(\log\secp)$.
    Let $Q'$ be the projection onto the support of $(\cF_{\{V_k\},\ell}\otimes\identitymap)(\ketbra{\Omega_{2^{\secp\ell}}}{\Omega_{2^{\secp\ell}}})$. 
    Then,
    \begin{align}
        \Tr[Q'(\cI_{\secp\to\secp+s,\ell}\otimes\identitymap)(\ketbra{\Omega_{2^{\secp\ell}}}{\Omega_{2^{\secp\ell}}})]
        \le\negl(\secp).
    \end{align}
\end{lemma}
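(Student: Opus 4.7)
My plan would follow the structure of the proof of \cref{lem:Haar_Choi_has_negligible_overlap}, with additional care to handle the $s$-qubit stretch. First, I would establish a rank bound on $Q'$. Since $(\cF_{\{V_k\},\ell}\otimes\identitymap)(\ketbra{\Omega_{2^{\secp\ell}}}{\Omega_{2^{\secp\ell}}})$ is a convex combination, over $k\in\cK_\secp$, of partial traces over the $c\ell$-qubit $\regC$-register of pure states on $\regA\regB\regC\regA'$, each summand has rank at most $2^{c\ell}$. Consequently $\Tr[Q'] \le |\cK_\secp|\cdot 2^{c\ell} \le 2^{(1+c)\ell}$, exactly mirroring the rank bound used in the PRU case.

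Next, I would derive a closed-form expression for $\rho_{\mathrm{PRI}}\coloneqq(\cI_{\secp\to\secp+s,\ell}\otimes\identitymap)(\ketbra{\Omega_{2^{\secp\ell}}}{\Omega_{2^{\secp\ell}}})$ via the Haar twirl. Setting $d_2\coloneqq 2^{\secp+s}$ and using that $\cI_U$ sends $\ket{\psi}\mapsto U(\ket{\psi}\ket{0^s})$,
\[
\rho_{\mathrm{PRI}} = \Exp_{U\gets\mu_{d_2}} (U_{\regA\regB}^{\otimes\ell}\otimes I_{\regA'})\bigl(\ketbra{0^{s\ell}}{0^{s\ell}}_{\regB}\otimes\ketbra{\Omega_{2^{\secp\ell}}}{\Omega_{2^{\secp\ell}}}_{\regA\regA'}\bigr)(U_{\regA\regB}^{\dag\otimes\ell}\otimes I_{\regA'}).
\]
Applying the Schur--Weyl/Weingarten formula for the twirl on $(\regA\regB)^{\otimes\ell}$ and computing the partial trace $\Tr_{\regA\regB}[(R_{\tau^{-1},\regA\regB}\otimes I_{\regA'})(\ketbra{0^{s\ell}}{0^{s\ell}}_{\regB}\otimes\ketbra{\Omega_{2^{\secp\ell}}}{\Omega_{2^{\secp\ell}}}_{\regA\regA'})]=2^{-\secp\ell}R_{\tau,\regA'}$ by a direct calculation, one obtains the exact identity
\[
\rho_{\mathrm{PRI}} = \frac{1}{2^{\secp\ell}}\sum_{\sigma,\tau\in S_\ell}\Wg(\sigma\tau^{-1},d_2)\,R_{\sigma,\regA\regB}\otimes R_{\tau,\regA'} = \frac{\ell!}{2^{\secp\ell}}\sum_{\pi\in S_\ell}\Wg(\pi,d_2)\,R_{\pi,\regA\regB}\,\Pi_{\symetric},
\]
where $\Pi_\symetric$ is the symmetric projector on $(\regA\regB\regA')^{\otimes\ell}$; the second equality uses the substitution $\sigma=\pi\tau$ together with $\sum_{\tau}R_{\tau,\regA\regB}\otimes R_{\tau,\regA'}=\ell!\,\Pi_{\symetric}$.

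Third, I would bound $\|\rho_{\mathrm{PRI}}\|_\infty$. Since $\|R_\pi\|_\infty=\|\Pi_\symetric\|_\infty=1$ and the standard Weingarten estimate $\sum_{\pi\in S_\ell}|\Wg(\pi,d_2)|=O(1/d_2^\ell)$ holds whenever $d_2\ge \ell^2$ (which is met for large $\secp$ because $d_2\ge 2^\secp$ and $\ell\le\poly(\secp)$), the triangle inequality yields
\[
\|\rho_{\mathrm{PRI}}\|_\infty \le \frac{\ell!}{2^{\secp\ell}}\sum_{\pi\in S_\ell}|\Wg(\pi,d_2)| = O\!\left(\frac{\ell!}{2^{(2\secp+s)\ell}}\right).
\]
Combining with the rank bound and Stirling's approximation,
\[
\Tr[Q'\rho_{\mathrm{PRI}}] \le \Tr[Q']\cdot\|\rho_{\mathrm{PRI}}\|_\infty \le O\!\left(\ell!\cdot 2^{(1+c-2\secp-s)\ell}\right) \le \poly(\secp)\cdot\left(\frac{\poly(\secp)}{2^{2\secp}}\right)^{\!\ell} = \negl(\secp)
\]
whenever $c,s=O(\log\secp)$, which finishes the proof.

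The main obstacle is obtaining the tight operator-norm bound in the third step. In the PRU case, the trace-norm approximation of \cref{lem:Haar_state_vs_Haar_choi_state} is sufficient because no additional dimensional factor intervenes. Here, however, any attempt to lift $\rho_{\mathrm{PRI}}$ into the Haar-unitary Choi state via the identity $\ket{0^{s\ell}}_{\regB}\ket{\Omega_{2^{\secp\ell}}}_{\regA\regA'}=\sqrt{2^{s\ell}}(I\otimes\bra{0^{s\ell}}_{\regB'})\ket{\Omega_{d_2^\ell}}$ would, when combined with the $O(\ell^2/d_2)$ trace-norm error of \cref{lem:Haar_state_vs_Haar_choi_state}, incur a prohibitive $2^{s\ell}$ factor that cannot be absorbed for general polynomial $\ell$. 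Accommodating the $O(\log\secp)$ stretch thus forces a direct Weingarten-based operator-norm estimate in place of a black-box use of \cref{lem:Haar_state_vs_Haar_choi_state}.
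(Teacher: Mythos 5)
Your proposal is correct and reaches the right bound, but it takes a genuinely different route from the paper's. The paper does not compute the Weingarten expansion of $\rho_{\mathrm{PRI}}$; instead it invokes the Haar twirl approximation lemma (\cref{lem:Haar_approx}, from~\cite{schuster2024random,HhaYam24}) \emph{directly} on the state $\ketbra{0^{s\ell}}{0^{s\ell}}_{\regB}\otimes\ketbra{\Omega_{2^{\secp\ell}}}{\Omega_{2^{\secp\ell}}}_{\regA\regA'}$, then shows (\cref{lem:Haar_isometry_Choi}) that the resulting permutation sum equals $\Exp_{\ket{\psi}\gets\sigma_{2^{2\secp+s}}}\ketbra{\psi}{\psi}^{\otimes\ell}$ up to a $1+O(\ell^2/2^{2\secp+s})$ multiplicative constant, and finally reuses the $\Tr[Q'\Pi_{\symetric}]/\binom{2^{2\secp+s}+\ell-1}{\ell}$ bound exactly as in \cref{lem:Haar_Choi_has_negligible_overlap}. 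The obstacle you flag — that the lifting $\ket{0^{s\ell}}_{\regB}\ket{\Omega_{2^{\secp\ell}}}=\sqrt{2^{s\ell}}(I\otimes\bra{0^{s\ell}}_{\regB'})\ket{\Omega_{2^{(\secp+s)\ell}}}$ turns the $O(\ell^2/d_2)$ trace-norm error of \cref{lem:Haar_state_vs_Haar_choi_state} into an unmanageable $2^{s\ell}$ factor — is a real pitfall of the naive reduction, but it is not what forces the Weingarten route: the trace-norm bound in \cref{lem:Haar_approx} is state-independent (essentially a diamond-norm statement about the twirl channel), so applying it to the isometry Choi state costs only the benign $O(\ell^2/2^{\secp+s})$, with no $2^{s\ell}$ blowup. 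So both a trace-norm argument and your operator-norm argument work here. Your version is arguably more self-contained and avoids the intermediate hybrid states, but it does require you to justify the estimate $\sum_{\pi\in S_\ell}|\Wg(\pi,d_2)|=O(d_2^{-\ell})$ for $d_2\ge\ell^2$; this is a standard consequence of Weingarten asymptotics but is not something the paper ever needs, so you would have to supply a citation or a short proof (e.g., $|\Wg(\pi,d)|=O(d^{-2\ell+|C(\pi)|})$, then sum over Stirling numbers) to make the argument complete.
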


We need the following lemma for the proof of \cref{lem:Haar_isometry_Choi_has_negligible_overlap}.

\begin{lemma}\label{lem:Haar_isometry_Choi}
    Let $\secp,s,\ell\in\N$ such that $2^{\secp+s}\ge \ell^{2}$. Then, we have
    \begin{align}
        \bigg\|
         (\cI_{\secp\to\secp+s,\ell}\otimes\identitymap)(\ketbra{\Omega_{2^{\secp\ell}}}{\Omega_{2^{\secp\ell}}})
         -\Exp_{\ket{\psi}\gets\sigma_{2^{2\secp+s}}}\ketbra{\psi}{\psi}^{\otimes\ell}
        \bigg\|_1
        \le O\left(\frac{\ell^2}{2^{\secp+s}}\right),
    \end{align}
    where $\cI_{\secp\to\secp+s,\ell}$ is the Haar random isometry map defined in \cref{def:Haar_isometry_map}
\end{lemma}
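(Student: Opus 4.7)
The plan is to express the Haar random isometry Choi state as a Haar-twirled state and then apply the standard Haar-twirl approximation, reducing the problem to a combinatorial comparison with the symmetric-subspace form of the Haar random state average.

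Set $d_1\coloneqq 2^{\secp}$ and $d_2\coloneqq 2^{\secp+s}$. First, I would use the right-invariance of the Haar measure to realize a Haar random isometry from $\cH_{d_1}$ to $\cH_{d_2}$ as $V\ket{x}=U(\ket{x}\ket{0^s})$ where $U$ is Haar random on $\cH_{d_2}$. Consequently, letting $\regE$ be a $d_2$-dimensional register and $\regF$ a $d_1$-dimensional register, the single-copy Choi state is $(U_\regE\otimes I_\regF)\ket{\beta}_{\regE\regF}$ with the fixed vector $\ket{\beta}\coloneqq\frac{1}{\sqrt{d_1}}\sum_{x\in[d_1]}\ket{x,0^s}_\regE\ket{x}_\regF$. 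Stacking $\ell$ copies, the left-hand side equals
\[
\rho_{\mathrm{iso}}\coloneqq\Exp_{U\gets\mu_{d_2}}\bigl(U_\regE^{\otimes\ell}\otimes I_{\regF}^{\otimes\ell}\bigr)\ketbra{\beta}{\beta}^{\otimes\ell}_{\regE\regF}\bigl(U_\regE^{\dag\otimes\ell}\otimes I_{\regF}^{\otimes\ell}\bigr).
\]

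Next I would invoke the Haar twirl approximation (as in~\cite{HhaYam24}, from which \cref{lem:Haar_state_vs_Haar_choi_state} is derived), which states that for any bipartite state $\xi_{AB}$ with $A$ of dimension $D$,
\[
\Bigl\|\Exp_{U\gets\mu_D}(U_A^{\otimes\ell}\otimes I_B)\xi_{AB}(U_A^{\dag\otimes\ell}\otimes I_B)-\sum_{\pi\in S_\ell}\frac{R_{\pi,A}\otimes\Tr_A[R_{\pi,A}^\dag\xi_{AB}]}{D^\ell}\Bigr\|_1\le O\!\left(\frac{\ell^2}{D}\right).
\]
Applying this with $D=d_2$, $A=\regE^{\otimes\ell}$, $B=\regF^{\otimes\ell}$, and $\xi=\ketbra{\beta}{\beta}^{\otimes\ell}$, a short computation gives $\Tr_\regE[R_{\pi,\regE}^\dag\ketbra{\beta}{\beta}^{\otimes\ell}_{\regE\regF}]=R_{\pi,\regF}/d_1^{\ell}$, because $\bra{\vec y,\vec 0^{s}}R_{\pi,\regE}^\dag\ket{\vec x,\vec 0^s}=\prod_i\delta_{y_i,x_{\pi(i)}}$. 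Viewing $\regE_i\regF_i$ as the $i$-th copy of a $(d_1 d_2)$-dimensional system, the permutation operators factor as $R_{\pi,\regE}\otimes R_{\pi,\regF}=R_{\pi,\regE\regF}$, yielding
\[
\Bigl\|\rho_{\mathrm{iso}}-\frac{1}{(d_1 d_2)^\ell}\sum_{\pi\in S_\ell}R_{\pi,\regE\regF}\Bigr\|_1\le O\!\left(\frac{\ell^2}{d_2}\right).
\]

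Finally, I would compare $\sigma\coloneqq (d_1d_2)^{-\ell}\sum_\pi R_\pi=\ell!\,\Pi_{\symetric}/(d_1d_2)^{\ell}$ to the average of the Haar random state, which by \cref{lem:Haar_states} equals $\rho_{\mathrm{Haar}}=\Pi_{\symetric}/\binom{d_1d_2+\ell-1}{\ell}$. Since $\ell!\binom{d_1d_2+\ell-1}{\ell}=\prod_{k=0}^{\ell-1}(d_1d_2+k)=(d_1d_2)^{\ell}\bigl(1+O(\ell^2/(d_1d_2))\bigr)$ whenever $\ell^2\le d_1 d_2$, we obtain $\|\sigma-\rho_{\mathrm{Haar}}\|_1\le O(\ell^2/(d_1d_2))$. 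The triangle inequality yields $\|\rho_{\mathrm{iso}}-\rho_{\mathrm{Haar}}\|_1\le O(\ell^2/d_2)=O(\ell^2/2^{\secp+s})$, as required. The main obstacle is justifying the Haar twirl approximation step with the correct dependence on $d_2$ (rather than $d_1d_2$); this is where the stretch factor $2^{s}$ is gained compared to a naive argument, and it is essential for the subsequent application in the proof of \cref{lem:Haar_isometry_Choi_has_negligible_overlap}.
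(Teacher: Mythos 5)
Your proof is correct and follows essentially the same route as the paper: realize the Haar random isometry as a Haar unitary applied to a larger register, invoke the Haar twirl approximation with $D = 2^{\secp+s}$ (which is precisely where the $2^s$ factor is gained, as you observe), compute the partial trace of the permutation operator against the fixed Choi input to recover $(d_1d_2)^{-\ell}\sum_\pi R_\pi$, and finally compare the normalization $\ell!/(d_1d_2)^\ell$ against $\binom{d_1d_2+\ell-1}{\ell}^{-1}$. The only cosmetic difference from the paper is that you bundle the $s$-qubit ancilla and the $\secp$-qubit output into a single $\regE$ register before computing the partial trace, whereas the paper keeps them as $\regB$ and $\regA$ and factors $R_\pi^\dag$ across them; the resulting intermediate state is identical.
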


Before proving \cref{lem:Haar_isometry_Choi}, we give the proof of \cref{lem:Haar_isometry_Choi_has_negligible_overlap} assuming \cref{lem:Haar_isometry_Choi}.

\begin{proof}[Proof of \cref{lem:Haar_isometry_Choi_has_negligible_overlap}]
    First, we prove that $\Tr[Q']\le2^{(1+c)\ell}$.
    Note that
    \begin{align}
        &(\cF_{\{V_k\},\ell,\regA\to\regA\regB}\otimes\identitymap_{\regA'})(\ketbra{\Omega_{2^{\secp\ell}}}{\Omega_{2^{\secp\ell}}}_{\regA\regA'})
        \notag\\
        =&\frac{1}{|\cK_\secp|}\sum_{k\in\cK_\secp}
        \Tr_\regC[(V^{\otimes\ell}_{k,\regC\regB\regA}\otimes \identitymap_{\regA'})(\ketbra{0^c}{0^c}^{\otimes\ell}_\regC\otimes\ketbra{0^s}{0^s}^{\otimes\ell}_\regB\ketbra{\Omega_{2^{\secp\ell}}}{\Omega_{2^{\secp\ell}}}_{\regA\regA'})].
    \end{align}
    For each $k$, the rank of $\Tr_\regC[(V^{\otimes\ell}_{k,\regC\regB\regA}\otimes \identitymap_{\regA'})(\ketbra{0^c}{0^c}^{\otimes\ell}_\regC\otimes\ketbra{0^s}{0^s}^{\otimes\ell}_\regB\ketbra{\Omega_{2^{\secp\ell}}}{\Omega_{2^{\secp\ell}}}_{\regA\regA'})]$ is at most $\min\{2^{c\ell},2^{(2\secp+s)\ell}\}=2^{c\ell}$ since $(V^{\otimes\ell}_{k,\regC\regB\regA}\otimes \identitymap_{\regA'})(\ketbra{0^c}{0^c}^{\otimes\ell}_\regC\otimes\ketbra{0^s}{0^s}^{\otimes\ell}_\regB\ketbra{\Omega_{2^{\secp\ell}}}{\Omega_{2^{\secp\ell}}}_{\regA\regA'})$ is pure.
    Thus, the rank of $Q'$ is at most $2^{c\ell}\cdot|\cK_\secp|\le2^{(1+c)\ell}$, which implies $\Tr[Q']\le2^{(1+c)\ell}$.
    
    Note that the rank of $Q'$ is at most $2^\ell$ since it is the same as the rank of $(\cM_{\{\cI'_k\},\ell}\otimes\identitymap)(\ketbra{\Omega_{2^{\secp\ell}}}{\Omega_{2^{\secp\ell}}})$.
    Therefore, we have
    \begin{align}
        \Tr[Q'(\cI_{\secp\to\secp+s,\ell}\otimes\identitymap)(\ketbra{\Omega_{2^{\secp\ell}}}{\Omega_{2^{\secp\ell}}})]
        \le& 
        \Tr
        \bigg[
         Q'\Exp_{\ket{\psi}\gets\sigma_{2^{2\secp+s}}}\ketbra{\psi}{\psi}^{\otimes\ell}
        \bigg]
        +\negl(\secp)
        \tag{By \cref{lem:Haar_isometry_Choi}}\\
        =&\frac{\Tr[Q'\Pi_{\symetric}]}{\binom{2^{2\secp+s}+\ell-1}{\ell}}+\negl(\secp)
        \tag{By \cref{lem:Haar_states}}\\
        \le&\frac{2^{(1+c)\ell}}{\binom{2^{2\secp+s}+\ell-1}{\ell}}+\negl(\secp)
        \tag{By $\Tr[Q'\Pi_{\symetric}]\le\Tr[Q']\le2^{(1+c)\ell}$}\\
        \le&O\bigg(
        \frac{2^{(1+c)\ell}(\ell!)}{2^{(2\secp+s)\ell}}
        \bigg)
        +\negl(\secp)
        \notag\\
        \le&O\bigg(
        \frac{2^{(1+c)\ell} \ell^{\ell+1/2}e^{-\ell}}{2^{(2\secp+s)\ell}}
        \bigg)
        +\negl(\secp)
        \tag{By the Stirling's formula}\\
        =&O\bigg(\ell^{1/2}
        \bigg(\frac{2^ce^{-1} \ell}{2^{2\secp+s}}\bigg)^\ell
        \bigg)
        +\negl(\secp)
        \notag\\
        \le&O\bigg(\ell^{1/2}
        \bigg(\frac{\poly(\secp)}{2^{2\secp+s}}\bigg)^\ell
        \bigg)
        +\negl(\secp)
        \tag{By $c(\secp)=O(\log\secp)$}\\
        \le&\negl(\secp),
    \end{align}
    which concludes the proof.
\end{proof}

For the proof of \cref{lem:Haar_isometry_Choi}, we use the following.

\begin{lemma}[Haar Twirl Approximation, \cite{schuster2024random,HhaYam24}]\label{lem:Haar_approx}
    Let $n,\ell\in\N$ such that $2^n\ge \ell^{2}$. Let $\regA$ be a $n\ell$-qubit register, and $\regA'$ be some fixed register. Then, for any quantum state $\rho$ on the registers $\regA\regA'$,
    \begin{align}
        \left\|
        (\cM_{\mu_{2^n},\ell,\regA}\otimes\identitymap_{\regA'})(\rho_{\regA\regA'})-\sum_{\pi\in S_\ell}\frac{1}{2^{n\ell}}R_{\pi,\regA}\otimes\Tr_{\regA}[(R_{\pi,\regA}^\dag\otimes I_{\regA'})\rho_{\regA\regA'}]
        \right\|_1
        \le O\left(\frac{\ell^2}{2^n}\right).
    \end{align}
    Here, $S_\ell$ denotes the permutation group over $\ell$ elements, and
    for $\pi\in S_\ell$, $R_\pi$ is the permutation unitary such that $R_\pi\ket{x_1,...,x_\ell}=\ket{x_{\pi^{-1}(1)},...,x_{\pi^{-1}(\ell)}}$ for all $x_1,...,x_\ell\in\bit^n$.
\end{lemma}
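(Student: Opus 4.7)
The plan is to reduce the statement to the Haar twirl approximation (Lemma~\ref{lem:Haar_approx}) applied on the enlarged register $\regA^* = \regA\regB$, and then identify the resulting twirled operator with the symmetric-subspace average (Lemma~\ref{lem:Haar_states}) up to a small normalization factor. Concretely, let $\regA^* = \regA\regB$ carry $(\secp+s)\ell$ qubits and set $\regA'^* = \regA'$, so that $(\cI_{\secp\to\secp+s,\ell}\otimes\identitymap)(\ketbra{\Omega_{2^{\secp\ell}}}{\Omega_{2^{\secp\ell}}})=(\cM_{\mu_{2^{\secp+s}},\ell,\regA^*}\otimes\identitymap_{\regA'^*})(\rho)$ where $\rho = \ket{0^s}\bra{0^s}^{\otimes\ell}_{\regB}\otimes \ketbra{\Omega_{2^{\secp\ell}}}{\Omega_{2^{\secp\ell}}}_{\regA\regA'}$. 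Lemma~\ref{lem:Haar_approx} applied with $n=\secp+s$ then yields
\begin{align}
\bigg\|(\cI_{\secp\to\secp+s,\ell}\otimes\identitymap)(\ketbra{\Omega_{2^{\secp\ell}}}{\Omega_{2^{\secp\ell}}}) - \sum_{\pi\in S_\ell}\frac{1}{2^{(\secp+s)\ell}}R_{\pi,\regA\regB}\otimes \Tr_{\regA\regB}\!\bigl[(R_{\pi,\regA\regB}^\dag\otimes I_{\regA'})\rho\bigr]\bigg\|_1 \le O\!\left(\frac{\ell^2}{2^{\secp+s}}\right).
\end{align}

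Next I would compute the partial trace on the right-hand side explicitly. Since $R_{\pi,\regA\regB}=R_{\pi,\regA}\otimes R_{\pi,\regB}$, the $\regB$ factor contributes $\Tr_{\regB}[R_{\pi,\regB}^\dag \ket{0^s}\bra{0^s}^{\otimes\ell}] = \bra{0^{s\ell}}R_{\pi}^\dag\ket{0^{s\ell}} = 1$ because $\ket{0^{s\ell}}$ is invariant under permutations of the copies. For the $\regA$ factor, using the standard identity $(M\otimes I)\ket{\Omega_d} = (I\otimes M^{\top})\ket{\Omega_d}$ and the fact that $R_\pi$ is a real permutation matrix with $R_\pi^\top = R_\pi^\dag$, we get $(R_{\pi,\regA}^\dag\otimes I_{\regA'})\ket{\Omega_{2^{\secp\ell}}} = (I_\regA\otimes R_{\pi,\regA'})\ket{\Omega_{2^{\secp\ell}}}$, and hence $\Tr_\regA[(R_{\pi,\regA}^\dag\otimes I_{\regA'})\ket{\Omega}\bra{\Omega}] = R_{\pi,\regA'}/2^{\secp\ell}$. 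Substituting, the approximating sum becomes $\sum_\pi \frac{1}{2^{(2\secp+s)\ell}}\,R_{\pi,\regA\regB}\otimes R_{\pi,\regA'}$. Regrouping the registers per copy into a single $(2\secp+s)$-qubit register $\regA_i\regB_i\regA'_i$, the tensor product $R_{\pi,\regA\regB}\otimes R_{\pi,\regA'}$ is exactly the permutation unitary $R_\pi^{\mathrm{comb}}$ on $\ell$ copies of this combined system. Thus the approximating operator equals $\frac{\ell!}{2^{(2\secp+s)\ell}}\Pi_\sym^{\mathrm{comb}}$, where $\Pi_\sym^{\mathrm{comb}}=\tfrac{1}{\ell!}\sum_\pi R_\pi^{\mathrm{comb}}$ is the symmetric-subspace projection on $(2^{2\secp+s})^{\otimes\ell}$.

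Finally, by Lemma~\ref{lem:Haar_states}, $\Exp_{\ket{\psi}\gets\sigma_{2^{2\secp+s}}}\ketbra{\psi}{\psi}^{\otimes\ell} = \Pi_\sym^{\mathrm{comb}}/\binom{2^{2\secp+s}+\ell-1}{\ell}$, so both the approximating operator and the target state are scalar multiples of the same projection. Setting $d=2^{2\secp+s}$, the ratio of the two scalars is $\frac{\ell!}{d^\ell}\cdot\binom{d+\ell-1}{\ell} = \prod_{i=0}^{\ell-1}\frac{d+i}{d}= 1+O(\ell^2/d)$ whenever $d\ge \ell^2$; combined with $\|\Pi_\sym^{\mathrm{comb}}/\binom{d+\ell-1}{\ell}\|_1=1$, this yields
\begin{align}
\bigg\|\frac{\ell!}{2^{(2\secp+s)\ell}}\Pi_\sym^{\mathrm{comb}} - \Exp_{\ket{\psi}\gets\sigma_{2^{2\secp+s}}}\ketbra{\psi}{\psi}^{\otimes\ell}\bigg\|_1 \le O\!\left(\frac{\ell^2}{2^{2\secp+s}}\right).
\end{align}
A triangle inequality with the bound from the first display (which dominates, since $2^{2\secp+s}\ge 2^{\secp+s}$) gives the claimed $O(\ell^2/2^{\secp+s})$ bound.

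The only delicate step is the register bookkeeping in paragraph two: correctly identifying $R_{\pi,\regA\regB}\otimes R_{\pi,\regA'}$ with a single permutation on the combined $(2\secp+s)$-qubit copies, and correctly transferring the permutation from $\regA$ to $\regA'$ through the maximally entangled state. Everything else is a direct application of the two cited lemmas plus an elementary estimate of the ratio of dimensions, so the main technical burden is simply keeping the partial traces and tensor factorizations straight.
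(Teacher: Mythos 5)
Your proposal does not prove the statement at hand. The statement is the Haar twirl approximation itself (Lemma~\ref{lem:Haar_approx}): for an \emph{arbitrary} state $\rho$ on $\regA\regA'$, the twirled state $(\cM_{\mu_{2^n},\ell,\regA}\otimes\identitymap_{\regA'})(\rho)$ is $O(\ell^2/2^n)$-close in trace norm to $\sum_{\pi\in S_\ell}2^{-n\ell}R_{\pi,\regA}\otimes\Tr_{\regA}[(R_{\pi,\regA}^\dag\otimes I_{\regA'})\rho]$. Your very first step invokes exactly this bound (with $n=\secp+s$ and the specific input $\ket{0^s}\bra{0^s}^{\otimes\ell}\otimes\ketbra{\Omega_{2^{\secp\ell}}}{\Omega_{2^{\secp\ell}}}$) as a black box, so as a proof of Lemma~\ref{lem:Haar_approx} the argument is circular. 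What you have actually written is a proof of the downstream Lemma~\ref{lem:Haar_isometry_Choi}, namely that the Haar-random-isometry Choi state is $O(\ell^2/2^{\secp+s})$-close to $\Exp_{\ket{\psi}\gets\sigma_{2^{2\secp+s}}}\ketbra{\psi}{\psi}^{\otimes\ell}$; for that lemma your steps (permutation invariance of $\ket{0^{s\ell}}$, the transpose trick $R_\pi^\dag=R_\pi^\top$ across the maximally entangled state, recombining $R_{\pi,\regA\regB}\otimes R_{\pi,\regA'}$ into one permutation on the combined copies, Lemma~\ref{lem:Haar_states}, and the ratio estimate $\frac{\ell!}{d^\ell}\binom{d+\ell-1}{\ell}=1+O(\ell^2/d)$) coincide essentially step by step with the paper's own proof.

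The missing content is the twirl approximation itself, which is not a register-bookkeeping exercise. One needs the Weingarten expansion $\Exp_{U\gets\mu_{2^n}}(U^{\otimes\ell}\otimes I)\rho(U^{\otimes\ell}\otimes I)^\dag=\sum_{\pi,\sigma\in S_\ell}\Wg(\pi\sigma^{-1},2^n)\,R_{\pi,\regA}\otimes\Tr_{\regA}[(R_{\sigma,\regA}^\dag\otimes I_{\regA'})\rho]$, together with quantitative control of the off-diagonal terms ($\pi\neq\sigma$) and of the deviation of the diagonal Weingarten values from $2^{-n\ell}$, to obtain the uniform $O(\ell^2/2^n)$ trace-norm error for all inputs $\rho$; the paper does not reprove this but cites \cite{schuster2024random,HhaYam24}. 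If your intended target was Lemma~\ref{lem:Haar_isometry_Choi}, your argument is correct and matches the paper; if the target is Lemma~\ref{lem:Haar_approx}, you must supply such a Weingarten-calculus (or equivalent) argument rather than invoke the lemma itself.
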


From \cref{lem:Haar_approx} and a straightforward calculation, we can prove \cref{lem:Haar_isometry_Choi}.

\begin{proof}[Proof of \cref{lem:Haar_isometry_Choi}]
    We define $\regA$ and $\regA'$ are $\secp\ell$-qubit register, and $\regB$ is $s\ell$-qubit register.
    Let the output registers of $\cI_{\secp\to\secp+s,\ell,\regA}$ be $\regA$ and $\regB$.
    We prove \cref{lem:Haar_isometry_Choi} via the following hybrids:
    \begin{align}
        &\rho_{0,\regB\regA\regA'}\coloneqq
        (\cI_{\secp\to\secp+s,\ell,\regA}\otimes\identitymap_{\regA'})(\ketbra{\Omega_{2^{\secp\ell}}}{\Omega_{2^{\secp\ell}}}_{\regA\regA'})\\
        &\rho_{1,\regB\regA\regA'}\coloneqq
        \sum_{\pi\in S_\ell}\frac{1}{2^{(\secp+s)\ell}}R_{\pi,\regB\regA}\otimes\Tr_{\regB\regA}[(R_{\pi,\regB\regA}^\dag\otimes I_{\regA'})(\ketbra{0^{s\ell}}{0^{s\ell}}_{\regB}\otimes\ketbra{\Omega_{2^{\secp\ell}}}{\Omega_{2^{\secp\ell}}}_{\regA\regA'})]\\
        &\rho_{2,\regB\regA\regA'}\coloneqq
        \Exp_{\ket{\psi}\gets\sigma_{2^{2\secp+s}}}\ketbra{\psi}{\psi}^{\otimes\ell}_{\regB\regA\regA'}.
    \end{align}
    Since we have $\|\rho_0-\rho_1\|_1\le O(\ell^2/2^{\secp+s})$ from \cref{lem:Haar_approx} with $n=\secp+s$, it suffices to prove $\|\rho_1-\rho_2\|_1\le O(\ell^2/2^{\secp+s})$.
    Note that we have
    \begin{align}
        &(R_{\pi,\regB\regA}^\dag\otimes I_{\regA'})(\ketbra{0^{s\ell}}{0^{s\ell}}_{\regB}\otimes\ketbra{\Omega_{2^{\secp\ell}}}{\Omega_{2^{\secp\ell}}}_{\regA\regA'})
        \notag\\
        =&R^\dag_{\pi,\regB}\ketbra{0^{s\ell}}{0^{s\ell}}_{\regB}\otimes(R_{\pi,\regA}^\dag\otimes I_{\regA'})\ketbra{\Omega_{2^{\secp\ell}}}{\Omega_{2^{\secp\ell}}}_{\regA\regA'}
        \tag{$R^\dag_{\pi,\regB\regA}=R^\dag_{\pi,\regB}\otimes R^\dag_{\pi,\regA}$}\\
        =&\ketbra{0^{s\ell}}{0^{s\ell}}_{\regB}\otimes(R_{\pi,\regA}^\dag\otimes I_{\regA'})\ketbra{\Omega_{2^{\secp\ell}}}{\Omega_{2^{\secp\ell}}}_{\regA\regA'}
        \tag{$(0^s,...,0^s)$ is invariant under any permutation $\pi\in S_\ell$}\\
        =&\ketbra{0^{s\ell}}{0^{s\ell}}_{\regB}\otimes(I_\regA\otimes R_{\pi,\regA'})\ketbra{\Omega_{2^{\secp\ell}}}{\Omega_{2^{\secp\ell}}}_{\regA\regA'}.\label{eq:lem:Haar_isometry_Choi1}
    \end{align}
    In the last line, we used the fact that $R_\pi^\dag = R_\pi^\top$ for any $\pi \in S_\ell$, where $\top$ denotes the transpose.
    Thus, we have
    \begin{align}
        &R_{\pi,\regB\regA}\otimes\Tr_{\regB\regA}[(R_{\pi,\regB\regA}^\dag\otimes I_{\regB})(\ketbra{0^{s\ell}}{0^{s\ell}}_{\regB}\otimes\ketbra{\Omega_{2^{\secp\ell}}}{\Omega_{2^{\secp\ell}}}_{\regA\regA'})]
        \notag\\
        =&R_{\pi,\regB\regA}\otimes\Tr_{\regA}[(I_\regA\otimes R_{\pi,\regA'})\ketbra{\Omega_{2^{\secp\ell}}}{\Omega_{2^{\secp\ell}}}_{\regA\regA'}]
        \tag{By \cref{eq:lem:Haar_isometry_Choi1}}\\
        =&\frac{1}{2^{\secp\ell}}R_{\pi,\regB\regA}\otimes R_{\pi,\regA'}
        \notag\\
        =&\frac{1}{2^{\secp\ell}}R_{\pi,\regB\regA\regA'},\label{eq:lem:Haar_isometry_Choi2}
    \end{align}
    where we have used $R_{\pi,\regB\regA}\otimes R_{\pi,\regA'}=R_{\pi,\regB\regA\regA'}$ in the last line.
    From \cref{eq:lem:Haar_isometry_Choi2,lem:Haar_states}, we have
    \begin{align}
        \rho_{1,\regB\regA\regA'}
        =\sum_{\pi\in S_\ell}\frac{1}{2^{(2\secp+s)\ell}}R_{\pi,\regB\regA\regA'}
        =\frac{\ell!}{2^{(2\secp+s)\ell}}\binom{2^{2\secp+s}+\ell-1}{\ell}\Exp_{\ket{\psi}\gets\sigma_{2^{2\secp+s}}}\ketbra{\psi}{\psi}^{\otimes\ell}_{\regB\regA\regA'}.
    \end{align}
    By a straightforward calculation, we have $\frac{\ell!}{2^{(2\secp+s)\ell}}\binom{2^{2\secp+s}+\ell-1}{\ell}=1+O(\ell^2/2^{2\secp+s})$. 
    Therefore, since $\rho_2=\Exp_{\ket{\psi}\gets\sigma_{2^{2\secp+s}}}\ketbra{\psi}{\psi}^{\otimes\ell}$, we obtain $\|\rho_1-\rho_2\|_1\le O(\ell^2/2^{2\secp+s})\le O(\ell^2/2^{\secp+s})$, which concludes the proof.
\end{proof}

From \cref{lem:block-encoding_of_PRI_Choi_states,lem:Haar_isometry_Choi_has_negligible_overlap}, we are ready to prove \cref{lem:PRI_distinguisher}.
We restate it here.

\PRIdistinguisher*

\begin{proof}[Proof of \cref{lem:PRI_distinguisher}]
    Let $\rho=(\cF_{\{V_k\},\ell}\otimes\identitymap)(\ketbra{\Omega_{2^{\secp\ell}}}{\Omega_{2^{\secp\ell}}})$.
    From \cref{lem:block-encoding_of_PRI_Choi_states,lem:Haar_isometry_Choi_has_negligible_overlap}, they satisfy the condition in \cref{lem:distinguisher}.
    Therefore, we obtain \cref{lem:PRI_distinguisher} by applying \cref{lem:distinguisher}, which concludes the proof.
\end{proof}
\fi
\section{Oracle Separation Between PRIs with Short Stretch and PRIs with Large Stretch}
\label{sec:PRI_vs_PRI}

In this section, we prove the following theorem.

\begin{theorem}\label{thm:PRI_vs_PRI}
    Let $s(\secp)=O(\log\secp)$ and $t(\secp)=\Omega(\secp)$. Then, there exists a unitary oracle $\cO'$ relative to which
    \begin{itemize}
        \item adaptive PRIs with $t$ stretch exist, but
        \item non-adaptive and ancilla-free PRIs with $s$ stretch do not exist.
    \end{itemize}
    Here, it is allowed to query the inverse of $\cO'$.
\end{theorem}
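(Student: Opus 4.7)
The plan is to reduce \cref{thm:PRI_vs_PRI} to \cref{thm:PRI_vs_PRFSG} by augmenting the oracle of \cref{def:unitary_orcle} with an extra component that directly yields a long-stretch PRI, where the component acts on too many qubits for any small-stretch PRI candidate to query. I would define $\cO'\coloneqq(\cS,\cU,\cR)$, with $(\cS,\cU)$ exactly as in \cref{def:unitary_orcle}, and $\cR=\{\cR_\secp\}_{\secp\in\N}$ where $\cR_\secp\coloneqq\sum_{k\in\bit^\secp}\ketbra{k}{k}\otimes R_{\secp,k}$ is a $(2\secp+t)$-qubit unitary packing independent Haar random unitaries $R_{\secp,k}\in\Unitaries(2^{\secp+t})$.

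For the positive direction, I would define the adaptive PRI with $t$-stretch by $\cI_k(\ket{\psi})\coloneqq R_{\secp,k}(\ket{\psi}\otimes\ket{0^t})$ for $k\in\bit^\secp$. Conditioned on $k$, $\cI_k$ is literally a Haar random isometry, so security reduces to showing that no QPT adversary can identify $k$ among $2^\secp$ possibilities. I would argue this via a hybrid replacing the oracle entry $R_{\secp,k}$ by an independent fresh Haar random unitary while the challenger continues to use the original: the post-replacement view is distributed exactly as the ideal world, and by \cref{lem:BBBV} applied with $N=2^\secp$ to the compound family $\{R_{\secp,k'}\}_{k'\in\bit^\secp}$, the change affects the expected distinguishing advantage by at most $O(T^2/2^\secp)$. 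Pushing this through the Lipschitz-plus-Haar-concentration argument of \cref{lem:Adv_is_negligible_whp} together with a Borel--Cantelli step (\cref{lem:Borel-Cantelli}) then gives security with probability one over $\cO'$ against QPT adversaries with polynomial classical advice.

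For the negative direction, the crucial observation is that a non-adaptive ancilla-free PRI $G$ with $s(\secp)=O(\log\secp)$ stretch uses only $\secp+s=\secp+O(\log\secp)$ qubits in its implementation, while any single query to $\cR_\secp$ requires $2\secp+t$ qubits, which, since $t=\Omega(\secp)$, strictly exceeds $\secp+s$ for all sufficiently large $\secp$. Hence $G$ cannot query $\cR$ at all, and its entire interaction with $\cO'$ reduces to querying $(\cS,\cU)$; thus \cref{thm:PRI_vs_PRFSG} applies verbatim, and the adversary of \cref{alg:break_PRI} distinguishes $\cI_k$ from a Haar random isometry using only $(\cS,\cU)$ while ignoring the additional oracle component $\cR$. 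Taking the intersection of the two probability-one events (positive and negative) yields a fixed $\cO'$ satisfying both properties.

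The main subtlety I anticipate is rigorously justifying the qubit-count-based exclusion: the ancilla-free condition caps the total ancilla of the implementation circuit at $s$, hence the total qubit count at $\secp+s$, so that a query to a $(2\secp+t)$-qubit unitary is genuinely impossible since all $2\secp+t$ qubits must be presented simultaneously. If one instead adopted a model that permits extra scratch space beyond the advertised ancilla bound, this qubit-count reduction would break, and one would have to re-run the analysis of \cref{sec:PRFSG_vs_PRI} with the block encoding implementing controlled queries to $\cR$, for example by reserving a specific key $k^\star\in\bit^\secp$ such that $R_{\secp,k^\star}=I$ in the definition of $\cR_\secp$, so that controlled-$\cR_\secp$ is simulated by a single $\cR_\secp$-query together with a CNOT-into-key-register gadget.
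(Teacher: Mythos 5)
Your proposal takes a genuinely different route from the paper, but the negative direction has a substantive gap. You define $\cR = \{\cR_n\}_{n\in\N}$ as a \emph{family} of oracles, with $\cR_n$ directly exposing Haar random unitaries on $n+t(n)$ qubits, and you argue that the short-stretch PRI candidate ``cannot query $\cR$ at all'' because a query to $\cR_\secp$ requires $2\secp+t(\secp)$ qubits, which exceeds the $\secp+s(\secp)$ qubits available. But the PRI candidate at security parameter $\secp$ has access to the entire family $\{\cR_n\}_n$, and it can query $\cR_n$ for any $n$ with $2n+t(n)\le\secp+s(\secp)$. With $t(n)\ge cn$ for some constant $c>0$, this allows all $n$ up to roughly $(\secp+s)/(2+c)=\Theta(\secp)$. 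The problematic range is $\omega(\log\secp)\le n\le O(\secp)$: there, the dimension $2^{2n+t(n)}$ is superpolynomial, so your adversary cannot perform process tomography on $\cR_n$, yet $\cR_n$ packs \emph{full} Haar random unitaries, so it is nowhere near the identity and cannot be dropped from the simulation. Consequently the reduction to \cref{thm:PRI_vs_PRFSG} does not ``apply verbatim'': the adversary of \cref{alg:break_PRI} has no way to build a block encoding of the candidate's Choi state once that candidate embeds exponentially large unknown $R_{n,k}$ factors.

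The paper (\cref{def:HRI_orcle} and \cref{lem:HRI_is_almost_identity_for_Choi_state}) avoids exactly this issue by encoding the Haar random unitary $U_{n,m}$ inside a \emph{swap} structure: $\HRI_{t,n,m}$ only permutes a $2^{n+1}$-dimensional subspace of the ambient $2^{n+t(n)+1}$-dimensional space and acts as identity on the rest. Hence its action on the Choi state is $O(2^{s-t(n)/2})$-close to the identity, so queries to $\HRI_{t,n,m}$ with $n$ above a $\Theta(\log\secp)$-sized threshold $d$ can simply be deleted, while those with $n\le d$ are recovered by tomography (the dimension $2^{n+t(n)+1}\le 2^{2t(d)}$ is polynomial). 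This ``swap'' structure is precisely the missing ingredient in your construction; without it, queries in the intermediate regime cannot be eliminated or learned, and the breaking algorithm fails. The scratch-space concern you flag in your final paragraph is real but secondary; the intermediate-regime queries already break the argument even in the strict ancilla-free model. Your positive direction (BBBV plus Lipschitz concentration plus Borel--Cantelli) is fine and essentially parallels the paper's omitted argument for the first item of \cref{thm:PRI_vs_PRI_wrt_HRI_oracle}.
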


Since PRIs with $s=0$ stretch are PRUs, we get the following theorem immediately by choosing $s=0$.

\begin{theorem}
    Let $t(\secp)=\Omega(\secp)$.
    Then, there exists a unitary oracle $\cO'$ relative to which
    \begin{itemize}
        \item adaptive PRIs with $t$ stretch exist, but
        \item non-adaptive and ancilla-free PRUs do not exist.
    \end{itemize}
     Here, it is allowed to query the inverse of $\cO'$.
\end{theorem}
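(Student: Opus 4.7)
The plan is to define a new separation oracle $\cO' = (\cR, \cU)$ that is an isometry-valued analog of the state-valued swap oracle from \cref{def:unitary_orcle}. For each $n \in \N$ and $m \in \bit^n$ I would sample an independent Haar random isometry $\cI^*_{n,m}$ from $n$ to $n+t(n)$ qubits, and let $\cR_{n,m}$ be the $(n+t(n)+1)$-qubit unitary that swaps $\ket{0}\ket{x,0^{t(n)}}$ with $\ket{1}\cI^*_{n,m}\ket{x}$ for each $x \in \bit^n$ and acts as the identity on the orthogonal complement; then $\cR_n \coloneqq \sum_m \ketbra{m}{m} \otimes \cR_{n,m}$ and $\cR \coloneqq \{\cR_n\}_{n \in \N}$. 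Each $\cR_{n,m}$ is self-inverse, so inverse queries reduce to forward queries, as in \cref{remark:considering_only_forward_query}.

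For the existence of adaptive PRIs with $t$ stretch, I would take the generator that on input $(k, \ket{\psi})$ prepares $\ket{0}\ket{\psi}\ket{0^{t(\secp)}}$, queries $\cR_\secp$ with the key register in state $\ket{k}$ to produce $\ket{1}\cI^*_{\secp,k}\ket{\psi}$, and uncomputes the flag to output $\cI^*_{\secp,k}\ket{\psi}$. Adaptive pseudorandomness would follow by repeating the template of \cref{sec:PRFSGs}: an analog of \cref{lem:indistinguishability_of_swap_oracle} (obtained from \cref{lem:BBBV} with the underlying distribution taken to be the Haar measure on $n$-to-$(n+t(n))$ isometries) shows that the family $\{\cI^*_{\secp,k}\}_{k \gets \bit^\secp}$ is on average indistinguishable from an independent Haar random isometry, and Haar concentration together with Borel--Cantelli upgrades this to an almost-sure statement over the draw of $\cR$.

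For the non-existence of non-adaptive, ancilla-free PRIs with $s$ stretch, I would port \cref{alg:break_PRI} and the structure of \cref{sec:PRFSG_vs_PRI} to the oracle $\cO'$. The adversary (i) runs process tomography on $\cR_n$ for $n$ below a cutoff $d_1 = O(\log\secp)$ where the dimension $2^{n+t(n)+1}$ remains polynomial, (ii) replaces $\cR_n$ by the identity for $n > d_2 = O(\log\secp)$ when computing the Choi state of the candidate $\cJ_k$, and (iii) distinguishes the resulting Choi state from that of a Haar random isometry using the QSVT singular-value-discrimination distinguisher of \cref{sec:PRUs} together with queries to the $\unitaryPSPACE$-complete oracle $\cU$. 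The crucial technical step is an analog of \cref{lem:swap_unitary_is_almost_identity_for_Choi_state;ancilla}: since $\cR_{n,m}$ swaps only within a $2 \cdot 2^n$-dimensional subspace of a $2^{n+t(n)+1}$-dimensional ambient space, each ignored query contributes at most $O(2^{-t(n)/2})$ in trace distance to the Choi state, and $t(n) = \Omega(n)$ makes this suitably small for $n$ above the ignore cutoff.

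The main obstacle is choosing the cutoffs $d_1$ and $d_2$ so that every intermediate $n$ is covered, i.e., $d_1 \ge d_2$. Because $t(n) = \Omega(n)$, both the tomography condition $n+t(n)+1 \le c_1 \log\secp$ and the ignore condition $T \cdot 2^{-t(n)/2} \le 1/p(\secp)$ constrain $n$ to scale like a constant times $\log\secp$, and by enlarging the polynomial dimension budget allotted to tomography I can make $c_1$ as large as needed to ensure $d_1 \ge d_2$. Once this gap is eliminated, the rest of the argument mirrors \cref{thm:break_PRI} with $\cS$ replaced by $\cR$ throughout and with the PRFSG output length replaced by the isometry stretch; the separation then follows from combining the two bullets via a union bound over $\secp$ using Borel--Cantelli.
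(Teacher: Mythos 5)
Your proposal matches the paper's proof in all essential respects: it uses the same swap-style Haar-random-isometry oracle together with a $\unitaryPSPACE$-complete oracle (the paper's $\HRI_t$ from \cref{def:HRI_orcle}), the same PRI construction via a single oracle query with the key in the control register (\cref{const:PRI}), the same Kretschmer-style concentration argument for PRI security, and the same QSVT-based adversary with process tomography below a cutoff and identity-replacement above it, with the key estimate $O(2^{-t(n)/2})$ for dropping a large-dimension query being exactly the content of \cref{lem:HRI_is_almost_identity_for_Choi_state}. The only cosmetic difference is that you frame the threshold question as two cutoffs $d_1 \ge d_2$, whereas the paper picks a single cutoff $d = (2\log(\ell T p)+2s+3c)^{1/a}$ and verifies both the tomography-dimension bound and the ignore-error bound hold at that $d$; these are the same observation.
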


\subsection{Separation Oracle}
\label{subsec:PRI_oracle}

We define a separation oracle.

\begin{definition}[Haar Random Isometry Unitary Oracle]\label{def:HRI_orcle}
    For a function $t:\N\to\N$, we define a unitary oracle $\HRI_t$ as follows:
        For each $n\in\N$ and $m\in\bit^n$, sample $(n+t(n))$-qubit unitary $U_{n,m}\in\Unitaries(2^{n+t(n)})$ from the Haar measure $\mu_{2^{n+t(n)}}$. Then, define the $(n+t(n)+1)$-qubit unitary
        \begin{align}
            \HRI_{t,n,m}\coloneqq\ketbra{0}{1}\otimes(\ketbra{0^t}{0^t}\otimes I_n)U^\dag_{n,k}+\ketbra{1}{0}\otimes U_{n,m}(\ketbra{0^t}{0^t}\otimes I_n)+I_\bot^{t,n,m},
        \end{align}
        where, $I_\bot^{t,n,m}$ is the identity on the subspace orthogonal to $\SpanSpace\{\ket{0}\otimes\ket{0^t}\ket{x},\ket{1}\otimes U_{n,m}(\ket{0^t}\ket{x})\}_{x\in\bit^n}$.
        We define $\HRI_t\coloneqq\{\HRI_{t,n}\}_{n\in\N}$, where $\HRI_{t,n}\coloneqq\{\HRI_{t,n,m}\}_{m\in\bit^n}$.
\end{definition}

\begin{remark}
    We have $\HRI_{t,n,m}=\HRI_{t,n,m}^\dag$ for any function $t$, $n\in\N$ and $m\in\bit^n$ regardless of the choice of $U_{n,m}$.
\end{remark}

Next, we construct a PRI with $t$ stretch relative to this oracle.

\begin{definition}[PRI Relative to HRI Oracle]\label{const:PRI}
    Let $t:\N\to\N$ be a function. We define a QPT algorithm $G^{\HRI_t}$ as follows:
    \begin{enumerate}
        \item Let $k\in\bit^\secp$ and $\secp$-qubit register $\regA$ be inputs.
        \item Prepare $\ket{0}_{\regX}\ket{0^{t}}_{\regY}$ on ancilla register $\regX\regY$.
        \item Apply $I_{\regX}\otimes U_{\secp,k\regY\regA}$ by querying the registers $\regX,\regY$ and $\regA$ to $\HRI_{t,\secp,k}$. 
        \item Output the registers $\regY$ and $\regA$.
    \end{enumerate} 
\end{definition}

\begin{remark}
    If the input state is a pure state $\ket{\psi}_{\regA}$, the output state is $G^{\HRI_t}(k,\ket{\psi}_{\regA})=U_{\secp,k\regY\regA}(\ket{0^t}_{\regY}\ket{\psi}_{\regA})$.
\end{remark}

Our goal is to prove the following theorem, which implies \cref{thm:PRI_vs_PRI}.

\begin{theorem}\label{thm:PRI_vs_PRI_wrt_HRI_oracle}
    Let $\cU$ be a $\unitaryPSPACE$ complete problem in \cref{lem:unitaryPSPACE_has_a_complete_problem}.
    Let $s(\secp)=O(\log\secp)$ and $t(\secp)=\Omega(\secp)$.
    Then, with probability $1$ over the choice of $\HRI_t$ defined in \cref{def:HRI_orcle}, the following are satisfied:
    \begin{enumerate}
        \item $G^{\HRI_t}$ in \cref{const:PRI} is an adaptive secure PRI with $t$ stretch relative to $(\HRI_t,\cU)$
        \item Non-adaptive PRIs with $s$ stertch do not exist relative to $(\HRI_t,\cU)$.
    \end{enumerate}
\end{theorem}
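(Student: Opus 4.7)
Proof plan: The theorem has two independent statements and I would prove them separately, each by mirroring an earlier proof in the paper with the swap oracle $\cS$ replaced by $\HRI_t$.

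For the first item, I would follow the template of \cref{thm:PRFSGs_relative_to_unitary_oracle} very closely. For an adaptive adversary $\cA$ making $T$ queries, I would define $\Adv(\cA,\HRI_{t,\secp})$ as the difference between the acceptance probabilities when $\cA$'s challenge oracle is $\HRI_{t,\secp,k}$ for uniform $k$ versus an analogously defined oracle $\cT_{\{\ket{\vartheta_x}\}}$ built from independent Haar random unitaries $\{U_x\}_x$. The BBBV-type bound of \cref{lem:BBBV}, applied with $N = 2^\secp$ and $\cD$ the Haar measure on $\Unitaries(2^{\secp+t(\secp)})$, yields $|\Exp[\Adv]| \le O(T^2/2^\secp)$. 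Viewing $\Adv$ as a function of the underlying tuple $(U_{\secp,m})_m$, I would show it is $O(T)$-Lipschitz in the $\ell^2$-sum of Frobenius norms by the same differentiation argument as \cref{lem:Lipschitz_swap_oracle}; the only change is that the explicit form of $\HRI_{t,\secp,m}$ has slightly more terms than $\cS_{n,m}$, but each is still linear in $U_{\secp,m}$ or its adjoint acting on the fixed state $\ket{0^t}\ket{x}$. Haar concentration (\cref{thm:Haar_concentration}) with $d = 2^{\secp+t(\secp)}$ then gives $|\Adv| \le 2^{-\secp/2}$ with doubly-exponential probability, and a union bound over classical advice followed by Borel--Cantelli (\cref{lem:Borel-Cantelli}) produces the probability-$1$ adaptive security.

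For the second item, I would adapt \cref{alg:break_PRI} with $\cS$ replaced by $\HRI_t$. The adversary (i) runs process tomography (\cref{thm:process_tomography_HKOT23}) on every $\HRI_{t,n}$ whose total qubit count $2n+t(n)+1$ is at most a threshold $d=O(\log\secp)$, obtaining classical descriptions $\HRI'_{t,n}$; (ii) prepares $\ell=\lceil\log|\cK_\secp|\rceil$ copies of the Choi state of the challenge isometry non-adaptively; and (iii) runs singular-value discrimination (\cref{thm:singular_value_discrimination}) using $\cU$ on a block encoding of the Choi state of a modified ensemble $\{V_k\}_k$ in which each $\HRI$ query is either replaced by its tomographic approximation (small $n$) or dropped (large $n$). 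The support of $(\cF_{\{V_k\},\ell}\otimes \id)(|\Omega\rangle\langle\Omega|)$ has rank at most $|\cK_\secp|\cdot 2^{s\ell}$, which is exponentially dominated by the Haar-random-isometry Choi state's symmetric-subspace mass $\binom{2^{2\secp+s}+\ell-1}{\ell}$; combining this with \cref{lem:Haar_isometry_Choi} gives negligible overlap and hence distinguishing advantage $1-o(1)$, exactly as in the proofs of \cref{lem:Haar_isometry_Choi_has_negligible_overlap} and \cref{lem:distinguisher}.

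The main obstacle is the HRI analogue of \cref{lem:swap_unitary_is_almost_identity_for_Choi_state;ancilla}: one must control, on a Choi state of workspace $\secp+s$, the trace distance between $(\HRI_{t,n,m}\otimes I)\ket{\Omega}$ and $\ket{\Omega}$. Since $\HRI_{t,n,m}$ is the identity outside a $2^{n+1}$-dimensional subspace of its $2^{n+t(n)+1}$-dimensional domain, the same dimension-counting computation as in the swap case (with $\|\HRI_{t,n,m}-I\|_F^2 = O(2^{n})$ and total dimension $2^{\secp+s}$) gives per-query Euclidean error $O(2^{(s-t(n))/2})$. For this to be negligible on the queries above the tomography threshold, I need $t(n)$ to be super-logarithmic whenever $2n+t(n)+1>d$. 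Under the natural reading of $t=\Omega(\secp)$ as a fixed growth rate $t(n)=\Omega(n)$, choosing $d=\Theta(\log\secp)$ simultaneously keeps process tomography QPT, makes the identity approximation give error $T\cdot 2^{(s-t(n))/2}=\negl(\secp)$ per query, and preserves the exponential gap between $s=O(\log\secp)$ and $t(n)$ needed for the overlap bound. The delicate bookkeeping of these three constraints is the most technical part of the write-up, but all three are compatible thanks to the exponential separation between $s$ and $t(\secp)$.
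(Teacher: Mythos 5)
Your proposal follows essentially the same route as the paper on both items: item 1 by mirroring the PRFSG security proof (BBBV bound from \cref{lem:BBBV}, Lipschitz estimate in the $\ell^2$-sum Frobenius norm, Haar concentration via \cref{thm:Haar_concentration}, union bound over classical advice, Borel--Cantelli), and item 2 by porting \cref{alg:break_PRI} with swap oracles replaced by $\HRI_t$, process tomography on small $\HRI$ instances, identity-approximation of large $\HRI$ instances via the Choi state, and singular-value discrimination against the rank-bounded support of the modified ensemble. Your per-query Euclidean error $O(2^{(s-t(n))/2})$ is in fact tighter than the paper's $O(2^{c'-t(n)/2})$ (the paper's \cref{lem:HRI_is_almost_identity_for_Choi_state} routes through trace distance of density operators and picks up a full $2^{c'}$ rather than $\sqrt{2^{c'}}$), but both suffice.

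One concrete issue: your choice $d=\Theta(\log\secp)$ is only adequate when $t(n)=\Theta(n)$. The hypothesis $t(\secp)=\Omega(\secp)$ also admits, e.g., $t(\secp)=\secp^2$, and then $t(d)=\Theta((\log\secp)^2)$, so the tomography dimension $2^{d+t(d)+1}$ becomes quasipolynomial and the adversary is no longer QPT. The paper handles this by writing $t(\secp)=\Theta(\secp^a)$ for some fixed $a\ge1$ and taking $d=\Theta\bigl((\log\secp)^{1/a}\bigr)$, which simultaneously keeps $2^{d+t(d)}=\poly(\secp)$ and makes $t(d)$ large enough that the identity-approximation error is $O(1/p)$ (see \cref{alg:break_PRI_wrt_HRI_oracle} and the remark after \cref{thm:break_PRI_wrt_HRI_oracle}). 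You flag the bookkeeping as delicate, but you do not resolve it for $a>1$. Also, a small slip: the rank of the support of $(\cF_{\{V_k\},\ell}\otimes\identitymap)(\ketbra{\Omega_{2^{\secp\ell}}}{\Omega_{2^{\secp\ell}}})$ is at most $|\cK_\secp|\cdot 2^{c\ell}$ (traced-out ancilla qubits), not $|\cK_\secp|\cdot 2^{s\ell}$; the $\regB$ register of stretch qubits is kept, not traced out. This does not affect the conclusion, but is the right accounting.
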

Since the proof strategy of the first item in \cref{thm:PRI_vs_PRI_wrt_HRI_oracle} is the same as \cref{thm:PRFSGs_relative_to_unitary_oracle}, we omit it.

\subsection{Breaking PRIs with Short Stretch}
\label{subsec:break_PRI_wrt_HRI_oracle}

In this subsection, we prove the second item in \cref{thm:PRI_vs_PRI_wrt_HRI_oracle}.
\ifnum\submission=0
We give an adversary in a similar way as in \cref{alg:break_PRU,alg:break_PRI}.
\fi
\ifnum\submission=1
We give an adversary in a similar way as in \cref{alg:break_PRU}
\fi
Suppose that $\cU$ is the $\unitaryPSPACE$-complete problem.
Let $G^{\HRI_t,\cU}$ be a QPT algorithm that satisfies the correctness of ancilla-free PRIs with $s$ stretch.
For such $G^{\HRI_t,\cU}$,
let $\{\cI_k\}_{k\in\cK_\secp}$ be the isometry implemented by $G^{\HRI_t,\cU}$ on input $k\in\cK_\secp$, where $\cK_\secp$ denotes the key-space.
We define the following map:
\begin{align}
    \cM_{\{\cI_k\},\ell}(\cdot)&=\Exp_{k\gets\cK_\secp}\cI^{\otimes\ell}_k(\cdot)\cI_k^{\dag\otimes\ell}.
\end{align}

To construct a PRI adversary, we need two lemmas as in \cref{sec:PRUs,sec:PRFSG_vs_PRI}.
These lemmas can be obtained by modifying \cref{lem:approx_PRI_Choi_state} and \cref{lem:PRI_distinguisher} for the $\HRI_t$ oracle.
\ifnum\submission=0
We give their proofs later.
\fi
\ifnum\submission=1
We give the proofs in the supplemental material.
\fi

\begin{restatable}{lemma}{PRIChoiStates}\label{lem:approx_PRI_Choi_state_wrt_HRI_oracle}
    Let $T(\secp)$ be a polynomial.
    Let $\epsilon>0$ and $d\in\N$.
    Let $\{\cI_k\}_{k\in\cK_\secp}$ be an ensemble of isometries mapping $\secp$ qubits to $\secp+s$ qubits, where each $\cI_k$ is QPT implementable with $s$ ancilla qubits by making $T$ queries to $(\HRI_{t},\cU)$. 
    For all $n\in[d]$ and $m\in\bit^n$, let $\HRI'_{t,n,k}$ be any unitary satisfying 
    \begin{align}
        \|\HRI_{t,n,m}(\cdot)\HRI_{t,n,m}^\dag-\HRI_{t,n,m}'(\cdot)\HRI_{t,n,m}^{'\dag}\|_\diamond\le\epsilon.
    \end{align}
    Then, for any polynomial $\ell$, there exists a family 
    $\{V_k\}_{k\in\cK_\secp}$ 
    of $(\secp+s+c)$-qubit unitaries 
    such that each $V_k$ is QPT implementable with classical descriptions of $\HRI'_{t,n,m}$ for all $n\in[d],m\in\bit^n$ and query access to $\cU$ such that it satisfies
    \begin{align}
        \bigg\|
         (\cM_{\{\cI_k\},\ell}\otimes \identitymap)(\ketbra{\Omega_{2^{\secp\ell}}}{\Omega_{2^{\secp\ell}}})-
         (\cF_{\{V_k\},\ell}\otimes \identitymap)(\ketbra{\Omega_{2^{\secp\ell}}}{\Omega_{2^{\secp\ell}}})
        \bigg\|_1\le O(\ell T\epsilon)
        +O\bigg(\frac{2^{s+3c/2}\ell T}{2^{t(d)/2}}\bigg).
        \label{eq:goal;approx_random_Choi_state_wrt_HRI_oracle}
    \end{align}
    Here, $\cF_{\{V_k\},\ell}$ is a CPTP map from $\secp\ell$ qubits to $(\secp+s)\ell$ qubits defined as follows:
    \begin{align}
        \cF_{\{V_k\},\ell}((\cdot)_\regA)\coloneqq
        \Tr_{\regC}
        \bigg[
         \Exp_{k\gets\cK_\secp}V^{\otimes\ell}_{k,\regA\regB\regC}((\cdot)_\regA\otimes\ketbra{0^{s}}{0^s}^{\otimes\ell}_{\regB}\otimes\ketbra{0^{c}}{0^c}^{\otimes\ell}_{\regC})
         V^{\dag\otimes\ell}_{k,\regA\regB\regC}
        \bigg],
    \end{align}
    where $\regA$ is a $\secp\ell$-qubit register, $\regB$ is a $s\ell$-qubit register, and $\regC$ is a $c\ell$-qubit register.
\end{restatable}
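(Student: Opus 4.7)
The plan is to mimic the proof of \cref{lem:approx_PRI_Choi_state} step by step, replacing the swap oracle $\cS_n$ by the Haar random isometry oracle $\HRI_{t,n}$ throughout. From the purified $(\secp+s+c)$-qubit unitary $W_k$ underlying $\cI_k$, I define $V_k$ by replacing every query to $\HRI_{t,n}$ by the tomography estimate $\HRI'_{t,n,m}$ when $n \in [d]$ and by the identity when $n > d$, and define $\widetilde{V}_k$ analogously but retaining the true $\HRI_{t,n}$ for $n \in [d]$. A two-step hybrid then controls the Choi-state distance: the $V_k \to \widetilde{V}_k$ step costs $O(\ell T \epsilon)$ by the diamond-norm bound on tomography (exactly as in Section~\ref{sec:PRFSG_vs_PRI}), and the $\widetilde{V}_k \to W_k$ step costs $O(2^{s+3c/2} \ell T / 2^{t(d)/2})$, obtained by summing a per-query ``delete a large-$n$ $\HRI$-query'' estimate over the at most $T$ queries and the $\ell$ copies.

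The only genuinely new ingredient is the $\HRI_t$-analogue of \cref{lem:swap_unitary_is_almost_identity_for_Choi_state;ancilla}: for $n + t(n) + 1 \le \secp + c'$ and unitaries $U,V$, the two states obtained by inserting $\HRI_{t,n,m}$ versus the identity into the Choi-like construction of \cref{lem:swap_unitary_is_almost_identity_for_Choi_state;ancilla} differ by at most $O(2^{c'/2}/2^{t(n)/2})$ in trace distance. The proof is the same Euclidean-distance calculation as for the $\cS$ oracle, but with the key dimension count replaced: the subspace on which $\HRI_{t,n,m}$ acts as the identity, namely the orthogonal complement of $\SpanSpace\{\ket{0}\ket{0^t}\ket{x},\, \ket{1}U_{n,m}(\ket{0^t}\ket{x})\}_{x\in\bit^n}$, has codimension $2^{n+1}$ inside the $2^{n+t(n)+1}$-dimensional ambient space. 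Hence the projection $\Pi_{t,n,m}$ onto the identity subspace satisfies $\Tr[I - \Pi_{t,n,m}] = 2^{n+1}$, and dividing by $2^{n+t(n)+1}$ produces the characteristic $2^{-t(n)/2}$ decay. The same post-selection step that converts a Euclidean-distance bound on the $(\secp+c')$-qubit Choi state to a trace-distance bound after tracing out the $c'$-qubit ancilla then yields the claimed estimate.

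With this single-query estimate in hand, the iterative hybrid built around \cref{claim:choi_state_on_isometry} (to transfer between Choi states on $\secp$ and $\secp+s+c$ qubits) and H\"older's inequality (to peel off the surrounding unitary factors at unit operator-norm cost) produces \cref{eq:goal;approx_random_Choi_state_wrt_HRI_oracle} with no further modification; the $V_k$ constructed in this way is QPT implementable given the classical descriptions of $\HRI'_{t,n,m}$ for $n \in [d],\, m \in \bit^n$ together with query access to $\cU$, as required. The main obstacle is purely the dimension bookkeeping in the single-query estimate: one must be careful that the ``swap'' subspace of $\HRI_{t,n,m}$ is $2^{n+1}$-dimensional (once $\ket{x}$ ranges over all of $\bit^n$) rather than two-dimensional as for $\cS_{n,m}$, which is precisely why $2^{t(n)/2}$ rather than $2^{n/2}$ appears in the denominator of the final bound. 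Beyond this adjustment every remaining step is a direct transcription of the argument in Section~\ref{sec:PRFSG_vs_PRI}.
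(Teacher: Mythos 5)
Your proposal is correct and takes essentially the same route as the paper: the same tomography-and-truncation definition of $V_k$, the same two-step hybrid via an intermediate $\widetilde{V}_k$, and the same appeal to \cref{claim:choi_state_on_isometry} and H\"older to reduce everything to a per-query estimate on the Choi state. The one point of divergence is how that per-query estimate (the paper's \cref{lem:HRI_is_almost_identity_for_Choi_state}) is established: you transcribe the projection-onto-identity-subspace / Euclidean-distance argument of \cref{lem:swap_unitary_is_almost_identity_for_Choi_state;ancilla} verbatim, reading off $\Tr[I-\Pi_{t,n,m}]=2^{n+1}$ against ambient dimension $2^{n+t(n)+1}$ and then multiplying by the $\sqrt{2^{c'}}$ postselection factor at the level of state vectors, whereas the paper instead computes $\Tr[\HRI_{t,n,m}]=2^{n+t(n)+1}-2^{n+1}$ and uses $\frac{1}{2}\|\ketbra{\alpha}{\alpha}-\ketbra{\beta}{\beta}\|_1=\sqrt{1-|\braket{\alpha|\beta}|^2}$ before applying Claim 9.4 (which then contributes a factor $2^{c'}$ at the density-matrix level). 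Both computations yield the $2^{-t(n)/2}$ decay; yours gives a marginally sharper $c'$-dependence ($2^{c'/2}$ rather than $2^{c'}$) and is more stylistically consistent with how the paper handles the $\cS_n$ case, but the difference is absorbed into the $O(\cdot)$ of the final statement. One small slip: the postselection in \cref{lem:swap_unitary_is_almost_identity_for_Choi_state;ancilla} projects the Choi-partner register $\regB'$ onto $\ket{0^{c'}}$ rather than "tracing out" the $c'$-qubit ancilla — the renormalization factor $\sqrt{2^{c'}}$ comes from that projection — but the calculation you describe is the right one.
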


\ifnum\submission=0
Regarding the next lemma, recall that $\cI_{\secp\to\secp+s}$ is the Haar random isometry map defined in \cref{def:Haar_isometry_map}.
\fi
\ifnum\submission=1
Before introducing a new lemma, we define the following map.

\begin{definition}[Haar Random Isometry Map]\label{def:Haar_isometry_map}
    We define\footnote{Here, we chose $\ket{0^s}$ as an input state. This does not lose any generality due to the right invariance of the Haar measure.}
    \begin{align}
     \cI_{\secp\to \secp+s,\ell}(\rho_\regA)\coloneqq
     \Exp_{U\gets\mu_{2^{\secp+s}}} 
     \bigg(
      \bigotimes_{i\in[\ell]}U_{\regA_i\regB_i}
     \bigg)
     (\rho_\regA\otimes\ketbra{0^s}{0^s}^{\otimes\ell}_{\regB})
     \bigg(
      \bigotimes_{i\in[\ell]}U_{\regA_i\regB_i}
     \bigg)^\dag,
    \end{align}
    where $\regA\coloneqq\bigotimes_{i\in[\ell]}\regA_i$ and $\regB\coloneqq\bigotimes_{i\in[\ell]}\regB_i$, where, for each $i\in[\ell]$, $\regA_i$ and $\regB_i$ are $\secp$-qubit register and $s$-qubit register, respectively.
\end{definition}

Now we are ready to give the following lemma.
\fi

\begin{lemma}\label{lem:PRI_distinguisher_wrt_HRI_oracle}
    Suppose that $c(\secp)=O(\log\secp)$.
    Let $\{V_k\}_{k\in\cK_\secp}$ be a family of $(\secp+s+c)$-qubit unitaries, and $\cF_{\{V_k\}_k,\ell}$ be the CPTP map in \cref{lem:approx_PRI_Choi_state_wrt_HRI_oracle}.
    Let $\ell(\secp)\coloneqq\lceil\log|\cK_\secp|\rceil$.
    Then, there exists a QPT algorithm $\cD^{\cU}$ that, on input classical descriptions of $\HRI'_{t,n,m}$ for all $n\in[d]$ and $m\in\bit^n$, distinguishes $(\cF_{\{V_k\},\ell}\otimes\identitymap)(\ketbra{\Omega_{2^{\secp\ell}}}{\Omega_{2^{\secp\ell}}})$ from $(\cI_{\secp\to\secp+s,\ell}\otimes\identitymap)(\ketbra{\Omega_{2^{\secp\ell}}}{\Omega_{2^{\secp\ell}}})$
    with advantage at least $1-\negl(\secp)$.
\end{lemma}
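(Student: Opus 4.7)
The plan is to prove \cref{lem:PRI_distinguisher_wrt_HRI_oracle} by reducing to the general distinguisher result \cref{lem:distinguisher}, exactly as was done for \cref{lem:PRI_distinguisher}. Set $\rho \coloneqq (\cF_{\{V_k\},\ell}\otimes\identitymap)(\ketbra{\Omega_{2^{\secp\ell}}}{\Omega_{2^{\secp\ell}}})$ and $\xi \coloneqq (\cI_{\secp\to\secp+s,\ell}\otimes\identitymap)(\ketbra{\Omega_{2^{\secp\ell}}}{\Omega_{2^{\secp\ell}}})$, let $c$ denote the concatenated classical description of all $\HRI'_{t,n,m}$ for $n\in[d]$ and $m\in\bit^n$, and verify the two hypotheses of \cref{lem:distinguisher} for this $(\rho,\xi,c)$.

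For the first hypothesis (an efficiently implementable block encoding of $\rho$), I would adapt the proof of \cref{lem:block-encoding_of_random_Choi_states} (or equivalently \cref{lem:block-encoding_of_PRI_Choi_states}). First, using the assumption on $\{V_k\}$, a purification of $\rho$ can be prepared by (i) preparing $\frac{1}{\sqrt{|\cK_\secp|}}\sum_{k\in\cK_\secp}\ket{0^s}^{\otimes\ell}\ket{0^c}^{\otimes\ell}\ket{\Omega_{2^{\secp\ell}}}\ket{k}\ket{c}$ and then (ii) coherently applying $V_k^{\otimes\ell}$ controlled on the key register, where each controlled-$V_k$ is QPT-implementable with query access to controlled-$\cU$, which is in $\pureunitaryPSPACE$ by \cref{remark:contrlizatin_of_pureUnitaryPSPACE} and hence approximable to exponentially small error by querying $\cU$. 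Invoking \cref{lem:block-encoding_of_state} on this purification unitary then yields, for any polynomial $p$, a $(1,2^{-p(\secp)},\poly(\secp))$-block encoding of $\rho$ that is QPT-implementable given $c$ and query access to $\cU$.

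For the second hypothesis (small overlap), let $Q'$ denote the projection onto the support of $\rho$. Since $\rho$ is a convex combination over $k\in\cK_\secp$ of states of the form $\Tr_\regC[(V_k^{\otimes\ell}\otimes I)(\ket{0^c}^{\otimes\ell}\otimes\ket{0^s}^{\otimes\ell}\otimes\ket{\Omega_{2^{\secp\ell}}})(\cdots)^\dag]$, each summand has rank at most $2^{c\ell}$, so $\Tr[Q']\le|\cK_\secp|\cdot 2^{c\ell}\le 2^{(1+c)\ell}$. Then, by \cref{lem:Haar_isometry_Choi}, $\xi$ is within trace distance $O(\ell^2/2^{\secp+s})$ of $\Exp_{\ket{\psi}\gets\sigma_{2^{2\secp+s}}}\ketbra{\psi}{\psi}^{\otimes\ell}$, which by \cref{lem:Haar_states} equals $\Pi_\symetric/\binom{2^{2\secp+s}+\ell-1}{\ell}$. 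Using $\Tr[Q'\Pi_\symetric]\le\Tr[Q']\le 2^{(1+c)\ell}$ and Stirling's formula, one obtains
\begin{align}
\Tr[Q'\xi]\le O\!\left(\ell^{1/2}\left(\frac{2^{c}e^{-1}\ell}{2^{2\secp+s}}\right)^{\ell}\right)+O\!\left(\frac{\ell^{2}}{2^{\secp+s}}\right)\le\negl(\secp),
\end{align}
since $c(\secp)=O(\log\secp)$ makes the base $2^{c}e^{-1}\ell/2^{2\secp+s}$ inverse-exponentially small. This is verbatim the calculation in the proof of \cref{lem:Haar_isometry_Choi_has_negligible_overlap}; the fact that $\{V_k\}$ now comes from a reduction against the $\HRI_t$ oracle (rather than the swap oracle $\cS$) plays no role, because $\xi$ is defined purely in terms of the Haar random isometry map and the rank bound for $Q'$ depends only on $|\cK_\secp|$ and $c$.

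With both hypotheses verified, applying \cref{lem:distinguisher} directly produces the claimed QPT algorithm $\cD^{\cU}$ that distinguishes $\rho$ from $\xi$ with advantage $1-\negl(\secp)$. I do not expect any significant obstacle: the entire argument is parallel to \cref{lem:PRI_distinguisher}, and the only conceptual check is that replacing the swap oracle $\cS$ with $\HRI_t$ in the advice does not affect either the block-encoding construction (which only uses that $V_k$ is QPT given the advice and $\cU$) or the overlap bound (which is oracle-independent once the rank of $Q'$ is controlled). If anything deserves care, it is the controlled implementation step in the block encoding, where we must postpone intermediate measurements inside the QPT realization of each $V_k$; this is handled by the standard trick using controlled-$\cU$ together with \cref{remark:contrlizatin_of_pureUnitaryPSPACE}.
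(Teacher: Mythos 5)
Your proposal is correct and follows exactly the route the paper takes: it invokes \cref{lem:distinguisher} after verifying its two hypotheses via the block-encoding argument (the $\HRI$-oracle analogue of \cref{lem:block-encoding_of_PRI_Choi_states}, which needs no change since it only uses that $V_k$ is QPT-implementable from the classical advice plus $\cU$) and the oracle-independent overlap bound \cref{lem:Haar_isometry_Choi_has_negligible_overlap}. The paper proves this lemma simply by noting it is the same as \cref{lem:PRI_distinguisher}, which is precisely the observation you make.
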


From these lemmas, we construct a PRI adversary as shown in \cref{alg:break_PRI_wrt_HRI_oracle}.
\ifnum\submission=0
The {\color{red}red-highlighted lines} in \cref{alg:break_PRI_wrt_HRI_oracle} indicate the differences from \cref{alg:break_PRI}.
\fi
\ifnum\submission=1
The {\color{red}red-highlighted lines} in \cref{alg:break_PRI_wrt_HRI_oracle} indicate the differences from \cref{alg:break_PRU}.
\fi

\begin{algorithm}
    \caption{Adversary distinguishing $\{\cI_k\}_{k\in\cK_\secp}$ from Haar random isometry relative to $(\HRI,\cU)$.}
    \label{alg:break_PRI_wrt_HRI_oracle} 
    \vspace{2mm}
    \AlgOracle 
    The algorithm has query access to
    \begin{itemize}
        \item an isometry $\cI$ from $\secp$ qubits to $\secp+s(\secp)$ qubits, which is whether $\cI_k$ or a Haar random isometry.
        
        \item {\color{red} the oracle $\cO'=(\HRI_t,\cU)$ and its inverse, where $\HRI_t$ is defined in \cref{def:HRI_orcle}.}
        
    \end{itemize}
    \AlgInput The algorithm takes the security parameter $1^\secp$ as input.

    Define $\ell\coloneqq\lceil\log|\cK_\secp|\rceil$ and
    {\color{red}$d\coloneqq (2\log(\ell Tp)+2s+3c)^{\frac{1}{a}}$}
    where $p$ is a polynomial, and $T$ is the number of queries to {\color{red}$\cO'$} to implement $\cI_k$.

    \vspace{2mm}
    \begin{enumerate}
        \item 
        \label{algstep:process_tomography_PRI_wrt_HRI_oracle}
        For $n\in[d]$ {\color{red}and $m\in\bit^n$}, run the process tomography algorithm in \cref{thm:process_tomography_HKOT23} on inputs $\epsilon\coloneqq\frac{1}{\ell Tp}$ and $\eta\coloneqq2^{-\secp-1}$ for {\color{red}$\HRI_{t,n,m}$} to get a classical description of {\color{red}$\HRI'_{t,n,m}$}. Note that {\color{red}$\|\HRI_{t,n,m}(\cdot)\HRI_{t,n,m}^\dag-\HRI'_{t,n,m}(\cdot)\HRI'^\dag_{t,n,m}\|_\diamond\le\epsilon$} holds with probability at least $1-2^{-\secp}$ over the randomness of the process tomography algorithm.
        
        \item Prepare $(\cI^{\otimes\ell}\otimes I)\ket{\Omega_{2^{\secp\ell}}}$ by querying $\cI$.

        \item 
        Let $\{\cI'_k\}_{k\in\cK_\secp}$ be a family of isometries in {\color{red}\cref{lem:approx_PRI_Choi_state_wrt_HRI_oracle}.} Note that each $\cI'_k$ is QPT implementable with access to $\cU$ and classical descriptions of {\color{red}$\HRI'_{t,n,m}$} for all $n\in[d]$ {\color{red} and $m\in\bit^n$}, where such classical descriptions are obtained in the step \ref{algstep:process_tomography_PRI_wrt_HRI_oracle}.
        Let $\cD^{(\cdot)}$ be a QPT algorithm in {\color{red}\cref{lem:PRI_distinguisher_wrt_HRI_oracle}} for $\{\cI'_k\}_{k\in\cK_\secp}$. 
        By querying $\cU$, run $\cD^\cU$ on input $(\cI^{\otimes\ell}\otimes I)\ket{\Omega_{2^{\secp\ell}}}$ and classical descriptions of {\color{red}$\HRI'_{t,n,m}$} for all $n\in[d]$ {\color{red}and $m\in\bit^n$} to get $b\in\bit$. 
    \end{enumerate} 

    \AlgOutput The algorithm outputs $b$.
\end{algorithm}

\begin{remark}
    $d$ is chosen so that it satisfies $\ell T 2^{s+3c/2 - t(d)/2} = O(1/p)$ and ensures that the process tomography algorithm runs in QPT.
    \cref{alg:break_PRI_wrt_HRI_oracle} is QPT because, if $n\in[d]$, the dimension of $\HRI_{t,n,m}$ is at most $2^{d+t(d)+1}=O(2^{2t(d)})=O((\ell Tp)^4\cdot2^{4s+6c})\le\poly(\secp)$, where the inequality follows from $s(\secp)\le O(\log\secp)$ and $c(\secp)\le O(\log\secp)$.
\end{remark}

Now we are ready to construct the PRI adversary in \cref{alg:break_PRI_wrt_HRI_oracle}.
The following \cref{thm:break_PRI_wrt_HRI_oracle} implies the second statement in \cref{thm:PRI_vs_PRI_wrt_HRI_oracle}.

\begin{theorem}\label{thm:break_PRI_wrt_HRI_oracle}
Suppose that $s=O(\log\secp)$ and $t(\secp)=\Theta(\secp^a)$ for some constant $a\ge1$.
Let $\cO'\coloneqq(\HRI_t,\cU)$.
Let $G^{\cO'}$ be a QPT algorithm that satisfies the correctness of ancilla-free PRIs with $s$ stretch.
For such $G^{\cO'}$,
let $\{\cI_k\}_{k\in\cK_\secp}$ be the isometry mapping $\secp$ qubits to $\secp+s$ qubits implemented by $G^{\cO'}$ on input $k\in\cK_\secp$, where $\cK_\secp$ denotes the key-space.
Then, for any polynomial $p$, the QPT adversary $\cA^{(\cdot,\cdot)}$ defined in \cref{alg:break_PRI_wrt_HRI_oracle} satisfies
    \ifnum\submission=0
    \begin{align}
        \bigg|\Pr_{k\gets\cK_\secp}[1\gets\cA^{\cI_k,\cO'}(1^\secp)]-\Pr_{U\gets\mu_{2^{\secp+s}}}[1\gets\cA^{\cI_U,\cO'}(1^\secp)]\bigg|
        \ge 1-O\bigg(\frac{1}{p(\secp)}\bigg),
    \end{align}
    \fi
    \ifnum\submission=1
    $|\Pr_{k\gets\cK_\secp}[1\gets\cA^{\cI_k,\cO'}(1^\secp)]-\Pr_{U\gets\mu_{2^{\secp+s}}}[1\gets\cA^{\cI_U,\cO'}(1^\secp)]|
        \ge 1-O(1/p(\secp))$,
    \fi
    where, for each $U\in\Unitaries(2^{\secp+s(\secp)})$, $\cI_U$ is the isometry that maps $\secp$-qubit state $\ket{\psi}$ to $(\secp+s(\secp))$-qubit state $U(\ket{\psi}\ket{0^s})$. 
    Here, $\cA^{(\cdot),\cO'}$ queries not only $\cO'$ but also its inverse. Moreover, $\cA^{(\cdot),\cO'}$ queries the first oracle non-adaptively.
\end{theorem}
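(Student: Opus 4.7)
The plan is to mirror the hybrid argument used in the proof of \cref{thm:break_PRU}, substituting \cref{lem:approx_PRI_Choi_state_wrt_HRI_oracle} for \cref{lem:approx_random_Choi_state} and \cref{lem:PRI_distinguisher_wrt_HRI_oracle} for \cref{lem:PRU_distinguisher}. First, I would check that $\cA$ in \cref{alg:break_PRI_wrt_HRI_oracle} is QPT. The only nontrivial step is the tomography of $\HRI_{t,n,m}$ for $n\in[d]$, which acts on $2(n+t(n))+1$ qubits; by \cref{thm:process_tomography_HKOT23} its running time is $\poly(2^{d+t(d)})$. Plugging in $d = (2\log(\ell Tp)+2s+3c)^{1/a}$ with $t(\secp)=\Theta(\secp^a)$ gives $2^{t(d)} = (\ell Tp)^{O(1)}\cdot 2^{O(s+c)} = \poly(\secp)$ (using $s,c = O(\log\secp)$), so Step~\ref{algstep:process_tomography_PRI_wrt_HRI_oracle} runs in QPT; the remaining steps are QPT by \cref{lem:PRI_distinguisher_wrt_HRI_oracle}.

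Next, let $E$ be the event that every tomography call succeeds, i.e.\ $\|\HRI_{t,n,m}(\cdot)\HRI_{t,n,m}^\dag - \HRI'_{t,n,m}(\cdot)\HRI'^\dag_{t,n,m}\|_\diamond\le \epsilon$ for every $n\in[d]$ and $m\in\bit^n$. By \cref{thm:process_tomography_HKOT23} with $\eta = 2^{-\secp-1}$ and a union bound over the polynomially many $(n,m)$, we get $\Pr[E]\ge 1 - \negl(\secp)$. Conditioning on $E$, I would introduce three hybrids:
\begin{align}
\rho_0 &\coloneqq (\cM_{\{\cI_k\},\ell}\otimes\identitymap)(\ketbra{\Omega_{2^{\secp\ell}}}{\Omega_{2^{\secp\ell}}}),\\
\rho_1 &\coloneqq (\cF_{\{V_k\},\ell}\otimes\identitymap)(\ketbra{\Omega_{2^{\secp\ell}}}{\Omega_{2^{\secp\ell}}}),\\
\rho_2 &\coloneqq (\cI_{\secp\to\secp+s,\ell}\otimes\identitymap)(\ketbra{\Omega_{2^{\secp\ell}}}{\Omega_{2^{\secp\ell}}}),
\end{align}
where $\{V_k\}$ is the family obtained from \cref{lem:approx_PRI_Choi_state_wrt_HRI_oracle}. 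By construction of $\cA$, $\Pr_{k\gets\cK_\secp}[1\gets\cA^{\cI_k,\cO'}|E] = \Pr[1\gets\cD^\cU(\rho_0)]$ and $\Pr_{U\gets\mu}[1\gets\cA^{\cI_U,\cO'}|E] = \Pr[1\gets\cD^\cU(\rho_2)]$.

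For the $\rho_0 \leftrightarrow \rho_1$ step, \cref{lem:approx_PRI_Choi_state_wrt_HRI_oracle} with $\epsilon = 1/(\ell Tp)$ gives $\|\rho_0 - \rho_1\|_1 \le O(\ell T\epsilon) + O(2^{s+3c/2}\ell T / 2^{t(d)/2}) = O(1/p(\secp))$, where the second term is forced to $O(1/p)$ precisely by the choice $2^{t(d)/2} = 2^{\log(\ell Tp)+s+3c/2}$ corresponding to $d=(2\log(\ell Tp)+2s+3c)^{1/a}$. For the $\rho_1 \leftrightarrow \rho_2$ step, \cref{lem:PRI_distinguisher_wrt_HRI_oracle} directly gives $|\Pr[1\gets\cD^\cU(\rho_1)] - \Pr[1\gets\cD^\cU(\rho_2)]| \ge 1 - \negl(\secp)$. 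Combining via the triangle inequality yields advantage at least $1 - \negl(\secp) - O(1/p)$ conditioned on $E$, and accounting for $\Pr[E^c]\le\negl(\secp)$ gives the claimed bound $1 - O(1/p(\secp))$.

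The main obstacle is the interplay between the tomography precision, the choice of $d$, and the condition $t(\secp) = \Omega(\secp)$ (in the statement; the proof uses the stronger $t(\secp)=\Theta(\secp^a)$). The adversary has to tomographize $\HRI_{t,n,m}$ up to dimension $2^{t(d)}$, which must be polynomial, while the hybrid error scales as $1/2^{t(d)/2}$ and must dominate $1/p$; both constraints close only because $t$ grows at least polynomially, making $d$ (and hence the sampling complexity) slowly growing in $\log \secp$. This is why the theorem fails for $t(\secp)=o(\log\secp)$: one could not simultaneously make the residual HRI queries negligible and keep the tomography tractable. Once this balance is fixed, the remainder of the argument is a routine hybrid/triangle-inequality bookkeeping identical in form to the proof of \cref{thm:break_PRU}.
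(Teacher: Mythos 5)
Your proposal is correct and takes essentially the same approach as the paper (which mirrors the proof of Theorem~\ref{thm:break_PRU}, conditioning on the tomography success event and running the hybrid $\rho_0 \to \rho_1 \to \rho_2$ via Lemma~\ref{lem:approx_PRI_Choi_state_wrt_HRI_oracle} and Lemma~\ref{lem:PRI_distinguisher_wrt_HRI_oracle}, with the parameter $d$ chosen exactly so that $\ell T\, 2^{s+3c/2-t(d)/2}=O(1/p)$ and the tomographed dimension $2^{d+t(d)+1}$ stays polynomial). The paper itself only states this reasoning in a remark following Algorithm~\ref{alg:break_PRI_wrt_HRI_oracle} rather than as a displayed proof, so your write-up supplies the identical argument the authors left implicit.
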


\if0
\begin{proof}[Proof of \cref{thm:break_PRI_wrt_HRI_oracle}]
    First, we need to verify that \cref{alg:break_PRI_wrt_HRI_oracle} runs in QPT.
    To this end, it suffices to ensure that the process tomography step (\cref{algstep:process_tomography_PRI_wrt_HRI_oracle}) can be performed in QPT.
    If $t(n)=\Theta(n^c)$ for some $c\ge1$, we have $d=(2\log(\ell Tp)+2s)^{\frac{1}{c}}$.
    Thus, for all $n\in[d]$, the dimension of $\HRI_{t,n,m}$ is at most
    \begin{align}
        2^{d+t(d)+1}\le O(2^{2t(d)})=O(\ell^4 T^4p^4\cdot 2^{4s(\secp)})\le\poly(\secp),
    \end{align}
    where we have used $s(\secp)=O(\log\secp)$.
    If $t(n)=o(n)$, we have $d=2\log(\ell Tp)+2s$.
    Then, for for all $n\in[d]$, the dimension of $\HRI_{t,n,m}$ is at most
    \begin{align}
        2^{d+t(d)+1}\le O(2^{2d})=O(\ell^4 T^4p^4\cdot 2^{4s(\secp)})\le\poly(\secp),
    \end{align}
    where we have used $s(\secp)=O(\log\secp)$.
    Therefore, the process tomography step (\cref{algstep:process_tomography_PRI_wrt_HRI_oracle}) can be performed in QPT.

    Next, we prove \cref{alg:break_PRI_wrt_HRI_oracle} distinguishes $\cI_k$ from Haar random isometries by the standard hybrid argument.
    To this end, it suffices to prove that $\cD^\cU$ distinguishes $(\cM_{\{\cI_k\},\ell}\otimes\identitymap)(\ketbra{\Omega_{2^{\secp\ell}}}{\Omega_{2^{\secp\ell}}})$ from $(\cI_{\secp\to\secp+s,\ell}\otimes\identitymap)(\ketbra{\Omega_{2^{\secp\ell}}}{\Omega_{2^{\secp\ell}}})$.
    We can assume that the process tomography step (\cref{algstep:process_tomography_PRI_wrt_HRI_oracle}) succeeds since it occurs with probability at least $1-\negl(\secp)$.
    Then, we can prove the following claim.

    \begin{claim}\label{claim:hyb0_to_hyb1_PRI_adversary}
        \begin{align}
            &\bigg|
             \Pr\bigg[
              1\gets\cD^\cU\bigg(
               (\cM_{\{\cI_k\},\ell}\otimes\identitymap)(\ketbra{\Omega_{2^{\secp\ell}}}{\Omega_{2^{\secp\ell}}})
              \bigg)
             \bigg]
             -\Pr\bigg[
              1\gets\cD^\cU\bigg(
               (\cM_{\{\cI'_k\},\ell}\otimes\identitymap)(\ketbra{\Omega_{2^{\secp\ell}}}{\Omega_{2^{\secp\ell}}})
              \bigg)
             \bigg]
            \bigg|
            \notag\\
            \le &O\bigg(\frac{1}{p(\secp)}\bigg).
            \notag
        \end{align}
    \end{claim}

    \begin{proof}[Proof of \cref{claim:hyb0_to_hyb1_PRI_adversary}]
        From \cref{lem:approx_PRI_Choi_state_wrt_HRI_oracle} and $\epsilon=\frac{1}{\ell Tp}$, we have
         \begin{align}
         \frac{1}{2}\bigg\|
          (\cM_{\{\cI_k\},\ell}\otimes\identitymap)(\ketbra{\Omega_{2^{\secp\ell}}}{\Omega_{2^{\secp\ell}}})-
          (\cM_{\{\cI'_k\},\ell}\otimes\identitymap)(\ketbra{\Omega_{2^{\secp\ell}}}{\Omega_{2^{\secp\ell}}})
         \bigg\|_1
         \le O\bigg(\frac{1}{p}\bigg)+O\bigg(\frac{2^s\ell T}{2^{t(d)/2}}\bigg).
         \label{eq:claim:hyb0_to_hyb1_PRI_adversary}
         \end{align}
         If $t(n)=\Theta(n^c)$ for some $c\ge1$, we have $d=(2\log(\ell Tp)+2s)^{\frac{1}{c}}$.
         Thus, the right-hand side of \cref{eq:claim:hyb0_to_hyb1_PRI_adversary} is at most $O(1/p)$.
         If $t(n)=\Theta(n^c)$ for some $0<c<1$, we have $d=2\log(\ell Tp)+2s$.
         Hence, we have
         \begin{align}
             O\bigg(\frac{2^s\ell T}{2^{t(d)/2}}\bigg)
             =O\bigg(\frac{2^s\ell T}{2^{\omega(s)}\omega(\ell Tp)}\bigg)
             \le O\bigg(\frac{1}{p}\bigg),
         \end{align}
         where we have used $t(n)=\omega(\log n)$.
         This implies the right-hand side of \cref{eq:claim:hyb0_to_hyb1_PRI_adversary} is at most $O(1/p)$, which concludes the proof.
    \end{proof}
    From \cref{lem:PRI_distinguisher_wrt_HRI_oracle}, we have 
    \begin{align}
        &\bigg|
             \Pr\bigg[
              1\gets\cD^\cU\bigg(
               (\cM_{\{\cI_k\},\ell}\otimes\identitymap)(\ketbra{\Omega_{2^{\secp\ell}}}{\Omega_{2^{\secp\ell}}})
              \bigg)
             \bigg]
             -\Pr\bigg[
              1\gets\cD^\cU\bigg(
               (\cI_{\secp\to\secp+s,\ell}\otimes\identitymap)(\ketbra{\Omega_{2^{\secp\ell}}}{\Omega_{2^{\secp\ell}}})
              \bigg)
             \bigg]
            \bigg|
            \notag\\
            \ge&1-\negl(\secp)\label{eq:thm:break_PRI_wrt_HRI_oracle}.
    \end{align}
    From \cref{claim:hyb0_to_hyb1_PRI_adversary,eq:thm:break_PRI_wrt_HRI_oracle}, $\cD^\cU$ distinguishes $(\cM_{\{\cI_k\},\ell}\otimes\identitymap)(\ketbra{\Omega_{2^{\secp\ell}}}{\Omega_{2^{\secp\ell}}})$ from $(\cI_{\secp\to\secp+s,\ell}\otimes\identitymap)(\ketbra{\Omega_{2^{\secp\ell}}}{\Omega_{2^{\secp\ell}}})$ with advantage $1-O(1/p)$, which concludes the proof.
\end{proof}
\fi

\ifnum\submission=0
\ifnum\submission=1
\section{Omitted Proofs in \cref{sec:PRI_vs_PRI}}

To conclude the proof of \cref{thm:break_PRI_wrt_HRI_oracle}, it remains to prove \cref{lem:approx_PRI_Choi_state_wrt_HRI_oracle,lem:PRI_distinguisher_wrt_HRI_oracle}.
Since the proof of \cref{lem:PRI_distinguisher_wrt_HRI_oracle} is the same as that of \cref{lem:PRI_distinguisher}, we omit it.
Thus, it suffices to prove \cref{lem:approx_PRI_Choi_state_wrt_HRI_oracle}.
\fi
\ifnum\submission=0
\subsection[Proof of Lemma~\ref{lem:approx_PRI_Choi_state_wrt_HRI_oracle}]{Proof of \cref{lem:approx_PRI_Choi_state_wrt_HRI_oracle}}

To conclude the proof, it remains to prove \cref{lem:approx_PRI_Choi_state_wrt_HRI_oracle,lem:PRI_distinguisher_wrt_HRI_oracle}.
Since the proof of \cref{lem:PRI_distinguisher_wrt_HRI_oracle} is the same as that of \cref{lem:PRI_distinguisher}, we omit it.
Thus, it suffices to prove \cref{lem:approx_PRI_Choi_state_wrt_HRI_oracle}.
\fi

The proof strategy is the same as that of \cref{lem:approx_random_Choi_state,lem:approx_PRI_Choi_state}.
\if0
Suppose that $G^{\HRI_t,\cU}$ implements an isometry mapping $\secp$-qubit register $\regA$ to $(\secp+s)$-qubit register $\regA\regB$, where $\regB$ is an $s$-qubit register. 
The action of $G$ can be seen as follows: prepare an initial state $\ket{0...0}$ on the ancilla registers $\regB\regC$, then apply a unitary $V_1,$ query $\HRI_{t,n_1,m_1}$,..., apply $V_T$, query $\HRI_{t,n_T,m_T}$, apply $V_{T+1}$, then trace out the ancilla register $\regC$.
Without loss of generality, we can assume the action of $\HRI_{t,n_{i+j},m_{i+j}}V_{i+j}\cdots\HRI_{t,n_i,m_i}$ depends on $\HRI_t$ for any $i$ and $j$ because, if its action is independent of $\HRI_t$, we can replace $\HRI_{t,n_{i+j},m_{i+j}}V_{i+j}\cdots\HRI_{t,n_i,m_i}$ with some fixed unitary $V'$.
In that case, we can show the following.

\begin{lemma}\label{lem:disentangled_after_querying_HRI_on_PRI}
    After each query to $\HRI_t$, for any input state, the registers $\regA\regB$ are not entangled with the register $\regC$.
\end{lemma}

Since the proof is the same as that of \cref{lem:disentangled_after_querying_S_n,lem:disentangled_after_querying_S_n_on_PRI}, we omit it.
\cref{lem:disentangled_after_querying_HRI_on_PRI} and the definition of $\HRI_t$ imply the following.

\begin{lemma}\label{lem:rule_out_entangled_queries_on_PRI_wrt_HRI_oracle}
     Let $G^{\HRI_t,\cU}$ be a QPT algorithm that applies an isometry $\cI$ mapping $\secp$-qubit $\regA$ to $(\secp+s)$-qubit register $\regA\regB$, where $\regB$ is an $s$-qubit register.
     Then, $G$ makes only the following type of queries to $\HRI_t$: 
     \begin{itemize}
         \item Apply 
         $\HRI_{t,n,m}$
         onto some $n+t(n)+1$ qubits of $\regA\regB$, where $n+t(n)+1\le\secp+s$ and $m\in\bit^n$.
     \end{itemize}
\end{lemma}
\fi
First, we need the following lemma.

\begin{lemma}\label{lem:HRI_is_almost_identity_for_Choi_state}
    Let $n+t(n)+1\le\secp+c'$.
    Suppose that $\regA$ and $\regA'$ are $\secp$-qubit registers, and $\regB$ is an $s$-qubit register.
    Then, for any $U,V\in\Unitaries(2^{\secp+c'})$ and $m\in\bit^n$,
    \begin{align}
        \frac{1}{2}\|\ketbra{\psi}{\psi}-\ketbra{\phi}{\phi}\|_1\le O(2^{c'-t(n)/2}),
    \end{align}
    where 
    \begin{align}
        \ket{\psi}_{\regB\regA\regA'}&\coloneqq((U(\HRI_{t,n,m}\otimes I)V)_{\regB\regA}\otimes I_{\regA'})\ket{0^s}_\regB\ket{\Omega_{2^\secp}}_{\regA\regA'},\\
        \ket{\phi}_{\regB\regA\regA'}&\coloneqq((UV)_{\regB\regA}\otimes I_{\regA'})\ket{0^s}_\regB\ket{\Omega_{2^\secp}}_{\regA\regA'}
    \end{align}
    and $\HRI_{t,n,m}$ is defined in \cref{def:HRI_orcle}.
\end{lemma}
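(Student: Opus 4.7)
The plan is to follow closely the proof of \cref{lem:swap_unitary_is_almost_identity_for_Choi_state;ancilla}, exploiting the fact that $\HRI_{t,n,m}$ is the identity on an overwhelming fraction of its domain. Specifically, by the definition in \cref{def:HRI_orcle}, the non-trivial action of $\HRI_{t,n,m}$ is confined to the $2^{n+1}$-dimensional subspace $\SpanSpace\{\ket{0}\ket{0^t}\ket{x},\ket{1}U_{n,m}(\ket{0^t}\ket{x})\}_{x\in\bit^n}$, while its total dimension is $2^{n+t(n)+1}$. This gives a non-trivial-action fraction of $2^{-t(n)}$, which is the source of the small error.

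First, I would introduce purified versions of the target states by replacing the ancilla $\ket{0^s}_\regB$ with one half of a maximally entangled state. Concretely, introduce an auxiliary register $\regB'$ of the same size as $\regB$ and define
\begin{align}
\ket{\psi'}_{\regB\regA\regB'\regA'} & \coloneqq ( U ( \HRI_{t,n,m} \otimes I ) V )_{\regB\regA} \otimes I_{\regB'\regA'} \cdot \ket{\Omega_{2^{\secp+c'}}}_{\regB\regA\regB'\regA'}, \\
\ket{\phi'}_{\regB\regA\regB'\regA'} & \coloneqq ( U V )_{\regB\regA} \otimes I_{\regB'\regA'} \cdot \ket{\Omega_{2^{\secp+c'}}}_{\regB\regA\regB'\regA'},
\end{align}
so that $\ket{\psi}$ and $\ket{\phi}$ are recovered (up to a $\sqrt{2^{c'}}$ normalization) by post-selecting $\regB'$ on $\ket{0^{c'}}$.

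Next, I would bound $\|\ket{\psi'}-\ket{\phi'}\|$ via the identity-subspace projection $\Pi_n \coloneqq \sum_{m'\in\bit^n} \ketbra{m'}{m'} \otimes I_{\bot}^{t,n,m'}$ that satisfies $(I-\HRI_{t,n,m})\Pi_n = 0$. A direct trace computation as in the proof of \cref{lem:swap_unitary_is_almost_identity_for_Choi_state;ancilla} yields $\|(I - \Pi_n) \otimes I \ket{\Omega_{2^{\secp+c'}}}\| = O(2^{-t(n)/2})$ because the rank of $I - \Pi_n$ is $2^{n+1}$ out of total dimension $2^{n+t(n)+1}$. Using the triangle inequality together with the fact that $U$, $V$, and $\HRI_{t,n,m}$ all have unit operator norm then gives $\|\ket{\psi'} - \ket{\phi'}\| \le O(2^{-t(n)/2})$.

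Finally, I would convert back to the post-selected states using the identities $\ket{\psi} = \sqrt{2^{c'}} \cdot (I \otimes \bra{0^{c'}}_{\regB'}) \ket{\psi'}$ and likewise for $\ket{\phi}$. Applying $\|I \otimes \bra{0^{c'}}\|_\infty = 1$ and the standard bound on trace distance of pure states by their Euclidean distance yields
\begin{align}
\frac{1}{2}\|\ketbra{\psi}{\psi} - \ketbra{\phi}{\phi}\|_1 \le \|\ket{\psi} - \ket{\phi}\| \le \sqrt{2^{c'}} \cdot \|\ket{\psi'}-\ket{\phi'}\| = O\bigl(2^{c'/2 - t(n)/2}\bigr),
\end{align}
which is at most the claimed $O(2^{c' - t(n)/2})$. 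The main (minor) obstacle is simply the careful bookkeeping of which registers carry $\ket{0^{c'}}$ versus the maximally entangled purification, together with the dimension count for the identity subspace of $\HRI_{t,n,m}$; once these are pinned down, the rest of the argument is a direct transcription of the swap-oracle proof.
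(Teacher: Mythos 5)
Your proposal is correct, though it takes a somewhat different route from the paper's own proof of this lemma: the paper directly bounds the trace distance of $(\HRI_{t,n,m}\otimes I)\ketbra{\Omega}{\Omega}(\HRI_{t,n,m}\otimes I)^\dag$ versus $\ketbra{\Omega}{\Omega}$ by computing $\Tr[\HRI_{t,n,m}]$ exactly, then factors $V$ through an isometry $V'_{\regB\regA}=V_{\regB\regA}(\ket{0^s}_\regB\otimes I_\regA)$, moves $V'$ to the transpose side via \cref{claim:choi_state_on_isometry}, and strips it off with H\"{o}lder, picking up a factor of $2^{c'}$. You instead transcribe the purification/post-selection argument from the proof of \cref{lem:swap_unitary_is_almost_identity_for_Choi_state;ancilla} wholesale: pass to $\ket{\psi'},\ket{\phi'}$ on the full maximally-entangled state, bound the Euclidean distance using the identity-fixing projection, and post-select $\regB'$ on $\ket{0^{c'}}$ at the end. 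This is a perfectly valid alternative that the paper itself uses for the swap lemma; in fact your route yields the tighter bound $O(2^{c'/2-t(n)/2})$ (the paper's loses a factor of $2^{c'/2}$ by converting to trace norm before stripping the post-selection), which of course still implies the stated $O(2^{c'-t(n)/2})$.

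One small slip: your $\Pi_n\coloneqq\sum_{m'\in\bit^n}\ketbra{m'}{m'}\otimes I_\bot^{t,n,m'}$ is copied from the controlled swap-oracle setting, but $\HRI_{t,n,m}$ is a \emph{single} $(n+t(n)+1)$-qubit unitary for fixed $m$, not a unitary controlled on $m'$, so this expression has the wrong dimension. The projection you actually want is $\Pi\coloneqq I_\bot^{t,n,m}$ itself (the projection onto the orthogonal complement of $\SpanSpace\{\ket{0}\ket{0^t}\ket{x},\ket{1}U_{n,m}\ket{0^t}\ket{x}\}_{x\in\bit^n}$), for which $(I-\HRI_{t,n,m})\Pi=0$ holds. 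Your rank count $\Tr[I-\Pi]=2^{n+1}$ out of total dimension $2^{n+t(n)+1}$ and the resulting $O(2^{-t(n)/2})$ bound are nevertheless correct, so the rest of the argument goes through unchanged.
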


\begin{proof}[Proof of \cref{lem:HRI_is_almost_identity_for_Choi_state}]
    First, we can see 
    $(\HRI_{t,n,m}\otimes I^{\otimes 2\secp+2c'-n-t(n)-1})\ket{\Omega_{2^{\secp+c'}}}$
    is close to $\ket{\Omega_{2^{\secp+c'}}}$ in the trace norm as follows:
    \begin{align}
        &\frac{1}{2}\bigg\|(\HRI_{t,n,m}\otimes I^{\otimes 2\secp+2c'-n-t(n)-1})\ketbra{\Omega_{2^{\secp+c'}}}{\Omega_{2^{\secp+c'}}}(\HRI_{t,n,m}\otimes I^{\otimes 2\secp+2c'-n-t(n)-1})^\dag-\ketbra{\Omega_{2^{\secp+c'}}}{\Omega_{2^{\secp+c'}}}\bigg\|_1
        \notag\\
        =&\sqrt{1-|\bra{\Omega_{2^{\secp+c'}}}(\HRI_{t,n,m}\otimes I^{\otimes 2\secp+2c'-n-t(n)-1})\ket{\Omega_{2^{\secp+c'}}}|^2}
        \tag{By $\frac{1}{2}\|\ketbra{\alpha}{\alpha}-\ketbra{\beta}{\beta}\|_1=\sqrt{1-|\braket{\alpha|\beta}|^2}$}\\
        =&\sqrt{1-\frac{1}{2^{2\secp+2c'}}|\Tr[\HRI_{t,n,m}\otimes I^{\otimes \secp+c'-n-t(n)-1}]|^2}
        \notag\\
        =&\sqrt{1-\frac{1}{2^{2n+2t(n)+2}}|\Tr[\HRI_{t,n,m}]|^2}
        \tag{By $\Tr[A\otimes B]=\Tr[A]\Tr[B]$}\\
        =&\sqrt{1-\frac{(2^{n+t(n)+1}-2^{n+1})^2}{2^{2n+2t(n)+2}}}
        \tag{By $\Tr[\HRI_{t,n,m}]=2^{n+t(n)+1}-2^{n+1}$}\\
        \le&O\bigg(\frac{1}{2^{t(n)/2}}\bigg).\label{eq:HRI_is_almost_identity_for_Choi_state1}
    \end{align}
    Let us define an isometry $V'_{\regB\regA}\coloneqq V_{\regB\regA}(\ket{0^s}_\regB\otimes I_\regA)$ mapping $\regA$ to $\regB\regA$.
    With this $V'$, we have 
    \begin{align}
        \ket{\psi}_{\regB\regA\regA'}&=((U(\HRI_{t,n,m}\otimes I))_{\regB\regA}\cdot V'_{\regA}\otimes I_{\regA'})\ket{\Omega_{2^\secp}}_{\regA\regA'}
        \notag\\
        \ket{\phi}_{\regB\regA\regA'}&=(U_{\regB\regA}\cdot V'_{\regA}\otimes I_{\regA'})\ket{\Omega_{2^\secp}}_{\regA\regA'}
        \label{eq:HRI_is_almost_identity_for_Choi_state2}
    \end{align}
    Then, from \cref{eq:HRI_is_almost_identity_for_Choi_state1,eq:HRI_is_almost_identity_for_Choi_state2}, we obtain our statement as follows:
    \begin{align}
        &\frac{1}{2}\|\ketbra{\psi}{\psi}-\ketbra{\phi}{\phi}\|_1
        \notag\\
        =&\frac{1}{2}\bigg\|
         ((U(\HRI_{t,n,m}\otimes I))_{\regB\regA}\cdot V'_\regA\otimes \identitymap_{\regA'})(\Omega_{2^\secp,\regA\regA'})
         -(U_{\regB\regA}\cdot V'_\regA\otimes \identitymap_{\regA'})(\Omega_{2^\secp,\regA\regA'})
        \bigg\|_1
        \tag{By \cref{eq:HRI_is_almost_identity_for_Choi_state2}}\\
        =&
        \frac{2^{c'}}{2}\bigg\|
         ((U(\HRI_{t,n,m}\otimes I))_{\regB\regA}\otimes V'^\top_{\regB'\regA'})(\Omega_{2^{\secp+c'},\regB\regA\regB'\regA'})
         -(U_{\regB\regA}\otimes V'^\top_{\regB'\regA'})(\Omega_{2^{\secp+c'},\regB\regA\regB'\regA'})
        \bigg\|_1
        \tag{By \cref{claim:choi_state_on_isometry} and define $\regB'$ as an $c'$-qubit register}\\
        =&\frac{2^{c'}}{2}\bigg\|
         ((\HRI_{t,n,m}\otimes I)_{\regB\regA}\otimes \identitymap_{\regB'\regA'})(\Omega_{2^{\secp+c'},\regB\regA\regB'\regA'})
         -(\Omega_{2^{\secp+c'},\regB\regA\regB'\regA'})
        \bigg\|_1
        \tag{By H\"{o}lder's inequality (\cref{lem:Holder}) and $\|V'\|_\infty=\|V'^\top\|_\infty\le 1$ since $V'$ is an isometry}\\
        \le&O(2^{c'-t(n)/2}),\tag{By \cref{eq:HRI_is_almost_identity_for_Choi_state1}}
    \end{align}
    where $V'^\top$ denotes the transpose of $V'$.
\end{proof}

Now we are ready to prove \cref{lem:approx_PRI_Choi_state_wrt_HRI_oracle}.

\begin{proof}[Proof of \cref{lem:approx_PRI_Choi_state_wrt_HRI_oracle}]
    The proof strategy is the same as in \cref{lem:approx_PRI_Choi_state}.  
    The difference lies in applying \cref{lem:HRI_is_almost_identity_for_Choi_state} instead of \cref{lem:swap_unitary_is_almost_identity_for_Choi_state;ancilla}.  
    Thus, it suffices to replace the dependence on $d$ with the dependence on $t(d)$.
    Therefore, there exists a family 
    $\{V_k\}_{k\in\cK_\secp}$ 
    of $(\secp+s+c)$-qubit unitaries 
    such that each $V_k$ is QPT implementable with classical descriptions of $\HRI'_{t,n,m}$ for all $n\in[d]$ and $m\in\bit^n$, and query access to the $\unitaryPSPACE$-complete oracle $\cU$ such that it satisfies
    \begin{align}
        \bigg\|
         (\cM_{\{\cI_k\},\ell}\otimes \identitymap)(\ketbra{\Omega_{2^{\secp\ell}}}{\Omega_{2^{\secp\ell}}})-
         (\cF_{\{V_k\},\ell}\otimes \identitymap)(\ketbra{\Omega_{2^{\secp\ell}}}{\Omega_{2^{\secp\ell}}})
        \bigg\|_1\le O(\ell T\epsilon)
        +O\bigg(\frac{2^{s+3c/2}\ell T}{2^{t(d)/2}}\bigg),
    \end{align}
    which concludes the proof.
\end{proof}

\if0
\begin{proof}[Proof of \cref{lem:approx_PRI_Choi_state_wrt_HRI_oracle}]
    Note that $\HRI_t$ is applied only on the input register since we consider ancilla-free PRIs.
    We construct an $\cI'_k$ that accesses only $\cU$ as follows: apply the same circuit as $\cI_k$ except for querying $\HRI_{t,n,m}$. When querying to $\HRI_{t,n,m}$,
          \begin{itemize}
              \item if $n\in[d]$, apply $\HRI'_{t,n,m}$ by using its classical description;
              \item if $n\in[\secp]/[d]$, do not apply any unitary circuit. 
          \end{itemize}
    We prove \cref{eq:goal;approx_random_Choi_state_wrt_HRI_oracle} by the standard hybrid argument.
    For each $k\in\cK_\secp$, we define an isometry $\widetilde{\cI}_k$ as follows: apply the same unitary as $\cI_k$ except for querying $\HRI_{t,n,m}$. When querying to $\HRI_{t,n,m}$,
          \begin{itemize}
              \item if $n\in[d]$, apply $\HRI_{t,n,m}$;
              \item if $n\in[\secp]/[d]$, do not apply any unitary. 
          \end{itemize}
    We can show
    \begin{align}
        \bigg\|
         (\cM_{\{\cI'_k\},\ell}\otimes \identitymap)(\ketbra{\Omega_{2^{\secp\ell}}}{\Omega_{2^{\secp\ell}}})
         -(\cM_{\{\widetilde{\cI}_k\},\ell}\otimes \identitymap)(\ketbra{\Omega_{2^{\secp\ell}}}{\Omega_{2^{\secp\ell}}})
         \bigg\|_1\le O(\ell T\epsilon)
        \label{eq:hyb0;approx_PRI_Choi_state_wrt_HRI_oracle}
    \end{align}
    and
    \begin{align}
        \bigg\|
        (\cM_{\{\widetilde{\cI}_k\},\ell}\otimes \identitymap)(\ketbra{\Omega_{2^{\secp\ell}}}{\Omega_{2^{\secp\ell}}})
        -(\cM_{\{\cI_k\},\ell}\otimes \identitymap)(\ketbra{\Omega_{2^{\secp\ell}}}{\Omega_{2^{\secp\ell}}})
        \bigg\|_1\le O(\ell T2^{s-d/2}).
        \label{eq:hyb1;approx_Haar_Choi_state_wrt_HRI_oracle}
    \end{align}
    We give their proofs later. From \cref{eq:hyb0;approx_random_Choi_state,eq:hyb1;approx_random_Choi_state} and the triangle inequality, we obtain \cref{eq:goal;approx_random_Choi_state_wrt_HRI_oracle}.

    First, we prove \cref{eq:hyb0;approx_PRI_Choi_state_wrt_HRI_oracle}. For each $k\in\cK_\secp$, the difference between $\cI'_k$ and $\widetilde{\cI}_k$ lies only whether we apply $\HRI'_{t,n,m}$ or $\HRI_{t,n,m}$ for all $n\in[d]$. Since the number of queries to $\HRI_{t}$ is at most $T$, we have
    \begin{align}
        \|\cI'_k(\cdot)\cI_k^{'\dag}-\widetilde{\cI}_k(\cdot)\widetilde{\cI}_k^\dag\|_\diamond\le T\epsilon\label{eq:U'_k_is_close_to_tilde_U_k}
    \end{align}
    for all $k\in\cK_\secp$.
    Therefore, we have
    \begin{align}
        \bigg\|
         (\cI_k^{'\otimes\ell}\otimes I)\ketbra{\Omega_{2^{\secp\ell}}}{\Omega_{2^{\secp\ell}}}(\cI_k^{'\otimes\ell}\otimes I)^\dag-
         (\widetilde{\cI}_k^{\otimes\ell}\otimes I)\ketbra{\Omega_{2^{\secp\ell}}}{\Omega_{2^{\secp\ell}}}(\widetilde{\cI}_k^{\otimes\ell}\otimes I)^\dag
         \bigg\|_1
        \le\ell T\epsilon
    \end{align}
    for all $k\in\cK_\secp$, which implies \cref{eq:hyb0;approx_PRI_Choi_state_wrt_HRI_oracle}.

    Next, we prove \cref{eq:hyb1;approx_Haar_Choi_state_wrt_HRI_oracle}.
    From the ancilla-free condition, the action of $\cI_k$ can be seen as follows:\footnote{Here, $A$ and each $V_i$ depend on $k$, but we omit the subscript of $k$ for notational simplicity.}
    \begin{enumerate}
        \item Prepare $\ket{0^s}$ on an ancilla qubits, and apply $(\secp+s)$-qubit unitary. This is equal to applying an isometry $A$ mapping $\secp$ qubits to $\secp+s$ qubits.
        \item For each $i\in[T]$, perform the following:
        Apply $\HRI_{t,n_i,m_i}$ onto some $n_i+t(n_i)+1$ qubits by querying $\HRI_{t,n_i,m_i}$, where $n_i+t(n_i)+1\le\secp+s$ and $m_i\in\bit^{n_i}$.
        Then, apply a $(\secp+s)$-qubit unitary $V_i$.
    \end{enumerate}
    From this and \cref{lem:HRI_is_almost_identity_for_Choi_state}, we have 
    \begin{align}
        \bigg\|
         (\widetilde{\cI}_k\otimes I)\ketbra{\Omega_{2^{\secp}}}{\Omega_{2^{\secp}}}(\widetilde{\cI}_k\otimes I)^\dag
         -(\cI_k\otimes I)\ketbra{\Omega_{2^{\secp}}}{\Omega_{2^{\secp}}}(\cI_k\otimes I)^\dag
         \bigg\|_1
        \le O(T2^{s-t(d)/2}).\label{eq:hyb3;approx_PRI_Choi_state_wrt_HRI_oracle}
    \end{align}
    We obtain \cref{eq:hyb1;approx_Haar_Choi_state_wrt_HRI_oracle} from \cref{eq:hyb3;approx_PRI_Choi_state_wrt_HRI_oracle} and the triangle inequality, which concludes the proof.
    
    \if0
    Suppose that we have
    \begin{align}
        \bigg\|
         (\widetilde{\cI}_k\otimes I)\ketbra{\Omega_{2^{\secp}}}{\Omega_{2^{\secp}}}(\widetilde{\cI}_k\otimes I)^\dag
         -(\cI_k\otimes I)\ketbra{\Omega_{2^{\secp}}}{\Omega_{2^{\secp}}}(\cI_k\otimes I)^\dag
         \bigg\|_1
        \le O(T2^{s-d/2})\label{eq:hyb3;approx_PRI_Choi_state_wrt_HRI_oracle}
    \end{align}
    for all $k\in\cK_\secp$.
    Since \cref{eq:hyb3;approx_PRI_Choi_state_wrt_HRI_oracle} implies \cref{eq:hyb1;approx_Haar_Choi_state_wrt_HRI_oracle}, it suffices to prove it.
    From \cref{lem:rule_out_entangled_queries_on_PRI_wrt_HRI_oracle}, the action of $\cI_k$ can be seen as follows:\footnote{Here, $A$ and each $V_i$ depend on $k$, but we omit the subscript of $k$ for notational simplicity.}
    \begin{enumerate}
        \item Prepare $\ket{0^s}$ on an ancilla qubits, and apply $(\secp+s)$-qubit unitary. This is equal to applying an isometry $A$ mapping $\secp$ qubits to $\secp+s$ qubits.
        \item For each $i\in[T]$, perform the following:
        Apply $\cT_i\coloneqq\HRI_{t,n_i,m_i}$ onto some $n_i+t(n_i)+1$ qubits by querying $\HRI_{t,n_i,m_i}$, where $n_i+t(n_i)+1\le\secp+s$ and $m_i\in\bit^{n_i}$.
        Then, apply a $(\secp+s)$-qubit unitary $V_i$.
    \end{enumerate}
    From the above observation, we define the following hybrids for each $i\in[T]$:\footnote{Here $\prod_jB_j$ means $B_T\cdots B_1$. The order is important because each operation is not commutable in general.}

    \begin{align}
        \ket{\psi_i} = \left(\prod_{j=i+1}^{T} \left(V_j\cT'_{j}\right)\prod_{j=1}^{i} \left(V_j\cT_{j}\right)A \otimes I\right)\ket{\Omega_{2^\secp}},
    \end{align}
    where $\cT'_j$ is defined as follows:
    \begin{itemize}
        \item if $n_i\in[d]$, it is exactly the same as $\cT_i=\HRI_{t,n_i,m_i}$.;
        \item if $n_i\in[\secp+s]/[d]$, it is the identity.
    \end{itemize}
    Then, we can prove the following for all $i\in[T]$:
    \begin{align}
        \|\ketbra{\psi_{i-1}}{\psi_{i-1}}-\ketbra{\psi_i}{\psi_i}\|_1\le O(2^{s-d/2})
        \label{eq:hyb4;approx_PRI_Choi_state_wrt_HRI_oracle}
    \end{align}
    Since $\ket{\psi_0}=(\widetilde{\cI_k}\otimes I)\ket{\Omega_{2^\secp}}$, and $\ket{\psi_T}=(\cI_k\otimes I)\ket{\Omega_{2^\secp}}$, we obtain \cref{eq:hyb3;approx_PRI_Choi_state_wrt_HRI_oracle} from \cref{eq:hyb4;approx_PRI_Choi_state_wrt_HRI_oracle}.

    Thus, it suffices to prove \cref{eq:hyb4;approx_PRI_Choi_state_wrt_HRI_oracle}
    We obtain \cref{eq:hyb4;approx_PRI_Choi_state_wrt_HRI_oracle} as follows.
    \begin{align}
        &\|\ketbra{\psi_i}{\psi_i}- \ketbra{\psi_{i+1}}{\psi_{i+1}} \|_{1}  
        \notag\\
        =& 
        \left\|
         \left(
          \left(
           \prod_{j=i}^{T} V_j\cT'_j
           \cdot
           \prod_{j=1}^{i-1} V_j\cT_j
           \cdot A
          \right)
         \otimes \identitymap
         \right)
         (\ketbra{\Omega_{2^\secp}}{\Omega_{2^\secp}}) 
         - 
         \left(
          \left(
           \prod_{j=i+1}^{T} V_j\cT'_j
           \cdot
           \prod_{j=1}^{i} V_j\cT_j
           \cdot A
          \right)
          \otimes \identitymap
         \right)
        (\ketbra{\Omega_{2^\secp}}{\Omega_{2^\secp}})
        \right\|_{1} 
        \notag\\
        =& 
        \left\|
         \left(
          \left(
           \prod_{j=i+1}^{T} V_j\cT_{j}
           \cdot V_{i}(\cT'_{i}-\cT_{i})
           \cdot
           \prod_{j=1}^{i-1} V_j\cT_{j}
           \cdot A
          \right)
           \otimes \identitymap
         \right)
         (\ketbra{\Omega_{2^\secp}}{\Omega_{2^\secp}})
        \right\|_{1} 
        \notag\\
        =& 2^{s}
        \left\|
         \left(
          \left(
           \prod_{j=i+1}^{T} V_j\cT_{j}
           \cdot V_{i}(\cT'_{i}-\cT_{i})
          \right)
          \otimes 
          \left(
           \prod_{j=1}^{i-1} V_j\cT_{j}
           \cdot A
          \right)^\top
         \right)
         (\ketbra{\Omega_{2^{\secp+s}}}{\Omega_{2^{\secp+s}}})
         \right\|_{1} 
        \tag{By \cref{claim:choi_state_on_isometry} with $\ell_{\text{in}}=\secp$ and $\ell_{\text{out}}=\secp+s$}\\
        \leq& 2^{s}
        \left\|
         \left(
          (\cT'_{i}-\cT_{i}) \otimes \identitymap
         \right)
         (\ketbra{\Omega_{2^{\secp+s}}}{\Omega_{2^{\secp+s}}})
        \right\|_{1} 
        \tag{By H\"{o}lder's inequality \cref{lem:Holder} with $\|B\|_\infty=\|B^\top\|_\infty\le1$ for any isometry $B$}\\
        \leq& O(2^{2s-(d+t(d))/2})
        \tag{By the definition of $\cT'_i$ and \cref{lem:swap_unitary_is_almost_identity_for_Choi_state}},
    \end{align}
    which concludes the proof.
    \fi
\end{proof}
\fi
\fi

\ifnum\anonymous=1
\else
\paragraph{Acknowledgments.}
The authors thank Prabhanjan Ananth for the helpful discussion.
S.Y. gratefully acknowledges Prabhanjan Ananth for hosting his visit to UCSB.
The authors gratefully acknowledge the anonymous reviewers of TQC and CRYPTO for identifying a bug in an earlier draft of this paper.
TM is supported by
JST CREST JPMJCR23I3,
JST Moonshot R\verb|&|D JPMJMS2061-5-1-1, 
JST FOREST, 
MEXT QLEAP, 
the Grant-in Aid for Transformative Research Areas (A) 21H05183,
and 
the Grant-in-Aid for Scientific Research (A) No.22H00522.
AG and YTL are supported by the National Science Foundation under the grants 
FET-2329938, 
CAREER-2341004,
and FET-2530160.
\fi

\ifnum\submission=0
\bibliographystyle{alpha} 
\else
\bibliographystyle{splncs04}
\fi
\bibliography{abbrev3,crypto,reference}
\clearpage

\appendix
\ifnum\submission=1
\clearpage
\vspace{2em}
\begin{center}
\textbf{\Large Supplementary Materials}
\end{center}

\section{Omitted Proofs in \cref{sec:PRUs}}

In this section, we give the omitted proofs in \cref{sec:PRUs}.

\subsection{Proof of \cref{thm:break_PRU}}

In this subsection, we give the proof of \cref{thm:break_PRU}.

\BreakPRU*

\begin{proof}[Proof of \cref{thm:break_PRU}]
    We construct $\cA^{(\cdot,\cdot)}$ as in \cref{alg:break_PRU}.
    It is clear that $\cA^{(\cdot,\cdot)}$ is a QPT algorithm. Step~\ref{algstep:process_tomography} runs in QPT because $2^d = (\ell T p)^2 + 2^c \leq \poly(\secp)$, given that $c = O(\log \secp)$.  
    Steps~\ref{algstep:prepare_Choi_state} and \ref{algstep:measurement} also run in QPT, as established in \cref{lem:PRU_distinguisher}.

    Assume that the tomography is successful in the step \ref{algstep:process_tomography}, namely, we have 
    $\|\cS_{n}(\cdot)\cS_{n}^\dag-\cS'_{n}(\cdot)\cS'^\dag_{n}\|_\diamond\le\epsilon$ for all $n\in[d]$. For notational simplicity, let $E$ denote this event. Note that
    \begin{align}
        \Pr[E]\ge1-2^{-\secp}\ge1-\negl(\secp)\label{eq:prob_tomography_succeeds}
    \end{align}
    from \cref{thm:process_tomography_HKOT23}. Thus, it suffices to show that $\cA$ can distinguish $\{U_k\}_k$ from Haar random unitaries when the tomography succeeds.
    Recall that $\{V_k\}_k$ is a family of unitaries in the step \ref{algstep:measurement}, and $\cD^{(\cdot)}$ is a QPT algorithm in the step \ref{algstep:measurement} for $\{V_k\}_k$.
    Note that
    \begin{align}
        \Pr_{k\gets\cK_\secp}[1\gets\cA^{U_k,\cO}(1^\secp)|E]
        =\Pr\bigg[
         1\gets\cD^\cU\bigg(
          (\cM_{\{U_k\},\ell}\otimes \identitymap)(\ketbra{\Omega_{2^{\secp\ell}}}{\Omega_{2^{\secp\ell}}})
         \bigg)
        \bigg]\label{eq:probability_when_querying_PRU}
    \end{align}
    and
    \begin{align}
        \Pr_{U\gets\mu_{2^\secp}}[1\gets\cA^{U,\cO}(1^\secp)|E]
        =\Pr\bigg[
         1\gets\cD^\cU\bigg(
         (\cM_{\mu_{2^\secp,\ell}}\otimes \identitymap)(\ketbra{\Omega_{2^{\secp\ell}}}{\Omega_{2^{\secp\ell}}})
         \bigg)
        \bigg].\label{eq:probability_when_querying_Haar}
    \end{align}
    We prove our claim by the standard hybrid argument. 
    First, we replace $\rho_0\coloneqq(\cM_{\{U_k\},\ell}\otimes \identitymap)(\ketbra{\Omega_{2^{\secp\ell}}}{\Omega_{2^{\secp\ell}}})$ in \cref{eq:probability_when_querying_PRU} with $\rho_1\coloneqq(\cE_{\{V_k\},\ell}\otimes \identitymap)(\ketbra{\Omega_{2^{\secp\ell}}}{\Omega_{2^{\secp\ell}}})$. We obtain the following claim.
    
    \begin{claim}\label{claim:hyb0_to_hyb1}
        \begin{align}
            &\bigg|
             \Pr\bigg[
              1\gets\cD^\cU(\rho_0)
             \bigg]
             -\Pr\bigg[
              1\gets\cD^\cU(\rho_1)
             \bigg]
            \bigg|
            \le O\bigg(\frac{1}{p(\secp)}\bigg).
            \notag
        \end{align}
    \end{claim}

    \begin{proof}[Proof of \cref{claim:hyb0_to_hyb1}]
        From \cref{lem:approx_random_Choi_state}, we have
         \begin{align}
         \frac{1}{2}\|
          \rho_0-
          \rho_1
         \|_1
         \le O(\ell T\epsilon)+O\bigg(\frac{2^{c/2}\ell T}{2^{d/2}}\bigg)
         \le O\bigg(\frac{1}{p(\secp)}\bigg).
         \notag
         \end{align}
         Here, $\epsilon=\frac{1}{\ell Tp}$ and $d=2\log(\ell Tp)+c$. 
    \end{proof}

    Next, we replace $\rho_1$ with 
    $\rho_2\coloneqq(\cM_{\mu_{2^\secp},\ell}\otimes \identitymap)(\ketbra{\Omega_{2^{\secp\ell}}}{\Omega_{2^{\secp\ell}}})$. 
    We have the following claim. Because its proof is straightforward from \cref{lem:PRU_distinguisher}, we omit it.

    \begin{claim}\label{claim:hyb1_to_hyb2}
        \begin{align}
            &\bigg|\Pr\bigg[1\gets\cD^\cU(\rho_1)\bigg]-
              \Pr\bigg[1\gets\cD^\cU(\rho_2))\bigg]
            \bigg|
            \ge 1-\negl(\secp).
            \notag
        \end{align}
    \end{claim}

    Combing \cref{eq:prob_tomography_succeeds,eq:probability_when_querying_PRU,eq:probability_when_querying_Haar} with \cref{claim:hyb0_to_hyb1,claim:hyb1_to_hyb2}, we have
    \begin{align}
        &\bigg|\Pr_{k\gets\cK_\secp}[1\gets\cA^{U_k,\cO}(1^\secp)]-\Pr_{U\gets\mu_{2^\secp}}[1\gets\cA^{U,\cO}(1^\secp)]\bigg|
        \notag\\
        \ge&\Pr[E]\bigg|\Pr_{k\gets\cK_\secp}[1\gets\cA^{U_k,\cO}(1^\secp)|E]-\Pr_{U\gets\mu_{2^\secp}}[1\gets\cA^{U,\cO}(1^\secp)|E]\bigg|
        -2\Pr[\Bar{E}]
        \notag\\
        \ge&(1-\negl(\secp))
        \bigg(
         1-O\bigg(\frac{1}{p(\secp)}\bigg)-\negl(\secp)
        \bigg)
        -\negl(\secp)
        \notag\\
        \ge& 1-O\bigg(\frac{1}{p(\secp)}\bigg),
    \end{align}
    which concludes the proof.
\end{proof}

\subsection{Proof of \cref{lem:swap_unitary_is_almost_identity_for_Choi_state;ancilla}}

In this section, we give the proof of \cref{lem:swap_unitary_is_almost_identity_for_Choi_state;ancilla}

\SwapChoi*

\begin{proof}[Proof of \cref{lem:swap_unitary_is_almost_identity_for_Choi_state;ancilla}]
    Define the following states:
    \begin{align}
        \ket{\psi'}_{\regA\regB\regA'\regB'} & \coloneqq ( U ( \cS_{n} \otimes I ) V )_{\regA\regB} \otimes I_{\regA'\regB'} \cdot \ket{\Omega_{2^{\secp+c'}}}_{\regA\regB\regA'\regB'}, \\
        \ket{\phi'}_{\regA\regB\regA'\regB'} & \coloneqq ( U V )_{\regA\regB} \otimes I_{\regA'\regB'} \cdot \ket{\Omega_{2^{\secp+c'}}}_{\regA\regB\regA'\regB'},
    \end{align}
    where $\regB'$ is a $c'$-qubit register and $\ket{\Omega_{2^{\secp+c'}}}$ is across $(\regA\regB,\regA'\regB')$.

    \if0
    \yaoting{We need a stronger version of~\cref{lem:swap_unitary_is_almost_identity_for_Choi_state} that shows the closeness in Euclidean distance. The proof is straightforward. I sketch it below.}
    \yaoting{
    Define the projection 
    \begin{align}\label{eq:projection_subspace}
        \Pi_{n,(x,\cdot)}\coloneqq\sum_{y\in\bit^{n-\ell}}\ketbra{y}{y}\otimes I_\bot^{n,(x,y)}
    \end{align} 
    onto the subspace on which $\cS_{n,(x,\cdot)}$ acts as the identity. It then follows that
    \begin{align}
        (I - \cS_{n,(x,\cdot)}) \Pi = 0.
    \end{align}
    First, we can see 
    $(\cS_{n,(x,\cdot)}\otimes I^{\otimes 2\secp-2n+\ell-1})\ket{\Omega_{2^\secp}}$
    is close to $\ket{\Omega_{2^\secp}}$ in Euclidean distance as follows:
    \begin{align}
        & \|\ket{\Omega_{2^{\secp}}} - (\cS_{n,(x,\cdot)}\otimes I^{\otimes 2\secp-2n+\ell-1}) \ket{\Omega_{2^{\secp}}}\| \\
        = & \|(I - \cS_{n,(x,\cdot)})\otimes I^{\otimes 2\secp-2n+\ell-1} \ket{\Omega_{2^{\secp}}}\| \\
        = & \|(I - \cS_{n,(x,\cdot)}) (I-\Pi) \otimes I^{\otimes 2\secp-2n+\ell-1} \ket{\Omega_{2^{\secp}}}\| 
        \tag{By~\cref{eq:projection_subspace}} \\
        \le & \|(I - \cS_{n,(x,\cdot)})\otimes I^{\otimes 2\secp-2n+\ell-1}\|_\infty \cdot \|(I-\Pi)\otimes I^{\otimes 2\secp-2n+\ell-1} \ket{\Omega_{2^{\secp}}}\| 
        \tag{By~$\|A\ket{v}\| \le \|A\|_\infty \|\ket{v}\|$} \\
        \le & 2 \|(I-\Pi)\otimes I^{\otimes 2\secp-2n+\ell-1} \ket{\Omega_{2^{\secp}}}\| 
        \tag{By the triangle inequality and the fact that unitaries have unit operator norm} \\
        = & O(2^{n/2}). 
    \end{align}
    }
    \yaoting{If we need to cite the fact about trace distance versus Euclidean distance, we can use, for exmaple, \cite[Lemma~1]{C:Zhandry19}.}
    \fi
    
    We can prove the following inequality, which we will prove later.
    \begin{align}
        \|\ket{\psi'} - \ket{\phi'}\| \le O \bigg(\frac{1}{2^{n/2}}\bigg).
        \label{eq:closeness}
    \end{align}
    Moreover, by definition, it holds that
    \begin{align}
        \ket{\psi}_{\regA\regA'\regB} & = \sqrt{2^{c'}} \cdot I_{\regA\regA'\regB} \otimes \bra{0^{c'}}_{\regB'} \cdot \ket{\psi'}_{\regA\regB\regA'\regB'}, 
        \notag \\
        \ket{\phi}_{\regA\regA'\regB} & = \sqrt{2^{c'}} \cdot I_{\regA\regA'\regB} \otimes \bra{0^{c'}}_{\regB'} \cdot \ket{\phi'}_{\regA\regB\regA'\regB'}.
        \label{eq:post_selection}
    \end{align}
    Putting them together, we obtain
    \begin{align}
        & \, \frac{1}{2}\|\ketbra{\psi}{\psi}-\ketbra{\phi}{\phi}\|_1 
        \notag \\
        \le & \, \|\ket{\psi} - \ket{\phi}\| 
        \tag{Since trace distance between pure states is bounded by their Euclidean distance} \\
        = & \, \sqrt{2^{c'}} \cdot \| I \otimes \bra{0^{c'}} \cdot (\ket{\psi'} - \ket{\phi'})\| 
        \tag{By~\cref{eq:post_selection}} \\
        \le & \, \sqrt{2^{c'}} \cdot \| I \otimes \bra{0^{c'}} \|_{\infty} \cdot \|\ket{\psi'} - \ket{\phi'}\| 
        \tag{By~$\|{A \ket{v}} \| \le \|A\|_\infty \|\ket{v}\|$} \\
        = & \, O\bigg(\frac{2^{c'/2}}{2^{n/2}}\bigg)
        \tag{By~$\| I \otimes \bra{0^{c'}} \|_{\infty} = 1$ and~\cref{eq:closeness}}
    \end{align}
    as desired.

    To conclude the proof, we prove \cref{eq:closeness}.
    Define the projection 
    \begin{align}\label{eq:projection_subspace}
        \Pi_{n}\coloneqq\sum_{m\in\bit^{n}}\ketbra{m}{m}\otimes I_\bot^{n,m}
    \end{align} 
    onto the subspace on which $\cS_{n}$ acts as the identity. It then follows that
    \begin{align}
        (I - \cS_{n}) \Pi_n = 0.
    \end{align}
    First, we can see 
    $(\cS_{n}\otimes I^{\otimes 2\secp+2c'-2n-1})\ket{\Omega_{2^{\secp+c}}}$
    is close to $\ket{\Omega_{2^{\secp+c}}}$ in Euclidean distance as follows:
    \begin{align}
        & \|\ket{\Omega_{2^{\secp+c'}}} - (\cS_{n}\otimes I^{\otimes 2\secp+2c'-2n-1}) \ket{\Omega_{2^{\secp+c'}}}\| \\
        = & \|(I^{\otimes 2n+1} - \cS_{n})\otimes I^{\otimes 2\secp+2c'-2n-1} \ket{\Omega_{2^{\secp+c'}}}\| \\
        = & \|(I^{\otimes 2n+1} - \cS_{n}) (I-\Pi_n) \otimes I^{\otimes 2\secp+2c'-2n-1} \ket{\Omega_{2^{\secp+c'}}}\| 
        \tag{By~\cref{eq:projection_subspace}} \\
        \le & \|(I^{\otimes 2n+1} - \cS_{n})\otimes I^{\otimes 2\secp+2c'-2n-1}\|_\infty \cdot \|(I^{\otimes 2n+1}-\Pi_n)\otimes I^{\otimes 2\secp+2c'-2n-1} \ket{\Omega_{2^{\secp+c'}}}\| 
        \tag{By~$\|A\ket{v}\| \le \|A\|_\infty \|\ket{v}\|$} \\
        \le & 2 \|(I^{\otimes 2n+1}-\Pi_n)\otimes I^{\otimes 2\secp+2c'-2n-1} \ket{\Omega_{2^{\secp+c'}}}\| 
        \tag{By the triangle inequality and the fact that unitaries have unit operator norm} \\
        =& 2\sqrt{\frac{1}{2^{\secp+c'}}\Tr[(I^{\otimes 2n+1}-\Pi_n)\otimes 2^{\secp+c'-2n-1}]}
        \tag{By $||(A\otimes I)\ket{\Omega_D}||^2=\frac{1}{D}\Tr[A^\dag A]$ for any $A\in\Unitaries(D)$}
        \\
        =& 2\sqrt{\frac{1}{2^{2n+1}}\Tr[I^{\otimes 2n+1}-\Pi_n]}
        \tag{By $\Tr[A\otimes B]=\Tr[A]\Tr[B]$ forn any matrix $A$ and $B$}
        \\
        =& 2\sqrt{\frac{1}{2^{2n+1}}\sum_{m\in\bit^n}\Tr[\ketbra{m}{m}\otimes(I^{\otimes n+1}-I^{n,m}_\bot)]}
        \tag{By \cref{eq:projection_subspace}}
        \\
        \le & O(2^{n/2}). 
    \end{align}
    Here, in the last line, we have used that $\Tr[I^{\otimes n+1}-I^{n,m}_\bot]=2$.
    Since $\ket{\psi'}_{\regA\regB\regA'\regB'}=(U_{\regA\regB}\otimes V^\top_{\regA'\regB'})((\cS_{n}\otimes I)_{\regA\regB}\otimes I_{\regA'\regB'})\ket{\Omega_{2^{\secp+c}}}_{\regA\regB\regA'\regB'}$ and $\ket{\phi'}_{\regA\regB\regA'\regB'}=(U_{\regA\regB}\otimes V^\top_{\regA'\regB'})\ket{\Omega_{2^{\secp+c}}}_{\regA\regB\regA'\regB'}$, we obtain \cref{eq:closeness}, which concludes the proof.
\end{proof}

\subsection{Proof of \cref{lem:block-encoding_of_random_Choi_states}}

In this subsection, we give the proof of \cref{lem:block-encoding_of_random_Choi_states}.
For the proof, we need the following lemma.

\begin{lemma}[Lemma 12 in \cite{ICALP:vApGil19}]\label{lem:block-encoding_of_state}
    Let $U$ be a unitary over registers $\regA$ and $\regB$, where $\regA$ and $\regB$ are $n$-qubit register and $m$-qubit register, respectively. Define $\rho_{\regA}\coloneqq\Tr_{\regB}[(U\ketbra{0...0}{0...0}U^\dag)_{\regA\regB}]$. Then, there exists a $(1,0,n+m)$-block encoding unitary $V$ of $\rho$, where $V$ is implementable with single use of $U$ and $U^\dag$, and $n+1$ two-qubit gates.
\end{lemma}

Now we are ready to prove \cref{lem:block-encoding_of_random_Choi_states}.
For the reader's convenience, we restate it here.

\BlockEncoding*

\begin{proof}[Proof of \cref{lem:block-encoding_of_random_Choi_states}]
    Let $x$ denote the concatenation of classical descriptions of $\cS'_n$ for all $n\in[d]$.
    Let $A$ and $B$ be unitaries such that $BA$ is a purification unitary of 
    $(\cM_{\{U'_k\},\ell}\otimes\identitymap)(\ketbra{\Omega_{2^{\secp\ell}}}{\Omega_{2^{\secp\ell}}})$.
    In other words,
    tracing out some qubits of $BA|0...0\rangle$ is
    equal to
    $(\cM_{\{V_k\},\ell}\otimes\identitymap)(\ketbra{\Omega_{2^{\secp\ell}}}{\Omega_{2^{\secp\ell}}})$.
   $A$ is a unitary that maps $|0...0\rangle$ to $|x\rangle$.
   $B$ is the following unitary:
    \begin{enumerate}
        \item First, map $\ket{0...0}\ket{x}\mapsto\frac{1}{\sqrt{|\cK_\secp|}}\sum_{k\in\cK_\secp}\ket{0^c}^{\otimes\ell}\ket{\Omega_{2^{\secp\ell}}}\ket{0...0}\ket{k}\ket{x}$.
        \item 
        Then, map $
        \frac{1}{\sqrt{|\cK_\secp|}}
        \sum_{k\in\cK_\secp}\ket{0^c}^{\otimes\ell}\ket{\Omega_{2^{\secp\ell}}}\ket{0...0}\ket{k}\ket{x}
        \mapsto\frac{1}{\sqrt{|\cK_\secp|}}
        \sum_{k\in\cK_\secp}(V_k^{\otimes\ell}\otimes I)(\ket{0^c}^{\otimes\ell}\ket{\Omega_{2^{\secp\ell}}})\ket{0...0}\ket{k}\ket{x}$.
    \end{enumerate}
    Clearly, $A$ is QPT implementable given $x$. 
    $B$ can be approximately QPT implementable with an exponentially-small error by querying $\cU$, 
    because of the following reason:
    The first step of $B$ is QPT implementable.
    For the second step of $B$, we have only to show that each controlled-$V_k$ is approximately QPT implementable by querying $\cU$.
    In fact, first, $V_k$ is QPT implementable on input $x,k$ and by querying $\cU$.
    Second, in order to implement the controlled-$V_k$, we need the controlled-$\cU$.
    The controlled-$\cU$ is in $\unitaryPSPACE$ from \cref{remark:contrlizatin_of_pureUnitaryPSPACE}, therefore it
    is approximately QPT implementable with an exponentially-small error by querying $\cU$.
    
    Thus, for any polynomial $p$, there exists a QPT algorithm $\cB$ that implements $B$ with error $2^{-p(\secp)}$ by querying $\cU$. Namely, it satisfies
    \begin{align}
        \|\cB^\cU(\cdot)-B(\cdot)B^\dag\|_\diamond\le2^{-p(\secp)}.
    \end{align}
    Since $\cB$ is a QPT algorithm, it queries $\cU$ at most polynomially many times. Thus, by postponing all intermediate measurements, we can assume that $\cB$ applies a QPT unitary $C$ by querying $\cU$.
    Thus, we have
    \begin{align}
        \|\Tr_{\regY}[C_{\regX\regY}((\cdot)_{\regX}\otimes\ketbra{0...0}{0...0}_{\regY})C^\dag_{\regX\regY}]-(B(\cdot)B^\dag)_{\regX}\|_\diamond\le2^{-p(\secp)},\label{eq:approx_of_purification_for_random_Choi_states}
    \end{align}
    where $\regX$ and $\regY$ denote the main register and the ancilla register, respectively.
    Suppose that $\regA$ and $\regA'$ are $\secp\ell$-qubit registers, and $\regB$ is a $c\ell$-qubit register.
    We decompose $\regX$ as $\regX_0\coloneqq\regB\regA\regA'$ and $\regX_1$, where $\regX_0$ is the first $2\secp\ell$ qubits, and $\regX_1$ is the other qubits.
    From \cref{eq:approx_of_purification_for_random_Choi_states} and $\Tr_{\regX_1}[(BA\ketbra{0...0}{0...0}A^\dag B^\dag)_{\regX}]=(\cM_{\{V_k\},\ell,\regB\regA}\otimes\identitymap_{\regA'})(\ketbra{0^c}{0^c}^{\otimes\ell}_\regB\otimes\ketbra{\Omega_{2^{\secp\ell}}}{\Omega_{2^{\secp\ell}}}_{\regA\regA'})$, we have
    \begin{align}
        &\|\Tr_{\regB\regX_1\regY}[C_{\regX\regY}A_{\regX}(\ketbra{0...0}{0...0}_{\regX\regY})A^\dag_\regX C^\dag_{\regX\regY}]
        -(\cE_{\{V_k\},\ell,\regA}\otimes \identitymap_{\regA'})(\ketbra{\Omega_{2^{\secp\ell}}}{\Omega_{2^{\secp\ell}}}_{\regA\regA'})\|_1
        \le2^{-p(\secp)}.\label{eq:approx_of_random_Choi_states}
    \end{align}

    Now we are ready to construct block-encoding of $\Tr_{\regB}[(\cM_{\{V_k\},\ell,\regB\regA}\otimes\identitymap)(\ketbra{0^c}{0^c}^{\otimes\ell}_\regB\otimes\ketbra{\Omega_{2^{\secp\ell}}}{\Omega_{2^{\secp\ell}}})_{\regA\regA'}]$. From \cref{lem:block-encoding_of_state}, there exists a $(1,0,\poly(\secp))$-block encoding unitary $V_\secp$ of
    \begin{align}
        \sigma_{\regA\regA'}\coloneqq\Tr_{\regB\regX_1\regY}[C_{\regX\regY}A_{\regX}(\ketbra{0...0}{0...0}_{\regX\regY})A^\dag_\regX C^\dag_{\regX\regY}].
    \end{align}
    From \cref{lem:block-encoding_of_state} $V_\secp$ can be realized with a single use of $C_{\regX\regY}A_{\regX}$ and its inverse, and $\poly(\secp)$ two-qubit gates. 
    Therefore, $V_\secp$ is QPT implementable with $x$ and with the query access to $\cU$. Moreover, $V_\secp$ satisfies
    \begin{align}
        &\|((\bra{0^{\poly(\secp)}}\otimes I)V_{\secp}(\ket{0^{\poly(\secp)}}\otimes I))_{\regA\regA'}
        -(\cE_{\{V_k\},\ell,\regA}\otimes \identitymap_{\regA'})(\ketbra{\Omega_{2^{\secp\ell}}}{\Omega_{2^{\secp\ell}}}_{\regA\regA'})
        \|_\infty
        \notag\\
        =&
        \|\sigma_{\regA\regA'}
        -(\cE_{\{V_k\},\ell,\regA}\otimes \identitymap_{\regA'})(\ketbra{\Omega_{2^{\secp\ell}}}{\Omega_{2^{\secp\ell}}}_{\regA\regA'})\|_\infty
        \tag{Since $V_\secp$ is an $(1,0,\poly(\secp))$-block encoding of $\sigma$}\\
        \le&
        \|\sigma_{\regA\regA'}
        -(\cE_{\{V_k\},\ell,\regA}\otimes \identitymap_{\regA'})(\ketbra{\Omega_{2^{\secp\ell}}}{\Omega_{2^{\secp\ell}}}_{\regA\regA'})\|_1
        \tag{By $\|A\|_\infty\le\|A\|_1$}\\
        \le&2^{-p(\secp)},
        \tag{By \cref{eq:approx_of_random_Choi_states}}
    \end{align}
    which implies that $V_{\secp}$ is a $(1,2^{-p(\secp)},\poly(\secp))$-block encoding of $(\cE_{\{V_k\},\ell,\regA}\otimes \identitymap_{\regA'})(\ketbra{\Omega_{2^{\secp\ell}}}{\Omega_{2^{\secp\ell}}}_{\regA\regA'})$.
\end{proof}

\subsection{Proof of \cref{lem:Haar_Choi_has_negligible_overlap}}

In this subsection, we give the proof of \cref{lem:Haar_Choi_has_negligible_overlap}.
To this end, we need the following lemma.

\begin{lemma}[Implicitly Shown in \cite{Harrow23}]\label{lem:Haar_state_vs_Haar_choi_state}
    Let $\ell,d\in\N$ such that $d\ge\ell^2$. Then,
    \begin{align}
         \bigg\|
         (\cM_{\mu_d,\ell}\otimes \identitymap)(\ketbra{\Omega_{d^\ell}}{\Omega_{d^\ell}})
         -
         \Exp_{\ket{\psi}\gets\sigma_{d^2}}\ketbra{\psi}{\psi}^{\otimes\ell}
         \bigg\|_1
         \le O\bigg(\frac{\ell^2}{d}\bigg)
    \end{align}
    where $\ket{\Omega_{d^\ell}}=\frac{1}{d^{\ell/2}}\sum_{x\in[d^\ell]}\ket{x}\ket{x}$ is the maximally entangled state.
\end{lemma}

Now we are ready to prove \cref{lem:Haar_Choi_has_negligible_overlap}.
We restate it here for the reader's convenience.

\SmallOverlap*

\begin{proof}[Proof of \cref{lem:Haar_Choi_has_negligible_overlap}]
    First, we prove that $\Tr[Q]\le2^{(1+c)\ell}$.
    Note that
    \begin{align}
        &(\cE_{\{V_k\},\ell,\regA}\otimes \identitymap_{\regA'})(\ketbra{\Omega_{2^{\secp\ell}}}{\Omega_{2^{\secp\ell}}}_{\regA\regA'})
        =\frac{1}{|\cK_\secp|}\sum_{k\in\cK_\secp}
        \Tr_\regB[(V^{\otimes\ell}_{k,\regB\regA}\otimes \identitymap_{\regA'})(\ketbra{0^c}{0^c}^{\otimes\ell}_\regB\otimes\ketbra{\Omega_{2^{\secp\ell}}}{\Omega_{2^{\secp\ell}}}_{\regA\regA'})],
    \end{align}
    where $\regA$ and $\regA'$ are $\secp\ell$-qubit registers, and $\regB$ is a $c\ell$-qubit register.
    For each $k$, the rank of $\Tr_\regB[(V^{\otimes\ell}_{k,\regB\regA}\otimes \identitymap_{\regA'})(\ketbra{0^c}{0^c}^{\otimes\ell}_\regB\otimes\ketbra{\Omega_{2^{\secp\ell}}}{\Omega_{2^{\secp\ell}}}_{\regA\regA'})]$ is at most $\min\{2^{c\ell},2^{2\secp\ell}\}=2^{c\ell}$ since $(V^{\otimes\ell}_{k,\regB\regA}\otimes \identitymap_{\regA'})(\ketbra{0^c}{0^c}^{\otimes\ell}_\regB\otimes\ketbra{\Omega_{2^{\secp\ell}}}{\Omega_{2^{\secp\ell}}}_{\regA\regA'})$ is pure.
    Thus, the rank of $Q$ is at most $2^{c\ell}\cdot|\cK_\secp|\le2^{(1+c)\ell}$, which implies $\Tr[Q]\le2^{(1+c)\ell}$.
    
    Having this, 
    \begin{align}
        \Tr[Q(\cM_{\mu_{2^\secp},\ell}\otimes \identitymap)(\ketbra{\Omega_{2^{\secp\ell}}}{\Omega_{2^{\secp\ell}}})]
        \le&
        \Tr[Q\Exp_{\ket{\psi}\gets\sigma_{2^{2\secp}}}\ketbra{\psi}{\psi}^{\otimes\ell}]
        +O\bigg(\frac{\ell^2}{2^\secp}\bigg)
        \tag{By \cref{lem:Haar_state_vs_Haar_choi_state}}
        \\
        =&
        \frac{\Tr[Q\Pi_{\symetric}]}{\binom{2^{2\secp}+\ell-1}{\ell}}+\negl(\secp)
        \notag
        \\
        \le&\frac{2^{(1+c)\ell}}{\binom{2^{2\secp}+\ell-1}{\ell}}+\negl(\secp)
        \tag{By $\Tr[Q\Pi_\symetric]\le\Tr[Q]\le2^{(1+c)\ell}$}\\
        \le&O\bigg(
        \frac{2^{(1+c)\ell}(\ell!)}{2^{2\secp\ell}}
        \bigg)
        +\negl(\secp)
        \notag\\
        \le&O\bigg(
        \frac{2^{(1+c)\ell} \ell^{\ell+1/2}e^{-\ell}}{2^{2\secp\ell}}
        \bigg)
        +\negl(\secp)
        \tag{By the Stirling's formula, $\ell!\le \ell^{\ell+1/2}e^{-\ell+1}$}\\
        =&O\bigg(\ell^{1/2}
        \bigg(\frac{2^{1+c}e^{-1} \ell}{2^{2\secp}}\bigg)^\ell
        \bigg)
        +\negl(\secp)
        \notag\\
        =&O\bigg(\ell^{1/2}
        \bigg(\frac{\poly(\secp)}{2^{2\secp}}\bigg)^\ell
        \bigg)
        +\negl(\secp)
        \tag{By $c(\secp)=O(\log\secp)$}\\
        \le&\negl(\secp),\label{eq:xi_has_negligible_overlap_with_Q}
    \end{align}
    which concludes the proof.
\end{proof}

\subsection{Proof of \cref{lem:RSV_subspace_and_support}}

In this subsection, we show \cref{lem:RSV_subspace_and_support}. We restate it here for the convenience.

\RSVsubspace*

\begin{proof}[Proof of \cref{lem:RSV_subspace_and_support}]
    For the notational simplicity, we define 
    $\rho\coloneqq
    (\cM_{\{U'_k\},\ell}\otimes\identitymap)(\ketbra{\Omega_{2^{\secp\ell}}}{\Omega_{2^{\secp\ell}}})$. Then, we have
    \begin{align}
        &\Tr[\rho\Pi_{\ge\epsilon}]
        \\
        =&\bigg|
         \Tr[(\bra{0^{\poly(\secp)}}\otimes I)V_\secp(\ket{0^{\poly(\secp)}}\otimes I)\Pi_{\ge\epsilon}]+
        \Tr[(\rho-(\bra{0^{\poly(\secp)}}\otimes I)V_\secp(\ket{0^{\poly(\secp)}}\otimes I))\Pi_{\ge\epsilon}]
        \bigg|
        \\
        \ge&\bigg|
         \Tr[(\bra{0^{\poly(\secp)}}\otimes I)V_\secp(\ket{0^{\poly(\secp)}}\otimes I)\Pi_{\ge\epsilon}]
        \bigg|
         -\bigg\|(\rho-(\bra{0^{\poly(\secp)}}\otimes I)V_\secp(\ket{0^{\poly(\secp)}}\otimes I))\Pi_{\ge\epsilon}
         \bigg\|_1
        \tag{By the triangle inequality and $|\Tr[A]|\le\|A\|_1$}\\
        \ge&
        \bigg|
         \Tr[(\bra{0^{\poly(\secp)}}\otimes I)V_\secp(\ket{0^{\poly(\secp)}}\otimes I)\Pi_{\ge\epsilon}]
        \bigg|
        -\bigg\|\rho-(\bra{0^{\poly(\secp)}}\otimes I)V_\secp(\ket{0^{\poly(\secp)}}\otimes I)
        \bigg\|_1.
        \tag{By H\"{o}lder's inequality (\cref{lem:Holder}) and $\|\Pi_{\ge\epsilon}\|_\infty=1$}
   \end{align}
    We can show
    \begin{align}
        \bigg|
         \Tr[(\bra{0^{\poly(\secp)}}\otimes I)V_\secp(\ket{0^{\poly(\secp)}}\otimes I)\Pi_{\ge\epsilon}]
        \bigg|
        \ge1-2^{2\secp\ell-p}-2^{2\secp\ell}\epsilon\label{eq:bound_of_trace_with_Pi}
    \end{align}
    and 
    \begin{align}
        \|\rho-(\bra{0^{\poly(\secp)}}\otimes I)V_\secp(\ket{0^{\poly(\secp)}}\otimes I)\|_1
        \le2^{\secp\ell-p}.\label{eq:bound_of_trace_distance}
    \end{align}
    We give their proofs later. With these inequalities at hand, we obtain \cref{lem:RSV_subspace_and_support}.

    To conclude the proof, we give proofs of \cref{eq:bound_of_trace_with_Pi,eq:bound_of_trace_distance}. We can show the latter as follows:
    \begin{align}
        &\|
       \rho
        -(\bra{0^{\poly(\secp)}}\otimes I)V_\secp(\ket{0^{\poly(\secp)}}\otimes I)
        \|_1
        \\
        \le&2^{2\secp\ell}
        \|
        \rho
        -(\bra{0^{\poly(\secp)}}\otimes I)V_\secp(\ket{0^{\poly(\secp)}}\otimes I)
        \|_\infty
        \tag{By $\|A\|_1\le d\|A\|_\infty$ for any $A\in\Linear(d)$}
        \\
        \le&2^{2\secp\ell-p}.
        \tag{Since $V_\secp$ is a $(1,2^{-p},\poly(\secp))$-block encoding of $\rho$}
    \end{align}
    We can show the former as follows.
    \begin{align}
        &\bigg|
         \Tr[(\bra{0^{\poly(\secp)}}\otimes I)V_\secp(\ket{0^{\poly(\secp)}}\otimes I))\Pi_{\ge\epsilon}]
        \bigg|
        \\
        =&
        \bigg|
         \Tr[\bra{0^{\poly(\secp)}}\otimes I)V_\secp(\ket{0^{\poly(\secp)}}\otimes I)]
        -
         \Tr[(\bra{0^{\poly(\secp)}}\otimes I)V_\secp(\ket{0^{\poly(\secp)}}\otimes I)(I-\Pi_{\ge\epsilon})]
        \bigg|
        \\
        \ge&
        \bigg|
         \Tr[\bra{0^{\poly(\secp)}}\otimes I)V_\secp(\ket{0^{\poly(\secp)}}\otimes I)]
        \bigg|
        -
        \bigg\|
         (\bra{0^{\poly(\secp)}}\otimes I)V_\secp(\ket{0^{\poly(\secp)}}\otimes I)(I-\Pi_{\ge\epsilon})
        \bigg\|_1,\label{eq:ineq1_for_bound_of_trace_with_Pi}
    \end{align}
    where we have used the triangle inequality and $|\Tr[A]|\le\|A\|_1=\|-A\|_1$ in the inequality. The first term can be estimated as follows:
    \begin{align}
        \bigg|
         \Tr[\bra{0^{\poly(\secp)}}\otimes I)V_\secp(\ket{0^{\poly(\secp)}}\otimes I)]
        \bigg|
        =&
        \bigg|
         \Tr[\rho]
         -
         \Tr[\rho-
         \bra{0^{\poly(\secp)}}\otimes I)V_\secp(\ket{0^{\poly(\secp)}}\otimes I)]
        \bigg|
        \notag\\
        \ge&
        |\Tr[\rho]|-
        \bigg\|
         \rho-
         \bra{0^{\poly(\secp)}}\otimes I)V_\secp(\ket{0^{\poly(\secp)}}\otimes I)
        \bigg\|_1
        \tag{By the triangle inequality and $|\Tr[A]|\le\|A\|_1$}
        \\
        \ge&1-2^{2\secp\ell-p},\label{eq:ineq2_for_bound_of_trace_with_Pi}
    \end{align}
    where we have used \cref{eq:bound_of_trace_distance} in the last inequality. To estimate the second term in \cref{eq:ineq1_for_bound_of_trace_with_Pi}, recall that $\Pi_{\ge\epsilon}$ is the projection onto the subspace spanned by right singular vectors of $\bra{0^{\poly(\secp)}}\otimes I)V_\secp(\ket{0^{\poly(\secp)}}\otimes I)$ whose singular values are at least $\epsilon$. 
    Thus, $I-\Pi_{\ge\epsilon}$ is the projection onto the subspace spanned by right singular vectors of $\bra{0^{\poly(\secp)}}\otimes I)V_\secp(\ket{0^{\poly(\secp)}}\otimes I)$ whose singular values are less than $\epsilon$.
    In addition to that, note that the number of singular values of $\bra{0^{\poly(\secp)}}\otimes I)V_\secp(\ket{0^{\poly(\secp)}}\otimes I)$ is at most $2^{2\secp\ell}$ since $\bra{0^{\poly(\secp)}}\otimes I)V_\secp(\ket{0^{\poly(\secp)}}\otimes I)$ is an operator acting on $2\secp\ell$ qubits.
    With these observations, we have
    \begin{align}
        \bigg\|
         (\bra{0^{\poly(\secp)}}\otimes I)V_\secp(\ket{0^{\poly(\secp)}}\otimes I)(I-\Pi_{\ge\epsilon})
        \bigg\|_1\le2^{2\secp\ell}\epsilon\label{eq:ineq3_for_bound_of_trace_with_Pi}
    \end{align}
    since $\|A\|_1$ is equivalent to the sum of its singular values.
    From \cref{eq:ineq1_for_bound_of_trace_with_Pi,eq:ineq2_for_bound_of_trace_with_Pi,eq:ineq3_for_bound_of_trace_with_Pi}, we obtain \cref{eq:bound_of_trace_with_Pi}.
\end{proof}

\subsection{Proof of \cref{lem:negligible_overlap_with_large_RSVs}}

Finally, we show \cref{lem:negligible_overlap_with_large_RSVs}. We also restate it here.

\smallRSV*

\begin{proof}[Proof of \cref{lem:negligible_overlap_with_large_RSVs}]
    For the notational simplicity, we define 
    $\rho\coloneqq
    (\cM_{\{U'_k\},\ell}\otimes\identitymap)(\ketbra{\Omega_{2^{\secp\ell}}}{\Omega_{2^{\secp\ell}}})$.  Recall that $Q$ is the projection onto $\rho$, which implies $\rho\ket{\psi}=\rho Q\ket{\psi}=0$. Then, we have
    \begin{align}
        \|
         \bra{0^{\poly(\secp)}}\otimes I)V_\secp(\ket{0^{\poly(\secp)}}\otimes I)\ket{\psi}
        \|
        =&\big\|
         \big(\rho-
         \bra{0^{\poly(\secp)}}\otimes I)V_\secp(\ket{0^{\poly(\secp)}}\otimes I)\big)\ket{\psi}
        \big\|
        \\
        \le&
        \|\rho-
         \bra{0^{\poly(\secp)}}\otimes I)V_\secp(\ket{0^{\poly(\secp)}}\otimes I)
         \|_\infty
         \\
         \le&2^{-p},\label{eq:ineq1_for_negligible_overlap_with_large_RSVs}
    \end{align}
    where the last inequality follows from the assumption that $V_\secp$ is a $(1,2^{-p},\poly(\secp))$-blcok encoding of $\rho$.
    Let 
    \begin{align}
        (\bra{0^{\poly(\secp)}}\otimes I)V_\secp(\ket{0^{\poly(\secp)}}\otimes I)=\sum_i a_i\ketbra{w_i}{v_i}
    \end{align}
    be the singular value decomposition. Namely, $\{\ket{w_i}\}_i$ and $\{\ket{v_i}\}_i$ are sets of orthonormal states, and all $a_i$ are positive real numbers.
    Since each $\ket{v_i}$ is the right singular vector of $\bra{0^{\poly(\secp)}}\otimes I)V_\secp(\ket{0^{\poly(\secp)}}\otimes I)$ whose singular value is $a_i$, we have $\Pi_{\ge\epsilon}=\sum_{i:a_i\ge\epsilon}\ketbra{v_i}{v_i}$.
    Then, we have 
    \begin{align}
        \|
         \bra{0^{\poly(\secp)}}\otimes I)V_\secp(\ket{0^{\poly(\secp)}}\otimes I)\ket{\psi}
        \|
        =&
        \bigg\|
         \sum_i a_i\ket{w_i}\braket{v_i|\psi}
        \bigg\|
        \\
        =&
        \sqrt{
        \sum_{i}a_i^2|\braket{v_i|\psi}|^2
        }
        \\
        \ge&
        \sqrt{
        \sum_{i:a_i\ge\epsilon}a_i^2|\braket{v_i|\psi}|^2
        }
        \\
        \ge&\epsilon\sqrt{
        \sum_{i:a_i\ge\epsilon}|\braket{v_i|\psi}|^2
        }
        \\
        =&\epsilon\|\Pi_{\ge\epsilon}\ket{\psi}\|,\label{eq:ineq2_for_negligible_overlap_with_large_RSVs}
    \end{align}
    where in the last inequality we have used $\Pi_{\ge\epsilon}=\sum_{i:a_i\ge\epsilon}\ketbra{v_i}{v_i}$.
    From \cref{eq:ineq1_for_negligible_overlap_with_large_RSVs,eq:ineq2_for_negligible_overlap_with_large_RSVs},
    we have $\epsilon\|\Pi_{\ge\epsilon}\ket{\psi}\|\le2^{-p}$, which implies 
    $\|\Pi_{\ge\epsilon}\ket{\psi}\|\le2^{-p}\epsilon^{-1}$.
\end{proof}

\fi
\section{Relationship Between Black-Box Construction and Oracle Separation}
\subsection{Impossibility of Black-Box Constructions from Oracle Separations}
\label{sec:black-box_construction}
\cite{TCC:ColMut24,ChenColSat24}
showed the relation between oracle separations and black-box constructions.
Therefore, by applying the proof for \cref{Intro_thm:PRU_vs_PRFSG}, we obtain \cref{Intro_thm:main1}.
Here for the convenience of readers,
we provide the proof.

\BlackBox*

\begin{proof}[Proof of \cref{Intro_thm:main1}]
    For the sake of contradiction, assume that there is a black-box construction of PRUs from PRFSGs.
    Then
     there exist QPT algorithms $C^{(\cdot,\cdot)}$ and $R^{(\cdot,\cdot)}$ such that
    \begin{enumerate}
        \item
        Black-box construction: For any PRFSG $G$ and any its unitary implementation $\tilde{G}$,\footnote{In general $G$ is a CPTP map. The CPTP map $G$ can be
        implemented by applying a unitary $\tilde{G}$ on a state and tracing out some qubits. A unitary implementation of $G$ is such a unitary $\tilde{G}$.} $C^{\tilde{G},\tilde{G}^\dag}$ satisfies correctness of non-adaptive PRUs.
        \item Black-box security reduction: For any PRFSG $G$,
        any its unitary implementation $\tilde{G}$,
        any adversary $\cA$ that breaks the security of $C^{\tilde{G},\tilde{G}^\dag}$, 
        and any unitary implementation $\tilde{\cA}$ of $\cA$, 
        it holds that $R^{\tilde{\cA},\tilde{\cA}^\dag}$ breaks the security of $G$.
    \end{enumerate}

    From \cref{Intro_thm:PRU_vs_PRFSG}, 
    there is a QPT algorithm $B^{\cO,\cO^\dagger}$ querying $\cO$ and $\cO^\dagger$
    such that $B^{\cO,\cO^\dagger}$ is a PRFSGs.
    Therefore $C^{\tilde{B}^{\cO,\cO^\dagger},(\tilde{B}^{\cO,\cO^\dagger})^\dagger}$
    satisfies the correctness of non-adaptive PRUs.
    This means that a QPT algorithm $D^{\cO,\cO^\dagger}$ querying $\cO$ and $\cO^\dagger$ satisfies correctness of non-adaptive PRU.
    However, because non-adaptive PRUs do not exist relative to $\cO$ and $\cO^\dagger$ from \cref{Intro_thm:PRU_vs_PRFSG},
    this should not be secure. Therefore there exists a QPT adversary $\cA^{\cO,\cO^\dagger}$ that breaks it.
    Then $R^{\tilde{\cA}^{\cO,\cO^\dagger},\tilde{\cA}^{\cO,\cO^\dagger}}$ breaks $B^{\cO,\cO^\dagger}$, which means that
    a QPT algorithm $E^{\cO,\cO^\dagger}$ breaks the PRFSGs $B^{\cO,\cO^\dagger}$, which is the contradiction.
\end{proof}

\if0
\begin{definition}[\cite{TCC:ColMut24,ChenColSat24}]
    A QPT algorithm $G^{(\cdot,\cdot)}$ is a black-box construction of primitive $\cQ$ from primitive $\cP$ with access to the inverse if the following two conditions hold:
    \begin{enumerate}
        \item
        Black-box construction: For every unitary implementation $U$ of $\cQ$, $G^{U,U^\dag}$ is a $\cP$.
        \item Black-box security reduction: There is a QPT algorithm $S^{(\cdot,\cdot)}$ such that, for every unitary implementation $U$ of $\cQ$, every adversary $\cA$ that breaks the $\cP$ security of $G^{U,U^\dag}$, and every unitary implementation $V$ of $\cA$, it holds that $S^{V,V^\dag}$ breaks the $\cQ$ security of $U$.
    \end{enumerate}
\end{definition}

\begin{definition} [black-box Construction of PRUs from PRFSGs]
    We say that PRUs can be constructed from PRFSGs in a black-box way
    if there exist QPT algorithms $C^{(\cdot,\cdot)}$ and $R^{(\cdot,\cdot)}$ such that
    \begin{enumerate}
        \item
        Black-box construction: For any PRFSG $G$ and any its unitary implementation $\tilde{G}$, $C^{\tilde{G},\tilde{G}^\dag}$ is a PRU.
        \item Black-box security reduction: For any PRFSG $G$,
        any its unitary implementation $\tilde{G}$,
        any adversary $\cA$ that breaks the security of $C^{\tilde{G},\tilde{G}^\dag}$, 
        and any unitary implementation $\tilde{\cA}$ of $\cA$, 
        it holds that $R^{\tilde{\cA},\tilde{\cA}^\dag}$ breaks the security of $G$.
    \end{enumerate}
\end{definition}

The following is shown in \cite{TCC:ColMut24,ChenColSat24}.
\fi

\if0
\begin{theorem}[\cite{TCC:ColMut24,ChenColSat24}]\label{thm:black-box_construction}
    Suppose that there exists a unitary oracle $\cO$ such that
    PRFSGs exist but PRUs do not exist relative to $(\cO,\cO^\dagger)$.
    Then,
    there is no black-box construction of PRUs from PRFSGs.
\end{theorem}
\fi

\subsection{Black-Box Construction Relative to Oracles}
\label{subsec:many_primitives}
In \cref{def:BB}, we defined a black-box construction from PRFSGs to PRUs. Similarly, we can define a black-box construction for the general cryptographic primitives as follows.

\begin{definition}[Black-Box Construction \cite{TCC:ColMut24,ChenColSat24}]
    We say that a primitive $\cQ$ can be constructed from a primitive $\cP$ in a black-box way
    if there exist QPT algorithms $C^{(\cdot,\cdot)}$ and $R^{(\cdot,\cdot)}$ such that
    \begin{enumerate}
        \item
        Black-box construction: For any QPT algorithm $G$ satisfying the correctness of $\cP$ and any its unitary implementation $\tilde{G}$, $C^{\tilde{G},\tilde{G}^\dag}$ satisfies the correctness of primitive $\cQ$.
        \item Black-box security reduction: For any QPT algorithm $G$ satisfying the correctness of $\cP$, any its unitary implementation $\tilde{G}$,
        any adversary $\cA$ that breaks the $\cP$'s security of $C^{\tilde{G},\tilde{G}^\dag}$, 
        and any unitary implementation $\tilde{\cA}$ of $\cA$, 
        it holds that $R^{\tilde{\cA},\tilde{\cA}^\dag}$ breaks the $\cQ$'s security of $G$.
    \end{enumerate}
\end{definition}

The following is shown in \cite{TCC:ColMut24,ChenColSat24}.

\begin{theorem}[\cite{TCC:ColMut24,ChenColSat24}]\label{thm:black-box_construction}
    Suppose that there exists a black-box construction from primitive $\cP$ to $\cQ$. Then, for any unitary $\cO$, if $\cP$ exist relative to $\cO$ and $\cO^\dag$, $\cQ$ also exist relative to $\cO$ and $\cO^\dag$.
\end{theorem}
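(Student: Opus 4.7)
The plan is to run essentially the same argument that already appears in the proof of \cref{Intro_thm:main1} in \cref{sec:black-box_construction}, but with the specific primitives PRFSGs and non-adaptive PRUs replaced by the generic primitives $\cP$ and $\cQ$. Concretely, I will show the contrapositive-free direction directly: starting from an algorithm that realizes $\cP$ relative to $(\cO,\cO^\dag)$, I plug it into the black-box compiler $C$ to obtain an algorithm realizing $\cQ$ relative to $(\cO,\cO^\dag)$.

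First I would fix a QPT algorithm $B^{\cO,\cO^\dag}$ that satisfies the correctness and security of $\cP$ relative to $(\cO,\cO^\dag)$, which exists by the hypothesis together with \cref{def:oraclesepa}. Let $\tilde B^{\cO,\cO^\dag}$ denote a unitary implementation of $B^{\cO,\cO^\dag}$. Define
\begin{align}
D^{\cO,\cO^\dag} \;\coloneqq\; C^{\,\tilde B^{\cO,\cO^\dag},\,(\tilde B^{\cO,\cO^\dag})^\dag}.
\end{align}
Because $C$ is QPT and $\tilde B^{\cO,\cO^\dag}$ is QPT relative to $(\cO,\cO^\dag)$, $D^{\cO,\cO^\dag}$ is a QPT algorithm relative to $(\cO,\cO^\dag)$. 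Moreover, the first item of the black-box construction hypothesis, applied to the PRFSG-like primitive $G\coloneqq B^{\cO,\cO^\dag}$ whose correctness follows from $B^{\cO,\cO^\dag}$ being a $\cP$-scheme, guarantees that $D^{\cO,\cO^\dag}$ satisfies the correctness of $\cQ$.

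For security I would argue by contradiction. Suppose there is a QPT adversary $\cA^{\cO,\cO^\dag}$ breaking the $\cQ$-security of $D^{\cO,\cO^\dag}$, and let $\tilde\cA^{\cO,\cO^\dag}$ be its unitary implementation. The second item (black-box security reduction) then yields that
\begin{align}
R^{\,\tilde\cA^{\cO,\cO^\dag},\,(\tilde\cA^{\cO,\cO^\dag})^\dag}
\end{align}
breaks the $\cP$-security of $B^{\cO,\cO^\dag}$. Since $R$ is QPT and the inner oracles $\tilde\cA^{\cO,\cO^\dag}$ and its inverse are themselves QPT relative to $(\cO,\cO^\dag)$, this composition is a QPT adversary relative to $(\cO,\cO^\dag)$ against $B^{\cO,\cO^\dag}$, contradicting that $\cP$ exists relative to $(\cO,\cO^\dag)$.

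The only mildly delicate point, which I expect to be the main place to be careful rather than a genuine obstacle, is checking that plugging an oracle-relative QPT machine into the second slot of another QPT machine still yields a QPT machine relative to $(\cO,\cO^\dag)$: each of the polynomially many $\tilde B$-queries made by $C$ is replaced by a polynomial number of $(\cO,\cO^\dag)$-queries, and likewise for $\tilde\cA$-queries inside $R$. This is exactly the observation used implicitly in the proof of \cref{Intro_thm:main1}, and no additional technical machinery is required.
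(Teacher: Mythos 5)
Your proposal is correct and follows essentially the same argument that the paper gives when proving \cref{Intro_thm:main1} in \cref{sec:black-box_construction} (the paper cites this general theorem to \cite{TCC:ColMut24,ChenColSat24} without reproving it, but that earlier proof is precisely your argument specialized to $\cP=$ PRFSGs and $\cQ=$ non-adaptive PRUs). The observation you flag at the end — that substituting a $(\cO,\cO^\dag)$-relative QPT machine for a unitary oracle inside a QPT compiler yields a $(\cO,\cO^\dag)$-relative QPT machine — is indeed the only point requiring care, and it is handled exactly as you describe.
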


By combing \cref{thm:black-box_construction} and \cref{thm:main}, we have the following.

\begin{theorem}
    With probability $1$ over the choice of $\cO$ defined in \cref{def:unitary_orcle}, PRSGs, IND-CPA SKE, EUF-CMA MAC with unclonable tags, UPSGs, private-key money scheme, OWSGs, OWpuzzs, and EFI exist relative to $\cO$ and $\cO^\dag$.
\end{theorem}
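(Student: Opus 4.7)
The plan is to derive this as an immediate corollary of \cref{thm:main} (item~\ref{item:PRFSGs_exist}) and \cref{thm:black-box_construction}, applied to known black-box constructions of each listed primitive from PRFSGs. First I would invoke \cref{thm:main} to obtain, with probability $1$ over the choice of $\cO$, a PRFSG $G^\cO$ (\cref{const:PRFSGs}) whose quantumly-accessible adaptive security holds against QPT adversaries that can query both $\cO$ and $\cO^\dagger$; this is crucial because \cref{thm:black-box_construction} requires the base primitive to be secure against adversaries with access to both directions of the oracle.

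Next, I would run through the list and cite a fully black-box reduction from PRFSGs to each target primitive. PRSGs are obtained by fixing the input of a PRFSG (implicit in \cite{C:AnaQiaYue22}); OWSGs, OWPuzzs, and EFI pairs follow from PRSGs \cite{C:MorYam22,STOC:KhuTom24,ITCS:BraCanQia23}; UPSGs follow from PRFSGs \cite{Ac:MorYamYam24}; private-key quantum money schemes follow from PRSGs \cite{C:JiLiuSon18}. Each of these constructions uses only forward access to the generator and has a fully black-box security reduction, so a direct application of \cref{thm:black-box_construction} yields their existence relative to $(\cO,\cO^\dagger)$.

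The delicate cases are IND-CPA SKE with quantum ciphertexts and EUF-CMA MAC with unclonable tags, whose constructions from PRFSGs \cite{C:AnaQiaYue22,Ac:MorYamYam24} rely on the inverse of the unitary implementation of the PRFSG generator in order to decrypt or verify. For these, the plan is to emphasize that the security guarantee in \cref{thm:main} is against adversaries querying both $\cO$ and $\cO^\dagger$, so any unitary implementation $\widetilde{G}$ of $G^\cO$ can be queried in both directions by the security reduction; consequently the black-box constructions of \cite{C:AnaQiaYue22,Ac:MorYamYam24} fall into the framework of \cref{thm:black-box_construction} without modification.

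The main (and essentially only) obstacle is a bookkeeping one: checking that each cited construction is indeed black-box in the strong sense (i.e., both the construction and the security reduction are uniform QPT algorithms that use $\widetilde{G},\widetilde{G}^\dagger$ and $\widetilde{\cA},\widetilde{\cA}^\dagger$ purely as oracles). Once this is verified primitive-by-primitive, the conclusion is immediate via \cref{thm:black-box_construction}, and the ``with probability $1$'' quantifier transfers directly from \cref{thm:main} since the transformation is deterministic on top of the fixed oracle $\cO$.
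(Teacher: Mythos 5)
Your proposal matches the paper's own proof: the paper also derives this as an immediate corollary of \cref{thm:main} and \cref{thm:black-box_construction}, citing the known black-box constructions from PRFSGs to each listed primitive, and it similarly notes (in the introduction) that for IND-CPA SKE and EUF-CMA MAC the fact that PRFSG security holds against adversaries querying both $\cO$ and $\cO^\dagger$ is what makes the argument go through despite those constructions querying $\widetilde{G}^\dagger$. You supply more detail on the chain of reductions and references, but the route and the key lemmas are the same.
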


\begin{proof}
    From the previous works, there are black-box constructions from PRFSGs to them. Thus, from \cref{thm:black-box_construction} and \cref{thm:main}, we obtain the above claim.
\end{proof}

Therefore, from the above theorem, we have the following. We omit its proof because we can show it by the same argument in the proof of \cref{Intro_thm:main1}.

\begin{theorem}
    There is no black-box construction of non-adaptive and $O(\log\secp)$-ancilla PRUs from PRSGs, IND-CPA SKE, EUF-CMA MAC with unclonable tags, UPSGs, private-key money scheme, OWSGs, OWpuzzs, or EFI.
\end{theorem}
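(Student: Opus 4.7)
The plan is to prove the black-box impossibility by contradiction, using the oracle separation \cref{Intro_thm:PRU_vs_PRFSG} as the main tool. Suppose there were a black-box construction of non-adaptive, $O(\log\secp)$-ancilla PRUs from PRFSGs, witnessed by QPT algorithms $C^{(\cdot,\cdot)}$ and $R^{(\cdot,\cdot)}$ as in \cref{def:BB}. I would then fix the unitary oracle $\cO$ provided by \cref{Intro_thm:PRU_vs_PRFSG}, relative to which PRFSGs exist but non-adaptive, $O(\log\secp)$-ancilla PRUs do not exist (both properties holding against adversaries with query access to $\cO$ and $\cO^\dagger$).

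Next, I would instantiate the black-box construction in the relativized world. By \cref{Intro_thm:PRU_vs_PRFSG}, there is a QPT algorithm $G^{\cO,\cO^\dagger}$ satisfying the correctness and security of PRFSGs relative to $\cO$. Let $\tilde{G}^{\cO,\cO^\dagger}$ be its unitary implementation (obtained in the standard way by deferring measurements and tracing out ancillas). By the black-box construction condition of \cref{def:BB}, the composed QPT algorithm $C^{\tilde{G}^{\cO,\cO^\dagger},(\tilde{G}^{\cO,\cO^\dagger})^\dagger}$ satisfies the correctness of non-adaptive, $O(\log\secp)$-ancilla PRUs, and it is itself a QPT algorithm with oracle access only to $\cO$ and $\cO^\dagger$. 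Since non-adaptive, $O(\log\secp)$-ancilla PRUs do not exist relative to $(\cO,\cO^\dagger)$ (by \cref{Intro_thm:PRU_vs_PRFSG} together with \cref{def:oraclesepa}), there must exist a QPT adversary $\cA^{\cO,\cO^\dagger}$ that breaks its pseudorandomness.

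Finally, I would invoke the black-box security reduction. Letting $\tilde{\cA}^{\cO,\cO^\dagger}$ be a unitary implementation of $\cA^{\cO,\cO^\dagger}$, the reduction gives a QPT algorithm $R^{\tilde{\cA}^{\cO,\cO^\dagger},(\tilde{\cA}^{\cO,\cO^\dagger})^\dagger}$ that breaks the PRFSG security of $\tilde{G}^{\cO,\cO^\dagger}$. But $R^{\tilde{\cA}^{\cO,\cO^\dagger},(\tilde{\cA}^{\cO,\cO^\dagger})^\dagger}$ is itself a QPT algorithm whose only oracles are $\cO$ and $\cO^\dagger$, which contradicts the security of $G^{\cO,\cO^\dagger}$ as a PRFSG relative to $(\cO,\cO^\dagger)$ guaranteed by \cref{Intro_thm:PRU_vs_PRFSG}. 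This contradiction rules out the assumed black-box construction.

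The only subtlety — the part that makes the reduction actually go through — is verifying that \cref{Intro_thm:PRU_vs_PRFSG} gives precisely the form of oracle separation required by \cref{def:BB}: the PRFSG must remain secure against adversaries with access to both $\cO$ and $\cO^\dagger$, and the nonexistence of PRUs must also hold against such adversaries. Both are built into \cref{def:oraclesepa}, so the argument is in fact tight with the separation theorem we have proved, and the proposal is a direct instantiation of the general template of \cite{TCC:ColMut24,ChenColSat24} specialized to our setting.
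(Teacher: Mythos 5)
There is a genuine gap: what you have written is a proof of \cref{Intro_thm:main1} (no black-box construction of non-adaptive, $O(\log\secp)$-ancilla PRUs \emph{from PRFSGs}), not of the stated theorem, which concerns black-box constructions from PRSGs, IND-CPA SKE, EUF-CMA MAC with unclonable tags, UPSGs, private-key quantum money, OWSGs, OWPuzzs, and EFI. Your contradiction ends by having $R^{\tilde{\cA},\tilde{\cA}^\dagger}$ break the PRFSG security of $G^{\cO,\cO^\dagger}$; but if the hypothetical construction $C$ uses, say, an EFI pair or an SKE scheme as its base primitive, then the black-box reduction $R$ only promises to break the security of \emph{that} primitive, so your argument never produces the required contradiction. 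Nowhere in the proposal do the listed primitives appear, and nowhere do you use the fact that they are implied by PRFSGs, which is exactly the missing ingredient.

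The paper closes this gap in two steps. First, it invokes the relativization of black-box constructions (\cref{thm:black-box_construction}) together with the known black-box constructions of PRSGs, SKE, MAC, UPSGs, private-key money, OWSGs, OWPuzzs, and EFI from PRFSGs to conclude that all of these primitives exist relative to $\cO$ and $\cO^\dagger$ (here the fact that the PRFSG of \cref{Intro_thm:PRU_vs_PRFSG} is secure even against adversaries querying $\cO^\dagger$ matters, since the known SKE and MAC constructions query the inverse of the PRFSG's unitary implementation). Second, it reruns your contradiction template with the listed primitive in place of PRFSGs: instantiate $C$ with the oracle-aided construction of that primitive, obtain a QPT adversary breaking the resulting non-adaptive $O(\log\secp)$-ancilla PRU (which cannot exist relative to $\cO,\cO^\dagger$), and let $R$ break the security of the base primitive relative to the oracle, contradicting its established existence. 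An alternative repair of your argument is a composition step: a black-box construction of PRUs from, e.g., EFI, composed with the black-box construction of EFI from PRFSGs, yields a black-box construction of PRUs from PRFSGs, contradicting \cref{Intro_thm:main1}; but either way the bridge from PRFSGs to the listed primitives must be made explicit, and your proposal omits it entirely.
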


\end{CJK}
\end{document}